\newcommand{\mc}[1]{\mathcal{#1}}
\newcommand{\mf}[1]{\mathfrak{#1}}
\newcommand{\mb}[1]{\mathbb{#1}}
\newcommand{\quot}[2] {\ensuremath{\raisebox{.40ex}{\ensuremath{#1}}
\! \big / \! \raisebox{-.40ex}{\ensuremath{#2}}}}
\newcommand{\tint}{{\textstyle\int}}
\newcommand{\ldb}{\{\!\!\{}
\newcommand{\rdb}{\}\!\!\}}
\DeclareMathOperator{\Mat}{Mat}
\DeclareMathOperator{\tr}{tr}
\DeclareMathOperator{\res}{Res}
\DeclareMathOperator{\Vect}{Vect}
\DeclareMathOperator{\Der}{Der}
\DeclareMathOperator{\mult}{m}
\theoremstyle{plain}
\newtheorem{theorem}{Theorem}[section]
\newtheorem{lemma}[theorem]{Lemma}
\newtheorem{proposition}[theorem]{Proposition}
\newtheorem{corollary}[theorem]{Corollary}
\theoremstyle{definition}
\newtheorem{definition}[theorem]{Definition}
\newtheorem{example}[theorem]{Example}
\theoremstyle{remark}
\newtheorem{remark}[theorem]{Remark}
\numberwithin{equation}{section}
\definecolor{light}{gray}{.9}
\title{Double Poisson vertex algebras and non-commutative Hamiltonian equations}
\author{Alberto De Sole, Victor G. Kac,
Daniele Valeri}
\address{Dipartimento di Matematica, Sapienza Universit\`a di Roma,
P.le Aldo Moro 2, 00185 Rome, Italy.}
\email{desole@mat.uniroma1.it}
\address{Department of Mathematics, MIT,
77 Massachusetts Ave, Cambridge, MA 02139, USA.}
\email{kac@math.mit.edu}
\address{Yau Mathematical Sciences Center, Tsinghua University,
100084 Beijing, China.}
\email{daniele@math.tsinghua.edu.cn}
\begin{document}

\begin{abstract}
We develop the formalism of double Poisson vertex algebras (local and non-local)
aimed at the study of non-commutative Hamiltionan PDEs.
This is a generalization of the
theory of double Poisson algebras, developed by Van den Bergh, which is used in the study of
Hamiltonian ODEs.
We apply our theory of double Poisson vertex algebras to non-commutative KP and Gelfand-Dickey
hierarchies.
We also construct the related non-commutative de Rham and variational complexes. 
\end{abstract}

\pagestyle{plain}
\maketitle
\tableofcontents

\section{Introduction}
First, we recall the notion of a Poisson vertex algebra.
Given a unital commutative associative algebra $\mc V$
with a derivation $\partial$, a $\lambda$-bracket on $\mc V$
is a linear map $\mc V\otimes\mc V\to\mc V[\lambda]$, $a\otimes b\to\{a_\lambda b\}$,
satisfying the following axioms:
\begin{align}
\label{eq:0.1}
\{\partial a_\lambda b\}=-\lambda\{a_\lambda b\}\,,
\quad
\{a_\lambda \partial b\}=(\lambda+\partial)\{a_\lambda b\}
\qquad
&
\text{(sesquilinearity)}\,,
\\
\label{eq:0.2}
\{a_\lambda bc\}=\{a_\lambda b\}c+\{a_\lambda c\}b
\qquad\qquad\qquad
&
\text{(left Leibniz rule)}\,,
\\
\label{eq:0.3}
\{ab_\lambda c\}=\{a_{\lambda+\partial} c\}_\to b+\{b_{\lambda+\partial} c\}_\to a
\qquad\qquad
&
\text{(right Leibniz rule)}\,.
\end{align}
The algebra $\mc V$, endowed with a $\lambda$-bracket satisfying
\begin{align}
\label{eq:0.4}
\{a_\lambda b\}=-\{b_{-\lambda-\partial}a\}
\qquad\qquad\qquad
&
\text{(skewsymmetry)}\,,
\\
\label{eq:0.5}
\{a_\lambda\{b_\mu c\}\}-\{b_\mu\{a_\lambda c\}\}=\{\{a_\lambda b\}_{\lambda+\mu}c\}
\qquad
&
\text{(Jacobi identity)}\,,
\end{align}
is called a \emph{Poisson vertex algebra} (PVA).
(Of course, if $\partial=0$, then \eqref{eq:0.2}--\eqref{eq:0.5} turn into axioms of a
Poisson algebra if we let $\lambda=\mu=0$.)

In the case when $\mc V$ is the algebra of functions on the space of jets of a finite-dimensional
manifold, the notion of a PVA is equivalent to that of a local Poisson bracket \cite{BDSK09}.
In order to accommodate non-local Poisson brackets one needs to introduce the non-local
PVA, for which the $\lambda$-bracket takes values in $\mc V((\lambda^{-1}))$,
the formal Laurent series in $\lambda^{-1}$ (instead of polynomials in $\lambda$).
The axioms \eqref{eq:0.1}--\eqref{eq:0.4} can be easily interpreted, but for the Jacobi identity
\eqref{eq:0.5} to make sense
in the non-local case, one needs, in addition, the \emph{admissibility} property of the $\lambda$-bracket:
\begin{equation}\label{eq:0.6}
\{a_\lambda\{b_\mu c\}\}\in\mc V[[\lambda^{-1},\mu^{-1},(\lambda+\mu)^{-1}]][[\lambda,\mu]]\,,
\text{ for all } a,b,c\in\mc V
\,,
\end{equation}
 (see \cite{DSK13} for details).

Local and non-local Poisson brackets play a fundamental role in the theory of integrable systems
(see e.g. \cite{FT86}).
In recent years there have been attempts to develop a theory of integrable  systems on non-commutative
associative algebras (see e.g. \cite{DF92,OS98,Kup00,MS00,ORS13}).
An important advance in this direction was made by Van den Bergh, who in \cite{VdB08} introduced the
notion of a double Poisson algebra structure
in a non-commutative associative algebra $V$.
His basic idea was to consider a $2$-fold bracket $\ldb-,-\rdb$ on $V$,
 with values in $V\otimes V$.
The Leibniz rules of a $2$-fold bracket are almost identical to the usual Leibniz rules:
\begin{align}
&
\ldb a,bc\rdb=\ldb a,b\rdb c+b\ldb a,c\rdb\,,
\label{eq:0.7}
\\
&\ldb ab,c\rdb=\ldb a,c\rdb\star_1 b+a\star_1\ldb b,c\rdb\,,
\label{eq:0.8}
\end{align}
where $(a_1\otimes a_2)\star_1 a_3=a_1a_3\otimes a_2$,
$a_1\star_1(a_2\otimes a_3)=a_2\otimes a_1a_3$.
The skewsymmetry axiom is
\begin{equation}\label{eq:0.9}
\ldb a,b\rdb=-\ldb b,a\rdb^\sigma\,,
\end{equation}
where $\sigma$ is the permutation of factors in $V\otimes V$,
and the Jacobi identity is
\begin{equation}\label{eq:0.10}
\ldb a,\ldb b,c\rdb\rdb_L-\ldb b,\ldb a,c\rdb\rdb_R=\ldb\ldb a,b\rdb,c\rdb_L\,,
\end{equation}
where we denote $\ldb a_1,a_2\otimes a_3\rdb_L=\ldb a_1,a_2\rdb\otimes a_3$,
$\ldb a_1,a_2\otimes a_3\rdb_R=a_2\otimes\ldb a_1,a_3\rdb$,
$\ldb a_1\otimes a_2,a_3\rdb_L=\ldb a_1,a_3\rdb\otimes_1a_2$,
and $(a\otimes b)\otimes_1c=a\otimes c\otimes b$.
(Formula \eqref{eq:0.10} is slightly different from
but equivalent to the Jacobi identity in \cite{VdB08}.)

An associative algebra $V$, endowed with a $2$-fold bracket $\ldb-,-\rdb$ satisfying
axioms \eqref{eq:0.9} and \eqref{eq:0.10}, is called a \emph{double Poisson algebra}.

Given a double Poisson algebra $V$, Van den Bergh defines the following bracket on $V$ with values
in $V$:
\begin{equation}\label{eq:0.11}
\{a,b\}=\mult\ldb a,b\rdb\,,
\end{equation}
where $\mult:V\otimes V\to V$ is the multiplication map.
The bracket \eqref{eq:0.11} still satisfies the left Leibniz rule, but does not satisfy, in general,
other axioms of a Poisson bracket. However, this bracket induces well-defined
linear maps
$$
\quot{V}{[V,V]}\otimes V\to V
\qquad
\text{and}
\qquad
\quot{V}{[V,V]}\otimes \quot{V}{[V,V]}\to \quot{V}{[V,V]}\,,
$$
given by
\begin{equation}\label{eq:0.12}
\{\tr(a),b\}=\{a,b\}
\qquad\text{and}\qquad
\{\tr(a),\tr(b)\}=\tr\{a,b\}\,,
\end{equation}
where $\tr:V\to\quot{V}{[V,V]}$ is the quotient map and $[V,V]$ is the linear span of commutators
$ab-ba$.
These maps have the following properties, important for the theory of non-commutative Hamiltonian
ODEs: the vector space $\quot{V}{[V,V]}$ is a Lie algebra and one has
its representation on $V$ by derivations,
both defined by \eqref{eq:0.12} (see Proposition \ref{20131205:prop1}).

Brackets \eqref{eq:0.12} allow one to define the basic notions of a Hamiltonian theory.
Given a \emph{Hamiltonian function} $\tr(h)$, $h\in V$, one defines the associated
\emph{Hamiltonian equation}
\begin{equation}\label{eq:0.13}
\frac{dx}{dt}=\{\tr(h),x\}\,,
\quad x\in V\,.
\end{equation}
Two Hamiltonian functions $\tr(f)$ and $\tr(g)$ are said to be in involution if
\begin{equation}\label{eq:0.14}
\{\tr(f),\tr(g)\}=0\,.
\end{equation}

In the case of $V=R_\ell$, the algebra of non-commutative polynomials in $\ell$ variables
$x_1,\dots,x_\ell$, any $2$-fold bracket can be written in a traditional form ($f,g\in R_\ell$):
\begin{equation}\label{eq:0.15}
\ldb f,g\rdb =\nabla g\bullet H\bullet (\nabla f)^{\sigma}\,,
\end{equation}
where $\nabla f$, the \emph{gradient} of $f$, is the vector of $2$-fold derivatives
$\frac{\partial f}{\partial x_i}:R_\ell\to R_\ell\otimes R_\ell$, defined by
$\frac{\partial f}{\partial x_i}(x_j)=\delta_{ij}(1\otimes1)$ (see \cite{CBEG07}),
$H=(H_{ij})_{i,j=1}^\ell$, where $H_{ij}=\ldb x_j,x_i\rdb\in V\otimes V$, and
$\bullet$ denotes the multiplication in $V\otimes V^{op}$.
Moreover, the $2$-fold bracket \eqref{eq:0.15} is skewsymmetric if and only if skewsymmetry
holds on any pair of generators $x_i,x_j$
(equivalently, if the matrix $H$ is skewadjoint: $(H^t)^\sigma=-H$),
and, provided that $H$ is skewadjoint, the Jacobi identity \eqref{eq:0.10} holds
if and only if it holds on any triple of generators $x_i$, $x_j$, $x_k$
(see Theorem \ref{thm:master-finite}).

In the case of $V=R_\ell$ the Hamiltonian equation \eqref{eq:0.13} becomes the following
evolution ODE, where $x=(x_i)_{i=1}^\ell$:
\begin{equation}\label{eq:0.16}
\frac{dx}{dt}
=\mult\left(H(\nabla h)^\sigma\right)
\,,
\end{equation}
and the bracket $\{-,-\}$ on $\quot{R_\ell}{[R_\ell,R_\ell]}$ becomes:
\begin{equation}\label{eq:0.17}
\{\tr(f),\tr(g)\}
=\tr\left(\mult\left(\nabla g\bullet H\bullet (\nabla h)^\sigma\right)\right)
\,.
\end{equation}

Developing the ideas of \cite{MS00}, we study in detail the Euler hierarchy,
obtained via the Lenard-Magri scheme from the following pair of compatible
double Poisson brackets on $R_2=\mb F\langle x,y\rangle$, where $y$ is
central for both $2$-fold brackets:
\begin{equation}\label{eq:0.18}
\ldb x,x\rdb_0=1\otimes y-y\otimes1\,,
\quad
\ldb x,x\rdb_1=x\otimes y-y\otimes x\,.
\end{equation}
Letting
\begin{equation}\label{eq:0.19}
h_0=1
\quad\text{and}\quad
h_n=\frac{1}{n}(x+y)^n
\quad\text{for }n\geq1\,,
\end{equation}
it is easy to establish the following recursion relations:
$$
\{\tr(h_0),x\}_0=0
\quad\text{and}\quad
\{\tr(h_n),x\}_1=\{\tr(h_{n+1}),x\}_0
\quad\text{for }n\in\mb Z_+\,.
$$
Hence, by the Lenard-Magri scheme, all $\tr(h_n)$ are in involution with respect to both brackets,
and we obtain, by \eqref{eq:0.16}, the following compatible set of Hamiltonian ODEs ($m\in\mb Z_+$):
\begin{equation}\label{eq:0.20}
\frac{dx}{dt_m}=x(x+y)^my-y(x+y)^mx\,,
\quad
\frac{dy}{dt_m}=0\,,
\end{equation}
for which all $h_n$, $n\in\mb Z_+$, are conserved densities.
Thus, by definition in \cite{MS00}, the hierarchy of Hamiltonian ODEs \eqref{eq:0.20}
is integrable (see Section \ref{sec:2.6} for details).

We develop this theory in the natural generality of a (non-commutative)
\emph{algebra of ordinary differential functions}. This is a unital associative
algebra $V$ endowed with $\ell$ strongly commuting $2$-fold derivations
$\frac{\partial}{\partial x_i}$. We say that the $2$-fold derivations $D$ and $E$
\emph{strongly commute} if $D_L\circ E=E_R\circ D$, where
$D_L(a_1\otimes a_2)=(D_La_1)\otimes a_2$ and
$D_R(a_1\otimes a_2)=a_1\otimes(D_Ra_2)$
(then $D$ and $E$ commute).

Formula \eqref{eq:0.15} defines a $2$-fold bracket on any algebra of ordinary differential
functions $V$, and $V$ is a double Poisson algebra if and only if the matrix $H$ is skewadjoint
and \eqref{eq:0.10} holds on any ``triple of generators''
(see Theorem \ref{thm:master-finite} for details).

The assumption of strong commutativity of the $\frac{\partial}{\partial x_i}$ is used in the proof
of the Jacobi identity for the $2$-fold bracket \eqref{eq:0.15} and the proof that $d^2=0$
for the de Rham complex $\widetilde{\Omega}(V)$, discussed below.

For an algebra of differential functions $V$ we construct in Section \ref{sec:derham}
the de Rham complex $\widetilde{\Omega}(V)$ as a free product of $V$
and the algebra of non-commutative polynomials in the indeterminates $dx_1,\dots,dx_\ell$.
This is a superalgebra with the $\mb Z_+$-grading given by $\deg(V)=0$,
$\deg(dx_i)=1$, compatible with the parity,
and with the de Rham differential $d$ being
the odd derivation such that $d(dx_i)=0$ and
$df=\sum_{i=1}^\ell \mult\left(\frac{\partial f}{\partial x_i}\otimes_1dx_i\right)$
for $f\in\widetilde{\Omega}^0(V)=V$.

The relevant to the theory of non-commutative evolution ODEs is the reduced complex
\begin{equation}\label{eq:0.21}
\Omega(V)=\quot{\widetilde{\Omega}(V)}{[\widetilde{\Omega}(V),\widetilde{\Omega}(V)]}\,.
\end{equation}
In Section \ref{sec:red-com} we give an explicit description of the space of $k$-forms
$\Omega^k(V)$. In particular: $\Omega^0(V)=\quot{V}{[V,V]}$, $\Omega^1(V)=V^{\oplus\ell}$,
$\Omega^2(V)=\left(\Mat_{\ell\times\ell}(V\otimes V)\right)_-$,
where the subscript $\_$ means that $(A_{ij})_{i,j=1}^{\ell}=-(A_{ji}^\sigma)_{i,j=1}^\ell$.
Under these identifications the formula for $d$ becomes:
\begin{equation}\label{eq:0.22}
d(\tr f)=\mult(\nabla f)^{\sigma}\,,
\quad
dF=\frac12(J_F^t-J_F^\sigma)\,,
\end{equation}
where for $F=(F_i)_{i=1}^\ell\in V^{\oplus\ell}$ we define the $2$-fold Jacobian matrix as
$J_F=\left(\frac{\partial F_i}{\partial x_j}\right)_{i,j=1}^\ell$.
We also prove that both complexes $(\widetilde{\Omega}(R_\ell),d)$ and
$(\Omega(R_\ell),d)$ are acyclic.

The theory of double Poisson algebras and non-commutative Hamiltonian ODEs, over an algebra
of ordinary differential functions $V$, developed in Section \ref{sec:2} sets the stage for the
theory of double PVAs and non-commutative Hamiltonian PDEs over an algebra of differential
functions $\mc V$, which we develop in Section \ref{sec:3}.

In a nutshell a double PVA is a cross between a double Poisson algebra and a PVA. Let $\mc V$ be
unital associative algebra with a derivation $\partial$.
We define a \emph{$2$-fold $\lambda$-bracket} on $\mc V$ as a map
$\ldb-_\lambda-\rdb:\mc V\otimes\mc V\to(\mc V\otimes\mc V)[\lambda]$,
satisfying sesquilinearity \eqref{20140702:eq4b} and left and right Leibniz rules
\eqref{20140702:eq6b}, similar to \eqref{eq:0.1} and \eqref{eq:0.2},\eqref{eq:0.3} for PVA.
A \emph{double PVA} is the algebra $\mc V$, endowed with a $2$-fold $\lambda$-bracket
satisfying skewsymmetry \eqref{eq:skew2} and Jacobi identity \eqref{eq:jacobi2}
($=$ cross between \eqref{eq:0.4} and \eqref{eq:0.8},
and between \eqref{eq:0.5} and \eqref{eq:0.10},
respectively).
If $\partial =0$, the axioms of a double PVA turn into the axioms of a double Poisson algebra if we let $\lambda=\mu=0$.

A non-local double PVA is defined in the same way as a non-local PVA
(but the one cannot put $\lambda=0$).

For a (local) double PVA $\mc V$, we denote by $\tint$ the quotient map
$\mc V\to\overline{\mc V}:=\quot{\mc V}{([\mc V,\mc V]+\partial\mc V)}$.
Similarly to \eqref{eq:0.11}, define the map $\{-_\lambda-\}:\mc V\otimes\mc V\to \mc V[\lambda]$ by
\begin{equation}\label{eq:0.23}
\{a_\lambda b\}=\mult\ldb a_\lambda b\rdb
\,.
\end{equation}
As for the (local) PVA (see \cite{BDSK09}), \eqref{eq:0.23} induces well-defined maps
$$
\overline{\mc V}\otimes\mc V\to\mc V
\quad\text{and}\quad
\overline{\mc V}\otimes\overline{\mc V}\to\overline{\mc V}
\,,
$$
given by formulas, similar to \eqref{eq:0.12}:
\begin{equation}\label{eq:0.24}
\{\tint a, b\}=\{a_\lambda b\}\big|_{\lambda=0}
\quad\text{and}\quad
\{\tint a,\tint b\}=\tint\{a_\lambda b\}\big|_{\lambda=0}
\,.
\end{equation}
As before, these maps have properties important for the theory of non-commutative
Hamiltonian PDEs: $\overline{\mc V}$ is a Lie algebra and one has its representation on $\mc V$
by derivations, both defined by \eqref{eq:0.24} (see Theorem \ref{20140707:thm}).
We have definitions of a Hamiltonian function $\tint h\in\overline{\mc V}$, the corresponding
Hamiltonian equation in $\mc V$, similar to \eqref{eq:0.13},
and involutiveness, similar to \eqref{eq:0.14}, where $\tr$ is replaced by $\tint$.

We define an \emph{algebra of (non-commutative) differential functions} as a unital associative
algebra $\mc V$ with a derivation $\partial$ and strongly commuting $2$-fold derivations
$\frac{\partial}{\partial u_i^{(n)}}$, $i=1,\dots,\ell$, $n\in\mb Z_+$, such that the following
two properties hold (cf. \cite{BDSK09}):
\begin{equation}\label{eq:0.25}
\text{for each }f\in\mc V\,,
\frac{\partial f}{\partial u_i^{(n)}}=0
\text{ for all but finitely many }(i,n)\,, 
\end{equation}
\begin{equation}\label{eq:0.26}
\left[\frac{\partial}{\partial u_i^{(n)}},\partial\right]=\frac{\partial}{\partial u_i^{(n-1)}}\,.
\end{equation}
The most important example of an algebra of non-commutative
differential functions is the algebra of non-commutative differential polynomials
$\mc R_\ell$ in the indeterminates $u_i^{(n)}$, $i=1,\dots,\ell$, $n\in\mb Z_+$, with the
derivation $\partial$ defined by $\partial u_i^{(n)}=u_i^{(n+1)}$.

First, we prove a PVA analogue of Theorem \ref{thm:master-finite}. Namely,
any $2$-fold $\lambda$-bracket on $\mc R_\ell$ has the form (cf. \eqref{eq:0.15}):
\begin{equation}\label{eq:0.27}
\ldb f_{\lambda}g\rdb
=\sum_{\substack{i,j\in I\\m,n\in\mb Z_+}}
\frac{\partial g}{\partial u_j^{(n)}}
\bullet
(\lambda+\partial)^n
H_{ij}(\lambda+\partial)
(-\lambda-\partial)^m
\bullet
\left(\frac{\partial f}{\partial u_i^{(m)}}\right)^\sigma\,,
\end{equation}
where $H_{ij}(\lambda)=\ldb u_j{}_\lambda u_i\rdb\in(\mc R_\ell\otimes\mc R_\ell)[\lambda]$. 
Formula \eqref{eq:0.27} defines a $2$-fold $\lambda$-bracket on any algebra of differential functions
$\mc V$. This $2$-fold $\lambda$-bracket is skewsymmetric if and only if
$H(\partial)=\left(H_{ij}(\partial)\right)_{i,j=1}^\ell$ is a skewadjoint differential operator over
$\mc V\otimes\mc V$ and, provided that $H(\partial)$ is skewadjoint, the Jacobi identity holds
if and only if it holds on any triple of generators (see Theorem \ref{20130921:prop1}).

Next, the non-commutative Hamiltonian PDE, associated to the matrix differential operator
$H(\partial)$, defining the $2$-fold $\lambda$-bracket, and to the Hamiltionan functional
$\tint h\in\overline{\mc V}$ is the following evolution PDE, where $u=(u_i)_{i=1}^\ell$
(cf. \eqref{eq:0.16}):
\begin{equation}\label{eq:0.28}
\frac{du}{dt}
=\mult(H(\partial)\bullet(\delta h)^\sigma)
\,,
\end{equation}
where $\delta h
=\Big(\frac{\delta h}{\delta u_{i}}\Big)_{i=1}^\ell\in(\mc V\otimes\mc V)^{\oplus\ell}$
is the vector of $2$-fold variational derivatives
$$
\frac{\delta h}{\delta u_{i}}
=\sum_{n\in\mb Z_+}(-\partial)^n\frac{\partial h}{\partial u_{i}^{(n)}}\,.
$$
The bracket $\{-,-\}$ on $\overline{\mc V}$ is given by a formula, similar to \eqref{eq:0.17}:
\begin{equation}\label{eq:0.29}
\{\tint f,\tint g\}
=\tint \mult\left(\delta g\bullet H(\partial)\bullet (\delta f)^\sigma
\right)\,.
\end{equation}

Furthermore the de Rham complex $\widetilde{\Omega}(\mc V)$ over an algebra of differential
functions $\mc V$ is defined in the same way as the de Rham complex $\widetilde{\Omega}(V)$ over an
algebra of ordinary differential functions $V$.
Moreover, the action of $\partial$ on $\mc V$ naturally extends
to an action on $\widetilde{\Omega}(\mc V)$, commuting with the de Rham differential $\delta$
(see Section \ref{sec:3.5}).
This allows us to define the \emph{variational complex}
$$
\Omega(\mc V)
=\quot{\widetilde{\Omega}(\mc V)}
{(\partial\widetilde{\Omega}(\mc V)+[\widetilde{\Omega}(\mc V),\widetilde{\Omega}(\mc V)])}
\,.
$$
In Section \ref{sec:3.6} we give an explicit description of the space of the variational $k$-forms
$\Omega^k(\mc V)$.
In particular,
$$
\Omega^0(\mc V)=\quot{\mc V}{(\partial\mc V+[\mc V,\mc V])}\,,
\quad
\Omega^1(\mc V)=\mc V^{\oplus\ell}\,,
\quad
\Omega^2(\mc V)=\left(\Mat_{\ell\times\ell}(\mc V\otimes\mc V)[\lambda]
\right)_-\,,
$$
where the subscript $\_$ means that $(A_{ij}(\lambda))_{i,j=1}^\ell
=-(A_{ji}(-\lambda-\partial)^\sigma)_{i,j=1}^\ell$.
Under these identifications, the formula for $\delta$ is similar to \eqref{eq:0.22}:
\begin{equation}\label{eq:0.30}
\delta(\tint f)=\mult(\delta f)^\sigma\,,
\quad
(\delta F)(\partial)=\frac12(D_F(\partial)^t-D_F(\partial)^{*\sigma})
\,,
\end{equation}
where $(D_F(\partial))_{ij}=\sum_{n\in\mb Z_+}\frac{\partial F_i}{\partial u_j^{(n)}}\partial^n$
is the $2$-fold Frechet derivative.

We also prove that both complexes $(\widetilde{\Omega}(\mc R_\ell),\delta)$ and
$(\Omega(\mc R_\ell),\delta)$ are acyclic.
Hence, in particular, we obtain the description of the kernel and the image of the $2$-fold
variational derivative $\delta$.

The main motivation for introducing double Poisson algebras was the observation that, given a unital
finitely generated associative algebra $V$, each double Poisson algebra structure on $V$ associates for
each positive integer $m$ a Poisson algebra structure on the algebra of polynomial functions $V_m$ on
the affine algebraic variety of $m$-dimensional representations of the algebra $V$ \cite{VdB08}.

Note that $V_m$ is isomorphic to the commutative
associative algebra with generators
$\{a_{ij}\mid a\in V, i,j=1,\dots,m\}$, subject to the relations
\eqref{20130917:eq2} in Section \ref{sec:V_m}.

If $\mc V$ is a (non-commutative) differential associative algebra
with derivation $\partial$, then $\mc V_m$ is a
commutative differential associative algebra with derivation defined
on generators by $\partial(a_{ij})=(\partial a)_{ij}$.
Similarly to \cite{VdB08}, we show that, given a double PVA structure on $\mc V$, one can associate
for each positive integer $m$ a PVA structure on $\mc V_m$
(see Theorem \ref{20130921:cor1}).

The simplest example of a double PVA is the following family of compatible double PVA structures
on $\mc R_1=\mb F\langle u, u', u'',\dots\rangle$:
$$
\ldb u_\lambda u\rdb=1\otimes u-u\otimes1+c(1\otimes1)\lambda\,,
$$
where $c\in\mb F$ is a parameter. In this case the associated PVA structure on
$(\mc R_1)_m=\mb F[u_{ij}^{(n)}\mid i,j=1,\dots,m,n\in\mb Z_+]$
is just the affine PVA for $\mf{gl}_m$ (see Section \ref{sec:affine}).
The extension of this double PVA structure on $\mc R_1$ to the double
PVA structure on $\mc R_2=\mb F\langle u,v,u',v',\dots\rangle$, where $v$ is central, 
produces the following hierarchy of non-commutative Hamiltonian PDE's ($n\in\mb Z_+$):
$$
\frac{du}{dt_n}
= 
v(u+v)^n u-u(u+v)^n v +c\partial (u+v)^{n+1}
\,\,,\,\,\,\, 
\frac{dv}{dt_n}
=0
\,.
$$

Our next example is a compatible pair of double PVA structures on the algebra $\mc R_N$
(resp. $\mc R_\infty$), associated to the ``generic'' differential (resp. pseudodifferential)
operator with coefficients in $\mc R_N$ (resp. $\mc R_\infty$), similar
to the PVA structures in the commutative case, considered in \cite{DSKV14a}
(see Section \ref{sec:AGD}).

We apply for the latter example the theory of Dirac reduction for (non-local) double PVA
developed in Section \ref{sec:dirac}, similar to that for (non-local) PVA,
developed in \cite{DSKV14b}.
As a result, we obtain compatible non-local double Poisson vertex algebra structures on 
$\mc V=\mc R_N$ and $\mc R_\infty$, such that the corresponding non-local PVA structures on
$\mc V_m$ coincide with those in \cite{DSKV14a} (see Example \ref{exa:5.9}).

Finally, in Section \ref{sec:hierarchies} we apply the Lenard-Magri scheme to the double PVA
constructed in Section \ref{sec:AGD} to prove integrability of the non-commutative KP hierarchy
and of non-commutative Gelfand-Dickey hierarchies. In particular, we prove integrability of the
non-commutative KdV and Boussinesq hierarchies.

Throughout the paper we let $\mb F$ be a field of characteristic $0$,
and, unless otherwise specified, we consider all vector spaces,
tensor products, etc., over the field $\mb F$.

\bigskip
\noindent\textit{Acknowledgments.}
We are grateful to Vladimir Sokolov for introducing us to the fascinating subject of non-commutative integrable equations.

Part of the work was completed during 
the visits of the second and third author to the University of Rome La Sapienza,
during the visit of the first author to SISSA in Trieste,
and during the visits of all three authors to IHES, France.
We thank all these institutions for their kind hospitality.
The first author is supported by the national FIRB grant RBFR12RA9W ``Perspectives in Lie Theory'',
and the national PRIN grant number  2012KNL88Y$\_$001.
The second author is supported by a national NSF grant.
The third author was supported
by the ERC grant ``FroM-PDE: Frobenius Manifolds and Hamiltonian Partial Differential Equations''.

\section{Preliminaries and notation for associative algebras}

Throughout this and the next section, we let $V$ be an associative algebra
over the field $\mb F$.

\subsection{``Double'' linear algebra}

Consider the associative product $\bullet$ on $V\otimes V$
given, in Sweedler's notation, by the formula
\begin{equation}\label{20140609:eqc1}
A\bullet B=A^\prime B^\prime\otimes B^{\prime\prime}A^{\prime\prime}
\,.
\end{equation}
(It is the product in $V\otimes V^{op}$.)
The endomorphism $\sigma$ of $V\otimes V$
obtained by exchanging the two factors:
\begin{equation}\label{20140710:eq1}
(a\otimes b)^\sigma=b\otimes a
\,,
\end{equation}
is an antiautomorphism of the $\bullet$-product:
\begin{equation}\label{20140710:eq4}
(A\bullet B)^\sigma=B^\sigma\bullet A^\sigma
\,.
\end{equation}
We define an inner product $(\cdot\,|\,\cdot)$ on the vector space $(V\otimes V)^{\oplus\ell}$
with values in the algebra $V\otimes V$,
letting for $F=(F_i)_{i=1}^\ell$ and $G=(G_i)_{i=1}^\ell$ ($F_i,G_i\in V\otimes V$):
\begin{equation}\label{20140710:eq2}
(F|G)=\sum_{i=1}^\ell F_i\bullet G_i^\sigma
\,.
\end{equation}
By \eqref{20140710:eq4}, it satisfies the following symmetry property:
\begin{equation}\label{20140710:eq5}
(G|F)^\sigma=(F|G)
\,.
\end{equation}
Let $H=(H_{ij})_{i,j=1}^\ell\in\Mat_{\ell\times\ell}(V\otimes V)$.
We define its action on $(V\otimes V)^{\oplus\ell}$ by the following formula:
\begin{equation}\label{20140710:eq3}
(HF)_i
=
\sum_{j=1}^\ell (H_{ij}\bullet F_j^\sigma)^\sigma
\,.
\end{equation}
In Sweedler's notation, it is
$(HF)_i=\sum_{j=1}^\ell  F_j^\prime H_{ij}^{\prime\prime} \otimes H_{ij}^\prime F_j^{\prime\prime}$.
The \emph{adjoint} of the matrix $H$
is defined by
\begin{equation}\label{20140710:eq6}
(H^\dagger)_{ij}=(H_{ji})^\sigma
\,.
\end{equation}
It is adjoint with respect to the inner product \eqref{20140710:eq2} in the usual sense:
\begin{equation}\label{20140710:eq7}
(F|HG)=(H^\dagger F|G)
\,\,,\,\,\,\,
F,G\in(V\otimes V)^{\oplus\ell}
\,.
\end{equation}

\subsection{Trace map and reduction}

Denote by $\mult:\, V\otimes V\to V$ (or, in general, $\mult:\,V^{\otimes n}\to V$)
the multiplication map:
$$
\mult(a\otimes b)=ab\,.
$$
We define the following ``reduction'' map $(V\otimes V)^{\oplus\ell}\to V^{\oplus\ell}$ by
\begin{equation}\label{20140710:eq8}
F
\mapsto 
\mult(F^\sigma)
\,\,\Big(=\big(\mult(F_i^\sigma)\big)_{i=1}^\ell\Big)
\,.
\end{equation}
Furthermore,
let $[V,V]$ be the subspace of $V$ spanned by commutators $ab-ba$, for $a,b\in V$,
and denote by 
\begin{equation}\label{eq:trace}
\tr:\, V\to\quot{V}{[V,V]}
\end{equation}
the canonical quotient map.

The inner product \eqref{20140710:eq2}
induces a well defined inner product $(\cdot\,|\,\cdot):\,V^{\oplus\ell}\times V^{\oplus\ell}\to\quot{V}{[V,V]}$
so that the following diagram is commutative:
$$
\UseTips
\xymatrix{
(V\otimes V)^{\oplus\ell}\times(V\otimes V)^{\oplus\ell}
\ar[d]_{\mult\circ\sigma}
&
\ar[r]^-{(\cdot\,|\,\cdot)}
&
&
V\otimes V
\ar[d]^{\tr\circ\mult} 
\\
V^{\oplus\ell}\times V^{\oplus\ell}
&
\ar@{.>}[r]^-{(\cdot\,|\,\cdot)}
&
&
\quot{V}{[V,V]}
}
$$
It is given explicitly by
\begin{equation}\label{20140710:eq9}
(F|G)
=
\sum_{i=1}^\ell \tr(F_iG_i)
\,\,,\,\,\,\,
F,G\in V^{\oplus\ell}
\,.
\end{equation}
This bilinear form is obviously symmetric.

Furthermore, the action \eqref{20140710:eq3} of the matrix $H\in\Mat_{\ell\times\ell}(V\otimes V)$
induces an action on $V^{\oplus\ell}$ via the commutative diagram
$$
\UseTips
\xymatrix{
(V\otimes V)^{\oplus\ell}
\ar[d]_{\mult\circ\sigma}
&
\ar[r]^-{H}
&
&
(V\otimes V)^{\oplus\ell}
\ar[d]^{\mult\circ\sigma} 
\\
V^{\oplus\ell}
&
\ar@{.>}[r]^-{H}
&
&
V^{\oplus\ell}
}
$$
Explicitly, it is given, in Sweedler's notation, by
\begin{equation}\label{20140710:eq10}
(HF)_i
=
\sum_{j=1}^\ell
H_{ij}^\prime F_j H_{ij}^{\prime\prime}
\,\,,\,\,\,\,
F\in V^{\oplus\ell}
\,.
\end{equation}

\subsection{\texorpdfstring{$V$}{V}-module structures of
\texorpdfstring{$V^{\otimes n}$}{Vxn}}

The space $V^{\otimes n}$ has the following outer $V$-bimodule structure:
\begin{equation}\label{20140604:eq2}
a(b_1\otimes\dots\otimes b_n)c=(ab_1)\otimes b_2\dots\otimes b_{n-1}\otimes(b_nc)
\,.
\end{equation}
More generally, for every $i=0,\dots,n-1$, we define the $i$-th left and right $V$-module structures
of $V^{\otimes n}$ by
\begin{equation}\label{20140605:eq1}
\begin{array}{l}
\displaystyle{
\vphantom{\Big(}
a\star_i (b_1\otimes\dots\otimes b_n)
=
b_1\otimes\dots\otimes b_i\otimes ab_{i+1}\otimes\dots\otimes b_n
\,,} \\
\displaystyle{
\vphantom{\Big(}
(a_1\otimes\dots\otimes a_n)\star_i b
=
a_1\otimes\dots\otimes a_{n-i}b\otimes\dots\otimes a_n
\,.}
\end{array}
\end{equation}
(The index denotes the number of ``jumps''.)
In particular, the bimodule structure \eqref{20140604:eq2}
is given by $aBc=a\star_0 B\star_0 c$.

We use a similar notation for $\otimes$-product
of an element of $V$ and an element of $V^{\otimes n}$:
\begin{equation}\label{20140605:eq1b}
\begin{array}{l}
\displaystyle{
\vphantom{\Big(}
a\otimes_i (b_1\otimes\dots\otimes b_n)
=
b_1\otimes\dots\otimes b_i\otimes a\otimes b_{i+1}\otimes\dots\otimes b_n
\,,} \\
\displaystyle{
\vphantom{\Big(}
(a_1\otimes\dots\otimes a_n)\otimes_i b
=
a_1\otimes\dots\otimes a_{n-i}\otimes b\otimes\dots\otimes a_n
\,.}
\end{array}
\end{equation}
For example, the action \eqref{20140710:eq10}
of $H\in\Mat_{\ell\times\ell}(V\otimes V)$ on $V^{\oplus\ell}$
can be rewritten, using this notation, as
$(HF)_i
=\sum_{j=1}^\ell \mult(H_{ij}\otimes_1 F_j)
\,\big(
=\sum_{j=1}^\ell \mult(F_j\otimes_1 H_{ij})
=\sum_{j=1}^\ell \mult(H_{ij}\star_1 F_j)
=\sum_{j=1}^\ell \mult(F_j\star_1 H_{ij})
\big)$.

Note that, for $f\in V$ and $A,B\in V^{\otimes2}$, we have
\begin{equation}\label{lemma:bullet}
\begin{array}{l}
\displaystyle{
\vphantom{\Big)}
(fA)\bullet B=f(A\bullet B)
\,\,,\,\,\,\,
(Af)\bullet B=(A\bullet B)f
\,,} \\
\displaystyle{
\vphantom{\Big)}
A\bullet(f\star_1B)=f\star_1(A\bullet B)
\,\,,\,\,\,\,
A\bullet(B\star_1f)=(A\bullet B)\star_1f
\,.}
\end{array}
\end{equation}

We also will be using the \emph{multiplication map}
$V^{\otimes m}\times V^{\otimes n}\to V^{\otimes(m+n-1)}$,
given by
\begin{equation}\label{20140604:eq3}
(a_1\otimes\dots\otimes a_m)(b_1\otimes\dots\otimes b_n)
=
a_1\otimes\dots\otimes a_{m-1}\otimes (a_m b_1)\otimes b_2\dots\otimes b_n
\,.
\end{equation}

\subsection{Some further notation}

We define three possible left and right actions of the algebra $V^{\otimes 2}$
on $V^{\otimes3}$, denoted by $\bullet_i$, $i=1,2,3$, as follows
(here the subscript indicates the place where no multiplication occurs)
\begin{equation}\label{20140609:eqc2}
\begin{array}{l}
\displaystyle{
\vphantom{\Big)}
(a\otimes b)\bullet_1(x\otimes y\otimes z)
=
x\otimes ay\otimes zb
\,,\,\,
(x\otimes y\otimes z)\bullet_1(a\otimes b)
=
x\otimes ya\otimes bz
\,,}\\
\displaystyle{
\vphantom{\Big)}
(a\otimes b)\bullet_2(x\otimes y\otimes z)
=
ax\otimes y\otimes zb
\,,\,\,
(x\otimes y\otimes z)\bullet_2(a\otimes b)
=
xa\otimes y\otimes bz
\,,}\\
\displaystyle{
\vphantom{\Big)}
(a\otimes b)\bullet_3(x\otimes y\otimes z)
=
ax\otimes yb\otimes z
\,,\,\,
(x\otimes y\otimes z)\bullet_3(a\otimes b)
=
xa\otimes by\otimes z
\,.}
\end{array}
\end{equation}
\begin{lemma}\label{lemma:bullet-i}
\begin{enumerate}[(a)]
\item
The $\bullet_i$ left (and right) actions of $V^{\otimes2}$ on $V^{\otimes3}$ are 
indeed actions, i.e. they are associative
with respect to the $\bullet$-product of $V^{\otimes2}$:
$$
A\bullet_i(B\bullet_i X)=(A\bullet B)\bullet_i X
\,\text{ and }\,
(X\bullet_i A)\bullet_i B=X\bullet_i(A\bullet B)
\,,
$$
for every $A,B\in V^{\otimes2}$ and $X\in V^{\otimes3}$.
\item
The left $\bullet_i$ and the right $\bullet_j$ actions commute
for every $i,j=1,2,3$:
$$
A\bullet_i(X\bullet_j B)=(A\bullet_i X)\bullet_j B
$$
for every $A,B\in V^{\otimes2}$ and $X\in V^{\otimes3}$.
\item
The $\bullet_1$ and $\bullet_3$ left (resp. right) actions of $V^{\otimes2}$ on $V^{\otimes3}$ 
commute:
$$
A\bullet_1(B\bullet_3 X)=
B\bullet_3(A\bullet_1 X)
\,\text{ and }\,
(X\bullet_1 A)\bullet_3 B=(X\bullet_3 B)\bullet_1 A
\,,
$$
for every $A,B\in V^{\otimes2}$ and $X\in V^{\otimes3}$.
(In general, the $\bullet_i$ and $\bullet_j$ left (resp. right) actions do NOT commute
if $|i-j|=1$.)
\end{enumerate}
\end{lemma}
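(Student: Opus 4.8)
The plan is to prove all three parts by a single elementary device. By bilinearity of every operation involved it suffices to work on decomposable tensors, so I would write $A=A'\otimes A''$, $B=B'\otimes B''$ in Sweedler notation and $X=x_1\otimes x_2\otimes x_3$. The organizing observation is that each operation $\bullet_i$ distributes the two factors of its $V^{\otimes2}$-argument as a \emph{left}-multiplication on one tensor slot of $V^{\otimes3}$ and a \emph{right}-multiplication on another, leaving the slot recorded by the subscript untouched; one reads this directly off \eqref{20140609:eqc2}. With this bookkeeping, every asserted identity collapses, slot by slot, to the associativity of the product of $V$.

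For part (a) I would expand both $A\bullet_i(B\bullet_i X)$ and $(A\bullet B)\bullet_i X$ on the decomposable $X$. The intertwined shape of the $\bullet$-product, $A\bullet B=A'B'\otimes B''A''$ from \eqref{20140609:eqc1}, is precisely engineered for this: on the slot that both arguments left-multiply, the left-hand side produces $A'(B'\,\cdot\,)$ while the right-hand side produces $(A'B')\,\cdot\,$; on the slot that both right-multiply, one gets $(\,\cdot\,B'')A''$ versus $\,\cdot\,(B''A'')$; and on the untouched slot nothing happens. Each matching pair agrees by associativity in $V$, giving the left-action identity, and the right-action identity follows from the identical computation with the roles of left- and right-multiplication interchanged (again $A\bullet B=A'B'\otimes B''A''$ is exactly what is needed).

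For parts (b) and (c) the governing principle is that two of these operations commute precisely when, slot by slot, they never both multiply the same factor of $V^{\otimes3}$ from the same side: a left- and a right-multiplication on a common slot pass through each other by associativity, whereas two multiplications on a common slot from the same side cannot be exchanged. In part (c) the left $\bullet_1$ action affects slots $2,3$ and the left $\bullet_3$ action affects slots $1,2$; on the only shared slot, slot $2$, one acts from the left and the other from the right, so associativity yields commutativity. For $|i-j|=1$ the two left (resp.\ right) actions instead collide on a common slot from the same side, which accounts for the failure recorded in the parenthetical remark.

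I expect part (b) to be the main obstacle, since it is the genuinely case-dependent statement. Here I would run through the pairs $(i,j)$ and check in each case how the left $\bullet_i$ multiplications and the right $\bullet_j$ multiplications are distributed among the three slots, keeping careful track of which of $A',A''$ and which of $B',B''$ lands where and on which side. The crux is exactly the slot-by-slot collision check: once one verifies that the left and right contributions never meet on a common slot from the same side, the two operations interchange using only associativity of $V$, and each individual identity becomes immediate. This bookkeeping, rather than any conceptual difficulty, is the real content of the lemma.
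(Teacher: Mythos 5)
Your organizing device --- reduce to decomposable tensors, record for each action which slot of $V^{\otimes3}$ it multiplies from the left and which from the right, and let associativity of $V$ absorb every collision of a left-multiplication with a right-multiplication --- is exactly the verification the paper compresses into ``Straightforward'', and it settles parts (a) and (c) correctly, including the explanation of the parenthetical failure for $|i-j|=1$. The gap is in part (b): you assert that for every pair $(i,j)$ the left $\bullet_i$ and right $\bullet_j$ contributions ``never meet on a common slot from the same side''. Running your own collision check refutes this. From \eqref{20140609:eqc2}, the left action $\bullet_1$ left-multiplies slot $2$ and right-multiplies slot $3$, while the right action $\bullet_3$ right-multiplies slot $1$ and \emph{left}-multiplies slot $2$: both act on slot $2$ from the same side. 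Accordingly the identity itself fails for $(i,j)=(1,3)$: writing $A=A'\otimes A''$, $B=B'\otimes B''$, $X=x\otimes y\otimes z$,
\[
A\bullet_1(X\bullet_3 B)=xB'\otimes A'B''y\otimes zA''\,,
\qquad
(A\bullet_1 X)\bullet_3 B=xB'\otimes B''A'y\otimes zA''\,,
\]
and $A'B''\neq B''A'$ in general: in $V=\mb F\langle s,t\rangle$ with $A=s\otimes1$, $B=1\otimes t$, $X=1\otimes1\otimes1$ the two sides are $1\otimes st\otimes1$ and $1\otimes ts\otimes1$. The pair $(i,j)=(3,1)$ fails symmetrically (both actions right-multiply slot $2$). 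The remaining seven pairs do pass the check, so (b) is true exactly for $(i,j)\notin\{(1,3),(3,1)\}$, i.e.\ it fails precisely when $|i-j|=2$, complementing the $|i-j|=1$ failures in (c).

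To be fair, this flaw is inherited from the paper: part (b) as printed, with ``for every $i,j=1,2,3$'', is an overstatement, and no proof of it can exist. The overstatement is harmless for the paper, since every instance invoked later lies among the seven valid pairs: the proof of Theorem \ref{thm:master-finite} needs only the left $\bullet_2$ action against the right $\bullet_1$ and $\bullet_3$ actions, together with part (c) for the two right actions $\bullet_1$ and $\bullet_3$, and the Dirac reduction arguments use the same combinations besides parts (a) and (c). But your plan, as written, claims that the slot-by-slot check closes all nine cases of (b); that is the step that would fail. The correct output of your method is the amended statement --- left $\bullet_i$ and right $\bullet_j$ commute if and only if no slot receives two multiplications from the same side, which excludes $(1,3)$ and $(3,1)$ --- rather than the lemma as printed.
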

\begin{proof}
Straightforward.
\end{proof}
\begin{lemma}\label{lemma:bullet2}
For every $A\in V^{\otimes2}$ 
and $X\in V^{\otimes3}$ we have
\begin{equation}\label{eq:bullet-i-sigma}
\begin{array}{lll}
\displaystyle{
\vphantom{\Big(}
(A\bullet_1 X)^\sigma=X^\sigma\bullet_2 A^\sigma
\,\,,}
&
\displaystyle{
\vphantom{\Big(}
(A\bullet_2 X)^\sigma=X^\sigma\bullet_3 A^\sigma
\,\,,}
&
\displaystyle{
\vphantom{\Big(}
(A\bullet_3 X)^\sigma=A\bullet_1 X^\sigma
\,,} \\
\displaystyle{
\vphantom{\Big(}
(A\bullet_1 X)^{\sigma^2}=A\bullet_3 X^{\sigma^2}
\,\,,}
&
\displaystyle{
\vphantom{\Big(}
(A\bullet_2 X)^{\sigma^2}=X^{\sigma^2}\bullet_1 A^\sigma
\,\,,}
&
\displaystyle{
\vphantom{\Big(}
(A\bullet_3 X)^{\sigma^2}=X^{\sigma^2}\bullet_2A^\sigma
\,,} \\
\displaystyle{
\vphantom{\Big(}
(X\bullet_1 A)^\sigma=A^\sigma\bullet_2 X^\sigma
\,\,,}
&
\displaystyle{
\vphantom{\Big(}
(X\bullet_2 A)^\sigma=A^\sigma\bullet_3 X^\sigma
\,\,,}
&
\displaystyle{
\vphantom{\Big(}
(X\bullet_3 A)^\sigma=X^\sigma\bullet_1 A
\,,} \\
\displaystyle{
\vphantom{\Big(}
(X\bullet_1 A)^{\sigma^2}=X^{\sigma^2}\bullet_3A 
\,\,,}
&
\displaystyle{
\vphantom{\Big(}
(X\bullet_2 A)^{\sigma^2}=A^\sigma \bullet_1 X^{\sigma^2}
\,\,,}
&
\displaystyle{
\vphantom{\Big(}
(X\bullet_3 A)^{\sigma^2}=A^\sigma \bullet_2 X^{\sigma^2}
\,.}
\end{array}
\end{equation}
\end{lemma}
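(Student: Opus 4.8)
The plan is to reduce everything to a direct check on decomposable tensors and then exploit the cyclic symmetry to halve the work. Since all the maps involved---the actions $\bullet_i$ of \eqref{20140609:eqc2} and the permutation $\sigma$---are (multi)linear, it suffices to verify each of the twelve identities for $A=a\otimes b\in V^{\otimes2}$ and $X=x\otimes y\otimes z\in V^{\otimes3}$. Here I use that $\sigma$ acts on $V^{\otimes3}$ as the cyclic permutation $(x\otimes y\otimes z)^\sigma=z\otimes x\otimes y$, whence $(x\otimes y\otimes z)^{\sigma^2}=y\otimes z\otimes x$ and $\sigma^3=\mathrm{id}$ on $V^{\otimes3}$, while on $V^{\otimes2}$ the map $\sigma$ of \eqref{20140710:eq1} is an involution, so $A^{\sigma^2}=A$.

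First I would establish the six identities in the first and third rows, i.e. those involving a single $\sigma$. Each is immediate: for instance $A\bullet_1 X=x\otimes ay\otimes zb$ by \eqref{20140609:eqc2}, so $(A\bullet_1 X)^\sigma=zb\otimes x\otimes ay$, while $X^\sigma\bullet_2 A^\sigma=(z\otimes x\otimes y)\bullet_2(b\otimes a)=zb\otimes x\otimes ay$, proving $(A\bullet_1 X)^\sigma=X^\sigma\bullet_2 A^\sigma$. The remaining five $\sigma$-identities are verified by the same substitution; the only thing to watch is that applying $\sigma$ rotates the three tensor slots by one, so the ``gap'' index $i$ of $\bullet_i$ advances to $i+1$ modulo $3$, the two legs of $A$ get swapped (producing $A^\sigma$), and the left/right handedness switches---except when the gap wraps from slot $3$ back to slot $1$, in which case $A$ is left unchanged and the handedness is preserved.

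Finally I would deduce the six $\sigma^2$-identities of the second and fourth rows by applying $\sigma$ once more to the already-established $\sigma$-identities. For example, starting from the first-row relation $(A\bullet_1 X)^\sigma=X^\sigma\bullet_2 A^\sigma$ and applying $\sigma$ again gives $(A\bullet_1 X)^{\sigma^2}=(X^\sigma\bullet_2 A^\sigma)^\sigma$; now the third-row relation $(Y\bullet_2 B)^\sigma=B^\sigma\bullet_3 Y^\sigma$, taken with $Y=X^\sigma$ and $B=A^\sigma$, turns the right-hand side into $A^{\sigma^2}\bullet_3 X^{\sigma^2}=A\bullet_3 X^{\sigma^2}$, which is exactly the claimed $(A\bullet_1 X)^{\sigma^2}=A\bullet_3 X^{\sigma^2}$. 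Each of the other five $\sigma^2$-identities follows the same way, by composing one first/third-row identity with a suitable second one and using $A^{\sigma^2}=A$. The main (and only) obstacle is bookkeeping: keeping the direction of the cyclic rotation fixed throughout and tracking which of $a,b,x,y,z$ lands in which slot. Once the convention for $\sigma$ on $V^{\otimes3}$ is pinned down, no genuine difficulty remains, which is why the analogous Lemma \ref{lemma:bullet-i} could be dispatched as ``Straightforward''.
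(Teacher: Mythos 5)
Your proposal is correct: the conventions you fix ($\sigma$ acting on $V^{\otimes3}$ as $(x\otimes y\otimes z)^\sigma=z\otimes x\otimes y$, involution on $V^{\otimes2}$) match the paper's \eqref{20140606:eq3}, your sample computations check out, and deriving the six $\sigma^2$-identities by composing two single-$\sigma$ identities is sound. This is essentially the same direct verification the paper has in mind when it dismisses the proof as ``Straightforward''; your halving trick is a pleasant bookkeeping economy but not a different method.
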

\begin{proof}
Straightforward.
\end{proof}

\subsection{\texorpdfstring{$n$}{n}-fold derivations}

\begin{definition}\label{n-derivation}
An $n$-\emph{fold derivation} of $V$ is a linear map
$D:\,V\to V^{\otimes n}$ such that 
\begin{equation}\label{20140604:eq4}
D(ab)=(Da)b+a(Db)
\,\,,\,\,\,\,
a,b\in V\,,
\end{equation}
where, in the RHS, we use the bimodule structure \eqref{20140604:eq2}.
\end{definition}
We can extend the $n$-fold derivation $D:\,V\to V^{\otimes n}$
to a map $D:\,V^{\otimes m}\to V^{\otimes(m+n-1)}$
for every $m\in\mb Z_+$, by
\begin{equation}\label{20140604:eq5}
D(a_1\otimes\dots\otimes a_m)
=
\sum_{i=1}^m
a_1\otimes\dots\otimes(Da_i)\otimes\dots\otimes a_m
\,,
\end{equation}
where, in the RHS, we use the concatenation product in the tensor algebra over $V$.
We can also extend the $n$-fold derivation $D$ to a map $V^{\otimes m}\to V^{\otimes m+n-1}$
by applying $D$ to only the $i$-th factor, and we denote the corresponding map by $D_{(i)}$:
\begin{equation}\label{20140606:eq1a}
D_{(i)}(a_1\otimes\dots\otimes a_m)
=
a_1\otimes\dots\otimes D(a_i)\otimes\dots\otimes a_m
\,.
\end{equation}
We will use the special notation for the derivation on 
the leftmost and the rightmost factors: 
\begin{equation}\label{20140606:eq1b}
D_L:=D_{(1)}
\,\text{ and }\,
D_R:=D_{(m)}
\,.
\end{equation}

The symmetric group $S_n$ acts on $V^{\otimes n}$ in the usual way:
for $\tau\in S_n$, we have
\begin{equation}\label{20140606:eq3a}
(a_1\otimes\dots\otimes a_n)^\tau=a_{\tau^{-1}(1)}\otimes\dots\otimes a_{\tau^{-1}(n)}
\,.
\end{equation}
For every $n$, we shall denote by $\sigma$ the cyclic permutation $(1,\dots,n)$,
so that
\begin{equation}\label{20140606:eq3}
(a_1\otimes\dots\otimes a_n)^\sigma=a_n\otimes a_1\otimes\dots\otimes a_{n-1}
\,.
\end{equation}
For convenience, we write here the formulas for the action of the whole
cyclic group $C_n$ on $A^{\otimes n}$ ($s=1,\dots,n$):
\begin{equation}\label{20140606:eq3b}
(a_1\otimes\dots\otimes a_n)^{\sigma^s}=a_{n+1-s}\otimes\dots\otimes 
a_n\otimes a_1\otimes\dots\otimes a_{n-s}
\,,
\end{equation}
and on an $n$-tuple of indices $(i_1,\dots,i_n)\in\{1,\dots,\ell\}^n$:
\begin{equation}\label{20140606:eq3c}
(i_{\sigma^{-s}(1)},\dots,i_{\sigma^{-s}(n)})
=
(i_{n+1-s},\dots,i_n,i_1,\dots,i_{n-s})
\,.
\end{equation}
\begin{lemma}\label{20140606:lem}
Let $D:\, V\to V^{\otimes n}$ be an $n$-fold derivation of $V$.
For $A\in V^{\otimes m}$, we have
\begin{equation}\label{20140606:eq4}
\begin{array}{l}
\displaystyle{
\vphantom{\Big(}
(D_{(i)}A)^\sigma=D_{(i+1)}(A^\sigma)
\,\,\text{ for } 1\leq i\leq m-1
\,,} \\
\displaystyle{
\vphantom{\Big(}
(D_{(m)}A)^{\sigma^n}=D_{(1)}(A^\sigma)
\,.}
\end{array}
\end{equation}
\end{lemma}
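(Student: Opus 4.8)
The plan is to reduce to pure tensors by linearity and then verify both identities by direct bookkeeping of tensor positions. Write $A=a_1\otimes\dots\otimes a_m$, and for each index $k$ express the image of the $n$-fold derivation in Sweedler-type notation as (a sum of) pure tensors $D(a_k)=a_k^{(1)}\otimes\dots\otimes a_k^{(n)}$, with the summation suppressed. Since all the maps $D_{(i)}$ and the permutation actions are linear in $A$, it suffices to prove the two identities for such a pure tensor.

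For the first identity, assume $1\leq i\leq m-1$. By the definition \eqref{20140606:eq1a},
$$
D_{(i)}A=a_1\otimes\dots\otimes a_{i-1}\otimes a_i^{(1)}\otimes\dots\otimes a_i^{(n)}\otimes a_{i+1}\otimes\dots\otimes a_m\in V^{\otimes(m+n-1)}\,,
$$
whose rightmost factor is $a_m$; the hypothesis $i\leq m-1$ is exactly what guarantees that $D$ does not touch this last slot. Applying the single cyclic shift $\sigma$ on $V^{\otimes(m+n-1)}$, which by \eqref{20140606:eq3} moves the last factor to the front, sends $a_m$ into position $1$ and leaves the relative order of everything else unchanged. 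On the other hand, $A^\sigma=a_m\otimes a_1\otimes\dots\otimes a_{m-1}$, so the factor $a_i$ now occupies position $i+1$, and applying $D_{(i+1)}$ expands it in place into $a_i^{(1)}\otimes\dots\otimes a_i^{(n)}$. Comparing the two resulting pure tensors shows that $(D_{(i)}A)^\sigma=D_{(i+1)}(A^\sigma)$ term by term.

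For the second identity I would instead apply $D$ to the rightmost factor:
$$
D_{(m)}A=a_1\otimes\dots\otimes a_{m-1}\otimes a_m^{(1)}\otimes\dots\otimes a_m^{(n)}\in V^{\otimes(m+n-1)}\,,
$$
so that the image of $D$ now occupies the last $n$ slots. Here a single shift $\sigma$ would split this block apart, so one applies $\sigma^n$ instead, which by the explicit formula \eqref{20140606:eq3b} with $p=m+n-1$ and $s=n$ moves the entire block $a_m^{(1)}\otimes\dots\otimes a_m^{(n)}$ to the front, yielding $a_m^{(1)}\otimes\dots\otimes a_m^{(n)}\otimes a_1\otimes\dots\otimes a_{m-1}$. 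Since $A^\sigma=a_m\otimes a_1\otimes\dots\otimes a_{m-1}$ carries $a_m$ into the first slot, $D_{(1)}(A^\sigma)$ expands it into precisely the same $n$ leading factors, and the two expressions coincide, giving $(D_{(m)}A)^{\sigma^n}=D_{(1)}(A^\sigma)$.

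The only genuine point to watch is the matching of the power of $\sigma$ to the number of tensor slots occupied by the image of $D$: a single untouched factor $a_m$ in the first case, for which one shift $\sigma$ suffices, versus the $n$-factor block $D(a_m)$ in the second, for which $\sigma^n$ is needed to carry the block across as a whole. Apart from this, the argument is pure reindexing; it does not use the derivation property \eqref{20140604:eq4} itself, only the definition \eqref{20140606:eq1a} of the maps $D_{(i)}$ together with the explicit cyclic actions \eqref{20140606:eq3} and \eqref{20140606:eq3b}.
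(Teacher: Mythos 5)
Your proof is correct: the position bookkeeping in both identities checks out (the hypothesis $i\leq m-1$ indeed guarantees the last slot is untouched, and $\sigma^n$ is exactly the shift needed to carry the $n$-factor block $D(a_m)$ to the front intact), and your observation that only linearity and the definition of $D_{(i)}$ are needed, not the derivation property, is accurate. The paper dismisses this lemma as ``Straightforward,'' and your argument is precisely the direct verification the authors had in mind, so the approaches coincide.
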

\begin{proof}
Straightforward.
\end{proof}
\begin{lemma}\label{20140604:lem}
If $D:\,V\to V^{\otimes n}$ is an $n$-fold derivation of $V$,
then, for every $A\in V^{\otimes h}$ and $B\in V^{\otimes k}$ ($h,k\in\mb Z_+$),
we have
\begin{equation}\label{20140604:eq6}
D(AB)=(DA)B+A(DB)
\,,
\end{equation}
where, in the RHS, we use the multiplication map \eqref{20140604:eq3}.
\end{lemma}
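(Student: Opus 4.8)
The plan is to reduce to monomials and then match terms factor-by-factor. By the bilinearity of all the operations involved, it suffices to prove \eqref{20140604:eq6} for simple tensors $A=a_1\otimes\dots\otimes a_h$ and $B=b_1\otimes\dots\otimes b_k$. For these, the multiplication map \eqref{20140604:eq3} gives $AB=a_1\otimes\dots\otimes a_{h-1}\otimes(a_hb_1)\otimes b_2\otimes\dots\otimes b_k$, a simple tensor with $h+k-1$ factors, and I would then apply the extended derivation \eqref{20140604:eq5} to it, writing $D(AB)$ as the sum of $h+k-1$ terms according to which factor of $AB$ is differentiated.

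The key observation is to partition this sum into three groups: the factors $a_1,\dots,a_{h-1}$ inherited from $A$, the merged boundary factor $a_hb_1$, and the factors $b_2,\dots,b_k$ inherited from $B$. For the first group, differentiating the $i$-th factor ($1\le i\le h-1$) reproduces exactly the $i$-th term of $DA$ in \eqref{20140604:eq5}, with its unchanged last factor $a_h$ subsequently merged with $b_1$; this is precisely the $i$-th contribution to $(DA)B$ under \eqref{20140604:eq3}. Symmetrically, differentiating $b_i$ ($2\le i\le k$) matches the corresponding contribution to $A(DB)$. Both matchings are in fact instances of the fact, immediate from \eqref{20140604:eq5}, that $D$ is a derivation for the plain concatenation product on the tensor algebra over $V$.

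The only genuine computation is at the boundary factor. There I would invoke the defining Leibniz rule of an $n$-fold derivation, Definition \ref{n-derivation}, to split $D(a_hb_1)=(Da_h)b_1+a_h(Db_1)$ using the bimodule structure \eqref{20140604:eq2}. The first summand $(Da_h)b_1$, in which $b_1$ multiplies the rightmost factor of $Da_h$, is exactly the $i=h$ contribution to $(DA)B$; the second summand $a_h(Db_1)$, in which $a_h$ multiplies the leftmost factor of $Db_1$, is exactly the $i=1$ contribution to $A(DB)$. Collecting the three groups then yields $D(AB)=(DA)B+A(DB)$.

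I expect the main, though still routine, obstacle to be purely notational: keeping the index bookkeeping straight and, above all, checking that the rightmost and leftmost bimodule multiplications appearing in \eqref{20140604:eq2} coincide with the single boundary multiplications that \eqref{20140604:eq3} performs on $DA$ and $DB$ respectively. No separate treatment of the degenerate cases $h=1$ or $k=1$ is needed, since the corresponding index ranges are simply empty while the boundary term still splits via Definition \ref{n-derivation}.
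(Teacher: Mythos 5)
Your proof is correct, and since the paper's own proof of Lemma \ref{20140604:lem} is just the word ``Straightforward,'' your argument supplies exactly the routine verification the authors had in mind: reduce to simple tensors, split the sum defining $D(AB)$ into the factors inherited from $A$, the merged boundary factor $a_hb_1$, and the factors inherited from $B$, and resolve the boundary term via the defining Leibniz rule \eqref{20140604:eq4} with the bimodule structure \eqref{20140604:eq2}. The term-by-term matching, including the check that the outer bimodule actions on $Da_h$ and $Db_1$ agree with the single merged multiplications produced by \eqref{20140604:eq3}, is exactly right, and your remark that the cases $h=1$ or $k=1$ need no separate treatment is also correct.
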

\begin{proof}
Straightforward.
\end{proof}
Given 
an $m$-fold derivation $D_1:\,V\to V^{\otimes m}$
and an $n$-fold derivation $D_2:\,V\to V^{\otimes n}$,
we can compose them, using their extension \eqref{20140604:eq5},
to get a map 
\begin{equation}\label{20140702:eq3}
D_1\circ D_2:\, V\to V^{\otimes m+n-1}
\,,\,\,\,\,
(D_1\circ D_2)(a)=D_1(D_2 a)
\,.
\end{equation}
\begin{proposition}\label{20140604:prop}
Let $D_1$ and $D_2$ be respectively an $m$-fold and an $n$-fold derivation of $V$.
Then:
\begin{enumerate}[(a)]
\item
The commutator $[D_1,D_2]=D_1\circ D_2-D_2\circ D_1$
is an $m+n-1$-fold derivation of $V$.
\item
$(D_1)_L\circ D_2-(D_2)_R\circ D_1$
is an $m+n-1$-fold derivation of $V$.
\end{enumerate}
\end{proposition}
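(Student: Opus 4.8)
The plan is to verify, in each case, the defining Leibniz rule \eqref{20140604:eq4} for the relevant map $F\colon V\to V^{\otimes(m+n-1)}$, i.e.\ that $F(ab)=(Fa)b+a(Fb)$ with respect to the outer bimodule structure \eqref{20140604:eq2}. In both parts the argument is to expand $F(ab)$ by applying first the inner and then the outer derivation, and to observe that the resulting ``mixed'' terms cancel.

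For part (a), I would first apply \eqref{20140604:eq4} for $D_2$ to write $D_2(ab)=(D_2a)b+a(D_2b)$, viewing $b$ and $a$ as elements of $V^{\otimes1}$. Applying $D_1$ and using Lemma \ref{20140604:lem} (the Leibniz rule for the extended derivation with respect to the multiplication \eqref{20140604:eq3}) gives
$$D_1(D_2(ab))=(D_1D_2a)b+(D_2a)(D_1b)+(D_1a)(D_2b)+a(D_1D_2b).$$
Interchanging the roles of $D_1$ and $D_2$ yields the analogous expansion of $D_2(D_1(ab))$, in which the two mixed terms $(D_2a)(D_1b)$ and $(D_1a)(D_2b)$ occur identically. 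Subtracting, the mixed terms cancel and one is left with $([D_1,D_2]a)b+a([D_1,D_2]b)$, which is exactly the Leibniz rule for $[D_1,D_2]$.

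For part (b) I would proceed analogously, but now I must keep track of which tensor factor each derivation acts on. The key preliminary step is to record the one-sided Leibniz rules for the leftmost and rightmost extensions \eqref{20140606:eq1b}: for tensors $A,B$ one has $D_L(AB)=(D_LA)B$ and $D_R(AB)=A(D_RB)$ whenever the relevant outer factor is not itself a product, with the extra term appearing exactly as in Lemma \ref{20140604:lem} in the remaining case. Using these together with \eqref{20140604:eq4} for the inner derivation, I would expand
$$(D_1)_L(D_2(ab))=((D_1)_L\circ D_2\, a)\,b+a\,((D_1)_L\circ D_2\, b)+(D_1a)(D_2b),$$
and likewise
$$(D_2)_R(D_1(ab))=((D_2)_R\circ D_1\, a)\,b+a\,((D_2)_R\circ D_1\, b)+(D_1a)(D_2b).$$
The crucial observation is that the correction term is the \emph{same} element $(D_1a)(D_2b)\in V^{\otimes(m+n-1)}$ in both expansions: in the first it arises when $(D_1)_L$ hits the leftmost factor of $a(D_2b)$ via $D_1(a d_1)=(D_1a)d_1+a(D_1d_1)$, while in the second it arises when $(D_2)_R$ hits the rightmost factor of $(D_1a)b$; in each case the surviving contribution is $(D_1a)(D_2b)$ computed with the multiplication map \eqref{20140604:eq3}. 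Subtracting the two expansions, these correction terms cancel and we obtain the Leibniz rule for $(D_1)_L\circ D_2-(D_2)_R\circ D_1$.

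I expect the main obstacle to be precisely this bookkeeping in part (b): establishing the one-sided Leibniz rules for $D_L$ and $D_R$ and verifying that the two correction terms coincide as elements of $V^{\otimes(m+n-1)}$. Once the positional analysis is set up (using \eqref{20140606:eq1a}--\eqref{20140606:eq1b}, and if convenient Lemma \ref{20140606:lem} to transport the action of $D$ across $\sigma$), the cancellation is automatic and the rest is routine. Part (a), by contrast, is comparatively immediate, being a direct application of Lemma \ref{20140604:lem} together with the symmetry of the mixed terms under $D_1\leftrightarrow D_2$.
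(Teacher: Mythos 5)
Your proposal is correct and follows essentially the same route as the paper: part (a) is the direct consequence of Lemma \ref{20140604:lem} with the mixed terms $(D_1a)(D_2b)$ and $(D_2a)(D_1b)$ cancelling in the commutator, and part (b) is exactly the paper's computation, expanding $\left((D_1)_L\circ D_2\right)(ab)$ and $\left((D_2)_R\circ D_1\right)(ab)$ and observing that both produce the same correction term $(D_1a)(D_2b)$ (via the multiplication map \eqref{20140604:eq3}), which cancels in the difference. Your positional bookkeeping of where each correction term arises matches the paper's equations \eqref{20140626:eq1b}--\eqref{20140626:eq2b} precisely.
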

\begin{proof}
Part (a) follows from Lemma \ref{20140604:lem}.
Using the fact that $D_2$ is an $n$-fold derivation,
and by the definition \eqref{20140606:eq1b} of $(D_1)_L$, we have ($a,b\in V$)
\begin{equation}\label{20140626:eq1b}
\begin{array}{l}
\displaystyle{
\vphantom{\Big)}
\left((D_1)_L\circ D_2\right)(ab)
=(D_1)_L(D_2(a)b)+(D_1)_L(aD_2(b))
} \\
\displaystyle{
\vphantom{\Big)}
=((D_1)_L(D_2 a))b+(D_1 a)(D_2 b)+a((D_1)_L (D_2 b))
\,,}
\end{array}
\end{equation}
where in the second term we are using the multiplication 
in the tensor algebra given by \eqref{20140604:eq3}.
Similarly we get
\begin{equation}\label{20140626:eq2b}
\left((D_2)_R\circ D_1\right)(ab)
=((D_2)_R(D_1a))b+(D_1a)(D_2b)+a((D_2)_R(D_1b))\,.
\end{equation}
Combining equations \eqref{20140626:eq1b} and \eqref{20140626:eq2b} we get part (b).
\end{proof}
\begin{remark}\label{20140604:rem}
Suppose that the algebra $V$ is 
counital (with counity $\epsilon\in V^*$),
so that $\mb F$ is an $A$-bimodule (using the counit).
Then, by Proposition \ref{20140604:prop},
we have a $\mb Z$-graded Lie algebra $\mf g=\bigoplus_{n=-1}^\infty\mf g_n$,
where $\mf g_n$ is the space of $n+1$-fold derivations of $V$.
\end{remark}

\subsection{\texorpdfstring{$n$}{n}-fold brackets}

\begin{definition}\label{n-bracket}
An $n$-\emph{fold bracket} on $V$ is a linear map
$\ldb-,\cdots,-\rdb:A^{\otimes n}\to A^{\otimes n}$
satisfying the following Leibniz rules (using the notation \eqref{20140605:eq1}):
\begin{equation}\label{20140605:eq2}
\ldb a_1,\dots,bc,\dots,a_n\rdb
=
b\star_{i}\ldb a_1,\dots, c,\dots,a_n\rdb
+\ldb a_1,\dots, b,\dots,a_n\rdb \star_{n-i}c
\,,
\end{equation}
for all $i=1\dots,n$
(in the first term of the RHS we let $\star_{n}=\star_0$).
\end{definition}
Note that, if $\ldb-,\cdots,-\rdb$ is an $n$-fold bracket,
then, for every $a_1,\dots,a_{n-1}\in V$,
the map $\ldb a_1,\dots,a_{n-1},-\rdb:\,V\to V^{\otimes n}$
is an $n$-fold derivation of $V$.
Then, according to the notation \eqref{20140606:eq1b},
we let, for $a_1,\dots,a_{n-1},b_1,\dots,b_m\in V$,
\begin{equation}\label{20140606:eq2}
\begin{array}{l}
\displaystyle{
\vphantom{\Big(}
\ldb a_1,\dots,a_{n-1},b_1\otimes\dots\otimes b_m\rdb_L
=
\ldb a_1,\dots,a_{n-1},b_1\rdb\otimes b_2\otimes\dots\otimes b_m
\,,} \\
\displaystyle{
\vphantom{\Big(}
\ldb a_1,\dots,a_{n-1},b_1\otimes\dots\otimes b_m\rdb_R
=
b_1\otimes\dots\otimes b_{m-1}\otimes\ldb a_1,\dots,a_{n-1},b_m\rdb
\,.}
\end{array}
\end{equation}
As a special case of Lemma \ref{20140606:lem},
we get the following result, which will be used later 
\begin{corollary}\label{20140606:cor}
Given a $2$-fold bracket $\ldb-,-\rdb$ on $V$,
we have
\begin{equation}\label{20140606:eq5}
\ldb a,\ldb b,c\rdb\rdb_R
=
\ldb a,\ldb b,c\rdb^\sigma\rdb_L^\sigma
\,.
\end{equation}
\end{corollary}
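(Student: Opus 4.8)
The plan is to recognize \eqref{20140606:eq5} as the special case $m=n=2$, $i=1$ of Lemma \ref{20140606:lem}, applied to the $2$-fold derivation $D:=\ldb a,-\rdb$. Indeed, since $\ldb-,-\rdb$ is a $2$-fold bracket, the map $D=\ldb a,-\rdb:\,V\to V^{\otimes2}$ is a $2$-fold derivation of $V$, as observed immediately after Definition \ref{n-bracket}. Under the notation \eqref{20140606:eq2} and \eqref{20140606:eq1b}, for any $B\in V^{\otimes2}$ one has $\ldb a,B\rdb_L=D_{(1)}B=D_LB$ and $\ldb a,B\rdb_R=D_{(2)}B=D_RB$; that is, the $L$- and $R$-extensions of the bracket are exactly the applications of $D$ to the leftmost, respectively rightmost, tensor factor.

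With these identifications in place, I would invoke the first formula of Lemma \ref{20140606:lem}, with $n=2$, $m=2$, $i=1$, namely $(D_LB)^\sigma=D_R(B^\sigma)$ for all $B\in V^{\otimes2}$. Specializing this to $B=\ldb b,c\rdb^\sigma$ gives $(D_L\ldb b,c\rdb^\sigma)^\sigma=D_R\big((\ldb b,c\rdb^\sigma)^\sigma\big)$, and since $\sigma$ is an involution on $V^{\otimes2}$, the right-hand side collapses to $D_R\ldb b,c\rdb$. Translating back through the identifications of the previous paragraph, this reads precisely $\ldb a,\ldb b,c\rdb^\sigma\rdb_L^\sigma=\ldb a,\ldb b,c\rdb\rdb_R$, which is \eqref{20140606:eq5}.

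The only genuine obstacle is bookkeeping: one must keep track of the two distinct meanings of $\sigma$. The inner $\sigma$, acting on $\ldb b,c\rdb$, is the transposition of the two factors of $V^{\otimes2}$ from \eqref{20140710:eq1}, whereas the outer $\sigma$, acting on the element $\ldb a,\ldb b,c\rdb^\sigma\rdb_L$ of the codomain $V^{\otimes3}$, is the cyclic permutation $(1,2,3)$ from \eqref{20140606:eq3}; these are exactly the domain- and codomain-$\sigma$'s appearing in Lemma \ref{20140606:lem}. Once the index conventions are matched and one records that $\sigma^2=\mathrm{id}$ on $V^{\otimes2}$, no further computation is needed. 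As an independent check (not required for the proof), writing $\ldb b,c\rdb=B'\otimes B''$ and $\ldb a,B''\rdb=C'\otimes C''$ in Sweedler notation, one sees at once that both sides equal $B'\otimes C'\otimes C''$.
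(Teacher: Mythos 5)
Your proof is correct and is essentially the paper's own argument: the paper proves the corollary in one line by observing that \eqref{20140606:eq5} is the special case of equation \eqref{20140606:eq4} (Lemma \ref{20140606:lem}) for the $2$-fold derivation $D=\ldb a,-\rdb$, which is exactly what you do. The only cosmetic difference is that you invoke the first identity of \eqref{20140606:eq4} with $A=\ldb b,c\rdb^\sigma$, whereas the direct reading of the paper's proof takes $A=\ldb b,c\rdb$ in the second identity; the two are equivalent since $\sigma$ is an involution on $V^{\otimes2}$ and has order $3$ on $V^{\otimes3}$.
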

\begin{proof}
Equation \eqref{20140606:eq5} is a special case of equation \eqref{20140606:eq4}
for $D=\ldb a,-\rdb:\, V\to V^{\otimes2}$.
\end{proof}

The following result will be used later as well.
\begin{lemma}\label{20140609:lem2}
Let $\ldb-,-\rdb$ be a $2$-fold bracket on the associative algebra $V$.
For every $a\in V$ and $B,C\in V^{\otimes2}$, we have
$$
\ldb a,B\bullet C\rdb_L
=B\bullet_2 \ldb a,C\rdb_L
+\ldb a,B\rdb_{L}\bullet_1 C\,,
$$
where in the LHS we use the product \eqref{20140609:eqc1}
while in the RHS we use notation \eqref{20140609:eqc2}.
\end{lemma}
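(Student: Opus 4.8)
The plan is to verify the identity by direct computation in Sweedler's notation, using the Leibniz rule for the $2$-fold bracket together with the explicit definitions of the $\bullet$-product \eqref{20140609:eqc1} and of the actions $\bullet_i$ from \eqref{20140609:eqc2}. Write $B=B'\otimes B''$ and $C=C'\otimes C''$, so that by \eqref{20140609:eqc1} we have $B\bullet C=B'C'\otimes C''B''$. The left-hand side $\ldb a,B\bullet C\rdb_L$ applies the $2$-fold bracket $\ldb a,-\rdb$ to the first tensor factor $B'C'$ of $B\bullet C$, via the notation \eqref{20140606:eq2}, leaving $C''B''$ untouched in the last slot.

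The key step is to expand $\ldb a, B'C'\rdb$ using the left Leibniz rule \eqref{20140605:eq2} for the $2$-fold derivation $\ldb a,-\rdb:V\to V^{\otimes2}$ (which for $n=2$ reads $\ldb a,bc\rdb=b\star_1\ldb a,c\rdb+\ldb a,b\rdb\star_0 c$, i.e. the usual left Leibniz rule \eqref{eq:0.7}). Applied to $bc=B'C'$, this splits $\ldb a,B'C'\rdb$ into a term $B'\star_1\ldb a,C'\rdb$ and a term $\ldb a,B'\rdb\star_0 C'=\ldb a,B'\rdb C'$. Reassembling each term with the trailing factor $C''B''$ in $V^{\otimes 3}$, the first contribution should match $B\bullet_2\ldb a,C\rdb_L$ and the second should match $\ldb a,B\rdb_L\bullet_1 C$, once one reads off the definitions in \eqref{20140609:eqc2}: for the first, $B\bullet_2(x\otimes y\otimes z)=B'x\otimes y\otimes zB''$ slots $B'$ on the far left and $B''$ on the far right of $\ldb a,C\rdb_L=\ldb a,C'\rdb\otimes C''$; for the second, $\ldb a,B\rdb_L\bullet_1 C=(\ldb a,B'\rdb\otimes B'')\bullet_1 C$ places $C'$ into the middle slot and $C''$ on the far right.

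The main obstacle, and the only thing requiring genuine care, is the bookkeeping of tensor positions: one must track exactly which of the three output factors each of $B'$, $B''$, $C'$, $C''$ lands in, and confirm that the middle and outer multiplications prescribed by $\bullet_1$ and $\bullet_2$ reproduce precisely the terms coming out of the Leibniz rule. Concretely, I would write both sides fully expanded in Sweedler's notation as elements of $V^{\otimes3}$ and check that the two three-factor expressions coincide term by term. Since each step is a routine unfolding of the definitions, the proof amounts to matching these explicit expressions; hence I expect the verification to be straightforward, consistent with the ``straightforward'' tag on the neighbouring lemmas, with the subtlety confined to getting the placement of $B'',C''$ in the first and last tensor slots correct.
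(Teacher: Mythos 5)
Your strategy is the right one, and it is exactly what the paper's one-word proof (``Straightforward'') intends: write $B=B'\otimes B''$, $C=C'\otimes C''$ in Sweedler's notation, note that $\ldb a,B\bullet C\rdb_L=\ldb a,B'C'\rdb\otimes C''B''$, expand $\ldb a,B'C'\rdb$ by the Leibniz rule in the second entry, and match the two resulting terms against $B\bullet_2\ldb a,C\rdb_L$ and $\ldb a,B\rdb_L\bullet_1 C$ using \eqref{20140609:eqc2}. Carried out with the correct Leibniz rule this works:
\begin{equation*}
\ldb a,B'C'\rdb\otimes C''B''
=\big(B'\ldb a,C'\rdb+\ldb a,B'\rdb C'\big)\otimes C''B''\,,
\end{equation*}
and indeed $B'\ldb a,C'\rdb'\otimes\ldb a,C'\rdb''\otimes C''B''=B\bullet_2\ldb a,C\rdb_L$, while $\ldb a,B'\rdb'\otimes\ldb a,B'\rdb''C'\otimes C''B''=\ldb a,B\rdb_L\bullet_1 C$.

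There is, however, a concrete error in your transcription of the Leibniz rule which, taken literally, makes the first matching false. You state the rule as $\ldb a,bc\rdb=b\star_1\ldb a,c\rdb+\ldb a,b\rdb\star_0 c$ and correspondingly produce the term $B'\star_1\ldb a,C'\rdb$. By \eqref{20140605:eq1}, $b\star_1(x\otimes y)=x\otimes by$ is the \emph{inner} multiplication, so $B'\star_1\ldb a,C'\rdb\otimes C''B''=\ldb a,C'\rdb'\otimes B'\ldb a,C'\rdb''\otimes C''B''$, which is \emph{not} $B\bullet_2\ldb a,C\rdb_L$. The correct instance of \eqref{20140605:eq2} (with $n=i=2$, where the convention $\star_n=\star_0$ applies) is $\ldb a,bc\rdb=b\star_0\ldb a,c\rdb+\ldb a,b\rdb\star_0 c=b\ldb a,c\rdb+\ldb a,b\rdb c$, i.e.\ precisely \eqref{eq:0.7}, with $b$ multiplying the \emph{first} tensor factor from the outside; it appears you have confused this with the $\star_1$-products that occur in the Leibniz rule for the first entry, \eqref{eq:0.8}. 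Your own prose description of the target (``$B\bullet_2$ slots $B'$ on the far left and $B''$ on the far right'') is consistent only with the outer multiplication, so this looks like a notational slip rather than a conceptual mistake; but as written, your displayed Leibniz term and your claimed identification of it with $B\bullet_2\ldb a,C\rdb_L$ contradict each other, and the index must be corrected for the term-by-term verification you propose to go through.
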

\begin{proof}
Straightforward.
\end{proof}

\section{Double Poisson algebras and non-commutative Hamiltonian ODEs}\label{sec:2}

\subsection{Definition of double Poisson algebras}

\begin{definition}[\cite{VdB08}]\label{20140606:def}
A \emph{double Poisson algebra} is an associative algebra $V$ endowed 
with a $2$-fold bracket $\ldb-,-\rdb:\,V\otimes V\to V\otimes V$
satisfying the following axioms:
\begin{enumerate}[(i)]
\item
skewsymmetry: $\ldb a,b\rdb=-\ldb b,a\rdb^\sigma$,
\item
Jacobi identity:
\begin{equation}\label{eq:jacobi}
\ldb a, \ldb b,c\rdb \rdb_L
+ \ldb b, \ldb c, a\rdb \rdb_L^\sigma
+ \ldb c, \ldb a, b\rdb \rdb_L^{\sigma^2}=0\,.
\end{equation}
(Here we are using notation \eqref{20140606:eq2},
i.e. we let $\ldb a,b\otimes c\rdb_L=\ldb a,b\rdb\otimes c\in V\otimes V\otimes V$.)
\end{enumerate}
\end{definition}
We can write explicitly the Leibniz rules for a double Poisson algebra bracket:
equation \eqref{20140605:eq2} with $i=2$ and $1$ reads, respectively,
\begin{equation}\label{20140605:eq2b}
\ldb a,bc\rdb
=
\ldb a,b\rdb c+b\ldb a,c\rdb
\,,
\end{equation}
and
\begin{equation}\label{20140605:eq2c}
\ldb ab,c\rdb
=
\ldb a,c\rdb\star_1 b+a\star_1\ldb b,c\rdb
\,.
\end{equation}
Note that equation \eqref{20140605:eq2c} follows from \eqref{20140605:eq2b}
and the skewsymmetry axiom.

\begin{remark}\label{20140703:rem}
We can write the Jacobi identity \eqref{eq:jacobi} in an alternative way, 
analogous to the usual ``representation theoretic form'' 
of the Jacobi identity for Lie algebras.
For this we extend the notation \eqref{20140606:eq2} of $\ldb-,-\rdb_L$ and $\ldb-,-\rdb_R$
to the case when we have an element of $V\otimes V$ in the first entry.
We do this in analogy to the two terms in the RHS of \eqref{20140605:eq2c}:
using notation \eqref{20140605:eq1b}, we let
\begin{equation}\label{20140703:eq1}
\ldb a\otimes b,c\rdb_L := \ldb a,c\rdb \otimes_1 b
\,\,,\,\,\,\,
\ldb a\otimes b,c\rdb_R := a\otimes_1\ldb b,c\rdb
\,.
\end{equation}
With this notation, it is not hard to check, using skewsymmetry, 
that the Jacobi identity \eqref{eq:jacobi}
can be equivalently written as follows:
\begin{equation}\label{eq:jacobi-b}
\ldb a, \ldb b,c\rdb \rdb_L
- \ldb b, \ldb a, c\rdb \rdb_R
=
\ldb \ldb a, b\rdb,c \rdb_L
\,.
\end{equation}
\end{remark}

\begin{lemma}[\cite{VdB08}]\label{20140605:lem}
If $\ldb-,-\rdb:\,V\to V^{\otimes2}$
is a $2$-fold bracket satisfying the skewsymmetry axiom,
then the LHS of the Jacobi identity \eqref{eq:jacobi}
\begin{equation}\label{20140606:eq10}
\ldb a,b,c\rdb
:=
\ldb a, \ldb b,c\rdb \rdb_L
+ \ldb b, \ldb c, a\rdb \rdb_L^\sigma
+ \ldb c, \ldb a, b\rdb \rdb_L^{\sigma^2}
\end{equation}
is a $3$-fold bracket on $V$.
\end{lemma}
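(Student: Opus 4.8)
The plan is to verify that the trilinear map $\ldb a,b,c\rdb$ defined by \eqref{20140606:eq10} satisfies the Leibniz rule \eqref{20140605:eq2} in each of its three slots $i=1,2,3$. Since $\ldb a,b,c\rdb$ is visibly linear in each argument and the claimed conclusion is that it is a $3$-fold bracket, the entire content of the lemma is the three Leibniz identities. My strategy would be to prove the rule in one distinguished slot directly and then leverage the manifest cyclic-$\sigma$ symmetry of the defining expression \eqref{20140606:eq10} to deduce the other two, rather than computing all three from scratch.

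**First I would** establish the Leibniz rule in the last slot, i.e. expand $\ldb a,b,c_1c_2\rdb$. Here the three summands in \eqref{20140606:eq10} involve $c$ in different positions: in the first term $\ldb a,\ldb b,c_1c_2\rdb\rdb_L$ the inner bracket $\ldb b,c_1c_2\rdb$ is handled by the ordinary Leibniz rule \eqref{20140605:eq2b} (since $\ldb b,-\rdb$ is a $2$-fold derivation), and then the outer $2$-fold bracket $\ldb a,-\rdb$ acts as a derivation on the resulting tensors via Lemma~\ref{20140604:lem}. The second and third terms, $\ldb b,\ldb c_1c_2,a\rdb\rdb_L^\sigma$ and $\ldb c_1c_2,\ldb a,b\rdb\rdb_L^{\sigma^2}$, require expanding $\ldb c_1c_2,-\rdb$ using the right Leibniz rule \eqref{20140605:eq2c}, after which the $\star_1$-terms must be reorganized. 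The book-keeping here is the use of the $\sigma$-twists and the $\star_i$/$\otimes_i$ module structures of Section~0.2–0.3; the identities collected in Lemmas~\ref{20140606:lem}, \ref{20140604:lem}, \ref{lemma:bullet-i}, and \ref{lemma:bullet2} are exactly the tools for moving $\sigma$ past the various products and actions, so I would invoke them freely to reassemble the expanded terms into the required form $\star_i$/$\star_{n-i}$ of \eqref{20140605:eq2}.

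**To handle the remaining two slots economically,** I would exploit the structure of \eqref{20140606:eq10}: the expression is built to be cyclically symmetric in $(a,b,c)$ up to the $\sigma$-powers, since applying the cyclic shift $a\mapsto b\mapsto c\mapsto a$ permutes the three summands and introduces a compensating $\sigma$. Concretely, one checks from \eqref{20140606:eq10} that $\ldb b,c,a\rdb=\ldb a,b,c\rdb^{\sigma^{-1}}$ (the precise twist to be read off from the definition). Because $\sigma$ is the cyclic permutation and the Leibniz rules \eqref{20140605:eq2} for different slots $i$ are themselves intertwined by $\sigma$ (via Lemma~\ref{20140606:lem}, which governs how $\sigma$ relates $D_{(i)}$ to $D_{(i+1)}$), the Leibniz rule in slot $3$ for $\ldb a,b,c\rdb$ transports to the Leibniz rule in slots $2$ and $1$ for the cyclically rotated brackets. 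Thus a single slot computation, combined with cyclic symmetry, yields all three.

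**The main obstacle** I anticipate is purely the combinatorial precision of tracking the $\sigma$, $\sigma^2$ twists together with the three distinct left/right module structures $\star_i$ through the expansion — in particular, verifying that when $\ldb c_1c_2,-\rdb$ is split by \eqref{20140605:eq2c} and then twisted by $\sigma^{\pm1}$, the leftover $c_1$-factor lands in precisely the slot dictated by \eqref{20140605:eq2} with the correct $\star$-action, rather than an adjacent one. Since Lemma~\ref{lemma:bullet-i}(c) warns that $\bullet_i$ and $\bullet_j$ do \emph{not} commute when $|i-j|=1$, I would be careful that no step silently assumes such a commutation; the cyclic-symmetry reduction is attractive precisely because it minimizes the number of independent such verifications. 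The computation is entirely \emph{straightforward} in Van den Bergh's sense but error-prone, so I would organize it by first writing each of the three summands of the expanded $\ldb a,b,c_1c_2\rdb$ in Sweedler-free $\star_i$ notation before attempting to collect them.
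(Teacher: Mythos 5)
Your proposal is correct and takes essentially the same route as the paper: the paper also expands $\ldb a,b,cd\rdb$ directly via the Leibniz rules of the $2$-fold bracket (its equations \eqref{20140606:eq6a}--\eqref{20140606:eq6c}), obtains the Leibniz rule in the third slot, and then transfers it to the first two slots using the cyclic identities $\ldb a,b,c\rdb=\ldb b,c,a\rdb^\sigma=\ldb c,a,b\rdb^{\sigma^2}$, which hold by construction since $\sigma^3=\mathrm{id}$ on $V^{\otimes3}$. The one point to make explicit when you carry out the slot-$3$ computation is that skewsymmetry is precisely what cancels the cross terms $\ldb a,c\rdb\ldb b,d\rdb$ and $\ldb c,a\rdb^\sigma\ldb b,d\rdb$ produced by the first two summands; the remaining terms then assemble into $c\ldb a,b,d\rdb+\ldb a,b,c\rdb d$ as required.
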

\begin{proof}
Using the Leibniz rules for the $2$-fold bracket $\ldb-,-\rdb$,
we have
\begin{equation}\label{20140606:eq6a}
\begin{array}{l}
\displaystyle{
\vphantom{\Big(}
\ldb a,\ldb b,cd\rdb\rdb_L
=
\ldb a,\ldb b,c\rdb d\rdb_L
+
\ldb a,c\ldb b,d\rdb\rdb_L
} \\
\displaystyle{
\vphantom{\Big(}
=
\ldb a,\ldb b,c\rdb\rdb_L d
+
\ldb a,c\rdb \ldb b,d\rdb
+
c\ldb a,\ldb b,d\rdb\rdb_L
\,,}
\end{array}
\end{equation}
\begin{equation}\label{20140606:eq6b}
\begin{array}{l}
\displaystyle{
\vphantom{\Big(}
\ldb b,\ldb cd,a\rdb\rdb_L^\sigma
=
\big(
\ldb b,c\star_1\ldb d,a\rdb\rdb_L
+
\ldb b,\ldb c,a\rdb\star_1d\rdb_L
\big)^\sigma
} \\
\displaystyle{
\vphantom{\Big(}
=
\big(
c\star_2\ldb b,\ldb d,a\rdb\rdb_L
+
\ldb b,\ldb c,a\rdb\rdb_L\star_1d
+
\ldb c,a\rdb'
\ldb b,d\rdb
\ldb c,a\rdb''
\big)^\sigma
} \\
\displaystyle{
\vphantom{\Big(}
=
c\ldb b,\ldb d,a\rdb\rdb_L^\sigma
+
\ldb b,\ldb c,a\rdb\rdb_L^\sigma d
+
\ldb c,a\rdb^\sigma
\ldb b,d\rdb
\,.}
\end{array}
\end{equation}
Hereafter we use 
Sweedler's notation:
for $A\in V^{\otimes2}$, we let $A=A'\otimes A''$
(omitting the sign of summation).
Similarly,
\begin{equation}\label{20140606:eq6c}
\begin{array}{l}
\displaystyle{
\vphantom{\Big(}
\ldb cd,\ldb a,b\rdb\rdb_L^{\sigma^2}
=
\big(
c\star_1
\ldb d,\ldb a,b\rdb\rdb_L
+
\ldb c,\ldb a,b\rdb\rdb_L\star_2d
\big)^{\sigma^2}
} \\
\displaystyle{
\vphantom{\Big(}
=
c
\ldb d,\ldb a,b\rdb\rdb_L^{\sigma^2}
+
\ldb c,\ldb a,b\rdb\rdb_L^{\sigma^2}d
\,.}
\end{array}
\end{equation}
Combining equations \eqref{20140606:eq6a}, \eqref{20140606:eq6b} and \eqref{20140606:eq6c},
and using the skewsymmetry assumption on $\ldb-,-\rdb$,
we get
\begin{equation}\label{20140606:eq7}
\ldb a,b,cd\rdb
=
c\ldb a,b,d\rdb+\ldb a,b,c\rdb d\,,
\end{equation}
i.e. the Leibniz rule holds on the third entry.
The Leibniz rules \eqref{20140605:eq2} on the first and second entry 
follow from \eqref{20140606:eq7}
and the following obvious identities ($a,b,c\in V$)
\begin{equation}\label{20140606:eq8}
\ldb a,b,c\rdb
=
\ldb b,c,a\rdb^\sigma
=
\ldb c,a,b\rdb^{\sigma^2}
\,.
\end{equation}
\end{proof}

\subsection{The trace map and connection to Lie algebras}

Let $\ldb-,-\rdb$ be a $2$-fold bracket on the associative algebra $V$.
By composing it with the multiplication map $\mult:\,V\otimes V\to V$,
we obtain the corresponding map $\{-,-\}:\,V\otimes V\to V$:
\begin{equation}\label{20140606:eq8b}
\{a,b\}=\mult\ldb a,b\rdb
\,.
\end{equation}
It is immediate to check that this bracket
satisfies the Leibniz rule on the second entry:
\begin{equation}\label{20140606:eq9}
\{a,bc\}=b\{a,c\}+\{a,b\}c\,.
\end{equation}
(While, in general, it does not satisfies the Leibniz rule on the first entry.)
Hence, for every $a\in V$, the map $\{a,-\}:\,V\to V$ is a ($1$-fold) derivation,
and we extend it to a map $\{a,-\}:\,V^{\otimes n}\to V^{\otimes n}$, for every $n\in\mb Z_+$,
via \eqref{20140604:eq5}.
\begin{lemma}[\cite{VdB08}]\label{prop:1}
If $\ldb-,-\rdb$ is a skewsymmetric $2$-fold bracket on $V$, 
then the following identity holds in $V^{\otimes2}$ ($a,b,c\in V$)
\begin{equation}\label{20131205:eq3}
\{a,\ldb b,c\rdb\}-\ldb\{a,b\},c\rdb-\ldb b,\{a,c\}\rdb
=(\mult\otimes1)\ldb a,b,c\rdb-(1\otimes \mult)\ldb b,a,c\rdb\,,
\end{equation}
where $\ldb a,b,c\rdb$ is as in \eqref{20140606:eq10}.
\end{lemma}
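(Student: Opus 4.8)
The plan is to prove the identity by a direct computation: I would expand both sides into explicit sums of terms built from iterated $2$-fold brackets, written in Sweedler's notation, and then match them term by term. The only ingredients needed are the Leibniz rules \eqref{20140605:eq2b}--\eqref{20140605:eq2c}, the definition \eqref{20140606:eq8b} of $\{-,-\}$, the skewsymmetry axiom, and the explicit action \eqref{20140606:eq3b} of the cyclic permutations $\sigma,\sigma^2$ on $V^{\otimes3}$.

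First I would rewrite the left-hand side. Since $\{a,-\}$ is a $1$-fold derivation, its extension \eqref{20140604:eq5} to $V^{\otimes2}$ gives, using $\{a,x\}=\mult\ldb a,x\rdb$,
\[
\{a,\ldb b,c\rdb\}
=(\mult\otimes1)\ldb a,\ldb b,c\rdb\rdb_L
+(1\otimes\mult)\ldb a,\ldb b,c\rdb\rdb_R
\,.
\]
Next I would expand $\ldb\{a,b\},c\rdb$ by writing $\{a,b\}=\ldb a,b\rdb'\ldb a,b\rdb''$ and applying the left Leibniz rule \eqref{20140605:eq2c}, which produces two terms; similarly $\ldb b,\{a,c\}\rdb$ is expanded using $\{a,c\}=\ldb a,c\rdb'\ldb a,c\rdb''$ and the right Leibniz rule \eqref{20140605:eq2b}, again producing two terms. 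This exhibits the left-hand side as an explicit sum of six terms in $V^{\otimes2}$.

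Then I would expand the right-hand side. Using the definition \eqref{20140606:eq10}, $(\mult\otimes1)\ldb a,b,c\rdb$ is the sum of $(\mult\otimes1)$ applied to the three summands $\ldb a,\ldb b,c\rdb\rdb_L$, $\ldb b,\ldb c,a\rdb\rdb_L^\sigma$ and $\ldb c,\ldb a,b\rdb\rdb_L^{\sigma^2}$, and likewise $(1\otimes\mult)\ldb b,a,c\rdb$ splits into three terms. For each of these six terms I would first carry out the permutation $\sigma$ or $\sigma^2$ on the explicit $V^{\otimes3}$ element and only then apply $\mult\otimes1$ or $1\otimes\mult$, and finally use skewsymmetry $\ldb x,y\rdb=-\ldb y,x\rdb^\sigma$ to reverse the inner bracket so as to bring the term into the canonical form of one of the six left-hand side terms. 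I expect the matching to be as follows: $(\mult\otimes1)$ of the three summands of $\ldb a,b,c\rdb$ recovers the first term of $\{a,\ldb b,c\rdb\}$ together with one term from each of $\ldb\{a,b\},c\rdb$ and $\ldb b,\{a,c\}\rdb$, while $-(1\otimes\mult)$ of the three summands of $\ldb b,a,c\rdb$ recovers the remaining three terms (the second term of $\{a,\ldb b,c\rdb\}$ and the other term from each of the two bracket expansions). Once all twelve terms are in canonical form, the identity reduces to an equality of matching monomials.

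The main obstacle is purely bookkeeping: keeping the two Sweedler indices straight as they are shuffled by the cyclic permutations $\sigma$ and $\sigma^2$, and correctly combining the signs when skewsymmetry is applied --- once in most terms, but twice in the term coming from $\ldb c,\ldb b,a\rdb\rdb_L^{\sigma^2}$ (where one must first reverse $\ldb b,a\rdb=-\ldb a,b\rdb^\sigma$ and then reverse the resulting inner bracket against $c$). No conceptual difficulty arises: once each term is reduced to canonical form, all cancellations are forced and the two sides coincide.
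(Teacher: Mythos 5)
Your proposal is correct and takes essentially the same route as the paper: the paper likewise expands $\{a,\ldb b,c\rdb\}$ as $(\mult\otimes1)\ldb a,\ldb b,c\rdb\rdb_L+(1\otimes\mult)\ldb a,\ldb b,c\rdb\rdb_R$, expands $\ldb\{a,b\},c\rdb$ and $\ldb b,\{a,c\}\rdb$ via the Leibniz rules, and uses skewsymmetry (packaged through Corollary \ref{20140606:cor}) to identify the sum with the triple-bracket terms. The only difference is bookkeeping and direction --- the paper rewrites the left-hand side into $L$-type expressions twisted by $\sigma$, $\sigma^2$ and recognizes the right-hand side by the definition \eqref{20140606:eq10}, rather than unpacking both sides into Sweedler components and matching them term by term --- and your observation about the double application of skewsymmetry in the $\ldb c,\ldb b,a\rdb\rdb_L^{\sigma^2}$ term is exactly the point where care is needed.
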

\begin{proof}
Let us compute the three terms appearing in the LHS of equation \eqref{20131205:eq3}.
Using the Leibniz rules we get, after a straightforward computation, 
\begin{equation}\label{20140226:eq1}
\begin{array}{l}
\displaystyle{
\vphantom{\Big(}
\{a,\ldb b,c\rdb \}
=(\mult\otimes 1)\ldb a,\ldb b,c\rdb \rdb_L+(1\otimes\mult) \ldb a,\ldb b,c\rdb \rdb_R
} \\
\displaystyle{
\vphantom{\Big(}
=(\mult\otimes 1)\ldb a,\ldb b,c\rdb \rdb_L-(1\otimes\mult) \ldb a,\ldb c,b\rdb \rdb_L^\sigma
\,.}
\end{array}
\end{equation}
For the last equality we used Corollary \ref{20140606:cor} and the skewsymmetry assumption.
With similar computations, we get
\begin{equation}\label{20140226:eq3}
\ldb b,\{a,c\}\rdb 
=(1\otimes \mult) \ldb b, \ldb a,c\rdb\rdb_L
-(\mult\otimes 1) \ldb b,\ldb c,a\rdb \rdb_L^\sigma
\,,
\end{equation}
and 
\begin{equation}\label{20140226:eq2}
\ldb \{a,b\},c\rdb
=-\ldb c,\{a,b\}\rdb^\sigma
=-(\mult\otimes 1)\ldb c,\ldb a,b\rdb \rdb_L^{\sigma^2}
+(1\otimes \mult)\ldb c,\ldb b,a\rdb \rdb_L^{\sigma^2}\,.
\end{equation}
Combining equations \eqref{20140226:eq1}, \eqref{20140226:eq2}
and \eqref{20140226:eq3}, and using the definition \eqref{20140606:eq10}
of $\ldb a,b,c\rdb$, equation \eqref{20131205:eq3} follows,
thus concluding the proof.
\end{proof}

\begin{proposition}[\cite{VdB08}]\label{20131205:prop1}
\begin{enumerate}[(a)]
\item
If $\ldb-,-\rdb$ is a $2$-fold bracket on $V$,
then the associated bracket $\{-,-\}$ defined in \eqref{20140606:eq8b}
induces well defined maps 
(which we denote by the same symbol)
$\{-,-\}:\,\quot{V}{[V,V]}\otimes V\to V$, given by
(recalling \eqref{eq:trace})
\begin{equation}\label{20131205:eq1}
\{\tr(a),b\}:=\{a,b\}
\end{equation}
and $\{-,-\}:\,\quot{V}{[V,V]}\otimes\quot{V}{[V,V]}\to\quot{V}{[V,V]}$, given by
\begin{equation}\label{20131205:eq2}
\{\tr(a),\tr(b)\}:=\tr\{a,b\}
\,.
\end{equation}
\item 
If $V$ is a double Poisson algebra, then $\quot{V}{[V,V]}$ is a Lie algebra
with the bracket defined by \eqref{20131205:eq2},
and \eqref{20131205:eq1} defines a representation
of this Lie algebra by derivations of $V$.
\end{enumerate}
\end{proposition}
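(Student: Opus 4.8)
The plan is to handle the two well-definedness statements in (a) by a direct computation with the Leibniz rules, and then to deduce everything in (b) from Lemma \ref{prop:1} together with the Jacobi identity, by composing the resulting identity with the multiplication map $\mult$.

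For part (a), the first map $\{-,-\}\colon\quot{V}{[V,V]}\otimes V\to V$ is well defined as soon as $\{ab-ba,c\}=0$ in $V$ for all $a,b,c$. I would prove this by expanding $\{ab,c\}=\mult\ldb ab,c\rdb$ via the right Leibniz rule \eqref{20140605:eq2c}: writing $\ldb a,c\rdb=\alpha'\otimes\alpha''$ and $\ldb b,c\rdb=\beta'\otimes\beta''$ in Sweedler notation, one gets $\{ab,c\}=\alpha'b\alpha''+\beta'a\beta''$, while the analogous expansion of $\{ba,c\}$ is the same sum with the two summands interchanged; hence $\{ab,c\}=\{ba,c\}$. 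For the second map $\{-,-\}\colon\quot{V}{[V,V]}\otimes\quot{V}{[V,V]}\to\quot{V}{[V,V]}$, well-definedness in the first slot is immediate from the previous step, while in the second slot I would use the left Leibniz rule \eqref{20140606:eq9} to write $\{a,bc\}=\{a,b\}c+b\{a,c\}$ and $\{a,cb\}=\{a,c\}b+c\{a,b\}$, so that $\{a,bc-cb\}=[\{a,b\},c]+[b,\{a,c\}]\in[V,V]$; thus $\tr\{a,bc-cb\}=0$.

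For part (b), skewsymmetry on $\quot{V}{[V,V]}$ is the statement $\{a,b\}+\{b,a\}\in[V,V]$. Writing $\ldb b,a\rdb=\gamma'\otimes\gamma''$ and using the skewsymmetry axiom (Definition \ref{20140606:def}(i)) gives $\{a,b\}=-\gamma''\gamma'$ and $\{b,a\}=\gamma'\gamma''$, so their sum is $\sum[\gamma',\gamma'']\in[V,V]$, whence $\tr\{a,b\}=-\tr\{b,a\}$. The heart of the argument is the identity
\begin{equation*}
\{a,\{b,c\}\}-\{b,\{a,c\}\}=\{\{a,b\},c\}
\qquad\text{in } V\,,
\end{equation*}
which I would obtain as follows: since $V$ is a double Poisson algebra, the Jacobi identity forces the $3$-fold bracket \eqref{20140606:eq10} to vanish, so the right-hand side of \eqref{20131205:eq3} in Lemma \ref{prop:1} is $0$, leaving $\{a,\ldb b,c\rdb\}=\ldb\{a,b\},c\rdb+\ldb b,\{a,c\}\rdb$. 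Applying $\mult$ to both sides and using $\mult\circ\{a,-\}=\{a,-\}\circ\mult$ (which is just the derivation property \eqref{20140606:eq9}), the left side becomes $\{a,\mult\ldb b,c\rdb\}=\{a,\{b,c\}\}$ and the right side becomes $\{\{a,b\},c\}+\{b,\{a,c\}\}$, yielding the displayed identity.

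Finally, this identity is exactly the representation-by-derivation property: each $\{\tr a,-\}=\{a,-\}$ is a derivation of $V$ by \eqref{20140606:eq9}, and the displayed identity says $[\{\tr a,-\},\{\tr b,-\}]=\{\{\tr a,\tr b\},-\}$. Applying $\tr$ to the displayed identity and translating via \eqref{20131205:eq1}--\eqref{20131205:eq2} gives the Jacobi identity for the induced bracket on $\quot{V}{[V,V]}$, which together with the skewsymmetry established above makes $\quot{V}{[V,V]}$ a Lie algebra. The main obstacle is not conceptual but a matter of bookkeeping: one must keep track that $\{a,-\}$ on the left of \eqref{20131205:eq3} is the extension to $V^{\otimes2}$ via \eqref{20140604:eq5}, so that composing with $\mult$ legitimately collapses it to $\{a,-\}$ acting on $\{b,c\}=\mult\ldb b,c\rdb$; everything else follows from the Leibniz rules and Lemma \ref{prop:1}.
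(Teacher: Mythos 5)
Your proposal is correct and follows essentially the same route as the paper's proof: part (a) via the two identities $\{ab,c\}=\{ba,c\}$ (right Leibniz rule) and $\{a,bc-cb\}\in[V,V]$ (left Leibniz rule), and part (b) by observing that the Jacobi identity kills the right-hand side of \eqref{20131205:eq3} and then applying $\mult$ to the remaining identity. The only differences are cosmetic — you spell out the Sweedler computation for skewsymmetry and the commutation $\mult\circ\{a,-\}=\{a,-\}\circ\mult$, which the paper leaves implicit.
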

\begin{proof}
First, we have, using Sweedler's notation,
\begin{equation}\label{20131205:toprove1}
\{ab,c\}
=
\ldb a,c\rdb'b\ldb a,c\rdb''
+
\ldb b,c\rdb'a\ldb b,c\rdb''
=
\{ ba,c\}
\,.
\end{equation}
Moreover, by \eqref{20140606:eq9}, we have
\begin{equation}\label{20131205:toprove1b}
\{a,bc\}
-
\{a,cb\}
=
[\{a,b\},c]+[b,\{a,c\}]\,\in[V,V]
\,.
\end{equation}
Equations \eqref{20131205:toprove1} and \eqref{20131205:toprove1b}
imply that both brackets \eqref{20131205:eq1} and \eqref{20131205:eq2}
are well defined, proving (a).

The skewsymmetry for the bracket $\{-,-\}$ on $\quot{V}{[V,V]}$
follows immediately by the skewsymmetry axiom for the $2$-fold bracket $\ldb-,-\rdb$.
By the Jacobi identity axiom (in Definition \ref{20140606:def}),
the RHS of equation \eqref{20131205:eq3} is identically zero.
Applying the multiplication map $\mult$ to its LHS, we obtain the Jacobi identity
for the bracket $\{-,-\}$ of $\quot{V}{[V,V]}$,
and, at the same time, the claim that \eqref{20131205:eq1}
defines a representation of the Lie algebra $\quot{V}{[V,V]}$ on $V$.
Finally, this Lie algebra action is by derivations of the associative algebra $V$,
thanks to equation \eqref{20140606:eq9}.
\end{proof}

\subsection{
Double Poisson brackets on an algebra of (non-commutative) ordinary differential functions
}
\label{sec:3.3}

Consider the algebra
of non-commutative polynomials
$R_\ell=\mb F\langle x_1,\dots,x_\ell\rangle$,
i.e. the free unital associative algebra in the indeterminates $x_1,\dots,x_\ell$.
Let $\frac{\partial}{\partial x_i}:\,R_\ell\to R_\ell\otimes R_\ell$, $i=1,\dots,\ell$,
be the linear maps defined on monomials by
\begin{equation}\label{20140604:eq1}
\frac{\partial}{\partial x_i} (x_{i_1}\dots x_{i_s})
=
\sum_{k=1}^s
\delta_{i_k,i}
\,
x_{i_1}\dots x_{i_{k-1}}
\otimes
x_{i_{k+1}}\dots x_{i_s}
\,.
\end{equation}
The partial derivatives $\frac{\partial}{\partial x_i}$ are clearly $2$-fold derivations of $R_\ell$.
Furthermore, 
by Proposition \ref{20140604:prop},
$\Big[\frac{\partial}{\partial x_i},\frac{\partial}{\partial x_j}\Big]$
is a 3-fold derivation.
Since it is zero on generators $x_k$, $k=1,\dots,\ell$,
we obtain
\begin{equation}\label{20140609:eq4}
\frac{\partial}{\partial x_i}\circ\frac{\partial}{\partial x_j}
=
\frac{\partial}{\partial x_j}\circ\frac{\partial}{\partial x_i}
\,.
\end{equation}
\begin{lemma}\label{20140609:lem1}
We have the following equality of maps $R_\ell\to R_\ell^{\otimes3}$
($i,j\in\mb Z_+$):
\begin{equation}\label{20140702:eq2}
\left(\frac{\partial}{\partial x_i}\right)_L\circ\frac{\partial}{\partial x_j}
=\left(\frac{\partial}{\partial x_j}\right)_R\circ\frac{\partial}{\partial x_i}
\,.
\end{equation}
\end{lemma}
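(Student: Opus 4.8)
The plan is to mimic the derivation of the ordinary commutativity \eqref{20140609:eq4}, simply replacing the commutator of Proposition \ref{20140604:prop}(a) by the left/right combination of part (b). First I would set $D_1=\frac{\partial}{\partial x_i}$ and $D_2=\frac{\partial}{\partial x_j}$, which are $2$-fold derivations of $R_\ell$. Observe that the two sides of \eqref{20140702:eq2} are precisely the two terms appearing in Proposition \ref{20140604:prop}(b): with $m=n=2$, that proposition asserts that
$$
E:=\left(\frac{\partial}{\partial x_i}\right)_L\circ\frac{\partial}{\partial x_j}-\left(\frac{\partial}{\partial x_j}\right)_R\circ\frac{\partial}{\partial x_i}
$$
is an $(m+n-1)=3$-fold derivation of $R_\ell$, as a map $R_\ell\to R_\ell^{\otimes3}$.

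Second, I would evaluate $E$ on the generators $x_k$. By \eqref{20140604:eq1} we have $\frac{\partial}{\partial x_j}(x_k)=\delta_{jk}(1\otimes1)$, and applying $\left(\frac{\partial}{\partial x_i}\right)_L$ amounts to applying $\frac{\partial}{\partial x_i}$ to the leftmost factor, which equals $1$. Since every $n$-fold derivation annihilates $1$ (from \eqref{20140604:eq4} one gets $D(1)=D(1)+D(1)$, whence $D(1)=0$ in characteristic zero), the first term of $E$ vanishes on $x_k$. By the symmetric computation $\frac{\partial}{\partial x_i}(x_k)=\delta_{ik}(1\otimes1)$, followed by $\left(\frac{\partial}{\partial x_j}\right)_R$ differentiating the rightmost factor $1$, the second term of $E$ also vanishes. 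Hence $E(x_k)=0$ for every $k=1,\dots,\ell$.

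Finally, I would invoke the fact that an $n$-fold derivation of $R_\ell$ is uniquely determined by its values on the generators: by the Leibniz rule \eqref{20140604:eq4}, once $E$ is known on the $x_k$ it is fixed on every monomial, and it already vanishes on $1$; thus a $3$-fold derivation vanishing on all generators is identically zero. Therefore $E=0$, which is exactly the asserted equality \eqref{20140702:eq2}. I expect no genuine obstacle in this argument; the only points requiring care are recognizing that the two sides of \eqref{20140702:eq2} match the combination in Proposition \ref{20140604:prop}(b) rather than the naive commutator, and noting that $E$ kills $1$ as well as the $x_k$. The whole proof is the \emph{strong-commutativity} (left/right) analogue of the already-established commutativity \eqref{20140609:eq4}.
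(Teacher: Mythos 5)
Your proposal is correct and follows exactly the paper's own argument: invoke Proposition \ref{20140604:prop}(b) to see that the difference of the two sides is a $3$-fold derivation, then observe it vanishes on the generators $x_k$ (hence identically, by the Leibniz rule). The only difference is that you spell out the evaluation on generators and the uniqueness-from-generators step, which the paper dismisses as ``obvious.''
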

\begin{proof}
By Proposition \ref{20140604:prop}(b) it follows that
$$
\left(\frac{\partial}{\partial x_i}\right)_L\circ\frac{\partial}{\partial x_j}
-\left(\frac{\partial}{\partial x_j}\right)_R\circ\frac{\partial}{\partial x_i}
$$
is a $3$-fold derivation. The claim follows since this $3$-fold
derivation is obviously zero on the generators $x_k$, $k=1,\dots,\ell$.
\end{proof}
Note that equation \eqref{20140702:eq2} implies \eqref{20140609:eq4}, 
by the definition \eqref{20140702:eq3} of the composition of $2$-fold derivations.
Hence, we will say that the $2$-fold derivations $\frac{\partial}{\partial x_i}$
and $\frac{\partial}{\partial x_j}$ \emph{strongly commute}.
\begin{definition}\label{def:algebra}
An \emph{algebra of ordinary differential functions}
is a unital associative algebra $V$,
endowed with $\ell$ 
strongly commuting $2$-fold derivations $\frac{\partial}{\partial x_i}:\, V\to V\otimes V$,
i.e. satisfying \eqref{20140702:eq2}.
\end{definition}
An example is the algebra $R_\ell=\mb F\langle x_1,\dots,x_\ell\rangle$
of non-commutative polynomials in the variables $x_1,\dots,x_\ell$,
endowed with the $2$-fold derivations \eqref{20140604:eq1}.
Other examples can be obtained as extensions of $R_\ell$
by localization by one or more non-zero elements.

\begin{theorem}\label{thm:master-finite}
\begin{enumerate}[(a)]
\item
Any $2$-fold bracket on $R_\ell$ has the following form ($f,g\in V$)
\begin{equation}\label{master-finite}
\ldb f,g\rdb
=
\sum_{i,j=1}^\ell
\frac{\partial g}{\partial x_j}
\bullet 
\ldb x_i,x_j\rdb
\bullet
\Big(\frac{\partial f}{\partial x_i}\Big)^\sigma
\,,
\end{equation}
where $\bullet$ is as in \eqref{20140609:eqc1}.
\item
Let $V$ be an algebra of ordinary differential functions,
and let $H$ be an $\ell\times\ell$
matrix with entries in $V\otimes V$.
Denoting its entries by $H_{ji}=\ldb x_i,x_j\rdb$, $i,j=1,\dots,\ell$,
formula \eqref{master-finite} 
defines a $2$-fold bracket on $V$.
\item
Equation \eqref{master-finite} defines a structure of double Poisson algebra on $V$
if and only if the skewsymmetry axiom (i) and the Jacobi identity (ii) hold on the $x_i$'s,
namely, for all $i,j,k=1,\dots,\ell$, we have
\begin{equation}\label{skew-gen}
\ldb x_i,x_j\rdb=-\ldb x_j,x_i\rdb^\sigma
\,\,
\Big(\text{equivalently, recalling \eqref{20140710:eq6}, }\,
H^\dagger=-H
\Big)
\,,
\end{equation}
and
\begin{equation}\label{jacobi-gen}
\ldb x_i,\ldb x_j,x_k\rdb\rdb_L
+
\ldb x_j,\ldb x_k,x_i\rdb\rdb_L^\sigma
+
\ldb x_k,\ldb x_i,x_j\rdb\rdb_L^{\sigma^2}
=0
\,,
\end{equation}
where each term is understood, in terms of the matrix $H$, via \eqref{master-finite},
for example
\begin{equation}\label{20140709:eq4}
\ldb x_i,\ldb x_j,x_k\rdb\rdb_L=\sum_{h=1}^\ell
\Big(\Big(\frac{\partial}{\partial x_h}\Big)_LH_{kj}\Big)\bullet_3 H_{hi}
\,.
\end{equation}
\end{enumerate}
\end{theorem}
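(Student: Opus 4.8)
The plan is to handle the three parts in order, exploiting the universal (free) nature of $R_\ell$ for part (a), the strong commutativity hypothesis for part (b), and Lemma \ref{prop:1} together with Lemma \ref{20140605:lem} for the reduction-to-generators argument in part (c).

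For part (a), I would argue that a $2$-fold bracket $\ldb-,-\rdb$ on $R_\ell$ is determined by its values $\ldb x_i,x_j\rdb$ on generators, since the Leibniz rules \eqref{20140605:eq2b} and \eqref{20140605:eq2c} let one expand $\ldb f,g\rdb$ for arbitrary monomials $f,g$ by repeatedly peeling off one generator at a time. Concretely, I would first verify by a direct computation that the right-hand side of \eqref{master-finite}, viewed as a putative bracket, satisfies the left Leibniz rule \eqref{20140605:eq2b} in $g$ and the right Leibniz rule \eqref{20140605:eq2c} in $f$: the left rule follows because $\frac{\partial}{\partial x_j}$ is a $2$-fold derivation and the $\bullet$-product interacts with left/right multiplication via \eqref{lemma:bullet}, while the right rule follows from the corresponding Leibniz property of $\frac{\partial}{\partial x_i}$ together with the antiautomorphism property \eqref{20140710:eq4} of $\sigma$ and the identities of Lemma \ref{lemma:bullet2}. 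Since both the given bracket and the formula \eqref{master-finite} are $2$-fold brackets agreeing on generators (using $\frac{\partial g}{\partial x_j}(x_k)=\delta_{jk}(1\otimes 1)$), they coincide by the uniqueness just established.

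Part (b) is the place where I expect the strong commutativity assumption \eqref{20140702:eq2} to be essential, though for merely defining a $2$-fold bracket (as opposed to checking Jacobi) the bookkeeping is the main content rather than a conceptual obstacle. The point is that over a general algebra of ordinary differential functions the formula \eqref{master-finite} still makes sense and the Leibniz-rule verifications from part (a) go through verbatim, since they used only that the $\frac{\partial}{\partial x_i}$ are $2$-fold derivations and the formal properties of $\bullet$ and $\sigma$. So I would simply remark that the computations of part (a) establishing the two Leibniz rules do not use freeness, only the $2$-fold derivation property, and hence \eqref{master-finite} defines a $2$-fold bracket on any such $V$.

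Part (c) is the heart, and the reduction of Jacobi to generators is where the real work lies. One direction is trivial: a double Poisson structure restricts to the identities \eqref{skew-gen} and \eqref{jacobi-gen} on generators. For the converse, skewsymmetry is easy, since by part (a)/(b) the bracket is built functorially from its values on generators and $\sigma$ is an antiautomorphism, so skewsymmetry on the $x_i$ propagates to all of $V$. The subtle claim is that Jacobi on generators forces Jacobi everywhere. Here I would use Lemma \ref{20140605:lem}: assuming skewsymmetry, the expression $\ldb a,b,c\rdb$ of \eqref{20140606:eq10} is a $3$-fold bracket, hence satisfies the Leibniz rules \eqref{20140605:eq2} separately in each of its three arguments. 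A $3$-fold bracket on the \emph{free} algebra $R_\ell$ that vanishes on all triples of generators must vanish identically, by the same peeling argument as in part (a). This disposes of the free case. \textbf{The main obstacle} is the general (non-free) algebra of differential functions, where there are no ``generators'' to induct on and the preceding argument fails. To overcome this I would instead show directly that $\ldb a,b,c\rdb$, as a function of the $2$-fold derivatives of $a,b,c$ and the matrix $H$, is given by a universal expression — extending \eqref{20140709:eq4} to arbitrary $a,b,c$ via the strong-commutativity identity \eqref{20140702:eq2}, which is exactly what allows the iterated $2$-fold derivatives appearing in $\ldb a,\ldb b,c\rdb\rdb_L$ to be reorganized consistently. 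Once $\ldb a,b,c\rdb$ is written purely in terms of $\frac{\partial a}{\partial x_i}$, $\frac{\partial b}{\partial x_j}$, $\frac{\partial c}{\partial x_k}$ and the generator-level Jacobi expression \eqref{jacobi-gen}, its vanishing for all $a,b,c$ is an immediate consequence of \eqref{jacobi-gen}. Thus the strong commutativity \eqref{20140702:eq2} is precisely the hypothesis that lets the free-case computation be transported to the general case, and verifying that the iterated brackets reorganize correctly under \eqref{20140702:eq2} is the calculation I would expect to be the most delicate.
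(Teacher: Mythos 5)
Your proposal is correct, and on the part that carries the real weight---part (c) for a general algebra of ordinary differential functions $V$---it is essentially the paper's own proof: expand the three cyclic terms of \eqref{eq:jacobi} via \eqref{master-finite} and Lemma \ref{20140609:lem2} into nine summands, observe that three of them assemble into the dressed generator-level Jacobiator (which vanishes by \eqref{jacobi-gen}), while the remaining six, which involve second $2$-fold derivatives of $f,g,h$, cancel in pairs. Your parts (a) and (b) likewise match the paper (Leibniz verification via \eqref{lemma:bullet}, then uniqueness of the extension from generators). The one genuinely different ingredient is your shortcut for the free case: combining Lemma \ref{20140605:lem} (under skewsymmetry, the Jacobiator is a $3$-fold bracket) with generator peeling. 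The paper never uses this; its single explicit computation covers $R_\ell$ and general $V$ uniformly, whereas your shortcut, though valid, only buys you $R_\ell$, so you still need the paper-style computation for the general case---as you correctly recognize. A small imprecision worth flagging: you attribute the reorganization of iterated derivatives to strong commutativity \eqref{20140702:eq2} alone, but the pairwise cancellations (e.g. the paper's \eqref{eq:jac1} against \eqref{eq:jac6}) require strong commutativity, via Lemma \ref{20140606:lem}, \emph{together with} skewsymmetry \eqref{skew-gen} on generators; since \eqref{skew-gen} is assumed throughout part (c), this is a misattribution rather than a gap.
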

\begin{proof}
We prove that the RHS of
\eqref{master-finite} satisfies the Leibniz rules \eqref{20140605:eq2b} and \eqref{20140605:eq2c}.
Using the fact that partial derivatives are $2$-fold derivations, the
LHS of \eqref{20140605:eq2b} is
\begin{equation}\label{20140609:eq5}
\sum_{i,j=1}^\ell
\Big(\frac{\partial g}{\partial x_j}h\Big)\bullet\ldb x_i,x_j\rdb\bullet
\left(\frac{\partial f}{\partial x_i}\right)^\sigma
+
\sum_{i,j=1}^\ell
\Big(g\frac{\partial h}{\partial x_j}\Big)\bullet\ldb x_i,x_j\rdb\bullet
\left(\frac{\partial f}{\partial x_i}\right)^\sigma
\,.
\end{equation}
By \eqref{lemma:bullet} we have that
\eqref{20140609:eq5} is the same as RHS of \eqref{20140605:eq2b}.
The Leibniz rule \eqref{20140605:eq2c} is proved in the same way using 
again \eqref{lemma:bullet}.
Since any $2$-fold bracket is extended uniquely 
from the generators to $R_\ell$ via the Leibniz rules,
this concludes the proof of (a) and (b).

The fact that the bracket given by equation \eqref{master-finite} satisfies 
the skewsymmetry axiom is immediate from \eqref{20140710:eq4}
and the assumption that $\ldb x_i,x_j\rdb=-\ldb x_j,x_i\rdb^\sigma$
for every $i,j=1,\dots,\ell$.

To conclude the proof of part (c) we are left to prove Jacobi identity
\eqref{eq:jacobi}.
First, for $f\in V$ and $G\in V^{\otimes2}$ we have, by \eqref{master-finite}
and the definition of the $\bullet_i$ actions,
\begin{equation}\label{master-finite-L}
\ldb f,G\rdb_L
=
\sum_{i,j=1}^\ell
\Big(\Big(\frac{\partial}{\partial x_j}\Big)_LG\Big)
\bullet_3
\Big(
\ldb x_i,x_j\rdb
\bullet
\Big(\frac{\partial f}{\partial x_i}\Big)^\sigma
\Big)
\,.
\end{equation}
Then, by equation \eqref{master-finite},
Lemma \ref{20140609:lem2},
and equation \eqref{master-finite-L}, we get
\begin{align}
&\ldb f,\ldb g,h\rdb\rdb_L\notag\\
\label{eq:jac1}
&=\sum_{i,j,h,k=1}^\ell
\left(\frac{\partial}{\partial x_h}\right)_L\left(\frac{\partial h}{\partial x_k}\right)
\bullet_3
\left(\ldb x_i,x_h\rdb\bullet\left(\frac{\partial f}{\partial x_i}\right)^\sigma\right)
\bullet_1
\left(\ldb x_j,x_k\rdb\bullet\left(\frac{\partial g}{\partial x_j}\right)^{\sigma}\right)
\\
\label{eq:jac2}
&+\sum_{i,j,k=1}^\ell
\frac{\partial h}{\partial x_k}\bullet_2
\ldb x_i,\ldb x_j,x_k\rdb\rdb_L\bullet_3
\left(\frac{\partial f}{\partial x_i}\right)^\sigma\bullet_1
\left(\frac{\partial g}{\partial x_j}\right)^\sigma
\\
\label{eq:jac3}
&+\sum_{i,j,h,k=1}^\ell
\left(\frac{\partial h}{\partial x_k}\bullet\ldb x_j,x_k\rdb\right)
\bullet_2
\left(\frac{\partial}{\partial x_h}\right)_L\left(\frac{\partial g}{\partial x_j}\right)^\sigma
\bullet_3
\left(\ldb x_i,x_h\rdb\bullet\left(\frac{\partial f}{\partial x_i}\right)^{\sigma}\right)
\,.
\end{align}
Note that in the RHS above we avoided using unnecessary parentheses,
in view of Lemma \ref{lemma:bullet-i}.
From the above equation, permuting $f,g$ and $h$, 
applying the permutation $\sigma$,
and using Lemma \ref{lemma:bullet2} several times, we get
\begin{align}
&\ldb g,\ldb h,f\rdb\rdb_L^\sigma\notag\\
\label{eq:jac4}
&=\sum_{i,j,h,k=1}^\ell
\left(\frac{\partial h}{\partial x_k}\bullet\ldb x_k,x_i\rdb^\sigma\right)
\bullet_2
\left(
\left(\frac{\partial}{\partial x_h}\right)_L\left(\frac{\partial f}{\partial x_i}\right)
\right)^\sigma
\bullet_1
\left(\ldb x_j,x_h\rdb\bullet\left(\frac{\partial g}{\partial x_j}\right)^\sigma\right)
\\
\label{eq:jac5}
&+\sum_{i,j,k=1}^\ell
\frac{\partial h}{\partial x_k}
\bullet_2
\ldb x_j,\ldb x_k,x_i\rdb\rdb_L^\sigma
\bullet_1
\left(\frac{\partial g}{\partial x_j}\right)^\sigma
\bullet_3
\left(\frac{\partial f}{\partial x_i}\right)^\sigma
\\
\label{eq:jac6}
&+\!\!\sum_{i,j,h,k=1}^\ell
\!\!\!
\left(\left(\frac{\partial}{\partial x_h}\right)_L
\!\!
\left(\frac{\partial h}{\partial x_k}\right)^\sigma\right)^\sigma
\!\bullet_1\!
\left(\ldb x_j,x_h\rdb
\!\bullet\!
\left(\frac{\partial g}{\partial x_j}\right)^{\sigma}\right)
\!\bullet_3\!
\left(\ldb x_k,x_i\rdb^\sigma
\!\bullet\!
\left(\frac{\partial f}{\partial x_i}\right)^\sigma\right)
.
\end{align}
and, similarly,
\begin{align}
&\ldb h,\ldb f,g\rdb\rdb_L^{\sigma^2}\notag\\
\label{eq:jac7}
&=\sum_{i,j,h,k=1}^\ell
\left(\frac{\partial h}{\partial x_k}\bullet\ldb x_k,x_h\rdb^\sigma\right)
\bullet_2
\left(\left(\frac{\partial}{\partial x_h}\right)_L\left(\frac{\partial g}{\partial x_j}\right)\right)^{\sigma^2}
\bullet_3
\left(\ldb x_i,x_j\rdb\bullet\left(\frac{\partial f}{\partial x_i}\right)^{\sigma}\right)
\\
\label{eq:jac8}
&+\sum_{i,j,k=1}^\ell
\frac{\partial h}{\partial x_k}
\bullet_2
\ldb x_k,\ldb x_i,x_j\rdb\rdb_L^{\sigma^2}
\bullet_3
\left(\frac{\partial f}{\partial x_i}\right)^\sigma
\bullet_1
\left(\frac{\partial g}{\partial x_j}\right)^\sigma
\\
\label{eq:jac9}
&+\!\!\sum_{i,j,h,k=1}^\ell
\!\!
\left(\frac{\partial h}{\partial x_k}
\bullet
\ldb x_k,x_h\rdb^\sigma\right)
\!\bullet_2\!
\left(\left(\frac{\partial}{\partial x_h}\right)_L
\!
\left(\frac{\partial f}{\partial x_i}\right)^\sigma\right)^{\sigma^2}
\!\bullet_1\!
\left(\ldb x_i,x_j\rdb^\sigma\bullet\left(\frac{\partial g}{\partial x_j}\right)^\sigma\right)
\,.
\end{align}
Since, by assumption, the Jacobi identity \eqref{eq:jacobi} holds on
generators, we have that \eqref{eq:jac2}, \eqref{eq:jac5} and
\eqref{eq:jac8} sum up to zero.
Note that here we are using the fact that the $\bullet_1$ and $\bullet_3$ actions
commute, by Lemma \ref{lemma:bullet-i}(c).
Furthermore, 
by Lemma \ref{20140606:lem} and assumption \eqref{20140702:eq2}, we have 
$$
\left(
\left(\frac{\partial}{\partial x_k}\right)_L\left(\frac{\partial h}{\partial x_h}\right)^\sigma\right)^\sigma
=\left(\frac{\partial}{\partial x_k}\right)_R\left(\frac{\partial h}{\partial x_h}\right)
=\left(\frac{\partial}{\partial x_h}\right)_L\left(\frac{\partial h}{\partial x_k}\right)
\,.
$$
It follows that
\eqref{eq:jac1} and \eqref{eq:jac6}
are opposite to each other by the skewsymmetry on generators.
Similarly, 
\eqref{eq:jac3} and \eqref{eq:jac7}
(respectively \eqref{eq:jac4} and \eqref{eq:jac9}) are opposite to each other.
This concludes the proof.
\end{proof}
\begin{definition}\label{poisson-structure}
The matrix $H=\big( H_{ij}\big)_{i,j=1}^\ell\in\Mat_{\ell\times\ell}(V\otimes V)$
satisfying \eqref{skew-gen} and \eqref{jacobi-gen}
is called a \emph{Poisson structure} on the algebra of ordinary differential functions $V$.
\end{definition}
\begin{remark}\label{20140710:rem}
Formula \eqref{master-finite} for the $2$-fold bracket of $f,g\in V$
can be written in a ``traditional'' form as
\begin{equation}\label{master-tradition}
\ldb f,g\rdb
=
(\nabla g | H \nabla f)
\,,
\end{equation}
where
$\nabla f=\Big(\frac{\partial f}{\partial x_i}\Big)_{i=1}^\ell\in(V\otimes V)^{\oplus\ell}$,
$H=(H_{ji})_{i,j=1}^\ell\in\Mat_{\ell\times\ell}(V\otimes V)$
is the matrix with entries $H_{ji}=\ldb x_i,x_j\rdb$,
its action on $(V\otimes V)^{\oplus\ell}$ is defined by \eqref{20140710:eq3},
and $(\cdot\,|\,\cdot)$ is the inner product \eqref{20140710:eq2}.
\end{remark}
\begin{remark}\label{20140627:rem}
One can define a \emph{double Lie algebra}
as a vector space $\mf g$ endowed with a double bracket
$\ldb-,-\rdb:\,\mf g\times\mf g\to\mf g\otimes\mf g$
satisfying the skewsymmetry and Jacobi identity axioms
from Definition \ref{20140606:def}.
Then, by Theorem \ref{thm:master-finite}
the tensor algebra $\mc T(\mf g)$ is a double Poisson algebra,
with the $2$-fold bracket given by \eqref{master-finite}.
\end{remark}

\subsection{
Evolution ODE and Hamiltonian ODE
}

Let $V$ be an algebra of ordinary differential functions.
An \emph{evolution ODE} over $V$
is an equation of the form
\begin{equation}\label{evol-eq}
\frac{d x_i}{dt}=P_i\,\in V
\,\,,\,\,\,\,
i=1,\dots,\ell
\,.
\end{equation}
By the chain rule, a function $f\in V$ evolves, in virtue of \eqref{evol-eq},
according to the equation
\begin{equation}\label{evol-eq2}
\frac{d f}{dt}
=
\sum_{i=1}^\ell
\mult\Big(
P_i\otimes_1
\frac{\partial f}{\partial x_i}
\Big)
\,.
\end{equation}
It is immediate to check that the time derivative $\frac{d}{dt}:\, V\to V$
as given by \eqref{evol-eq2} is a derivation of $V$.
In particular, it induces a well defined map on $\quot{V}{[V,V]}$.
An \emph{integral of motion} for equation \eqref{evol-eq}
is an element $\tr(h)\in\quot{V}{[V,V]}$
constant in time, i.e. such that
\begin{equation}\label{integral-of-motion}
\frac{d \tr h}{dt}
=
\tr \sum_{i=1}^\ell
\mult\Big(
P_i\otimes_1\frac{\partial h}{\partial x_i}
\Big)
=0
\,.
\end{equation}

Given $P\in V^{\ell}$, we define the corresponding 
\emph{evolutionary vector field} $X_P$ as the derivation of $V$
given by the RHS of \eqref{evol-eq2}:
\begin{equation}\label{evol-vect-field}
X_P(f)
=
\sum_{i=1}^\ell
\mult\Big(
P_i\otimes_1\frac{\partial f}{\partial x_i}
\Big)
\,.
\end{equation}
Evolutionary vector fields form the Lie algebra $\Vect(V)$,
with commutator given by the usual formula:
\begin{equation}\label{commut-evf}
[X_P,X_Q]=X_{[P,Q]}
\,\,,\,\,\text{ where }\,
[P,Q]_i
=
X_P(Q_i)-X_Q(P_i)
\,\,,\,\,\,\, i=1,\dots,\ell
\,.
\end{equation}
This can be easily checked using condition \eqref{20140702:eq2} on $V$.

Equation \eqref{evol-eq} is called \emph{compatible} with another evolution ODE 
$\frac{dx_i}{d\tau}=Q_i$, $i=1,\dots,\ell$,
if the corresponding evolutionary vector fields commute: $[X_P,X_Q]=0$.

A \emph{Hamiltonian ODE} on an arbitrary double Poisson algebra $V$,
associated to the \emph{Hamiltonian function} $\tr h\in \quot{V}{[V,V]}$ is
an equation of the form
\begin{equation}\label{ord-ham-eq}
\frac{du}{dt}
=
\{\tr h,u\}
\,,\,\,
u\in V\,.
\end{equation}
An \emph{integral of motion} for the Hamiltonian ODE \eqref{ord-ham-eq}
is an element $\tr f\in \quot{V}{[V,V]}$ such that $\{\tr h,\tr f\}=0$.
In this case, by Proposition \ref{20131205:prop1}, 
the derivations $X^h=\{\tr h,-\}$ and $X^f=\{\tr f,-\}$ 
(called Hamiltonian vector fields) commute,
and equations $\frac{du}{dt}=\{\tr f,u\}$ and \eqref{ord-ham-eq} are \emph{compatible}.

In the special case when $V$ is an algebra of ordinary differential functions
and the $2$-fold bracket $\ldb-,-\rdb$ is given by 
a Poisson structure $H$ on $V$ via \eqref{master-finite},
where $\ldb x_i,x_j\rdb=H_{ji}$,
the Hamiltonian ODE \eqref{ord-ham-eq} becomes the following evolution ODE
\begin{equation}\label{20140702:eq1}
\frac{dx_i}{dt}
=
\mult\sum_{j=1}^\ell
H_{ij}\bullet \Big(\frac{\partial h}{\partial x_j}\Big)^\sigma
\,.
\end{equation}
Furthermore the Lie algebra bracket $\{-,-\}$ on $\quot{V}{[V,V]}$
associated to the double Poisson bracket \eqref{master-finite},
defined by Proposition \ref{20131205:prop1},
is:
\begin{equation}\label{20140709:eq1}
\{\tr(f),\tr(g)\}
=\tr\Big(\mult\sum_{i,j=1}^\ell
\frac{\partial g}{\partial x_j}
\bullet
H_{ji}\bullet
\left(\frac{\partial f}{\partial x_i}\right)^\sigma\Big)
\,\,,\,\,\,\,
f,g\in V
\,.
\end{equation}
In this case the notions of compatibility and of integrals of motion are consistent with those
for general evolution ODEs, due to Theorem \ref{thm:master-finite}(c).
\begin{remark}\label{20140709:rem1}
Recalling the definition \eqref{20140710:eq10} of the action of the matrix
$H\in\Mat_{\ell\times\ell}(V\otimes V)$ on $V^{\oplus\ell}$
formula \eqref{20140702:eq1} can be written in the ``traditional'' form
\begin{equation}\label{20140709:eq7}
\frac{dx}{dt}=Hdh
\,,
\end{equation}
and similarly formula \eqref{20140709:eq1} can be written as 
(see \eqref{master-tradition})
\begin{equation}\label{20140709:eq8}
\{\tr(f),\tr(g)\}
=
(dg|Hdf)
\,,
\end{equation}
where $(\cdot\,|\,\cdot)$ is the inner product \eqref{20140710:eq9} on $V^{\oplus\ell}$,
and $df=\mult(\nabla f)^\sigma$ (cf. \eqref{20140710:eq8}).
The latter notation is compatible with the theory of the reduced de Rham complex, 
cf. \eqref{20140626:eq4a}.
\end{remark}

\subsection{
Lenard-Magri scheme for Hamiltonian ODE
}\label{sec:lenard-magri}

Let $H\in\Mat_{\ell\times\ell}(V\otimes V)$. We define a bilinear form associated to $H$,
$\langle-,-\rangle_H:V^{\oplus\ell}\otimes V^{\oplus\ell}\to\quot{V}{[V,V]}$ as follows
\begin{equation}\label{20140709:eq5}
\langle F, G\rangle_H=(F|HG)
\,\,,\,\,\,\,
F,G\in V^{\oplus\ell}
\,.
\end{equation}
Here we are using notation \eqref{20140710:eq9} and \eqref{20140710:eq10}.
Recalling \eqref{20140710:eq7},
we obtain that the form \eqref{20140709:eq5} is symmetric (resp. skewsymmetric)
if $H$ is selfadjoint (resp. skewadjoint)
(see \eqref{20140710:eq6}).
\begin{lemma}\label{20140709:lem2}
Let $H_0$, $H_1\in\Mat_{\ell\times\ell}(V\otimes V)$ be skewadjoint, and let
$\{F_n\}_{n=0}^N\subset V^{\oplus\ell}$ be such that
$$
H_0F^{n+1}=H_1F^n\,\,\,\,
\text{for }n=0,\dots,N-1
\,.
$$
Then, the following orthogonality relations hold:
$$
\langle F^n, F^m\rangle_{H_0}
=\langle F^n, F^m\rangle_{H_1}=0
\,\,\,\,
\text{for }n,m=0,\dots,N
\,.
$$
\end{lemma}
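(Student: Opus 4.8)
The plan is to run the standard Lenard--Magri orthogonality argument, reduced to two elementary inputs: the skewsymmetry of the bilinear forms $\langle-,-\rangle_{H_0}$ and $\langle-,-\rangle_{H_1}$, and the recursion $H_0F^{n+1}=H_1F^n$. Since $H_0$ and $H_1$ are skewadjoint, the remark preceding the statement (that $\langle-,-\rangle_H$ is skewsymmetric when $H$ is skewadjoint) gives $\langle F,G\rangle_{H_0}=-\langle G,F\rangle_{H_0}$, and likewise for $H_1$; concretely this follows from the symmetry of the reduced inner product \eqref{20140710:eq9} together with the adjunction $(F|HG)=(H^\dagger F|G)$, which descends from \eqref{20140710:eq7} to the reduced pairing by a one-line computation using the cyclicity of $\tr$ and the definition \eqref{20140710:eq6} of $H^\dagger$. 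Throughout, write $a_{n,m}=\langle F^n,F^m\rangle_{H_0}$ and $b_{n,m}=\langle F^n,F^m\rangle_{H_1}$.

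First I would record two identities coming from the recursion. Substituting $H_1F^m=H_0F^{m+1}$ (valid for $m\le N-1$) directly into the definition gives $b_{n,m}=a_{n,m+1}$. Applying skewsymmetry of $b$, then the recursion on the first argument (valid for $n\le N-1$), and then skewsymmetry of $a$ gives $b_{n,m}=-b_{m,n}=-\langle F^m,F^{n+1}\rangle_{H_0}=a_{n+1,m}$. Comparing the two expressions, for $n,m\le N-1$ we obtain
$$a_{n,m+1}=a_{n+1,m},$$
that is, the $H_0$-form is constant along anti-diagonals $n+m=\text{const}$. Next I would combine this anti-diagonal constancy with skewsymmetry to force $a\equiv0$: given $0\le n\le m\le N$, repeatedly applying $a_{i,j}=a_{i+1,j-1}$ transports $(n,m)$ to $(m,n)$ along the anti-diagonal $n+m$, and each step stays in range because the first index runs only up to $m-1\le N-1$ while the second stays $\ge n+1\ge 1$. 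Hence $a_{n,m}=a_{m,n}$, whereas skewsymmetry gives $a_{n,m}=-a_{m,n}$; since $\mathrm{char}\,\mb F=0$ this yields $a_{n,m}=0$ for all $0\le n,m\le N$.

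Finally, the vanishing of $b$ follows: $b_{n,m}=a_{n,m+1}=0$ for $m\le N-1$, and the boundary case $m=N$ is handled by $b_{n,N}=-b_{N,n}=-a_{N,n+1}=0$ for $n\le N-1$ together with $b_{N,N}=-b_{N,N}=0$. The only genuine obstacle is the bookkeeping at the boundary indices $n=N$ or $m=N$, where the recursion is unavailable; the anti-diagonal walk and the two uses of skewsymmetry are arranged precisely so that every invocation of $H_0F^{n+1}=H_1F^n$ occurs at an index $\le N-1$, after which the boundary cases are recovered purely from skewsymmetry.
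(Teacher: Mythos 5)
Your proof is correct, and it is exactly the standard Lenard--Magri orthogonality argument that the paper invokes by reference (its ``proof'' is simply a citation to Magri's paper and to Lemma 2.6 of \cite{BDSK09}): transport along anti-diagonals via the recursion, then kill everything with skewsymmetry. Your boundary bookkeeping at the indices $n=N$, $m=N$ is also handled correctly, so nothing is missing.
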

\begin{proof}
Same proof as in \cite{Mag78} or Lemma 2.6 in \cite{BDSK09}.
\end{proof}
\begin{corollary}\label{cor:lenard_magri}
Let $\ldb-,-\rdb_0$ and $\ldb-,-\rdb_1$
be two skewsymmetric $2$-fold brackets on an algebra of ordinary differential functions $V$
given by the equation \eqref{master-finite} for $H_0,H_1\in\Mat_{\ell\times\ell}(V\otimes V)$
respectively. 
Let $\{h_n\}_{n=0}^N\subset V$ be such that 
\begin{equation}\label{eq:len_rec-a}
\{\tr(h_n), u\}_0=\{ \tr(h_{n+1}), u\}_1
\quad\text{for }u\in V\,,n=0,\dots,N-1\,.
\end{equation}
Then we have
$$
\{\tr(h_n),\tr(h_m)\}_0=
\{\tr(h_n),\tr(h_m)\}_1=0
\quad\text{for }n,m=0,\dots,N
\,.
$$ 
(See \eqref{20140606:eq8b}, \eqref{20131205:eq1} and \eqref{20131205:eq2} for notation).
\end{corollary}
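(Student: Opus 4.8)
The plan is to reduce the statement to the purely linear-algebraic orthogonality result of Lemma \ref{20140709:lem2}. The dictionary is provided by the reduced gradients $F^n:=\mult(\nabla h_n)^\sigma\in V^{\oplus\ell}$, which in the notation of Remark \ref{20140709:rem1} are exactly the $dh_n$; under this correspondence the Hamiltonian data of the Lenard--Magri scheme becomes the vector data to which the lemma applies.

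First I would translate the recursion \eqref{eq:len_rec-a} into a single vector identity in $V^{\oplus\ell}$. Specializing \eqref{eq:len_rec-a} to $u=x_i$ and using the formula $\{\tr(h),x_i\}=(H\,dh)_i$---which is immediate from \eqref{20131205:eq1}, \eqref{master-finite} and the fact that $\frac{\partial x_i}{\partial x_k}=\delta_{ik}(1\otimes1)$, and is recorded as \eqref{20140702:eq1}---the recursion becomes $(H_0F^n)_i=(H_1F^{n+1})_i$ for every $i$, that is $H_1F^{n+1}=H_0F^n$ for $n=0,\dots,N-1$. Since both $2$-fold brackets are skewsymmetric, Theorem \ref{thm:master-finite}(c) (see \eqref{skew-gen}) guarantees that $H_0$ and $H_1$ are skewadjoint. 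The displayed identity is precisely the hypothesis of Lemma \ref{20140709:lem2} with the roles of $H_0$ and $H_1$ interchanged, so the lemma yields $\langle F^n,F^m\rangle_{H_0}=\langle F^n,F^m\rangle_{H_1}=0$ for all $n,m=0,\dots,N$.

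It then remains to recognize $\{\tr(h_n),\tr(h_m)\}_a$ as one of these bilinear forms. By \eqref{20140709:eq8} together with the definition \eqref{20140709:eq5} of $\langle-,-\rangle_H$, we have $\{\tr(h_n),\tr(h_m)\}_a=(dh_m|H_a\,dh_n)=\langle F^m,F^n\rangle_{H_a}$ for $a=0,1$, which vanishes by the previous step. The only genuinely non-routine ingredient is Lemma \ref{20140709:lem2} itself, whose proof is the standard telescoping argument of \cite{Mag78,BDSK09} and which I use as a black box; everything else is the translation between the $2$-fold-bracket language and the inner-product language, the sole subtlety being the index shift that forces the application of the lemma with $H_0$ and $H_1$ swapped.
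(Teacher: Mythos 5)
Your proof is correct and follows essentially the same route as the paper, whose entire argument is the one-line citation of Lemma \ref{20140709:lem2} together with equation \eqref{20140709:eq8}; you have simply made the translation explicit. Your observation that the recursion \eqref{eq:len_rec-a} matches the hypothesis of Lemma \ref{20140709:lem2} only after interchanging $H_0$ and $H_1$ is accurate (and harmless, since the lemma's conclusion is symmetric in the two matrices) — a detail the paper silently glosses over.
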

\begin{proof}
It follows from Lemma \ref{20140709:lem2} and equation \eqref{20140709:eq8}.
\end{proof}
\begin{remark}\label{20140711:rem1}
The analogues of Propositions 2.9 and 2.10 in \cite{BDSK09} still hold
in the present setup with the same proof.
\end{remark}

\begin{theorem}
Let $V=R_\ell$ be the algebra of non-commutative polyomials in $\ell$ variables.
Let $H_0,H_1$ be compatible Poisson structures on  $V$,
and assume that $K$ is a non-degenerate matrix.
Let $h_0,h_1\in V$ and $F\in V^\ell$ be such that
\begin{equation}\label{eq:thmfond}
H_1 dh_0=H_0 dh_1
\,\,,\,\,\,\,
H_1 dh_1=H_0 F
\,,
\end{equation}
where, as before, $dh=m(\nabla h)^\sigma$, and we are using the notation \eqref{20140710:eq10}.
Then, there exists $h_2\in V$ such that $F=dh_2$.
\end{theorem}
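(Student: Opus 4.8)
The plan is to translate the existence of $h_2$ into a closedness statement in the reduced de Rham complex, and then to derive that closedness from the bi-Hamiltonian relations.

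First I would reduce to a cohomological assertion. Under the identifications $\Omega^0(R_\ell)=\quot{R_\ell}{[R_\ell,R_\ell]}$ and $\Omega^1(R_\ell)=R_\ell^{\oplus\ell}$, formula \eqref{eq:0.22} shows that the de Rham differential $d\colon\Omega^0\to\Omega^1$ is exactly $\tr h\mapsto\mult(\nabla h)^\sigma=dh$. Hence $F=dh_2$ for some $h_2\in R_\ell$ if and only if $F$ lies in the image of $d\colon\Omega^0\to\Omega^1$. Since $(\Omega(R_\ell),d)$ is acyclic, this image coincides with $\ker\big(d\colon\Omega^1\to\Omega^2\big)$, so it suffices to prove that $F$ is \emph{closed}: by the second formula in \eqref{eq:0.22} and the description of $\Omega^2(R_\ell)$, this means $dF=\tfrac12(J_F^t-J_F^\sigma)=0$, i.e. the $2$-fold Jacobian is self-adjoint, $\frac{\partial F_j}{\partial x_i}=\big(\frac{\partial F_i}{\partial x_j}\big)^\sigma$ for all $i,j$. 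Note that $dh_0$ and $dh_1$ are automatically closed, being in the image of $d$ and $d^2=0$.

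Next I would prove closedness of $F$ from the relations $H_1\,dh_0=H_0\,dh_1$ and $H_1\,dh_1=H_0F$. Using the non-degeneracy of $H_0$ (the matrix called $K$ in the statement), $H_0$ is invertible on $1$-forms, so the recursion operator $\mathcal{N}=H_0^{-1}H_1\colon R_\ell^{\oplus\ell}\to R_\ell^{\oplus\ell}$ is well defined and the two relations read $dh_1=\mathcal{N}\,dh_0$ and $F=\mathcal{N}\,dh_1=\mathcal{N}^2\,dh_0$. The heart of the argument is a closedness-propagation lemma: if $H_0,H_1$ are \emph{compatible} Poisson structures (so that the whole pencil $H_0+tH_1$ satisfies skewadjointness \eqref{skew-gen} and the Jacobi identity \eqref{jacobi-gen}, i.e. is a Poisson structure in the sense of Definition \ref{poisson-structure}), then whenever $\xi$ and $\mathcal{N}\xi$ are both closed, so is $\mathcal{N}^2\xi$. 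Applying this with $\xi=dh_0$, for which $\xi$ and $\mathcal{N}\xi=dh_1$ are closed, yields that $F=\mathcal{N}^2\,dh_0$ is closed, and hence $F=dh_2$ by the acyclicity invoked above.

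To establish the propagation lemma I would argue exactly as in the commutative case \cite{BDSK09}, for which the analogues of Propositions 2.9 and 2.10 hold here with the same proof (Remark \ref{20140711:rem1}). Concretely, compatibility of the pair is equivalent to a vanishing Nijenhuis-type torsion of $\mathcal{N}$ expressed through the $2$-fold brackets, and this is precisely the identity that lets one compute $d(\mathcal{N}^2\xi)$ in terms of $d\xi$, $d(\mathcal{N}\xi)$ and the Jacobi identities \eqref{jacobi-gen} of $H_0,H_1$, all of which vanish on the closed forms $\xi,\mathcal{N}\xi$; non-degeneracy of $H_0$ is then what allows one to cancel $H_0$ at the end and conclude $d(\mathcal{N}^2\xi)=0$ rather than merely $H_0\cdot d(\mathcal{N}^2\xi)=0$.

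The main obstacle is this propagation lemma. In the commutative theory the corresponding curl computation is governed by the symmetry of mixed variational derivatives; here the gradients $\nabla h$ take values in $R_\ell\otimes R_\ell$ and every step must track the placement of the two tensor factors through the operations $\bullet$, $\bullet_i$ and the permutations $\sigma$, exactly as in the proof of Theorem \ref{thm:master-finite}(c). The delicate point is that closedness is a statement in $\Mat_{\ell\times\ell}(R_\ell\otimes R_\ell)$, \emph{before} applying $\mult$, so one cannot simply work modulo commutators as in Proposition \ref{20131205:prop1}; one must verify that the contributions of $H_0$ and $H_1$ recombine, via their Jacobi identities, the compatibility, and the strong commutativity \eqref{20140702:eq2} of the $\frac{\partial}{\partial x_i}$ (together with Lemmas \ref{lemma:bullet-i} and \ref{lemma:bullet2}), into the genuine vanishing of the antisymmetrized $2$-fold Jacobian.
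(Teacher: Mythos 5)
Your proposal is correct and follows essentially the same route as the paper's proof: reduce the statement to the closedness of $F$ (i.e.\ self-adjointness of the $2$-fold Jacobian $J_F$), establish that closedness by the same computation as in the commutative case (cf.\ Olver's Lemma 7.25, with non-degeneracy used to cancel $H_0$ at the end rather than to literally invert it), and then invoke the acyclicity of the reduced de Rham complex $(\Omega(R_\ell),d)$ to conclude $F=dh_2$. Your identification of $K$ with $H_0$ and your remarks on the delicate points (tracking tensor factors before applying $\mult$, strong commutativity) match the paper's intent.
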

\begin{proof}
First, one checks that the Jacobian of $F_2$ (defined after equation \eqref{eq:0.22}),
is self-adjoint, i.e. $J_{F}=J_{F}^\dagger$.
The proof of this claim is long but straightforward,
and the argument is the same as in the commutative case, cf. e.g. \cite[Lemma 7.25]{Olv93}. 
The proof uses non-degeneracy of the bilinear form (\ref{20140710:eq9}), 
which follows from the well-known property of the algebra $V=R_\ell$
that for $a\in V$, $\tr\,aV=0$ implies $a=0$.
Then, one uses the exactness of the reduced de Rham complex,
proved in Theorem \ref{20140623:thm} below,
to conclude that the $1$-form $F$ is exact.
\end{proof}

Using this theorem and Remark \ref{20140711:rem1}, one shows that the Lenard-Magri scheme works for $V=R_\ell$, i.e. the Lenard-Magri recursive relation
in Corollary \ref{cor:lenard_magri} for $N=1$ can be extended to $N=\infty$,
provided that the orthogonality condition, similar to that in \cite[Proposition 2.9]{BDSK09}, holds.

\subsection{
Examples
}\label{sec:2.6}

It is claimed in \cite{VdB08} that the only non-zero double Poisson brackets on $R_1=\mb F[x]$
(up to automorphisms of $R_1$) are given by
$$
\ldb x,x\rdb=x\otimes1-1\otimes x
\,\,\,\,\text{and}\,\,\,\,
\ldb x,x\rdb=x^2\otimes x-x\otimes x^2\,.
$$
The corresponding Lie algebra brackets on $\quot{R_1}{[R_1,R_1]}$ given by \eqref{20140709:eq1}
are trivial. Hence, we do not get non-trivial Hamiltonian equations.

The simplest double Poisson bracket on $R_2=\mb F\langle x,y\rangle$ has the following form:
\begin{equation}\label{20140708:eq1}
\ldb x,x\rdb=p\otimes q-q\otimes p
\,\text{ where }p,q\in R_2\,,
\,\,\,\,
\ldb x,y\rdb=\ldb y,y\rdb=0\,.
\end{equation}
The double Poisson bracket given by \eqref{master-finite} and
\eqref{20140708:eq1} is obviously skewsymmetric
and it satisfies Jacobi identity if and only if equation \eqref{jacobi-gen} holds for
$x_i=x_j=x_k=x$ (see Theorem \ref{thm:master-finite}). Using equation \eqref{20140709:eq4},
this condition can be rewritten as follows
(using Sweedler's notation)
\begin{equation}\label{20140708:eq2}
\begin{array}{l}
\displaystyle{
\left(\frac{\partial p}{\partial x}\right)^\prime p
\otimes q\left(\frac{\partial p}{\partial x}\right)^{\prime\prime}
\otimes q
-\left(\frac{\partial p}{\partial x}\right)^\prime q
\otimes p\left(\frac{\partial p}{\partial x}\right)^{\prime\prime}
\otimes q
-\left(\frac{\partial q}{\partial x}\right)^\prime p
\otimes q\left(\frac{\partial q}{\partial x}\right)^{\prime\prime}
\otimes p
}\\
\displaystyle{
+\left(\frac{\partial q}{\partial x}\right)^\prime q
\otimes p\left(\frac{\partial q}{\partial x}\right)^{\prime\prime}
\otimes p
+\text{cyclic permutations}=0\,.
}
\end{array}
\end{equation}
We found the following three types of solutions:
\begin{enumerate}[(a)]
\item
$p=p(y)$, $q=q(y)$;
\item
$p=x$, $q=q(y)$;
\item
$p=x$, $q=xf(y)x$.
\end{enumerate}
In fact, one can show that, up to automorphisms of $R_2$, these are all solutions.

Replacing $x$ by $x+1$ in (b) we get the following compatible
double Poisson structures (where $y$ is central for both structures):
$$
\ldb x,x\rdb_0=1\otimes y-y\otimes1
\,\,\,\,\text{and}\,\,\,\,
\ldb x,x\rdb_1=x\otimes y-y\otimes x\,.
$$
Let 
\begin{equation}\label{20140708:eq7}
h_0=1
\,\,\,\,
\text{and} 
\,\,\,\,h_n=\frac1n(x+y)^n
\,\,\,\,
\text{for every integer }n\geq1
\,.
\end{equation}
We have the following recursion relations
\begin{equation}\label{20140708:eq5}
\{\tr(h_0),x\}_0=0
\,\,\,\,\text{and}\,\,\,\,
\{ \tr(h_n),x\}_1=\{ \tr(h_{n+1}),x\}_0\,,\,\,
n\in\mb Z_+\,.
\end{equation}
which can be proved using
equation \eqref{20140709:eq1} and the fact that for $n\geq1$
$$
\frac{\partial h_n}{\partial x}
=\frac1n\sum_{k=0}^{n-1}(x+y)^k\otimes(x+y)^{n-1-k}
\,.
$$
Hence, by Corollary \ref{cor:lenard_magri},
we get that all $\tr(h_n)$ are in involution with respect to both brackets $\{-,-\}_{0,1}$
on $\quot{R_2}{[R_2,R_2]}$, given by equation \eqref{20140709:eq1}.
Furthermore all the $\tr(h_n)$ are integrals of motion for the corresponding
compatible Hamiltonian ODEs, which are computed using \eqref{20140702:eq1}:
\begin{equation}\label{20140708:eq6}
\frac{dx}{dt_n}=x(x+y)^ny-y(x+y)^nx
\,,\,\,\,\,
\frac{dy}{dt_n}=0
\,,\,\,\,\,
n\in\mb Z_+
\,.
\end{equation}

Note that the algebra $R_2$ is $\mb Z_+^2$-graded in the obvious way. The homogeneous components
of the $h_n$'s are still in involution with respect to both brackets and the corresponding
Hamiltonian vector fields still commute.
Thus, for $n\in\mb Z_+$, equation \eqref{20140708:eq6} produces $n+1$ compatible
Hamiltonian ODEs,
and they are compatible for all $n$.
For example, for $n=0$, equation \eqref{20140708:eq6} is
$$
\frac{dx}{dt_0}=xy-yx
\,,\,\,\,\,
\frac{dy}{dt_0}=0
\,.
$$
The homogeneous components of equation \eqref{20140708:eq6} for $n=1$
produce the following two Hamiltonian ODEs:
$$
\begin{array}{l}
\displaystyle{
\frac{dx}{dt_1}=x^2y-yx^2
\,,\,\,\,\,
\frac{dy}{dt_1}=0
\,,
}\\
\displaystyle{
\frac{dx}{dt_{\tilde1}}=xy^2-y^2x
\,,\,\,\,\,
\frac{dy}{dt_{\tilde1}}=0
\,.
}
\end{array}
$$
As explained in \cite{MS00} the first of the above equations is a higher symmetry of the
\emph{Euler equation} and they find infinitely many its conserved densities.
For $n=2$, the homogeneous $(2,2)$-component of equation \eqref{20140708:eq6} produces the
following equation which also was found in \cite{MS00}:
$$
\frac{dx}{dt_0}=x^2y^2+xyxy-yxyx-y^2x^2
\,,\,\,\,\,
\frac{dy}{dt_0}=0
\,.
$$

Furthermore, let us consider on $R_2$ the following double Poisson structure (of type (c)):
$$
\ldb x,x\rdb=x\otimes xyx-xyx\otimes x\,,
\qquad
\ldb x,y\rdb=\ldb y,y\rdb=0
\,.
$$
Let $h_0=1$ and
\begin{equation}\label{20141012:eq1}
h_n=\frac{x^n}n
\,,\,\,
\tilde h_n=x^ny
\,,\,\,
\bar h_n=x(yx)^n
\,\,\,\,
\text{for every integer }n\geq1
\,.
\end{equation}
Using the fact that
$$
\begin{array}{c}
\displaystyle{
\frac{\partial h_n}{\partial x}
=\frac1n\sum_{k=0}^{n-1}x^k\otimes x^{n-1-k}
\,,
\,\,\,\,
\frac{\partial \tilde h_n}{\partial x}
=\sum_{k=0}^{n-1}x^k\otimes x^{n-1-k}y
\,,
}
\\
\displaystyle{
\frac{\partial \bar h_n}{\partial x}
=1\otimes(yx)^n
+\sum_{k=0}^{n-1}x(yx)^ky\otimes (yx)^{n-1-k}
\,.
}
\end{array}
$$
and equation \eqref{20140709:eq1},
it is straightforward to show that 
$$
\{\tr(X_n),\tr(Y_m)\}=0
\,\,\,\,
\text{for every integers }n,m\geq1
\,,
$$
where $X$ and $Y$ can be $h$, $\tilde h$ or $\bar h$.
Hence, we get the following integrable hierarchy of Hamiltonian ODEs:
\begin{equation}\label{20141012:eq2}
\begin{array}{l}
\displaystyle{
\frac{dx}{dt_n}=x^{n+1}yx-xyx^{n+1}
\,,\,\,\,\,
\frac{dy}{dt_n}=0
\,,\,\,\,\,
n\geq1
\,,
}
\\
\displaystyle{
\frac{dx}{dt_{\tilde n}}=
\sum_{k=0}^{\tilde n-2}x^{\tilde n-k}yx^{k+1}yx-xyx^{k+1}yx^{\tilde n-k}
\,,\,\,\,\,
\frac{dy}{dt_{\tilde n}}=0
\,,\,\,\,\,
\tilde n\geq1
\,,
}
\\
\displaystyle{
\frac{dx}{dt_{\bar n}}=
x(xy)^{\bar n+1}x-x(yx)^{\bar n+1}x
\,,\,\,\,\,
\frac{dy}{dt_{\bar n}}=0
\,,\,\,\,\,
\bar n\geq1
\,.}
\end{array}
\end{equation}
The first equation of the hierarchy, corresponding to $n=1$ (and the Hamiltonian density
$h_1=x$), is
$$
\frac{dx}{dt_1}=x^{2}yx-xyx^{2}
\,,\,\,\,\,
\frac{dy}{dt_1}=0
\,.
$$
This equation already appeared in \cite{MS00},
where its higher symmetries of degree $n+2$ in the variable
$x$ and degree $1$ in the variable $y$ were found (which correspond to the first line in equation
\eqref{20141012:eq2}).

\subsection
{Basic de Rham complex over an algebra of ordinary differential functions
}\label{sec:derham}

Let $V$ be an algebra of ordinary differential functions.
We define the \emph{basic de Rham complex} $\widetilde{\Omega}(V)$
as the free product of the algebra $V$ and the algebra $\mb F\langle dx_1,\dots,dx_\ell\rangle$
of non-commutative polynomials in $\ell$ variables $dx_1,\dots,dx_\ell$.
It is a $\mb Z_+$-graded unital associative algebra,
where $f\in V$ has degree $0$ and the $dx_i$'s have degree $1$.
We consider it as a superalgebra,
with superstructure compatible with the $\mb Z_+$-grading.
The subspace $\widetilde{\Omega}^n(V)$ of degree $n$ consists of linear combinations
of elements of the form
\begin{equation}\label{20140623:eq1}
\widetilde\omega
=
f_1 dx_{i_1}f_2dx_{i_2}\dots f_ndx_{i_n}f_{n+1}
\,\,,\,\text{ where }\,
f_1,\dots,f_{n+1}\in V
\,.
\end{equation}
In particular, $\widetilde{\Omega}^0(V)=V$
and $\widetilde{\Omega}^1(V)\simeq\oplus_{i=1}^\ell V dx_i V$.

Define the de Rham differential $d$ on $\widetilde{\Omega}(V)$ as the odd derivation of degree $1$
on the superalgebra $\widetilde{\Omega}(V)$
by letting
\begin{equation}\label{20140623:eq2}
df
=
\sum_{i=1}^\ell
\Big(\frac{\partial f}{\partial x_i}\Big)^\prime
dx_i
\Big(\frac{\partial f}{\partial x_i}\Big)^{\prime\prime}
\in\widetilde{\Omega}^1(V)
\,\text{ for }\,
f\in\widetilde{\Omega}^0(V)
\,\,,\,\,\,\,
d(dx_i)=0
\,.
\end{equation}
In order to check that $d$ is a differential on $\widetilde{\Omega}(V)$, i.e. $d^2=0$,
it suffices to check that $d^2f=0$ for $f\in V$.
Using Sweedler's notation and equation \eqref{20140623:eq2} we have
$$
\begin{array}{l}
\displaystyle{
d^2f=\sum_{i,j=1}^\ell
\Big(\frac{\partial}{\partial x_j}
\Big(\frac{\partial f}{\partial x_i}\Big)^\prime\Big)^{\prime}
dx_j
\Big(\frac{\partial}{\partial x_j}
\Big(\frac{\partial f}{\partial x_i}\Big)^\prime\Big)^{\prime\prime}
dx_i
\Big(\frac{\partial f}{\partial x_i}\Big)^{\prime\prime}
}
\\
\displaystyle{
-\sum_{i,j=1}^\ell
\Big(\frac{\partial f}{\partial x_i}\Big)^{\prime}
dx_i
\Big(\frac{\partial}{\partial x_j}
\Big(\frac{\partial f}{\partial x_i}\Big)^{\prime\prime}\Big)^{\prime}
dx_j
\Big(\frac{\partial}{\partial x_j}
\Big(\frac{\partial f}{\partial x_i}\Big)^{\prime\prime}\Big)^{\prime\prime}
}
\\
\displaystyle{
=\sum_{i,j=1}^\ell
\mult\Big(
dx_j
\star_1
\Big(
\Big(\frac{\partial}{\partial x_j}\Big)_L
\Big(\frac{\partial f}{\partial x_i}\Big)
\Big)
\star_1
dx_i
\Big)
}
\\
\displaystyle{
-\sum_{i,j=1}^\ell
\mult\Big(
dx_i
\star_1
\Big(
\Big(\frac{\partial}{\partial x_j}\Big)_R
\Big(\frac{\partial f}{\partial x_i}\Big)
\Big)
\star_1
dx_j
\Big)
=0\,.
}
\end{array}
$$
The last equality follows by condition \eqref{20140702:eq2} on $V$.
So we can consider the corresponding cohomology complex 
$(\widetilde{\Omega}(V),d)$.

Given an evolutionary vector field $X_P\in\Vect(V)$ (cf. \eqref{evol-vect-field}),
we define the corresponding \emph{Lie derivative} 
$L_P:\,\widetilde{\Omega}(V)\to\widetilde{\Omega}(V)$
as the even derivation of degree $0$ extending $X_P$ from $V$,
and such that $L_P(dx_i)=dP_i$, $i=1,\dots,\ell$.
In other words, if $\tilde{\omega}$ is as in \eqref{20140623:eq1},
then, recalling \eqref{20140623:eq2}, we have
\begin{equation}\label{20140623:eq3}
\begin{array}{l}
\displaystyle{
L_P(\widetilde\omega)
=
\sum_{s=1}^{n+1} f_1 dx_{i_1}f_2\dots dx_{i_{s-1}}X_P(f_s)dx_{i_{s}}\dots dx_{i_n}f_{n+1}
} \\
\displaystyle{
+\sum_{s=1}^n \sum_{i=1}^\ell f_1 dx_{i_1}f_2\dots f_s
\Big(\frac{\partial P_{i_s}}{\partial x_i}\Big)^\prime
dx_i
\Big(\frac{\partial P_{i_s}}{\partial x_i}\Big)^{\prime\prime}
f_{s+1}\dots dx_{i_n}f_{n+1}
\,.}
\end{array}
\end{equation}
Next, we define the corresponding \emph{contraction operator} 
$\iota_P:\,\widetilde{\Omega}(V)\to\widetilde{\Omega}(V)$
as the odd derivation of degree $-1$ defined on generators
by $\iota_P(f)=0$, for $f\in V$, and $\iota_P(dx_i)=P_i$.
In other words, if $\tilde{\omega}$ is as in \eqref{20140623:eq1},
then
\begin{equation}\label{20140623:eq4}
\iota_X(\widetilde\omega)
=
\sum_{s=1}^n (-1)^{s+1}
f_1 dx_{i_1}f_2\dots dx_{i_{s-1}}f_{s}
P_{i_s}
f_{s+1} dx_{i_{s+1}}f_{s+2}\dots dx_{i_n}f_{n+1}
\,.
\end{equation}
\begin{proposition}\label{20140623:prop}
The Lie derivatives \eqref{20140623:eq3}
and the contraction operators \eqref{20140623:eq4}
define a structure of a $\Vect(V)$-complex on $\widetilde{\Omega}(V)$.
In other words the following relations hold for $P,Q\in V^\ell$
(recalling \eqref{commut-evf}):
\begin{enumerate}[(a)]
\item
$[\iota_P,\iota_Q]=0$;
\item
$[L_P,\iota_Q]=\iota_{[P,Q]}$;
\item
$[L_P,L_Q]=L_{[P,Q]}$;
\item
$L_P=[\iota_P,d]\,(=\iota_P d+d\iota_P)$ (Cartan's formula).
\end{enumerate}
\end{proposition}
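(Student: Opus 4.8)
The plan is to exploit that every operator in sight is a (super)derivation of the superalgebra $\widetilde{\Omega}(V)$: the de Rham differential $d$ is odd of degree $+1$, each contraction $\iota_P$ is odd of degree $-1$, and each Lie derivative $L_P$ is even of degree $0$. Since the supercommutator of two superderivations is again a superderivation, every one of the identities (a)--(d) asserts the equality of two superderivations of $\widetilde{\Omega}(V)$ (on the right-hand sides, $\iota_{[P,Q]}$ and $L_{[P,Q]}$ are superderivations by definition, and $0$ trivially so). As $\widetilde{\Omega}(V)$ is generated as an algebra by $V=\widetilde{\Omega}^0(V)$ together with the degree-one generators $dx_i$, it therefore suffices to verify each identity on a function $f\in V$ and on each $dx_i$.

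First I would dispose of (a) and (d), which use only the definitions \eqref{20140623:eq2}, \eqref{20140623:eq3}, \eqref{20140623:eq4} and \eqref{evol-vect-field}. For (a): since $\iota_P$ and $\iota_Q$ annihilate $V$ and send each $dx_i$ into $V$, both compositions vanish on $f$ and on $dx_i$, so the even superderivation $[\iota_P,\iota_Q]=\iota_P\iota_Q+\iota_Q\iota_P$ is identically zero. For Cartan's formula (d), on $dx_i$ one has $[\iota_P,d](dx_i)=\iota_P(d\,dx_i)+d(\iota_P dx_i)=d(P_i)=L_P(dx_i)$, while on $f\in V$ one computes $[\iota_P,d](f)=\iota_P(df)=\sum_i\big(\tfrac{\partial f}{\partial x_i}\big)'P_i\big(\tfrac{\partial f}{\partial x_i}\big)''=\mult\big(\sum_i P_i\otimes_1\tfrac{\partial f}{\partial x_i}\big)=X_P(f)=L_P(f)$ by \eqref{evol-vect-field}. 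This last line also records, for later reuse, the identity $\iota_Q(dg)=X_Q(g)$ valid for every $g\in V$.

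With (d) established I would obtain the commutation $[L_P,d]=0$ formally rather than by a direct computation: applying the graded Jacobi identity to the triple $\iota_P,d,d$ and using $d^2=0$ (proved just before the proposition) gives $[L_P,d]=[[\iota_P,d],d]=[d,[\iota_P,d]]=[d,L_P]$, while supercommutator antisymmetry gives $[d,L_P]=-[L_P,d]$; hence $2[L_P,d]=0$ and $[L_P,d]=0$ since $\mathrm{char}\,\mb F=0$. Now (b) and (c) follow on generators. For (b): on $f\in V$ both sides vanish (on the left because $\iota_Q f=0$ and $\iota_Q(L_Pf)=\iota_Q(X_Pf)=0$), whereas on $dx_i$ one gets $[L_P,\iota_Q](dx_i)=L_P(Q_i)-\iota_Q(dP_i)=X_P(Q_i)-X_Q(P_i)=[P,Q]_i=\iota_{[P,Q]}(dx_i)$, using $\iota_Q(dP_i)=X_Q(P_i)$ from the proof of (d). For (c): on $f\in V$ the identity is exactly $[X_P,X_Q]=X_{[P,Q]}$ from \eqref{commut-evf}, and on $dx_i$, invoking $[L_P,d]=0$, we have $[L_P,L_Q](dx_i)=L_P(dQ_i)-L_Q(dP_i)=d(X_PQ_i)-d(X_QP_i)=d([P,Q]_i)=L_{[P,Q]}(dx_i)$.

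The main obstacle is not the superderivation bookkeeping, which is routine once one reduces to generators, but the two genuinely substantive inputs, both of which rest on the strong-commutativity assumption \eqref{20140702:eq2}: the relation $d^2=0$ (needed to force $[L_P,d]=0$) and the Lie-algebra identity $[X_P,X_Q]=X_{[P,Q]}$ of \eqref{commut-evf} (needed for (c) on functions). Everything else reduces to evaluating the relevant supercommutators on $f$ and $dx_i$ and matching them against the definitions.
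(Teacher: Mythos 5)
Your proof is correct and follows essentially the same route as the paper: both sides of each identity are superderivations of $\widetilde{\Omega}(V)$, so it suffices to check them on the generators $f\in V$ and $dx_i$, which is exactly what the paper's one-line proof asserts. Your formal derivation of $[L_P,d]=0$ from Cartan's formula, $d^2=0$ and the graded Jacobi identity is a nice way of making precise the step that the paper's ``obviously equal on generators'' glosses over in parts (b) and (c), and it correctly isolates the only substantive inputs (strong commutativity via $d^2=0$ and \eqref{commut-evf}).
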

\begin{proof}
Both sides in each formula are derivations of the superalgebra $\widetilde{\Omega}(V)$,
and they are obviously equal on generators $dx_i$ and $f\in V$.
\end{proof}
\begin{theorem}\label{20140623:thm}
The complex $(\widetilde{\Omega}(R_\ell),d)$ is acyclic,
i.e. $H^n(\widetilde{\Omega}(R_\ell),d)=\delta_{n,0}\mb F$.
\end{theorem}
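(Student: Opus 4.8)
The plan is to exhibit an explicit contracting homotopy built from the Euler evolutionary vector field, using the $\Vect(R_\ell)$-complex structure from Proposition \ref{20140623:prop}. The key idea is that the Lie derivative along the Euler field is a diagonalizable (indeed scalar) operator on a natural grading, invertible away from the constants, and Cartan's formula then turns the contraction operator into a homotopy.

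First I would introduce an auxiliary \emph{weight} grading on $\widetilde{\Omega}(R_\ell)$: declare each generator $x_i$ and each generator $dx_i$ to have weight $1$ and extend multiplicatively, so that the monomial \eqref{20140623:eq1} has weight $\deg(f_1\cdots f_{n+1})+n$. Each weight-$w$ component is finite-dimensional (finitely many monomials in finitely many variables). The de Rham differential $d$ preserves this weight, since it is a derivation raising the form degree by $1$ with $d(x_i)=dx_i$ (weight $1\mapsto$ weight $1$) and $d(dx_i)=0$; and the contraction $\iota_P$ defined below will also preserve it. Hence the complex splits as a direct sum of finite-dimensional subcomplexes indexed by $w\in\mb Z_+$, and it suffices to compute cohomology weight by weight.

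Next, take the Euler vector field $P=(x_1,\dots,x_\ell)$, i.e. $P_i=x_i$. A direct computation from \eqref{20140604:eq1} and \eqref{evol-vect-field} shows that $X_P(f)=sf$ for every homogeneous $f\in R_\ell$ of polynomial degree $s$, so $X_P$ is the Euler operator. Since $\frac{\partial x_i}{\partial x_j}=\delta_{ij}(1\otimes1)$, formula \eqref{20140623:eq2} gives $dP_i=dx_i$, and then \eqref{20140623:eq3} shows that the Lie derivative $L_P$ is the derivation of $\widetilde{\Omega}(R_\ell)$ acting as $+1$ on each generator $x_i$ and each $dx_i$. Therefore $L_P$ acts on the weight-$w$ component as multiplication by $w$. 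Likewise, from $\iota_P(dx_i)=P_i=x_i$ and $\iota_P(f)=0$ one checks that $\iota_P$ lowers the form degree by $1$ while preserving the weight, justifying the claim in the previous paragraph.

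Finally, Cartan's formula (Proposition \ref{20140623:prop}(d)) gives $L_P=\iota_P d+d\iota_P$. On the weight-$w$ subcomplex with $w\geq1$ the operator $L_P=w\cdot\mathrm{id}$ is invertible (this is exactly where $\mathrm{char}\,\mb F=0$ is used), so $K:=w^{-1}\iota_P$ is a contracting homotopy, $dK+Kd=\mathrm{id}$, and this subcomplex is acyclic in every degree. Since every element of $\widetilde{\Omega}^n(R_\ell)$ with $n\geq1$ has weight $\geq n\geq1$, this forces $H^n(\widetilde{\Omega}(R_\ell),d)=0$ for $n\geq1$. In weight $w=0$ the only monomials are the constants, sitting in $\widetilde{\Omega}^0(R_\ell)=R_\ell$, and $df=0$ for $f\in R_\ell$ holds precisely when $f\in\mb F$; hence $H^0=\mb F$, giving $H^n(\widetilde{\Omega}(R_\ell),d)=\delta_{n,0}\mb F$. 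The only point requiring genuine care is verifying that $L_P$ is \emph{exactly} the weight operator, which reduces to the two elementary computations $X_P(f)=(\deg f)f$ and $L_P(dx_i)=dx_i$; everything else is formal once the $\Vect$-complex axioms of Proposition \ref{20140623:prop} are in place.
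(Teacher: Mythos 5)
Your proof is correct and is essentially the paper's own argument: the paper also takes the Euler field $\Delta=(x_i)_{i=1}^\ell$, observes that $L_\Delta$ is diagonalizable with eigenvalue the total degree in the $x_i$'s and $dx_i$'s, and uses Cartan's formula $\iota_\Delta d+d\iota_\Delta=L_\Delta$ to produce the homotopy $\frac1k\iota_\Delta$ on the degree-$k$ part. Your write-up merely makes explicit the weight-by-weight splitting and the degree-zero case that the paper leaves implicit.
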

\begin{proof}
Let $\Delta=\big(x_i\big)_{i=1}^\ell\in R_\ell^\ell$
(it corresponds to the degree derivation $X_\Delta$ of $R_\ell$).
Then $L_\Delta$ is a diagonalizable operator on $\widetilde{\Omega}(R_\ell)$,
with eigenvalues given by the total degree in the $x_i$'s and the $dx_i$'s.
Moreover, 
$h=\iota_\Delta$ is a homotopy operator for the complex $(\widetilde{\Omega}(R_\ell),d)$:
by Cartan's formula we have
$h(d\widetilde{\omega})+d(h\widetilde{\omega})=k\widetilde{\omega}$,
for every $\widetilde{\omega}\in\widetilde{\Omega}(R_\ell)$
of total degree $k$.
In particular, if $k\neq0$ and $d\widetilde{\omega}=0$,
then $\widetilde{\omega}=\frac1k d(h\widetilde{\omega})\in d\widetilde{\Omega}(R_\ell)$.
The claim follows.
\end{proof}
\begin{remark}
Let $\Delta_i=(\delta_{ij}x_i)_{j=1}^\ell\in V^\ell$.
Consider the filtration $V_0\subset V_1\subset\dots\subset V_\ell=V$, where
$$
V_i=\Big\{f\in V\,\Big|\,\frac{\partial f}{\partial x_j}=0\,\text{ for all } j>i\Big\}
\,,
$$
and extend it to a filtration of $\widetilde{\Omega}(V)$, letting
$\widetilde{\Omega}^n_i(V)$ contain elements of the form \eqref{20140623:eq1},
with $f_1,\dots,f_{n+1}\in V_i$ and $i_1,\dots,i_n\leq i$.
Then, 
$L_{\Delta_i}$ preserves the space $\widetilde{\Omega}^k_{i}(V)$,
and $\iota_{\Delta_i}(\widetilde{\Omega}^k_{i}(V))\subset\widetilde{\Omega}^{k-1}_{i}(V)$.
Assume that $L_{\Delta_i}:\,\widetilde{\Omega}^k_{i}(V)\to\widetilde{\Omega}^k_{i}(V)$ 
is surjective and its kernel is $\widetilde{\Omega}^k_{i-1}(V)$.
This is true, for example, for the algebra $R_\ell$ of non-commutative polynomials
in $\ell$ variables.
Then, it is easy to check that the operator
$h_i=L_{\Delta_i}^{-1}\circ\iota_{\Delta_i}:\, \widetilde{\Omega}^k_i(V)
\to \widetilde{\Omega}^{k-1}_i(V)$,
defined up to elements of $\widetilde{\Omega}^{k-1}_{i-1}(V)$,
is a local homotopy operator,
in the sense that
$$
(d \circ h_i + h_i \circ d)(\widetilde\omega) - \widetilde\omega \in \widetilde{\Omega}^k_{i-1}(V)
\,\text{ for every }\, \widetilde\omega \in \widetilde{\Omega}^k_i(V)
\,.
$$
As a consequence, under the above assumptions the basic de Rham complex is acyclic.
\end{remark}

\subsection{
Reduced de Rham complex
}\label{sec:red-com}

Obviously the subspace 
$[\widetilde{\Omega}(V),\widetilde{\Omega}(V)]\subset\widetilde{\Omega}(V)$
is compatible with the $\mb Z_+$-grading
and it is is preserved by $d$.
Hence, we can consider the \emph{reduced de Rham complex}
\begin{equation}\label{20140623:eq6a}
\Omega(V)
=\quot{\widetilde{\Omega}(V)}{[\widetilde{\Omega}(V),\widetilde{\Omega}(V)]}
=\oplus_{n\in\mb Z_+}\Omega^n(V)
\,,
\end{equation}
with the induced action of $d$.
Since Lie derivatives $L_P$ and contraction operators $\iota_P$, for $P\in V^\ell$, 
are derivations of the superalgebra $\widetilde{\Omega}(V)$,
they preserve the commutator subspace $[\widetilde{\Omega}(V),\widetilde{\Omega}(V)]$.
Hence, they induce well defined maps on the reduced de Rham complex $\Omega(V)$
satisfying all relations of Proposition \ref{20140623:prop}.

It follows by the same argument as in the proof of Theorem \ref{20140623:thm}
that the complex $(\Omega(R_\ell),d)$ is acyclic as well:
\begin{equation}\label{20140623:eq6b}
H^n(\Omega(R_\ell),d)=\delta_{n,0}\mb F
\,.
\end{equation}

We obviously have $\Omega^0(V)=\quot{V}{[V,V]}$.
Elements of $\Omega^1(V)$ can be understood as $1$-forms
$\omega=\sum_{i=1}^\ell f_idx_i$,
and elements of $\Omega^n(V)$
as sort of cyclically skewsymmetric ``products'' of $1$-forms.
We provide now an alternative explicit description of the spaces $\Omega^n(V)$
for $n\geq1$ as follows.
We identify the space $\Omega^n(V)$ with the space $\Sigma^n(V)$
of arrays
$\big(A_{i_1\dots i_n}\big)_{i_1,\dots,i_n=1}^\ell$
with entries $A_{i_1\dots i_n}\in V^{\otimes n}$,
satisfying the following skewsymmetry condition:
\begin{equation}\label{20140626:eq1}
A_{i_1\dots i_n}
=
-(-1)^{n}(A_{i_2\dots i_n i_1})^\sigma
\,\,\,\,
\text{ for all } i_1,\dots,i_n\in\{1,\dots,\ell\}
\,,
\end{equation}
where $\sigma$ denotes the action of the cyclic permutation on $V^{\otimes n}$ 
as in \eqref{20140606:eq3}.
To prove the isomorphism $\Omega^n(V)\simeq\Sigma^n(V)$
we write explicitly the maps in both directions.
Given the coset $\omega=[\widetilde{\omega}]\in\Omega^n(V)$,
with $\widetilde{\omega}$ as in \eqref{20140623:eq1},
we map it to the array (recall \eqref{20140606:eq3b} and \eqref{20140606:eq3c})
$\big(A_{i_1\dots i_n}\big)_{i_1,\dots,i_n=1}^\ell\in\Sigma^n(V)$,
with entries
$A_{j_1\dots j_n}=0$, unless $(j_1,\dots,j_n)$ is a cyclic permutation of $(i_1,\dots,i_n)$,
and
\begin{equation}\label{20140626:eq2}
A_{j_1\dots j_n}
=
\frac1n (-1)^{s(n-s)}
a^{s+1}\otimes \dots\otimes a^n\otimes a^{n+1}a^1\otimes a^2\otimes\dots\otimes a^s
\,,
\end{equation}
for $(j_1,\dots,j_s)=(i_{\sigma^s(1)},\dots,i_{\sigma^s(n)})$,
The inverse map $\Sigma^n(V)\to\Omega^n(V)$
is given by (in Sweedler's notation):
\begin{equation}\label{20140626:eq3}
\big(A_{i_1\dots i_n}\big)_{i_1,\dots,i_n=1}^\ell
\mapsto
\sum_{i_1,\dots,i_n=1}^\ell
[
A_{i_1\dots i_n}^\prime dx_{i_1}
A_{i_1\dots i_n}^{\prime\prime}dx_{i_2}
\dots
A_{i_1\dots i_n}^{\prime\dots\prime}dx_{i_n}
]
\,.
\end{equation}
It is easy to check that the maps \eqref{20140626:eq2} and \eqref{20140626:eq3}
are well defined,
and that they are inverse to each other,
thus proving that the space $\Omega^n(V)$ and the space of arrays $\Sigma^n(V)$
can be identified using these maps.

It is also not hard to find the formula for the
differential $d:\,\Sigma^n(V)\to\Sigma^{n+1}(V)$
corresponding to the differential $d$ of the reduced complex $\Omega(V)$
under this identification.
For $n=0$, we have (recalling \eqref{eq:trace})
\begin{equation}\label{20140626:eq4a}
d(\tr(f))
=
\Big(\mult \Big(\frac{\partial f}{\partial x_i}\Big)^\sigma \Big)_{i=1}^\ell
\,.
\end{equation}
For $A=\big(A_{i_1\dots i_n}\big)_{i_1,\dots,i_n=1}^\ell\in\Sigma^n(V)$,
where $n\geq1$, we have, using notation \eqref{20140606:eq1a}
and recalling \eqref{20140606:eq3b} and \eqref{20140606:eq3c},
\begin{equation}\label{20140626:eq4b}
(dA)_{i_1\dots i_{n+1}}
=
\frac1{n+1}\sum_{s=1}^{n+1}
\sum_{t=1}^{n} (-1)^{sn+t-1}
\bigg(
\Big(
\frac{\partial}{\partial x_{i_{\sigma^s(t)}}}
\Big)_{(t)}
A_{
i_{\sigma^s(1)}
\stackrel{t}{\check{\dots}}
i_{\sigma^s(n+1)}
}
\bigg)^{\sigma^s}
\,,
\end{equation}
where $\stackrel{t}{\check{\dots}}$ means that we skip the index $i_{\sigma^s(t)}$.
Thanks to Lemma \ref{20140606:lem},
we can rewrite formula \eqref{20140626:eq4b} as follows
\begin{equation}\label{20140626:eq4b2}
\begin{split}
& (dA)_{i_1\dots i_{n+1}} \\
& =
\frac{n}{n+1}
\bigg(
\sum_{s=1}^{n}
(-1)^{s-1}
\Big(
\frac{\partial}{\partial x_{i_s}}
\Big)_{(s)}
A_{i_1
\stackrel{s}{\check{\dots}}
i_{n+1}}
+(-1)^k
\Big(\Big(
\frac{\partial}{\partial x_{i_{n+1}}}
\Big)_{(1)}
A_{i_1\dots i_n}
\Big)^{\sigma^n}
\bigg)
\,.
\end{split}
\end{equation}
In particular, for $n=1$, we have, for $F=\big(F_j\big)_{j=1}^\ell\in V^{\oplus\ell}=\Sigma^1(V)$,
\begin{equation}\label{20140626:eq4c}
(d F)_{ij}
=
\frac12\Big(
\frac{\partial F_j}{\partial x_i}
-
\Big(\frac{\partial F_i}{\partial x_j}\Big)^\sigma
\Big)
\,.
\end{equation}
For $n=2$, let
$A=\big(A_{ij}\big)_{i,j=1}^\ell\in\Sigma^2(V)$,
namely $A_{ij}\in V\otimes V$
and $(A_{ji})^\sigma=-A_{ij}$.
We have
\begin{equation}\label{20140626:eq4d}
\begin{array}{l}
\displaystyle{
(dA)_{ijk}
=
\frac23
\bigg(
\Big(
\frac{\partial}{\partial x_{i}}
\Big)_{L}
A_{jk}
-
\Big(
\frac{\partial}{\partial x_{j}}
\Big)_{R}
A_{ik}
+
\Big(\Big(
\frac{\partial}{\partial x_{k}}
\Big)_{L}
A_{ij}
\Big)^{\sigma^2}
\bigg)
\,.}
\end{array}
\end{equation}

As an application of \eqref{20140623:eq6b}, we get the following
\begin{corollary}\label{victor:cor1}
\begin{enumerate}[(a)]
\item
A $0$-form $\tr(f)\in\Omega^0(R_\ell)$ is closed if and only if $f\in\mb F+[R_\ell,R_\ell]$.
\item
A $1$-form $F=\big(F_i\big)_{i=1}^\ell\in R_\ell^{\oplus\ell}=\Sigma^1(R_\ell)$ is closed
if and only if
there exists $\tr(f)\in \quot{R_\ell}{[R_\ell,R_\ell]}$
such that $F_i=\mult\Big(\frac{\partial f}{\partial x_i}\Big)^\sigma$
for every $i=1,\dots,\ell$.
\item
A $2$-form 
$\alpha=\big(A_{ij}\big)_{i,j=1}^\ell\in\Sigma^2(R_\ell)$
is closed if and only if
there exists $F=\big(F_i\big)_{i=1}^\ell\in R_\ell^{\oplus\ell}$
such that 
$$
A_{ij}
=
\frac12\Big(
\frac{\partial F_j}{\partial x_i}
-
\Big(\frac{\partial F_i}{\partial x_j}\Big)^\sigma
\Big)
\,,
$$
for every $i,j=1,\dots,\ell$.
\end{enumerate}
\end{corollary}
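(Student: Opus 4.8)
The plan is to read off all three statements directly from the acyclicity of the reduced complex, equation \eqref{20140623:eq6b}, together with the explicit low-degree formulas \eqref{20140626:eq4a} and \eqref{20140626:eq4c} for the differential $d$ under the identifications $\Omega^0(R_\ell)=\quot{R_\ell}{[R_\ell,R_\ell]}$, $\Omega^1(R_\ell)=R_\ell^{\oplus\ell}=\Sigma^1(R_\ell)$, and $\Omega^2(R_\ell)=\Sigma^2(R_\ell)$. In each case ``closed'' means lying in $\ker d$, and acyclicity identifies this kernel either with $\mb F$ (in degree $0$) or with the image of $d$ from the previous degree (in degrees $1$ and $2$). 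The only remaining task is to transcribe these abstract descriptions using the concrete formulas for $d$.

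For part (a), I would start from \eqref{20140626:eq4a}, namely $d(\tr(f))=\big(\mult(\frac{\partial f}{\partial x_i})^\sigma\big)_{i=1}^\ell$, so that $\tr(f)$ is closed precisely when it lies in $\ker(d\colon\Omega^0\to\Omega^1)=H^0(\Omega(R_\ell),d)$. By \eqref{20140623:eq6b} this kernel is $\mb F$, realized inside $\Omega^0=\quot{R_\ell}{[R_\ell,R_\ell]}$ as the image of the scalars. Hence $\tr(f)$ is closed iff $\tr(f)=\tr(c)$ for some $c\in\mb F$, i.e. iff $f\in\mb F+[R_\ell,R_\ell]$. For the converse direction one only notes that scalars have vanishing $2$-fold derivatives (since any $2$-fold derivation kills $1$), so $d(\tr(c))=0$.

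For parts (b) and (c) the argument is identical in spirit. Since $H^1(\Omega(R_\ell),d)=0$, a $1$-form $F$ is closed iff it is exact, i.e. $F=d(\tr f)$ for some $\tr f\in\Omega^0$; applying \eqref{20140626:eq4a} turns this into $F_i=\mult(\frac{\partial f}{\partial x_i})^\sigma$, which is (b). Likewise, since $H^2(\Omega(R_\ell),d)=0$, a $2$-form $\alpha=(A_{ij})$ is closed iff $\alpha=dF$ for some $F\in\Omega^1=R_\ell^{\oplus\ell}$; applying the $n=1$ formula \eqref{20140626:eq4c} gives $A_{ij}=\frac12\big(\frac{\partial F_j}{\partial x_i}-(\frac{\partial F_i}{\partial x_j})^\sigma\big)$, which is (c).

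Since the heavy lifting---both the acyclicity \eqref{20140623:eq6b} and the derivation of the explicit differential formulas---has already been carried out, there is no genuine obstacle here; the corollary is essentially a bookkeeping reformulation of the vanishing cohomology. The only points requiring a moment's care are the identification of the degree-$0$ cohomology class $\mb F$ with the coset of scalar functions in $\quot{R_\ell}{[R_\ell,R_\ell]}$, and ensuring that each exactness statement is routed through the correct differential formula, namely \eqref{20140626:eq4a} for $d$ out of $\Omega^0$ and \eqref{20140626:eq4c} for $d$ out of $\Omega^1$.
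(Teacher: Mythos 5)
Your proposal is correct and takes essentially the same route as the paper: the corollary is stated there precisely ``as an application of'' the acyclicity \eqref{20140623:eq6b}, with closedness translated into exactness (or into membership in $\mb F$ in degree $0$) via the explicit low-degree formulas \eqref{20140626:eq4a} and \eqref{20140626:eq4c} for $d$ under the identification $\Omega^n(R_\ell)\simeq\Sigma^n(R_\ell)$. No discrepancy to report.
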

\begin{remark}\label{20140711:rem2}
For $F\in V^{\oplus\ell}=\Sigma^1(V)$, define the \emph{Jacobian}
$$
J_F
=\left(\frac{\partial F_i}{\partial x_j}\right)_{i,j=1}^{\ell}
\in\Mat_{\ell\times\ell}(V\otimes V)
\,.
$$
Then equation \eqref{20140626:eq4c} becomes:
$$
dF=\frac{1}{2}\left(
J_F^t-J_F^\sigma
\right)
\,.
$$
(Here, $t$ stands for transpose.)
Therefore, recalling \eqref{20140710:eq6}, we see that $dF=0$ if and only if
$J_F=J_F^\dagger$.
\end{remark}
%

\section{Double Poisson vertex algebras and non-commutative Hamiltonian PDEs}\label{sec:3}

\subsection{Definition of double Poisson vertex algebra}

By a \emph{differential algebra} we mean a unital associative
(not necessarily commutative) algebra
$\mc V$ over the field $\mb F$,
endowed with a derivation $\partial\in\Der\mc V$.
\begin{definition}
An $n$-\emph{fold} $\lambda$-\emph{bracket} on $\mc V$ is a linear map
$$
\ldb-{}_{\lambda_1}-\dots-{}_{\lambda_{n-1}}-\rdb:
\mc V^{\otimes n}\to\mc V^{\otimes n}[\lambda_1,\ldots,\lambda_{n-1}]
$$
which satisfies the following sesquilinearity conditions
\begin{equation}\label{20140702:eq4}
\ldb a_1{}_{\lambda_1}\cdots{}_{\lambda_{i-1}}(\partial a_i)_{\lambda_i}\cdots a_{n-1}{}_{\lambda_{n-1}}a_n\rdb
=-\lambda_i\ldb a_1{}_{\lambda_1}\cdots a_{n-1}{}_{\lambda_{n-1}}a_n\rdb\,,
\end{equation}
for all $i=1,\ldots,n-1$, and
\begin{equation}\label{20140702:eq5}
\ldb a_1{}_{\lambda_1}\cdots a_{n-1}{}_{\lambda_{n-1}}(\partial a_n)\rdb
=(\lambda_1\cdots+\lambda_{n-1}+\partial)
\ldb a_1{}_{\lambda_1}\cdots a_{n-1}{}_{\lambda_{n-1}}a_n\rdb\,,
\end{equation}
and the following Leibniz rules:
\begin{equation}\label{20140702:eq6}
\begin{array}{l}
\displaystyle{
\vphantom{\Big(}
\ldb a_1{}_{\lambda_1}\cdots bc_{\lambda_i}\dots a_{n-1}{}_{\lambda_{n-1}}a_n\rdb
=\big(e^{\partial\partial_{\lambda_i}}b\big)\star_{i}
\ldb a_1{}_{\lambda_1}\cdots c_{\lambda_i}\dots a_{n-1}{}_{\lambda_{n-1}}a_n\rdb
} \\
\displaystyle{
\vphantom{\Big(}
+\ldb a_1{}_{\lambda_1}\cdots b_{\lambda_i+\partial}\dots a_{n-1}{}_{\lambda_{n-1}}a_n\rdb_\to 
\star_{n-i} c
\,,}
\end{array}
\end{equation}
for all $i=1,\ldots,n$, where $\star_n=\star_0$.
The notation on the RHS of \eqref{20140702:eq6} needs some explanations.
The $\star_i$-products were introduced in \eqref{20140606:eq2}.
Given a polynomial 
$P(\lambda_1,\dots,\lambda_{n-1})
=\sum a_1\otimes\dots\otimes a_n\lambda_1^{k_1}\dots\lambda_{n-1}^{k_{n-1}}
\in \mc V^{\otimes n}[\lambda_1,\dots,\lambda_{n-1}]$,
we denote
$$
\begin{array}{l}
\displaystyle{
\vphantom{\Big)}
\big(e^{\partial\partial_{\lambda_i}}f\big)\star_{i}
P(\lambda_1,\dots,\lambda_i,\dots,\lambda_{n-1})
} \\
\displaystyle{
\vphantom{\Big)}
=\sum \sum_{j=0}^{k_i}\binom{k_i}{j}
a_1\otimes\dots a_i\otimes(\partial^j f)a_{i+1}\otimes\dots\otimes a_n\lambda_1^{k_1}\dots
\lambda_i^{k_i-j}\dots\lambda_{n-1}^{k_{n-1}}
\,,}
\end{array}
$$
in other words, applying $e^{\partial\partial_{\lambda_i}}$ amounts to replacing $\lambda_i$
by $\lambda_i+\partial$, and the parentheses mean that $\partial$ should be applied to $f$.
Moreover, we denote
\begin{equation}\label{20140707:eq6}
\begin{array}{l}
\displaystyle{
\vphantom{\Big)}
P(\lambda_1,\dots,\lambda_i+\partial,\dots,\lambda_{n-1})_\to\star_{n-i} f
} \\
\displaystyle{
\vphantom{\Big)}
=\sum \sum_{j=0}^{k_i}\binom{k_i}{j}
a_1\otimes\dots \otimes a_i(\partial^j f)\otimes\dots\otimes a_n\lambda_1^{k_1}\dots
\lambda_i^{k_i-j}\dots\lambda_{n-1}^{k_{n-1}}
\,,}
\end{array}
\end{equation}
in other words, the arrow means that $\partial$ is applied to $f$.
\end{definition}
In the special case of $2$-fold $\lambda$-brackets,
the sesquilinearity conditions \eqref{20140702:eq4} and \eqref{20140702:eq5} are
\begin{equation}\label{20140702:eq4b}
\ldb \partial a_\lambda b\rdb=-\lambda \ldb a_\lambda b\rdb
\,\,,\,\,\,\,
\ldb a_\lambda \partial b\rdb=(\lambda+\partial) \ldb a_\lambda b\rdb
\,,
\end{equation}
and the Leibniz rules \eqref{20140702:eq6} are
\begin{equation}\label{20140702:eq6b}
\begin{array}{l}
\displaystyle{
\vphantom{\Big(}
\ldb a_\lambda bc\rdb
=
\ldb a_\lambda b\rdb c+b\ldb a_\lambda c\rdb
\,} \\
\displaystyle{
\vphantom{\Big(}
\ldb ab_\lambda c\rdb
=
\ldb a_{\lambda+\partial} c\rdb_\to \star_1 b
+(e^{\partial\partial_\lambda}a)\star_1 \ldb b_\lambda c\rdb
\,.}
\end{array}
\end{equation}

In analogy with \eqref{20140606:eq2} and \eqref{20140703:eq1}, 
we also let, for $a,b,c\in\mc V$,
\begin{equation}\label{notation}
\begin{array}{l}
\displaystyle{
\vphantom{\Big(}
\ldb a_\lambda (b\otimes c)\rdb_L:=\ldb a_{\lambda} b\rdb\otimes c
\,\,,\,\,\,\,
\ldb a_\lambda (b\otimes c)\rdb_R:=b\otimes \ldb a_{\lambda} c\rdb
\,,} \\
\displaystyle{
\vphantom{\Big(}
\ldb (a\otimes b)_\lambda c\rdb_L
:=
\ldb a_{\lambda+\partial} c\rdb_\to \otimes_1 b
\,,} \\
\displaystyle{
\vphantom{\Big(}
\ldb (a\otimes b)_\lambda c\rdb_R
:=
(e^{\partial\partial_\lambda}a)\otimes_1\ldb b_\lambda c\rdb
\,\Big(=
\ldb b_{\lambda+\partial} c\rdb_\to \otimes_1 a
\Big)
\,,}
\end{array}
\end{equation}

\begin{definition}\label{20140606:def-2}
A \emph{double Poisson vertex algebra} is a differential algebra $\mc V$,
with derivation $\partial:\,\mc V\to\mc V$, 
endowed with a $2$-fold $\lambda$-bracket $\ldb-_\lambda-\rdb:\,\mc V\times \mc V\to \mc V\otimes \mc V$
satisfying the following axioms:
\begin{enumerate}[(i)]
\item
skewsymmetry: 
\begin{equation}\label{eq:skew2}
\ldb a_{\lambda}b\rdb=-\ldb b_{-\lambda-\partial}a\rdb^\sigma\,,
\end{equation}
where $-\lambda-\partial$ in the RHS is moved to the left, acting on the coefficients.
\item
Jacobi identity (cf. Remark \ref{20140703:rem})
\begin{equation}\label{eq:jacobi2}
\ldb a_{\lambda}\ldb b{}_\mu c\rdb\rdb_L
-\ldb b_{\mu}\ldb a{}_\lambda c\rdb\rdb_R
=\ldb\ldb a_{\lambda} b\rdb_{\lambda+\mu} c\rdb_L
\,.
\end{equation}
\end{enumerate}
\end{definition}
\begin{lemma}\label{20130916:lem1}
\begin{enumerate}[(a)]
\item
The sesquilinearity relations
$$
\begin{array}{l}
\displaystyle{
\vphantom{\Big(}
\ldb \partial A_{\lambda} B\rdb_{L(\text{resp.}R)}
=-\lambda\ldb A_\lambda B\rdb_{L(\text{resp.}R)}
\,,} \\
\displaystyle{
\vphantom{\Big(}
\ldb A_{\lambda}\partial B\rdb_{L(\text{resp.}R)}
=(\lambda+\partial)\ldb A_\lambda B\rdb_{L(\text{resp.}R)}
\,,}
\end{array}
$$
hold if either $A$ or $B$ lies in $\mc V\otimes\mc V$, and the other one lies in $\mc V$.
\item
For $a,b\in\mc V$ and $A,B\in\mc V\otimes\mc V$,
the following skewsymmetry relations hold
$$
\begin{array}{l}
\displaystyle{
\vphantom{\Big(}
\ldb a_{\lambda} B\rdb_{L(\text{resp.}R)}
=-\ldb B^\sigma_{-\lambda-\partial}a\rdb_{R(\text{resp.}L)}^{\sigma(\text{resp.}\sigma^2)}
\,,} \\
\displaystyle{
\vphantom{\Big(}
\ldb A_{\lambda} b\rdb_{L(\text{resp.}R)}
=-\ldb b_{-\lambda-\partial}A^\sigma\rdb_{R(\text{resp.}L)}^{\sigma(\text{resp.}\sigma^2)}
\,.}
\end{array}
$$
\item
For $a,b\in\mc V$ and $A,B\in\mc V\otimes\mc V$,
we have
$$
\ldb a_{\lambda} B\rdb_{L}^\sigma
=\ldb a_\lambda B^\sigma\rdb_{R}
\,\,,\,\,\,\,
\ldb A_{\lambda} b\rdb_{L}
=\ldb {A^\sigma}_\lambda b\rdb_{R}
\,.
$$
\end{enumerate}
\end{lemma}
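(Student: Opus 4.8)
The plan is to reduce all three identities to the \emph{basic} sesquilinearity \eqref{20140702:eq4b} and skewsymmetry \eqref{eq:skew2} of the $2$-fold $\lambda$-bracket, combined with the explicit definitions \eqref{notation} of the four one-sided brackets $\ldb-_\lambda-\rdb_{L}$, $\ldb-_\lambda-\rdb_{R}$. By linearity it suffices to test each relation on pure tensors, so I take the $\mc V\otimes\mc V$-entry to be $A=a_1\otimes a_2$ (resp. $B=b_1\otimes b_2$) and the $\mc V$-entry to be a single letter, and expand. Throughout I regard $\partial$ as the derivation of $\mc V^{\otimes n}$ induced by $\partial\in\Der\mc V$, so that on inert tensor factors $\partial$ acts by the Leibniz rule.

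I would do part (c) first, since it is a direct rewriting of \eqref{notation}. For $\ldb a_\lambda B\rdb_{L}^\sigma=\ldb a_\lambda B^\sigma\rdb_{R}$, writing $\ldb a_\lambda b_1\rdb=\sum p\otimes q$, the left side is $(\sum p\otimes q\otimes b_2)^\sigma=\sum b_2\otimes p\otimes q$ using the cyclic $\sigma$ of \eqref{20140606:eq3}, and the right side is $b_2\otimes\ldb a_\lambda b_1\rdb=\sum b_2\otimes p\otimes q$; they agree. The second identity $\ldb A_\lambda b\rdb_{L}=\ldb {A^\sigma}_\lambda b\rdb_{R}$ is \emph{literally} the equality of the two expressions given for $\ldb(a\otimes b)_\lambda c\rdb_R$ in \eqref{notation}, applied to $A^\sigma$: both sides equal $\ldb a_1{}_{\lambda+\partial}b\rdb_\to\otimes_1 a_2$. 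No $\partial$-bookkeeping is needed here.

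For part (a) I would split into the two configurations. When the $\mc V$-entry sits in the first slot, so that the one-sided bracket is $\ldb a_\lambda B\rdb_{L,R}$ with $B\in\mc V\otimes\mc V$, sesquilinearity is inherited factor by factor: \eqref{20140702:eq4b} handles the active factor while $\partial$ acts by the Leibniz rule on the inert one, and the two recombine. The substantive case is $\ldb A_\lambda b\rdb_{L,R}$ with $A=a_1\otimes a_2$ and $b\in\mc V$. Here I would expand $\ldb A_\lambda b\rdb_L=\ldb a_1{}_{\lambda+\partial}b\rdb_\to\otimes_1 a_2$ as a double sum in powers of $\lambda$ and powers of $\partial$ landing on $a_2$, using the convention \eqref{20140707:eq6}, then apply \eqref{20140702:eq4b} to $\ldb a_1{}_\mu b\rdb$ and compare coefficientwise. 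Replacing $A$ by $\partial A=\partial a_1\otimes a_2+a_1\otimes\partial a_2$ produces exactly two binomial sums whose difference collapses to the desired right-hand side via Pascal's identity $\binom{m+n}{n}=\binom{m+n-1}{n}+\binom{m+n-1}{n-1}$; the sesquilinearity in $b$ is analogous.

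Part (b) is where the genuine bookkeeping lives, and I expect it to be the main obstacle. Each relation is obtained by applying \eqref{eq:skew2} to the single true $2$-fold bracket occurring inside, and then matching the cyclic $\sigma$ of $\mc V^{\otimes3}$ against the swap $\sigma$ of $\mc V^{\otimes2}$ through the insertion $\otimes_1$ of \eqref{20140605:eq1b}. The delicacy is that \emph{two} distinct $\partial$-shifts coexist: the internal shift $\lambda+\partial$ (equivalently $e^{\partial\partial_\lambda}$) built into $\ldb-_\lambda-\rdb_{R}$, which by the arrow $\to$ acts to the right on the transported factor, and the substitution $\lambda\mapsto-\lambda-\partial$ coming from skewsymmetry, which acts to the left on the coefficients. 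I would therefore carry out the whole computation at the level of coefficients in $\mc V^{\otimes3}$, so that both shifts become honest derivations; the point that makes the identity work is that the $\partial$ landing on the transported factor from the internal shift is exactly cancelled by the contribution of that same factor to the external Leibniz expansion of $-\lambda-\partial$, leaving the transported factor effectively inert, whereupon the cyclic $\sigma$ reproduces the swap-then-tensor form of the other side. Having verified one case of (b) this way, the remaining three follow by the same scheme, using part (c) to interchange $L\leftrightarrow R$ and $\sigma\leftrightarrow\sigma^2$ where convenient.
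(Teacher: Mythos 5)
Your proposal is correct: the paper's own proof of this lemma consists of the single word ``Straightforward,'' i.e.\ direct verification from the definitions \eqref{notation}, the sesquilinearity \eqref{20140702:eq4b} and the skewsymmetry \eqref{eq:skew2}, which is exactly what you carry out. In particular, your identification of the key cancellation in part (b) --- the $\partial$ placed on the transported factor by the internal shift $e^{\partial\partial_\lambda}$ is absorbed by the term of the Leibniz expansion of the outer $-\lambda-\partial$ hitting that same factor, leaving it inert --- is the one genuinely non-trivial point of the computation, and your reduction of the remaining cases of (b) to a single one via part (c) and the substitution $\lambda\mapsto-\lambda-\partial$ is sound.
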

\begin{proof}
Straightforward.
\end{proof}
\begin{lemma}\label{lem:triple_bracket}
Let $\ldb-_{\lambda}-\rdb$ be a $2$-fold $\lambda$-bracket on $\mc V$ satisfying the skewsymmetry
axiom \eqref{eq:skew2}.
Then the linear map
$\ldb-\,_\lambda-_\mu-\rdb:
\mc V^{\otimes 3}\to\mc V^{\otimes 3}[\lambda,\mu]$
defined by ($a,b,c\in\mc V$):
\begin{equation}
\label{eq:triple1}
\ldb a_\lambda b_\mu c\rdb
:=\ldb a_\lambda \ldb b_\mu c\rdb\rdb_L
-\ldb b_\mu\ldb a_\lambda c\rdb\rdb_R
-\ldb\ldb a_\lambda b\rdb_{\lambda+\mu}c\rdb_L\,,
\end{equation}
is a $3$-fold $\lambda$-bracket on $\mc V$.
\end{lemma}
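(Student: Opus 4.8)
\emph{The plan.} The map \eqref{eq:triple1} takes values in $\mc V^{\otimes3}[\lambda,\mu]$, so to prove it is a $3$-fold $\lambda$-bracket I must verify the sesquilinearity conditions \eqref{20140702:eq4}--\eqref{20140702:eq5} and the Leibniz rules \eqref{20140702:eq6} in the case $n=3$, with $\lambda_1=\lambda$ and $\lambda_2=\mu$. This is the $\lambda$-bracket analogue of Lemma \ref{20140605:lem}, and the verification is routine but lengthy: each identity reduces to the corresponding property \eqref{20140702:eq4b}, \eqref{20140702:eq6b} of the $2$-fold $\lambda$-bracket, together with the $L/R$ sesquilinearity and skewsymmetry collected in Lemma \ref{20130916:lem1}. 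Throughout I write $(I)$, $(II)$, $(III)$ for the three summands on the right-hand side of \eqref{eq:triple1}.

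\emph{Sesquilinearity.} For the leftmost entry ($a\mapsto\partial a$) the factor $-\lambda$ is produced in $(I)$ by Lemma \ref{20130916:lem1}(a) (its left slot carries $a$ and its right slot the element $\ldb b_\mu c\rdb\in\mc V\otimes\mc V$), and in $(II)$ and $(III)$ by \eqref{20140702:eq4b} applied to the inner $2$-fold bracket, the outer operation being $\mb F[\lambda]$-linear; hence the whole expression is multiplied by $-\lambda$. The middle entry ($b\mapsto\partial b$) is analogous, the only delicate point being $(III)$: here $\ldb a_\lambda\partial b\rdb=(\lambda+\partial)\ldb a_\lambda b\rdb$ by \eqref{20140702:eq4b}, and feeding this into the outer bracket the $L$-sesquilinearity of Lemma \ref{20130916:lem1}(a) turns $\lambda+\partial$ into $\lambda-(\lambda+\mu)=-\mu$, as required. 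For the rightmost entry ($c\mapsto\partial c$) each of $(I)$, $(II)$, $(III)$ acquires the common factor $\lambda+\mu+\partial$ by the same two ingredients.

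\emph{Leibniz rules.} The rightmost rule ($c\mapsto cd$, the case $i=n=3$) carries no $e^{\partial\partial}$-shift: expanding $\ldb b_\mu cd\rdb$, $\ldb a_\lambda cd\rdb$ and the composite in $(III)$ by the left Leibniz rule \eqref{20140702:eq6b} and regrouping with the $\star_i$-notation gives the claim; equivalently it says that, for fixed $a,b$, the map $c\mapsto\ldb a_\lambda b_\mu c\rdb$ obeys the third-slot derivation property, in the spirit of Proposition \ref{20140604:prop}. The middle rule ($b\mapsto bc$, $i=2$) and the leftmost rule ($a\mapsto a_1a_2$, $i=1$) are genuinely harder, because the newly introduced product sits in the \emph{first} argument of a $2$-fold bracket and must be expanded by the right Leibniz rule in \eqref{20140702:eq6b}, which produces the shift $e^{\partial\partial_\lambda}$ and the substitution $\lambda\mapsto\lambda+\partial$.

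\emph{Main obstacle.} I expect the leftmost Leibniz rule to be the crux, with the difficulty concentrated in term $(III)$: replacing $a$ by $a_1a_2$ first splits the innermost bracket $\ldb a_\lambda b\rdb$ into two shifted summands, which must then be pushed through the outer bracket $\ldb-_{\lambda+\mu}c\rdb_L$, so one has to track how the shift coming from the inner right Leibniz rule recombines with the outer spectral parameter $\lambda+\mu$ and with $\partial$. The cleanest way to organize this is to use skewsymmetry \eqref{eq:skew2} and Lemma \ref{20130916:lem1}(b),(c) to move each occurrence of $a_1a_2$ into a slot where \eqref{20140702:eq6b} applies transparently, reducing the check to matching the two shifted summands against the two terms of the Leibniz rule. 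Alternatively, one may first establish the twisted cyclic symmetry of \eqref{eq:triple1}, the $\lambda$-analogue of \eqref{20140606:eq8}, which (exactly as in the proof of Lemma \ref{20140605:lem}) lets one deduce the remaining two Leibniz rules from a single one; that cyclic symmetry is itself an application of skewsymmetry \eqref{eq:skew2} and Lemma \ref{20130916:lem1}.
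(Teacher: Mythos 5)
Your proposal is correct and, in substance, coincides with the paper's own proof: sesquilinearity via Lemma \ref{20130916:lem1}(a) together with \eqref{20140702:eq4b} (including your computation of the factor $-\mu$ in the third term), the third-entry Leibniz rule by direct expansion of all three terms (the paper's \eqref{20140707:eq1}, where the cross terms $\ldb a_\lambda c\rdb\ldb b_\mu d\rdb$ cancel), and the first- and second-entry rules deduced from the cyclic identity \eqref{eq:skew3}, which is proved from skewsymmetry and Lemma \ref{20130916:lem1}(b),(c). Note that what you present as the ``alternative'' organization is exactly the published argument, and it is the cleaner choice: the paper never carries out the direct attack on the first-entry rule that you propose as your primary route.
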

\begin{proof}
Sesquilinearity properties \eqref{20140702:eq4} and \eqref{20140702:eq5}
follow easily from Lemma \ref{20130916:lem1}(a) and \eqref{20140702:eq4b}.
Let us prove that $\ldb-_{\lambda}-_{\mu}-\rdb$ satisfies the Leibniz rule
\eqref{20140702:eq6} with $i=3$. 
By \eqref{20140702:eq6b} and \eqref{notation}, we have
\begin{equation}\label{20140707:eq1}
\begin{array}{l}
\displaystyle{
\vphantom{\Big(}
\ldb a_{\lambda}\ldb b_\mu cd\rdb\rdb_L
=
\ldb a_{\lambda}\ldb b_\mu c\rdb\rdb_Ld
+c\ldb a_{\lambda}\ldb b_\mu d\rdb\rdb_L
+\ldb a_\lambda c\rdb\ldb b_\mu d\rdb
\,,} \\
\displaystyle{
\vphantom{\Big(}
\ldb b_{\mu}\ldb a_\lambda cd\rdb\rdb_R
=
\ldb b_{\mu}\ldb a_\lambda c\rdb\rdb_Rd
+c\ldb b_{\mu}\ldb a_\lambda d\rdb\rdb_R
+\ldb a_\lambda c\rdb\ldb b_\mu d\rdb
\,,} \\
\displaystyle{
\vphantom{\Big(}
\ldb\ldb a_{\lambda} b\rdb_{\lambda+\mu} cd\rdb_L
=
\ldb\ldb a_{\lambda} b\rdb_{\lambda+\mu} c\rdb_L d
+c\ldb\ldb a_{\lambda} b\rdb_{\lambda+\mu} d\rdb_L
\,.}
\end{array}
\end{equation}
Hence, equation \eqref{20140702:eq6} with $i=3$ holds.

Next we prove that
\begin{equation}\label{eq:skew3}
\ldb a_\lambda b_\mu c\rdb=\ldb b_\mu c_{-\lambda-\mu-\partial}a\rdb^\sigma\,.
\end{equation}
Equation \eqref{20140702:eq6} with $i=1$ and $2$ then follows from
\eqref{eq:skew3} and \eqref{20140702:eq6} with $i=3$.
By Lemma \ref{20130916:lem1}(b) and (c), we have
\begin{equation}\label{20140707:eq2}
\begin{array}{l}
\displaystyle{
\vphantom{\Big(}
\ldb a_\lambda \ldb b_\mu c\rdb\rdb_L
=
\ldb a_\lambda \ldb b_\mu c\rdb^\sigma\rdb_R^{\sigma^2}
=
-\ldb \ldb b_\mu c\rdb _{-\lambda-\partial} a \rdb_L^{\sigma}
\,,} \\
\displaystyle{
\vphantom{\Big(}
\ldb b_\mu \ldb a_\lambda c\rdb\rdb_R
=
\ldb b_\mu \ldb a_\lambda c\rdb^\sigma\rdb_L^\sigma
=
-\ldb b_\mu \ldb c _{-\lambda-\partial} a\rdb\rdb_L^\sigma
\,,} \\
\displaystyle{
\vphantom{\Big(}
\ldb\ldb a_\lambda b\rdb_{\lambda+\mu} c\rdb_L
=
-\ldb{\ldb b_\mu a\rdb^\sigma}_{\lambda+\mu} c\rdb_L
=
\ldb c_{-\lambda-\mu-\partial}{\ldb b_\mu a\rdb}\rdb_R^\sigma
\,.}
\end{array}
\end{equation}
Combining the three equations \eqref{20140707:eq2} we get \eqref{eq:skew3}.
\end{proof}

\subsection{The trace map and connection to Lie conformal algebras}

Given a (not necessarily commutative) differential algebra $\mc V$,
we denote by $[\mc V,\mc V]\subset\mc V$ the commutator subspace.
For $f\in\mc V$, we let $\tr(f)\in\quot{\mc V}{[\mc V,\mc V]}$ be the corresponding coset.
Furthermore, for $f\in\mc V$, we also let
$\tint f$ be the coset of $f\in\mc V$ in the quotient space $\quot{\mc V}{([\mc V,\mc V]+\partial\mc V)}$.

Given a $2$-fold $\lambda$-bracket $\ldb-_\lambda-\rdb$ on $\mc V$ 
we define the following map
$\{-_{\lambda}-\}:\mc V\otimes \mc V\to \mc V[\lambda]$
by
\begin{equation}\label{20140707:eq4}
\{a_\lambda b\}=\mult\ldb a_{\lambda}b\rdb
\end{equation}
(we extend the multiplication map
on $\mc V\otimes\mc V$ to a multiplication map 
$\mult:\,(\mc V\otimes\mc V)[\lambda]\to\mc V[\lambda]$ 
in the obvious way),
and we also define the map
$\{-,-\}:\mc V\otimes \mc V\to \mc V$
by
\begin{equation}\label{20140707:eq5}
\{a,b\}=\mult\ldb a_{\lambda}b\rdb|_{\lambda=0}
\,.
\end{equation}

\begin{lemma}\label{prop:1bis}
If $\ldb-_\lambda-\rdb$ is a skewsymmetric $2$-fold $\lambda$-bracket on $\mc V$,
then the following identity
holds in $\mc V^{\otimes2}[\lambda,\mu]$ ($a,b,c\in\mc V$):
\begin{align}
\begin{split}\label{eq:quasijacobi}
&\{a_\lambda\ldb b_\mu c\rdb\}
-\ldb b_\mu\{a_\lambda c\}\rdb
-\ldb\{a_\lambda b\}_{\lambda+\mu} c\rdb
=(\mult\otimes1)\ldb a_\lambda b_\mu c\rdb
-(1\otimes \mult)\ldb b_\mu a_\lambda c\rdb\,.
\end{split}
\end{align}
where we set $\{a_\lambda b\otimes c\}=\{a_\lambda b\}\otimes c+b\otimes\{a_\lambda c\}$.
\end{lemma}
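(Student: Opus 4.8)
The plan is to mirror the proof of Lemma \ref{prop:1} (the non-differential analogue, equation \eqref{20131205:eq3}): I would compute separately each of the three terms on the left-hand side of \eqref{eq:quasijacobi}, expand them in terms of the left and right components $\ldb-_\lambda-\rdb_{L}$, $\ldb-_\mu-\rdb_{R}$ of the nested $2$-fold $\lambda$-brackets, and then reassemble the result so as to recognize, via the definition \eqref{eq:triple1}, the triple $\lambda$-brackets $\ldb a_\lambda b_\mu c\rdb$ and $\ldb b_\mu a_\lambda c\rdb$. First I would treat the three terms. Using the defining convention $\{a_\lambda B\}=(\mult\otimes1)\ldb a_\lambda B\rdb_L+(1\otimes\mult)\ldb a_\lambda B\rdb_R$ with $B=\ldb b_\mu c\rdb$, the first term becomes
\[
\{a_\lambda\ldb b_\mu c\rdb\}=(\mult\otimes1)\ldb a_\lambda\ldb b_\mu c\rdb\rdb_L+(1\otimes\mult)\ldb a_\lambda\ldb b_\mu c\rdb\rdb_R.
\]
Writing $\{a_\lambda c\}=\mult\ldb a_\lambda c\rdb$ and applying the left Leibniz rule \eqref{20140702:eq6b} on the second entry, the second term becomes
\[
\ldb b_\mu\{a_\lambda c\}\rdb=(1\otimes\mult)\ldb b_\mu\ldb a_\lambda c\rdb\rdb_L+(\mult\otimes1)\ldb b_\mu\ldb a_\lambda c\rdb\rdb_R,
\]
and writing $\{a_\lambda b\}=\mult\ldb a_\lambda b\rdb$ and applying the right Leibniz rule \eqref{20140702:eq6b} together with the notation \eqref{notation}, the third term becomes
\[
\ldb\{a_\lambda b\}_{\lambda+\mu}c\rdb=(\mult\otimes1)\ldb\ldb a_\lambda b\rdb_{\lambda+\mu}c\rdb_L+(1\otimes\mult)\ldb\ldb a_\lambda b\rdb_{\lambda+\mu}c\rdb_R.
\]
It is worth noting that none of these three expansions uses skewsymmetry; they are pure Leibniz-rule bookkeeping, and since we work in the local case everything remains a finite polynomial manipulation.

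Next I would collect the $(\mult\otimes1)$-parts of the three displayed identities. By the definition \eqref{eq:triple1} of the triple $\lambda$-bracket, these combine precisely into $(\mult\otimes1)\ldb a_\lambda b_\mu c\rdb$. It then remains to show that the $(1\otimes\mult)$-parts assemble into $-(1\otimes\mult)\ldb b_\mu a_\lambda c\rdb$, where $\ldb b_\mu a_\lambda c\rdb$ is obtained from \eqref{eq:triple1} by exchanging $a\leftrightarrow b$ and $\lambda\leftrightarrow\mu$. Two of the three $(1\otimes\mult)$ contributions, namely those coming from $\ldb a_\lambda\ldb b_\mu c\rdb\rdb_R$ and from $\ldb b_\mu\ldb a_\lambda c\rdb\rdb_L$, match term by term with the corresponding summands of $\ldb b_\mu a_\lambda c\rdb$. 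Thus the whole statement reduces to the single identity
\[
\ldb\ldb a_\lambda b\rdb_{\lambda+\mu}c\rdb_R=-\ldb\ldb b_\mu a\rdb_{\lambda+\mu}c\rdb_L.
\]

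This last identity is the heart of the argument and the step I expect to be the main obstacle. I would first apply Lemma \ref{20130916:lem1}(c) to rewrite $\ldb\ldb a_\lambda b\rdb_{\lambda+\mu}c\rdb_R=\ldb\ldb a_\lambda b\rdb^\sigma{}_{\lambda+\mu}c\rdb_L$, and then use skewsymmetry \eqref{eq:skew2} to replace $\ldb a_\lambda b\rdb^\sigma$ by $-\ldb b_{-\lambda-\partial}a\rdb$. The delicate point is that the derivation $\partial$ hidden in the inner spectral parameter $-\lambda-\partial$ must then be pushed through the outer left component $\ldb-_{\lambda+\mu}c\rdb_L$; by the sesquilinearity of the $L$-component in its first argument (Lemma \ref{20130916:lem1}(a)) this converts $\partial$ into $-(\lambda+\mu)$, so that $-\lambda-\partial$ collapses to $\mu$ and $\ldb b_{-\lambda-\partial}a\rdb$ turns into $\ldb b_\mu a\rdb$, producing exactly the required right-hand side. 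Getting this spectral-parameter manipulation right—the correct sign and careful tracking of which $\partial$ acts where—is the crux; it is precisely the kind of computation already performed in the proof of Lemma \ref{lem:triple_bracket}, see equation \eqref{20140707:eq2}. Once this identity is in place, the two groupings combine to give \eqref{eq:quasijacobi}, concluding the proof.
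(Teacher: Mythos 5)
Your proposal is correct and follows essentially the same route as the paper's proof: expand the three left-hand terms via the Leibniz rules into their $(\mult\otimes1)$- and $(1\otimes\mult)$-components, recognize the $(\mult\otimes1)$-parts as $(\mult\otimes1)\ldb a_\lambda b_\mu c\rdb$, and reduce the rest to the identity $\ldb\ldb a_\lambda b\rdb_{\lambda+\mu}c\rdb_R=-\ldb\ldb b_\mu a\rdb_{\lambda+\mu}c\rdb_L$, proved exactly as you describe via Lemma \ref{20130916:lem1}(c), skewsymmetry \eqref{eq:skew2}, and the sesquilinearity-driven collapse of $-\lambda-\partial$ to $\mu$ inside the outer bracket. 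This is precisely the step the paper dispatches with the phrase ``by skewsymmetry and Lemma \ref{20130916:lem1}(c)'' (cf. the analogous computation in \eqref{20140707:eq2}), so your write-up matches the paper's argument, with the spectral-parameter bookkeeping made more explicit.
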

\begin{proof}
Let us compute the three terms of the LHSof \eqref{eq:quasijacobi}.
Using \eqref{notation}, we get, after a straightforward computation,
$$
\begin{array}{l}
\displaystyle{
\vphantom{\Big(}
\{a_\lambda\ldb b_\mu c\rdb\}
=(\mult\otimes1)\ldb a_{\lambda}\ldb b_\mu c\rdb\rdb_L
+(1\otimes \mult)\ldb a_{\lambda}\ldb b_{\mu}c\rdb\rdb_R
\,,} \\
\displaystyle{
\vphantom{\Big(}
\ldb b_\mu\{a_\lambda c\}\rdb
=(1\otimes \mult)\ldb b_{\mu}\ldb a_\lambda c\rdb\rdb_L
+(\mult\otimes1)\ldb b_{\mu}\ldb a_{\lambda}c\rdb\rdb_R
\,,} \\
\displaystyle{
\vphantom{\Big(}
\ldb \{a_{\lambda}b\}_{\lambda+\mu}c\rdb
=(\mult\otimes1)\ldb \ldb a_\lambda b\rdb_{\lambda+\mu}c\rdb_L
+(1\otimes \mult)\ldb\ldb a_\lambda b\rdb_{\lambda+\mu}c\rdb_R
\,.}
\end{array}
$$
By skewsymmetry and Lemma \ref{20130916:lem1}(c), 
we can replace the last term in the RHS of the third equation
by $-(1\otimes \mult)\ldb\ldb b_\mu a\rdb_{\lambda+\mu}c\rdb_L$.
Hence, combining the three equations above, we get \eqref{eq:quasijacobi}.
\end{proof}

\begin{theorem}\label{20140707:thm}
Let $\mc V$ be a differential algebra, with derivation $\partial$, endowed with a
$2$-fold $\lambda$-bracket $\ldb-_{\lambda}-\rdb$, and let
$\{-_{\lambda}-\}$ and $\{-,-\}$ be defined as in \eqref{20140707:eq4} and \eqref{20140707:eq5}.
\begin{enumerate}[(a)]
\item
$\partial[\mc V,\mc V]\subset[\mc V,\mc V]$.
Hence, we have a well defined induced map 
(denoted, by abuse of notation, by the same symbol)
$\partial:\,\quot{\mc V}{[\mc V,\mc V]}\to\quot{\mc V}{[\mc V,\mc V]}$, given by
$\partial(\tr f)=\tr(\partial f)$.
\item
$\{[\mc V,\mc V]_\lambda\mc V\}=0$,
and $\{\mc V_\lambda{[\mc V,\mc V]}\}\subset[\mc V,\mc V]\otimes\mb F[\lambda]$.
Hence, we have well defined induced maps
(denoted, by abuse of notation, by the same symbol)
$$
\{-_\lambda-\}:\,\quot{\mc V}{[\mc V,\mc V]}\times\mc V\to\mc V[\lambda]
\,,
$$
and
$$
\{-_\lambda-\}:\,\quot{\mc V}{[\mc V,\mc V]}\times\quot{\mc V}{[\mc V,\mc V]}
\to\quot{\mc V}{[\mc V,\mc V]}[\lambda]
\,,
$$
given, respectively, by
\begin{equation}\label{20140707:eq3b}
\{\tr(f)_\lambda g\}=\mult\ldb f_\lambda g\rdb
\,,
\end{equation}
and 
\begin{equation}\label{20140707:eq3c}
\{\tr(f)_\lambda \tr(g)\}=\tr(\mult\ldb f_\lambda g\rdb)
\,.
\end{equation}
\item
If the $2$-fold $\lambda$-bracket is skewsymmetric,
then so is the $\lambda$-bracket \eqref{20140707:eq3c}:
$$
\{\tr(f)_\lambda\tr(g)\}=-\{\tr(g)_{-\lambda-\partial}\tr(f)\}\,.
$$
\item
If the $2$-fold $\lambda$-bracket defines a structure of a double Poisson vertex algebra on $\mc V$,
then the $\lambda$-bracket \eqref{20140707:eq3c} satisfies the Jacobi identity 
($f,g,h\in\mc V$)
$$
\{\tr(f)_\lambda\{\tr(g)_\mu \tr(h)\}\}-\{\tr(g)_\mu\{\tr(f)_\lambda \tr(h)\}\}
=
\{\{\tr(f)_\lambda \tr(g)\}_{\lambda+\mu} \tr(h)\}\,,
$$
thus defining a structure of a Lie conformal algebra on $\quot{\mc V}{[\mc V,\mc V]}$.
Furthermore, the $\lambda$-action of $\quot{\mc V}{[\mc V,\mc V]}$ on $\mc V$,
given by \eqref{20140707:eq3b}, defines a representation of the Lie conformal algebra
$\quot{\mc V}{[\mc V,\mc V]}$ given by conformal derivations of $\mc V$.
\item
$\{[\mc V,\mc V]+\partial V,\mc V\}=0$,
and $\{\mc V,{[\mc V,\mc V]+\partial V}\}\subset([\mc V,\mc V]+\partial\mc V)\otimes\mb F[\lambda]$.
Hence, we have well defined induced brackets
(denoted, by abuse of notation, by the same symbol)
$$
\{-,-\}:\,\quot{\mc V}{([\mc V,\mc V]+\partial\mc V)}\times\mc V\to\mc V
\,,
$$
and
$$
\{-,-\}:\,\quot{\mc V}{([\mc V,\mc V]+\partial\mc V)}\times\quot{\mc V}{([\mc V,\mc V]+\partial\mc V)}
\to\quot{\mc V}{([\mc V,\mc V]+\partial\mc V)}
\,,
$$
given, respectively, by
\begin{equation}\label{20140707:eq3b-lie}
\{\tint f, g\}:=\mult\ldb f_\lambda g\rdb|_{\lambda=0}
\,,
\end{equation}
and 
\begin{equation}\label{20140707:eq3c-lie}
\{\tint f,\tint g\}:=\tint \mult\ldb f_\lambda g\rdb|_{\lambda=0}
\,.
\end{equation}
\item
If the $2$-fold $\lambda$-bracket is skewsymmetric,
then so is the bracket \eqref{20140707:eq3c-lie}.
\item
If the $2$-fold $\lambda$-bracket defines a structure of a double Poisson vertex algebra on $\mc V$,
then the bracket \eqref{20140707:eq3c-lie} defines a
structure of a Lie algebra on $\quot{\mc V}{([\mc V,\mc V]+\partial\mc V)}$.
Furthermore, the action of $\quot{\mc V}{([\mc V,\mc V]+\partial)}$ on $\mc V$,
given by \eqref{20140707:eq3b-lie}, defines a representation of the Lie  algebra
$\quot{\mc V}{([\mc V,\mc V]+\partial\mc V)}$ by derivations of $\mc V$
commuting with $\partial$.
\end{enumerate}
\end{theorem}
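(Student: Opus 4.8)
The plan is to build the conformal layer, parts (a)--(d) --- the Lie conformal algebra structure on $\quot{\mc V}{[\mc V,\mc V]}$ together with its representation on $\mc V$ --- and then obtain the classical layer, parts (e)--(g), by the further reduction modulo $\partial\mc V$ combined with the specialization $\lambda=0$ (resp.\ $\lambda=\mu=0$). Throughout, the argument is the $\lambda$-bracket analogue of the proof of Proposition~\ref{20131205:prop1}, with Lemma~\ref{prop:1bis} playing the role that Lemma~\ref{prop:1} plays there. Part (a) is immediate: since $\partial$ is a derivation, $\partial(ab-ba)=[\partial a,b]+[a,\partial b]\in[\mc V,\mc V]$, so $\partial$ descends to $\quot{\mc V}{[\mc V,\mc V]}$.

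For part (b) I would first note that, applying $\mult$ to the left Leibniz rule \eqref{20140702:eq6b}, the map $\{a_\lambda-\}$ is a derivation of $\mc V$ for each fixed $a$; hence $\{a_\lambda bc\}-\{a_\lambda cb\}=[\{a_\lambda b\},c]+[b,\{a_\lambda c\}]\in[\mc V,\mc V][\lambda]$, which gives $\{\mc V_\lambda[\mc V,\mc V]\}\subset[\mc V,\mc V][\lambda]$. For the first entry I expand $\{ab_\lambda c\}=\mult\ldb ab_\lambda c\rdb$ by the right Leibniz rule: writing $\ldb a_\lambda c\rdb=\sum_k A_k\lambda^k$ and $\ldb b_\lambda c\rdb=\sum_k B_k\lambda^k$ in Sweedler notation and unwinding the $e^{\partial\partial_\lambda}$ and arrow prescriptions, one finds $\{ab_\lambda c\}=\sum_{k,j}\binom kj\big(A_k'(\partial^j b)A_k''+B_k'(\partial^j a)B_k''\big)\lambda^{k-j}$, which is manifestly symmetric under $a\leftrightarrow b$; thus $\{[\mc V,\mc V]_\lambda\mc V\}=0$ and both induced maps are well defined. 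Part (c) then follows from the skewsymmetry axiom \eqref{eq:skew2} by applying $\tr\circ\mult$ and using that $\mult(X^\sigma)-\mult(X)\in[\mc V,\mc V]$, so that $\tr\mult X^\sigma=\tr\mult X$.

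Part (d) is the crux. By the Jacobi axiom \eqref{eq:jacobi2} the triple $\lambda$-bracket \eqref{eq:triple1} vanishes identically, so the right-hand side of the quasi-Jacobi identity \eqref{eq:quasijacobi} of Lemma~\ref{prop:1bis} is zero. I then apply $\mult$ to the remaining identity in $\mc V^{\otimes2}[\lambda,\mu]$; using that $\{a_\lambda-\}$ is a derivation --- so that $\mult\{a_\lambda B\}=\{a_\lambda\mult B\}$ for $B\in\mc V\otimes\mc V$ --- the three terms collapse to the representation-form Jacobi identity $\{f_\lambda\{g_\mu h\}\}-\{g_\mu\{f_\lambda h\}\}=\{\{f_\lambda g\}_{\lambda+\mu}h\}$ in $\mc V[\lambda,\mu]$. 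Together with the Leibniz and sesquilinearity properties (which express that each $\{\tr(f)_\lambda-\}$ is a conformal derivation of $\mc V$), this gives the representation of $\quot{\mc V}{[\mc V,\mc V]}$ on $\mc V$; applying $\tr$ to the last slot, well defined by part (b), yields the Lie conformal algebra Jacobi identity on $\quot{\mc V}{[\mc V,\mc V]}$.

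Finally, parts (e)--(g) come from passing to $\overline{\mc V}=\quot{\mc V}{([\mc V,\mc V]+\partial\mc V)}$ and setting $\lambda=0$. For (e), in addition to part (b) at $\lambda=0$, I use sesquilinearity \eqref{20140702:eq4b}: $\{\partial a_\lambda b\}|_{\lambda=0}=0$ gives $\{\partial\mc V,\mc V\}=0$, while $\{a_\lambda\partial b\}|_{\lambda=0}=\partial\{a,b\}$ gives $\{\mc V,\partial\mc V\}\subset\partial\mc V$, proving both brackets well defined. Part (f) is the $\lambda=0$, $\tint$-reduction of (c): here $\tint\partial\mc V=0$ kills every term of $\ldb g_{-\lambda-\partial}f\rdb|_{\lambda=0}$ except the constant one, leaving $\{\tint f,\tint g\}=-\{\tint g,\tint f\}$. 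Part (g) is the $\lambda=\mu=0$ specialization of the representation Jacobi from (d), which reads $\{f,\{g,h\}\}-\{g,\{f,h\}\}=\{\{f,g\},h\}$ and exhibits a representation of $\overline{\mc V}$ on $\mc V$ by derivations commuting with $\partial$ (the latter by (e)); applying $\tint$ to the last slot gives the Lie algebra Jacobi identity on $\overline{\mc V}$. The only genuinely computational point is the symmetry $\{ab_\lambda c\}=\{ba_\lambda c\}$ in part (b), which requires careful bookkeeping of the $e^{\partial\partial_\lambda}$ and arrow operations; all the remaining assertions are formal once the right-hand side of \eqref{eq:quasijacobi} is seen to vanish.
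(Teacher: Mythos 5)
Your proposal is correct and follows essentially the same route as the paper: part (b) rests on the same observation that the right Leibniz rule makes $\{ab_\lambda c\}$ manifestly symmetric in $a,b$ (the paper phrases this via the last identity in \eqref{notation}, you via explicit coefficients), part (d) is deduced from Lemma \ref{prop:1bis} exactly as in the paper, and your treatment of (e)--(g) by $\tint$-reduction and setting $\lambda=0$ is precisely the paper's suggested passage from the Lie conformal algebra $\quot{\mc V}{[\mc V,\mc V]}$ to the Lie algebra $\quot{\mc V}{([\mc V,\mc V]+\partial\mc V)}$, carried out explicitly.
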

\begin{proof}
Part (a) is clear, since $\partial$ is a derivation of the associative product of $\mc V$.
By the second Leibniz rule \eqref{20140702:eq6b}
we have, for $a,b,c\in\mc V$,
$$
\{ab_\lambda c\}
=
\mult\ldb ab_\lambda c\rdb
=
\mult\Big(
(e^{\partial\partial_\lambda}a)\otimes_1\ldb b_\lambda c\rdb
+\ldb a_{\lambda+\partial}c\rdb_\to\otimes_1 b
\Big)
\,.
$$
Note that the expression in parenthesis in the RHS above is
unchanged if we switch $a$ and $b$
(cf. the last identity in \eqref{notation}).
Hence, $\{ab_\lambda c\}=\{ba_\lambda c\}$.
Furthermore, we have
$$
\{a_\lambda bc\}
=
\mult\ldb a_\lambda bc\rdb
=
\mult\big(\ldb a_\lambda b\rdb c+ b\ldb a\lambda c\rdb\big)
=
\{a_\lambda b\}c+b\{a_\lambda c\}\,.
$$
Namely, the $\lambda$-action $\{a_\lambda\,-\}$ 
is by derivations of the associative product of $\mc V$.
But then
$$
\{a_\lambda bc-cb\}
=
[b,\{a_\lambda c\}]+[\{a_\lambda b\},c]\,\in[\mc V,\mc V]\otimes\mb F[\lambda]
\,.
$$
This proves part (b).
Part (c) is immediate.
Part (d) is an immediate consequence of Lemma \ref{prop:1bis}.
Finally, parts (e), (f) and (g)
can be proved in the same way, 
or by using the usual construction that associates
to a Lie conformal algebra $R$
the corresponding Lie algebra $\quot{R}{\partial R}$
and its representation on $R$.
\end{proof}

\subsection{Double Poisson vertex algebra structure on an algebra of (non-commutative)
differential functions}

Consider the algebra of non-commutative differential polynomials 
$\mc R_\ell$ in $\ell$ variables $u_i$, $i\in I=\{1,\dots,\ell\}$.
It is the algebra of non-commutative polynomials in the indeterminates $u_i^{(n)}$,
$$
\mc R_\ell=\mb F\langle u_i^{(n)}\mid i\in I,n\in\mb Z_+\rangle\,,
$$
endowed with a derivation $\partial$, defined on generators by $\partial u_i^{(n)}=u_i^{(n+1)}$,
and partial derivatives $\frac{\partial}{\partial u_i^{(n)}}$, for every $i\in I$ and
$n\in\mb Z_+$, defined on monomials by
\begin{equation}\label{20140627:eq1}
\frac{\partial}{\partial u_i^{(n)}} (u_{i_1}^{(n_1)}\dots u_{i_s}^{(n_s)})
=
\sum_{k=1}^s
\delta_{i_k,i}\delta_{n_k,n}
\,
u_{i_1}^{(n_1)}\dots x_{i_{k-1}}^{(n_{k-1})}
\otimes
u_{i_{k+1}}^{(n_{k+1})}\dots u_{i_s}^{(n_s)}
\,,
\end{equation}
which are commuting $2$-fold derivations of $\mc R_\ell$ such that
\begin{equation}\label{eq:comm}
\left[\frac{\partial}{\partial u_i^{(n)}},\partial\right]=\frac{\partial}{\partial u_i^{(n-1)}}\,,
\end{equation}
where the RHS is zero for $n=0$.
%
%
\begin{lemma}\label{20140626:lem2}
For any non-commutative differential polynomial $f\in \mc R_\ell$, and $i,j\in I$ and
$n,m\in\mb Z_+$, 
the partial derivatives strongly commute, i.e. we have
$$
\left(\frac{\partial}{\partial u_{i}^{(m)}}\right)_L\frac{\partial f}{\partial u_j^{(n)}}
=\left(\frac{\partial}{\partial u_j^{(n)}}\right)_R\frac{\partial f}{\partial u_{i}^{(m)}}
\,.
$$
\end{lemma}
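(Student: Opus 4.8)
The plan is to mimic verbatim the argument used in the non-differential case, Lemma~\ref{20140609:lem1}. The key structural fact is that, as an associative algebra, $\mc R_\ell=\mb F\langle u_i^{(n)}\mid i\in I,\,n\in\mb Z_+\rangle$ is free on the generators $u_i^{(n)}$; consequently every $3$-fold derivation of $\mc R_\ell$ is uniquely determined by its values on these generators, and in particular a $3$-fold derivation that vanishes on all generators vanishes identically (it also kills $1$, and then, by the Leibniz rule \eqref{20140604:eq4}, it kills every monomial by induction on its length). Note that the derivation $\partial$ and the commutation relation \eqref{eq:comm} play no role here: the statement concerns only the associative structure of $\mc R_\ell$ together with the $2$-fold partial derivatives.

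First I would form the difference
\[
D:=\left(\frac{\partial}{\partial u_{i}^{(m)}}\right)_L\circ\frac{\partial}{\partial u_j^{(n)}}
-\left(\frac{\partial}{\partial u_j^{(n)}}\right)_R\circ\frac{\partial}{\partial u_{i}^{(m)}}
\,.
\]
Since $\frac{\partial}{\partial u_{i}^{(m)}}$ and $\frac{\partial}{\partial u_j^{(n)}}$ are $2$-fold derivations of $\mc R_\ell$, Proposition~\ref{20140604:prop}(b), applied with $D_1=\frac{\partial}{\partial u_{i}^{(m)}}$ and $D_2=\frac{\partial}{\partial u_j^{(n)}}$ (both $2$-fold, so that the output is $2+2-1=3$-fold), immediately tells me that $D$ is a $3$-fold derivation $\mc R_\ell\to\mc R_\ell^{\otimes3}$.

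It then remains to check that $D$ annihilates every generator $u_k^{(p)}$. By \eqref{20140627:eq1} we have $\frac{\partial}{\partial u_j^{(n)}}(u_k^{(p)})=\delta_{jk}\delta_{np}(1\otimes1)$, and applying the leftmost extension $\left(\frac{\partial}{\partial u_{i}^{(m)}}\right)_L$ to $1\otimes1$ gives $\frac{\partial 1}{\partial u_{i}^{(m)}}\otimes1=0$; symmetrically, the second summand vanishes because $\left(\frac{\partial}{\partial u_j^{(n)}}\right)_R$ applied to $\delta_{ik}\delta_{mp}(1\otimes1)$ equals $1\otimes\frac{\partial 1}{\partial u_j^{(n)}}=0$. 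Hence $D(u_k^{(p)})=0$ for all $k,p$, and by the freeness remark of the first paragraph $D=0$, which is precisely the asserted strong commutativity.

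I do not expect any genuine obstacle: this is the differential-polynomial analogue of Lemma~\ref{20140609:lem1}, and the computation is mechanical. The only point requiring a moment of care is that $\mc R_\ell$ now has infinitely many generators $u_i^{(n)}$, but this is harmless, since both the reduction ``vanishing on generators implies zero'' and the check on generators are carried out monomial by monomial, each monomial involving only finitely many of the $u_i^{(n)}$.
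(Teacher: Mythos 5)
Your proof is correct and is essentially identical to the paper's: the paper simply refers back to Lemma \ref{20140609:lem1}, whose proof is exactly your argument (form the difference, invoke Proposition \ref{20140604:prop}(b) to see it is a $3$-fold derivation, and observe it vanishes on generators). Your added remarks — that $\partial$ and relation \eqref{eq:comm} play no role, and that infinitely many generators cause no trouble — are accurate clarifications of details the paper leaves implicit.
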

\begin{proof}
Same as the proof of Lemma \ref{20140609:lem1}.
\end{proof}
\begin{definition}
An \emph{algebra of differential functions} in $\ell$ variables
is a unital associative differential algebra $\mc V$, with derivation $\partial$,
endowed with strongly commuting $2$-fold derivations
$\frac{\partial}{\partial u_i^{(n)}}:\,\mc V\to\mc V\otimes\mc V$, $i\in I=\{1,\dots,\ell\},\,n\in\mb Z_+$,
such that \eqref{eq:comm} holds
and, for every $f\in\mc V$, we have $\frac{\partial f}{\partial u_i^{(n)}}=0$
for all but finitely many choices of indices $(i,n)\in I\times\mb Z_+$.
\end{definition}
An example of such an algebra is the algebra $\mc R_\ell$,
endowed with the $2$-fold derivations defined in \eqref{20140627:eq1},
or its localization by non-zero elements.
\begin{lemma}\label{20140307:lem1}
Let $\ldb-_\lambda-\rdb$ be a $2$-fold $\lambda$-bracket
on a differential algebra $\mc V$.
For every $a\in \mc V$ and $B,C\in\mc V^{\otimes2}$, we have
\begin{equation}\label{20140305:eq1}
\begin{array}{l}
\displaystyle{
\ldb a_\lambda B\bullet C\rdb_L
=B\bullet_2\ldb a_\lambda C\rdb_L
+\ldb a_\lambda B\rdb_L\bullet_1 C\,,
}\\
\displaystyle{
\ldb a_\lambda B\bullet C\rdb_R
=B\bullet_2\ldb a_\lambda C\rdb_R
+\ldb a_\lambda B\rdb_R\bullet_3C\,.
}\\
\displaystyle{
\ldb B\bullet C_\lambda a\rdb_L
={{\ldb B_{\lambda+\partial} a\rdb}_L}_\to\bullet_3C
+{{\ldb C_{\lambda+\partial} a\rdb}_L}_\to\bullet_1B^\sigma
\,.
}
\end{array}
\end{equation}
In the last equation 
the arrow has a meaning similar to \eqref{20140707:eq6}
(with $\bullet$ product in place of $\star$ product).
\end{lemma}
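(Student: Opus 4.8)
The plan is to reduce to decomposable tensors and then apply the appropriate Leibniz rule of the $2$-fold $\lambda$-bracket, exactly as in the non-$\lambda$ setting of Lemma \ref{20140609:lem2}. Since every term is linear in $B$ and in $C$, I may assume $B=b_1\otimes b_2$ and $C=c_1\otimes c_2$, so that $B\bullet C=b_1c_1\otimes c_2b_2$ by \eqref{20140609:eqc1}.

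For the first identity, unravelling the definition of $\ldb a_\lambda-\rdb_L$ from \eqref{notation} gives $\ldb a_\lambda B\bullet C\rdb_L=\ldb a_\lambda b_1c_1\rdb\otimes c_2b_2$. I then split $\ldb a_\lambda b_1c_1\rdb=\ldb a_\lambda b_1\rdb c_1+b_1\ldb a_\lambda c_1\rdb$ by the left Leibniz rule \eqref{20140702:eq6b}, write both summands in Sweedler's notation, and compare with the right-hand side: using the explicit formulas \eqref{20140609:eqc2} for $\bullet_1$ and $\bullet_2$, the first summand is $\ldb a_\lambda B\rdb_L\bullet_1 C$ and the second is $B\bullet_2\ldb a_\lambda C\rdb_L$. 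The second identity is entirely analogous: one unravels $\ldb a_\lambda-\rdb_R$, applies the left Leibniz rule to $\ldb a_\lambda c_2b_2\rdb$, and matches the two summands against $B\bullet_2\ldb a_\lambda C\rdb_R$ and $\ldb a_\lambda B\rdb_R\bullet_3 C$. Only the left Leibniz rule and formal manipulations of the $\bullet_i$-actions enter, so no sesquilinearity bookkeeping is involved and these two are genuinely straightforward.

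The third identity is the substantive one. Here I would start from the definition in \eqref{notation}, giving $\ldb (B\bullet C)_\lambda a\rdb_L=\ldb (b_1c_1)_{\lambda+\partial} a\rdb_\to\otimes_1(c_2b_2)$, and then apply the right Leibniz rule \eqref{20140702:eq6b} to the inner bracket, $\ldb (b_1c_1)_\mu a\rdb=\ldb {b_1}_{\mu+\partial} a\rdb_\to\star_1 c_1+(e^{\partial\partial_\mu}b_1)\star_1\ldb {c_1}_\mu a\rdb$. Substituting this back, performing the middle insertion $\otimes_1(c_2b_2)$, and carrying out the outer substitution $\mu=\lambda+\partial$ (whose $\partial$ acts, via the arrow, on $c_2b_2$), I expect the first summand to reassemble into ${{\ldb B_{\lambda+\partial} a\rdb}_L}_\to\bullet_3 C$ and the second into ${{\ldb C_{\lambda+\partial} a\rdb}_L}_\to\bullet_1 B^\sigma$, with $B^\sigma=b_2\otimes b_1$ (see \eqref{20140710:eq1}) appearing because the factor $b_1$, entering through $(e^{\partial\partial_\mu}b_1)\star_1(-)$, lands in the rightmost tensor slot while $b_2$ lands adjacent to $c_2$ in the middle slot, which is precisely the effect of $\bullet_1 B^\sigma$ rather than $\bullet_1 B$.

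The hard part will be the bookkeeping of the three independent sources of $\partial$: the exponential shift $e^{\partial\partial_\mu}b_1$ from the right Leibniz rule, the inner arrow shifts $\mu+\partial$ (one from the right Leibniz rule applied to $b_1c_1$, one hidden inside the definitions of $\ldb B_\mu a\rdb_L$ and $\ldb C_\mu a\rdb_L$), and the outer shift $\mu=\lambda+\partial$. In the expression coming from the left-hand side these $\partial$'s are organized differently from the way they are organized in the two right-hand terms: for instance, on the left the outer $\partial$ acts on the product $c_2b_2$ and splits, by the Leibniz rule for $\partial$ on $\mc V$, into derivatives on $c_2$ and on $b_2$ separately. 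The two organizations do not match summand by summand, but both produce the same coefficient once $\partial$ is distributed among the relevant tensor factors and the power of $\lambda$; concretely, each reduces to the same multinomial distribution of derivatives, so the equality follows from the standard identity that iterated binomial coefficients $\binom{p}{q}\binom{p-q}{r}\cdots$ recombine into a single multinomial coefficient independently of the order of extraction. Verifying this recombination, with careful attention to the binomial coefficients generated by \eqref{20140707:eq6} and by $e^{\partial\partial_\mu}$, completes the proof.
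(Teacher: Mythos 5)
Your proposal is correct, and since the paper's own proof of this lemma is literally the single word ``Straightforward,'' your computation is precisely the verification the authors leave implicit: reduce by linearity to $B=b_1\otimes b_2$, $C=c_1\otimes c_2$, apply the left Leibniz rule for the first two identities and the right Leibniz rule for the third, and match terms against the explicit formulas for the right $\bullet_1,\bullet_2,\bullet_3$ actions. In particular, your two key observations about the third identity are exactly right: the twist $B^\sigma=b_2\otimes b_1$ arises because $b_1$ (entering via $(e^{\partial\partial_\mu}b_1)\star_1-$) multiplies the rightmost tensor factor while $b_2$ lands next to $c_2$ in the middle slot, and the mismatch between where the $\partial$'s originate on the two sides (inner arrow on $c_1$ versus outer arrow, etc.) is resolved by the recombination $\binom{p}{j}\binom{p-j}{l}\binom{l}{s}=\binom{p}{l-s}\binom{p-l+s}{j+s}\binom{j+s}{j}=\frac{p!}{j!\,s!\,(l-s)!\,(p-j-l)!}$, i.e.\ both sides distribute derivatives according to the same multinomial coefficient.
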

\begin{proof}
Straightforward.
\end{proof}
\begin{theorem}\label{20130921:prop1}
\begin{enumerate}[(a)]
\item
Any $2$-fold $\lambda$-bracket on $\mc R_\ell$ has the form ($f,g\in \mc R_\ell$):
\begin{equation}\label{master-infinite}
\ldb f_{\lambda}g\rdb
=\sum_{\substack{i,j\in I\\m,n\in\mb Z_+}}
\frac{\partial g}{\partial u_j^{(n)}}
\bullet
(\lambda+\partial)^n
\ldb u_{i}{}_{\lambda+\partial}u_{j}\rdb_{\rightarrow}
(-\lambda-\partial)^m
\bullet
\left(\frac{\partial f}{\partial u_i^{(m)}}\right)^\sigma\,.
\end{equation}
where $\bullet$ is as in \eqref{20140609:eqc1}.
\item
Let $\mc V$ be an algebra of differential functions in $\ell$ variables.
Let $H(\lambda)$ be an $\ell\times\ell$ matrix with entries in $(\mc V\otimes\mc V)[\lambda]$.
We denote its entries by $H_{ij}(\lambda)=\ldb{u_j}_\lambda{u_i}\rdb$,
$i,j\in I$.
Then formula \eqref{master-infinite} defines a $2$-fold $\lambda$-bracket on $\mc V$.
\item
Equation \eqref{master-infinite} defines a structure of double Poisson vertex algebra on $\mc V$
if and only if the skewsymmetry axiom (i) and the Jacobi identity (ii) hold on the $u_i$'s,
namely, for all $i,j,k\in I$, we have
\begin{equation}\label{skew-gen-b}
\ldb {u_i}_\lambda{u_j}\rdb=-\ldb {u_j}_{-\lambda-\partial}{u_i}\rdb^\sigma
\,,
\end{equation}
and
\begin{equation}\label{jacobi-gen-b}
\ldb {u_i}_\lambda \ldb {u_j}_\mu {u_k}\rdb\rdb_L
-
\ldb {u_j}_\mu \ldb {u_i}_\lambda {u_k}\rdb\rdb_R
=
\ldb{\ldb {u_i}_\lambda {u_j}\rdb}_{\lambda+\mu} {u_k}\rdb_L
\,,
\end{equation}
where each term is understood, in terms of the matrix $H$, via \eqref{master-finite},
for example
\begin{equation}\label{20140709:eq3}
\ldb {u_i}_\lambda \ldb {u_j}_\mu {u_k}\rdb\rdb_L
=
\sum_{i\in I,n\in\mb Z_+}
\Big(\Big(\frac{\partial}{\partial u_h^{(n)}}\Big)_LH_{kj}(\mu)\Big)
\bullet_3 (\lambda+\partial)^n H_{hi}(\lambda)
\,.
\end{equation}
\end{enumerate}
\end{theorem}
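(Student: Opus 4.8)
The plan is to follow closely the pattern of Theorem \ref{thm:master-finite}, treating the three parts in order, with the spectral parameters $\lambda,\mu$ as the only genuinely new feature.

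For parts (a) and (b), I would verify that the right-hand side of \eqref{master-infinite} satisfies the three defining axioms of a $2$-fold $\lambda$-bracket, namely the sesquilinearity relations \eqref{20140702:eq4b} and the two Leibniz rules \eqref{20140702:eq6b}. The left Leibniz rule follows from the left Leibniz rule of the $2$-fold derivations $\frac{\partial}{\partial u_j^{(n)}}$ together with the identities $(fA)\bullet B=f(A\bullet B)$ and $(Af)\bullet B=(A\bullet B)f$ of \eqref{lemma:bullet}; the right Leibniz rule uses instead the right Leibniz rule of $\frac{\partial}{\partial u_i^{(m)}}$ and the remaining identities of \eqref{lemma:bullet}, the arrow notation on $\ldb u_i{}_{\lambda+\partial}u_j\rdb_\to$ being exactly what is needed to absorb the $e^{\partial\partial_\lambda}$ appearing in \eqref{20140702:eq6b}. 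For the two sesquilinearity relations the essential input is the commutation rule \eqref{eq:comm}, $[\frac{\partial}{\partial u_i^{(n)}},\partial]=\frac{\partial}{\partial u_i^{(n-1)}}$: substituting $\partial f$ or $\partial g$ into \eqref{master-infinite} and summing by parts in $n$ (resp.\ $m$) produces the factor $-\lambda$ (resp.\ $\lambda+\partial$), just as in the commutative case of \cite{BDSK09}. Since any $2$-fold $\lambda$-bracket on $\mc R_\ell$ is determined by its values on the generators $u_i$ through sesquilinearity and the Leibniz rules, and since $\ldb u_i{}_\lambda u_j\rdb$ recovers $H_{ji}(\lambda)$, formula \eqref{master-infinite} is forced, proving (a); the same computation with an arbitrary matrix $H(\lambda)$ over $\mc V\otimes\mc V$ proves (b).

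For part (c), the skewsymmetry direction is the easy one: assuming \eqref{skew-gen-b} on generators, I would apply the antiautomorphism property \eqref{20140710:eq4} of $\sigma$ with respect to $\bullet$, together with sesquilinearity, to the master formula and verify \eqref{eq:skew2} for arbitrary $f,g$; this is the verbatim $\lambda$-analogue of the finite-dimensional argument. The Jacobi identity is the heart of the theorem and the main obstacle. Assuming skewsymmetry, Lemma \ref{lem:triple_bracket} guarantees that the expression $\ldb f_\lambda g_\mu h\rdb$ of \eqref{eq:triple1}---whose vanishing is exactly the Jacobi identity \eqref{eq:jacobi2}---is a $3$-fold $\lambda$-bracket. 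Conceptually this already suggests the result, since a $3$-fold $\lambda$-bracket on the free algebra is determined by its values on generators; but because (c) concerns a general algebra of differential functions $\mc V$, where relations may be present, I would instead expand each of the three summands of \eqref{eq:triple1} by means of the master formula \eqref{master-infinite} and the Leibniz-type rules of Lemma \ref{20140307:lem1} (the $\lambda$-bracket counterpart of Lemma \ref{20140609:lem2}), in exact parallel with the computation \eqref{eq:jac1}--\eqref{eq:jac9} of the finite case. This produces a ``generator group'', in which the inner brackets assemble into $\ldb u_i{}_\lambda\ldb u_j{}_\mu u_k\rdb\rdb_L$, $\ldb u_j{}_\mu\ldb u_i{}_\lambda u_k\rdb\rdb_R$ and $\ldb\ldb u_i{}_\lambda u_j\rdb_{\lambda+\mu}u_k\rdb_L$ as in \eqref{20140709:eq3} and which therefore vanishes precisely by the hypothesis \eqref{jacobi-gen-b}, together with two further groups which must cancel against one another. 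To carry out these cancellations I would use the strong commutativity of the partial derivatives, Lemma \ref{20140626:lem2}, to exchange $(\frac{\partial}{\partial u_h^{(p)}})_L$ and $(\frac{\partial}{\partial u_k^{(q)}})_R$, the transposition rules of Lemma \ref{lemma:bullet2} relating the $\bullet_i$ actions to $\sigma$, the commutativity of the $\bullet_1$ and $\bullet_3$ actions from Lemma \ref{lemma:bullet-i}(c), and the skewsymmetry \eqref{skew-gen-b} on generators.

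The genuinely harder bookkeeping, compared with Theorem \ref{thm:master-finite}, is tracking the spectral parameters through this process: each application of sesquilinearity or of the right Leibniz rule shifts a parameter by $\partial$, so the $\lambda$-, $\mu$- and $(\lambda+\mu)$-dependence must be propagated carefully through every $\bullet_i$ product and every $(\lambda+\partial)^n$, $(-\lambda-\partial)^m$ factor. The skewsymmetry relations of Lemma \ref{20130916:lem1}(b),(c)---which encode how $\ldb-_\lambda-\rdb_{L}$ and $\ldb-_\lambda-\rdb_{R}$ transform under $\sigma$ with $\lambda\mapsto-\lambda-\partial$---are the tool that makes the pairwise cancellation of the two surviving groups go through in the presence of these shifts, exactly as \eqref{20140707:eq2} did in the proof of Lemma \ref{lem:triple_bracket}. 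Once all non-generator terms cancel and the generator group is killed by \eqref{jacobi-gen-b}, the $3$-fold $\lambda$-bracket $\ldb f_\lambda g_\mu h\rdb$ vanishes identically, which is the Jacobi identity \eqref{eq:jacobi2} on all of $\mc V$, completing the proof of (c).
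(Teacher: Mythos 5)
Your proposal takes essentially the same route as the paper: the paper's entire proof of this theorem is the single sentence ``The proof is similar to the proof of Theorem \ref{thm:master-finite} using Lemma \ref{20140307:lem1}'', and your plan is precisely that adaptation --- checking sesquilinearity and the two Leibniz rules via \eqref{lemma:bullet} and \eqref{eq:comm} for parts (a) and (b), and for part (c) expanding the three terms of the triple $\lambda$-bracket \eqref{eq:triple1} through the master formula and Lemma \ref{20140307:lem1} in parallel with \eqref{eq:jac1}--\eqref{eq:jac9}, with strong commutativity (Lemma \ref{20140626:lem2}), Lemmas \ref{lemma:bullet-i}(c) and \ref{lemma:bullet2}, Lemma \ref{20130916:lem1}, and skewsymmetry on generators producing the cancellations. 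The proposal is correct, and its only additions (e.g.\ the careful tracking of the spectral-parameter shifts) are exactly the bookkeeping the paper's one-line proof leaves implicit.
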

\begin{proof}
The proof is similar to the proof of Theorem \ref{thm:master-finite} using Lemma \ref{20140307:lem1}.
\end{proof}
\begin{remark}\label{rem:dagger}
For a matrix differential operator 
$H(\partial)=\big(H_{ij}(\partial)\big)_{i,j=1}^\ell$,
with entries $H_{ij}(\partial)\in(\mc V\otimes\mc V)[\partial]$,
we define its adjoint operator $H^\dagger(\partial)$, with entries
$(H^\dagger)_{ij}(\partial)=((H_{ji})^*(\partial))^\sigma$,
where $A^*(\partial)$ denotes the formal adjoint of the differential operator $A(\partial)$.
Then property \eqref{skew-gen-b}
means that the matrix differential operator $H(\partial)$ is skewadjoint.
\end{remark}
\begin{definition}\label{poisson-structure-b}
A matrix differential operator 
$H(\partial)=\big( H_{ij}(\partial)\big)_{i,j=1}^\ell$,
with entries $H_{ij}(\partial)\in(\mc V\otimes\mc V)[\partial]$,
satisfying \eqref{skew-gen-b} and \eqref{jacobi-gen-b}
is called a \emph{Poisson structure} on the algebra of differential functions $\mc V$.
\end{definition}

\subsection{Evolution PDE and Hamiltonian PDE over an algebra of differential functions}

An \emph{evolution PDE} over the algebra of differential functions $\mc V$ has the form
\begin{equation}\label{evol-eq-var}
\frac{du_i}{dt}
= P_i\,\in\mc V
\,\,,\,\,\,\,
i\in I=\{1,\dots,\ell\}
\,.
\end{equation}
Assuming that time derivative commutes with space derivative, we have
$\frac{du_i^{(n)}}{dt}=\partial^n P_i$,
and, by the chain rule,
a function $f\in\mc V$ evolves according to
\begin{equation}\label{evol-eq2-var}
\frac{df}{dt}
= \sum_{(i,n)\in I\times\mb Z_+}\mult\Big((\partial^nP_i)\otimes_1\frac{\partial f}{\partial u_i^{(n)}}\Big)
= X_P(f)
\,.
\end{equation}
An \emph{integral of motion} is a local functional $\tint f\in\quot{\mc V}{\partial\mc V+[\mc V,\mc V]}$
constant in time:
\begin{equation}\label{integral-of-motion-var}
\frac{d\tint f}{dt}
= \int \sum_{(i,n)\in I\times\mb Z_+}\mult\Big((\partial^nP_i)\otimes_1\frac{\partial f}{\partial u_i^{(n)}}\Big)
= 0
\,.
\end{equation}

A \emph{vector field} on $\mc V$ is a derivation $X:\,\mc V\to\mc V$
of the form (cf. \eqref{evol-vect-field})
\begin{equation}\label{20140704:eq1}
X(f)=\sum_{(i,n)\in I\times\mb Z_+} \mult \Big(P_{i,n}\otimes_1\frac{\partial f}{\partial u_i^{(n)}}\Big)
\,,
\end{equation}
where $P_{i,n}\in\mc V$ for all $i,n$.
Note that the RHS of \eqref{20140704:eq1} is a finite sum since,
by assumptions on $\mc V$, $\frac{\partial f}{\partial u_i^{(n)}}=0$
for all but finitely many choices of indices $(i,n)$.

An \emph{evolutionary vector field}
is a vector field commuting with $\partial$.
By \eqref{eq:comm}, it has the form (cf. \eqref{evol-eq2-var})
\begin{equation}\label{20140704:eq2}
X_P(f)=\sum_{(i,n)\in I\times\mb Z_+} \mult \Big((\partial^n P_i)\otimes_1\frac{\partial f}{\partial u_i^{(n)}}\Big)
\,,
\end{equation}
for $P=(P_i)_{i=1}^\ell\in\mc V^\ell$, called the \emph{characteristics} of the evolutionary vector field $X_P$.

Vector fields form a Lie algebra, denoted by $\Vect(\mc V)$,
and evolutionary vector fields form a Lie subalgebra, denoted $\Vect^\partial(\mc V)$.
The Lie bracket is given by the same formulas as in the finite case, equation \eqref{commut-evf}.

Equation \eqref{evol-eq-var} is called \emph{compatible} with another evolution PDE
$\frac{du_i}{d\tau}=Q_i$, $i\in I$,
if the corresponding evolutionary vector fields commute.

The \emph{Hamiltonian equation} on a double Poisson vertex algebra $\mc V$,
associated to the \emph{Hamiltonian functional} $\tint h\in \quot{\mc V}{([\mc V,\mc V]+\partial\mc V)}$ 
is
\begin{equation}\label{ham-eq}
\frac{du}{dt}
=
\{\tint h,u\}
\,,
\end{equation}
for any $u\in\mc V$.
An \emph{integral of motion} for the Hamiltonian equation \eqref{ord-ham-eq}
is an element $\tint f\in\quot{\mc V}{([\mc V,\mc V]+\partial\mc V)}$ such that 
$\{\tint h,\tint f\}=0$.
In this case, by Theorem \ref{20140707:thm}(g), 
the derivations $X^h=\{\tint h,-\}$ and $X^f=\{\tint f,-\}$ 
(called Hamiltonian vector fields) commute,
and equations $\frac{du}{dt}=\{\tr f,u\}$ and \eqref{ham-eq} are \emph{compatible}.

In the special case when $\mc V$ is an algebra of differential functions
and the $2$-fold $\lambda$-bracket $\ldb-_\lambda-\rdb$ is given by 
a Poisson structure $H(\partial)$ on $\mc V$ via \eqref{master-infinite},
where $\ldb {u_i}_\lambda{u_j}\rdb=H_{ji}(\lambda)$,
the Hamiltonian equation \eqref{ham-eq} becomes the following evolution equation
\begin{equation}\label{20140702:eq1-var}
\frac{du_i}{dt}
=
\mult\sum_{j=1}^\ell
H_{ij}(\partial)\bullet \Big(\frac{\delta h}{\delta u_j}\Big)^\sigma
\,,
\end{equation}
where $\frac{\delta h}{\delta u_j}\in\mc V\otimes\mc V$ denotes the \emph{variational derivative} of $h$:
\begin{equation}\label{var-der}
\frac{\delta h}{\delta u_j}
=
\sum_{n\in\mb Z_+}(-\partial)^n\frac{\partial h}{\partial u_j^{(n)}}
\,.
\end{equation}
In equation \eqref{20140702:eq1-var}
$\partial$ is moved to the right of the $\bullet$ product, acting on $\frac{\delta h}{\delta u_j}$.
Furthermore the Lie algebra bracket $\{-,-\}$ on $\quot{\mc V}{([\mc V,\mc V]+\partial\mc V)}$ defined by 
\eqref{20140707:eq3c-lie},
becomes ($f,g\in \mc V$):
\begin{equation}\label{20140709:eq1b}
\{\tint f,\tint g\}
=\int\sum_{i,j\in I}
\mult\left(\frac{\delta g}{\delta u_j}\right)^\sigma
\mult\left(H_{ji}(\partial)\star_1\mult\left(\frac{\delta f}{\delta u_i}\right)^\sigma
\right)
\,.
\end{equation}
In this case the notions of compatibility and of integrals of motion, are consistent with those
for general evolution equations, due to Theorem \ref{20130921:prop1}.
\begin{remark}\label{20140709:rem1b}
For $H(\partial)\in\Mat_{\ell\times\ell}(\mc V\otimes \mc V)[\partial]$
and $F\in \mc V^{\oplus\ell}$ let $H(\partial)F\in \mc V^\ell$
be defined by (where we write
$H_{ij}(\partial)=\sum_{n\in\mb Z_+}(H_{ij,n}^\prime\otimes H_{ij,n}^{\prime\prime})\partial^n$)
\begin{equation}\label{20140709:eq2b}
(H(\partial)F)_i=\sum_{j\in I} \mult(H_{ij}(\partial)\star_1 F_j)
=\sum_{jin I,n\in\mb Z_+} H_{ij,n}^\prime (\partial^nF_j) H_{ij,n}^{\prime\prime}
\,. 
\end{equation}
Then, formula \eqref{20140709:eq1b} can be written in the more traditional form
\begin{equation}
\{\tint f,\tint g\}
=\int
\delta g\cdot \left(H(\partial)\delta f\right)
\,,
\end{equation}
where $\cdot$ denotes the usual dot product of vectors and
$(\delta f)_i=\mult\left(\frac{\delta f}{\delta u_i}\right)^\sigma$.
The latter notation is compatible with the theory of the variational complex
(see equation \eqref{20140626:eq4a-aff}).
\end{remark}
\begin{remark}\label{rem:20140723}
The theory of the Lenard-Magri scheme discussed in Section \ref{sec:lenard-magri}
for double Poisson algebras $V$ 
holds verbatim for double Poisson vertex algebras $\mc V$
if we replace the matrix $H$ by the matrix differential operator $H(\partial)$
and the map $\tr$ by the map $\tint$.
\end{remark}

\subsection{de Rham complex over an algebra of differential functions}\label{sec:3.5}

Let $\mc V$ be an algebra differential functions.
We define the de Rham complex $\widetilde{\Omega}(\mc V)$
as the free product of the algebra $\mc V$ and 
the algebra $\mb F\langle\delta u_i^{(n)}\,|\,i\in I=\{1,\dots,\ell\},\,n\in\mb Z_+\rangle$
of non-commutative polynomials in the variables $\delta u_i^{(n)}$.
The action of $\partial$ is extended from $\mc V$ to
a derivation of $\widetilde{\Omega}(\mc V)$
by letting $\partial(\delta u_i^{(n)})=\delta u_i^{(n+1)}$ for all $(i,n)\in I\times\mb Z_+$.
This makes $\widetilde{\Omega}(\mc V)$ a differential algebra.
It has a $\mb Z_+$-grading,
such that $f\in\mc V$ has degree $0$ and the $\delta u_i^{(n)}$'s have degree $1$.
We consider it as a superalgebra,
with superstructure compatible with the $\mb Z_+$-grading.
The subspace $\widetilde{\Omega}^k(\mc V)$ of degree $n$ consists of linear combinations
of elements of the form
\begin{equation}\label{20140623:eq1-var}
\widetilde\omega
=
f_1\delta u_{i_1}^{(m_1)}f_2\delta u_{i_2}^{(m_2)}\dots f_k\delta u_{i_k}^{(m_k)}f_{k+1}
\,\,,\,\text{ where }\,
f_1,\dots,f_{k+1}\in\mc V
\,.
\end{equation}
In particular, $\widetilde{\Omega}^0(\mc V)=\mc V$
and $\widetilde{\Omega}^1(\mc V)\simeq\oplus_{(i,n)\in I\times\mb Z_+}\mc V \delta u_i^{(n)}\mc V$.

Define the de Rham differential $\delta$ on $\widetilde{\Omega}(\mc V)$ 
as the odd derivation of degree $1$
on the superalgebra $\widetilde{\Omega}(\mc V)$
by letting
\begin{equation}\label{20140623:eq2-var}
\delta f
=
\sum_{(i,n)\in I\times\mb Z_+}
\Big(\frac{\partial f}{\partial u_i^{(n)}}\Big)^\prime
\delta u_i^{(n)}
\Big(\frac{\partial f}{\partial u_i^{(n)}}\Big)^{\prime\prime}
\in\widetilde{\Omega}^1(\mc V)
\,\text{ for }\,
f\in\mc V
\,,\,\text{ and }
\delta (\delta u_i^{(n)})=0
\,.
\end{equation}
The proof that $\delta^2=0$ is the same as for the algebra of ordinary differential functions
(cf. Section \ref{sec:derham}),
so we can consider the corresponding cohomology complex 
$(\widetilde{\Omega}(\mc V),\delta)$.

Given a vector field 
$X_P=\sum_{(i,n)\in I\times\mb Z_+}
\mult\circ\Big(P_{i,n}\otimes_1\frac{\partial}{\partial u_i^{(n)}}\Big)\in\Vect(\mc V)$ 
(cf. \eqref{20140704:eq1}),
we define the corresponding \emph{Lie derivative} 
$L_P:\,\widetilde{\Omega}(\mc V)\to\widetilde{\Omega}(\mc V)$
as the even derivation of degree $0$ extending $X_P$ from $\mc V$,
and such that $L_P(\delta u_i^{(n)})=\delta P_{i,n}$, $i\in I,\,n\in\mb Z_+$.
Next, we define the corresponding \emph{contraction operator} 
$\iota_P:\,\widetilde{\Omega}(\mc V)\to\widetilde{\Omega}(\mc V)$
as the odd derivation of degree $-1$ defined on generators
by $\iota_P(f)=0$, for $f\in \mc V$, and $\iota_P(\delta u_i^{(n)})=P_{i,n}$.
It is easy to show, in analogy to Proposition \ref{20140623:prop}
and Theorem \ref{20140623:thm},
that $\widetilde{\Omega}(\mc V)$ is a $\Vect(\mc V)$-complex
and the complex $(\widetilde{\Omega}(\mc R_\ell),\delta)$ is acyclic:
$H^n(\widetilde{\Omega}(\mc R_\ell),\delta)=\delta_{n,0}\mb F$.

\subsection{
Variational complex over an algebra of differential functions
}\label{sec:3.6}

\begin{proposition}\label{20140704:prop}
In the de Rham complex $(\widetilde{\Omega}(\mc V),\delta)$
we have:
\begin{enumerate}[(a)]
\item
The commutator subspace 
$[\widetilde{\Omega}(\mc V),\widetilde{\Omega}(\mc V)]$
is compatible with the $\mb Z_+$-grading and is preserved by $\delta$.
\item
$\delta$ and $\partial$ commute: $\partial\circ\delta=\delta\circ\partial$;
hence, $\partial\widetilde{\Omega}(\mc V)$
is compatible with the $\mb Z_+$-grading and is preserved by $\delta$.
\item
The Lie derivative $L_P$ and the contraction operator $\iota_P$,
associated to an evolutionary vector field $X_P$
of characteristics $P=(P_i)_{i=1}^\ell$ (cf. \eqref{20140704:eq2}),
commute with the action of $\partial$ on $\widetilde{\Omega}(\mc V)$.
\end{enumerate}
\end{proposition}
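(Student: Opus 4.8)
The plan is to handle all three parts by a single mechanism: each of the maps $\partial$, $\delta$, $L_P$, $\iota_P$ is a (super)derivation of the superalgebra $\widetilde{\Omega}(\mc V)$, and the graded commutator of two graded derivations is again a graded derivation. Since $\widetilde{\Omega}(\mc V)$ is generated as an algebra by $\mc V$ together with the odd elements $\delta u_i^{(n)}$, any assertion that a given graded derivation vanishes reduces to checking it on these generators. I emphasize that here $[-,-]$ is the \emph{graded} commutator $[\alpha,\beta]=\alpha\beta-(-1)^{|\alpha||\beta|}\beta\alpha$, which restricts to the ordinary commutator on the even part, in particular on $\mc V=\widetilde{\Omega}^0(\mc V)$; this is the convention for which the reduced spaces acquire their cyclic skewsymmetry.

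For part (a), grading compatibility is immediate: the commutator subspace is spanned by graded commutators $[\alpha,\beta]$ of homogeneous elements, and such a commutator is homogeneous of degree $|\alpha|+|\beta|$ because the product of $\widetilde{\Omega}(\mc V)$ is graded; hence $[\widetilde{\Omega}(\mc V),\widetilde{\Omega}(\mc V)]=\bigoplus_n\big([\widetilde{\Omega}(\mc V),\widetilde{\Omega}(\mc V)]\cap\widetilde{\Omega}^n(\mc V)\big)$. Preservation by $\delta$ follows since $\delta$ is an odd derivation, so for homogeneous $\alpha,\beta$ one has $\delta[\alpha,\beta]=[\delta\alpha,\beta]+(-1)^{|\alpha|}[\alpha,\delta\beta]$, again a sum of graded commutators. (Note this is exactly the point where the \emph{graded} commutator, rather than the ordinary one, is needed: an odd derivation of an ordinary commutator produces anticommutators.)

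For part (b), the operator $[\partial,\delta]=\partial\delta-\delta\partial$ is an odd derivation, so it suffices to show it vanishes on generators. On $\delta u_i^{(n)}$ both compositions give zero, using $\delta(\delta u_i^{(n)})=0$ and $\partial(\delta u_i^{(n)})=\delta u_i^{(n+1)}$. On $f\in\mc V$ one must check $\partial(\delta f)=\delta(\partial f)$, which is the one genuine computation. Writing $\tfrac{\partial f}{\partial u_i^{(n)}}=A_{i,n}'\otimes A_{i,n}''$ in Sweedler notation, applying $\partial$ to \eqref{20140623:eq2-var} produces, besides the two terms where $\partial$ hits the coefficients, the term $\sum_{i,n}A_{i,n}'\,\delta u_i^{(n+1)}\,A_{i,n}''$ coming from $\partial(\delta u_i^{(n)})=\delta u_i^{(n+1)}$; on the other side, $\delta(\partial f)$ is computed from $\tfrac{\partial}{\partial u_i^{(n)}}(\partial f)=\partial\big(\tfrac{\partial f}{\partial u_i^{(n)}}\big)+\tfrac{\partial f}{\partial u_i^{(n-1)}}$ by the commutation relation \eqref{eq:comm}, the shift term $\tfrac{\partial f}{\partial u_i^{(n-1)}}$ matching the $\delta u_i^{(n+1)}$ term after reindexing $n\mapsto n+1$. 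This is the main obstacle of the proof, but it is only bookkeeping once \eqref{eq:comm} is invoked. The ``hence'' clause is then immediate: $\partial$ has degree $0$, so $\partial\widetilde{\Omega}(\mc V)$ is a graded subspace, and $\delta(\partial\widetilde{\Omega}(\mc V))=\partial(\delta\widetilde{\Omega}(\mc V))\subseteq\partial\widetilde{\Omega}(\mc V)$ by the commutativity just established.

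For part (c), recall that for an evolutionary vector field $X_P$ one has characteristics $P_{i,n}=\partial^nP_i$ and $[X_P,\partial]=0$ on $\mc V$. The operators $[L_P,\partial]$ (even) and $[\partial,\iota_P]$ (odd) are graded derivations, so again I check them on generators. On $\mc V$: $[L_P,\partial]f=[X_P,\partial]f=0$ since $L_P|_{\mc V}=X_P$, while $[\partial,\iota_P]f=-\iota_P(\partial f)=0$ because $\iota_P$ kills $\mc V$ and $\partial f\in\mc V$. On $\delta u_i^{(n)}$, using $L_P(\delta u_i^{(n)})=\delta P_{i,n}=\delta(\partial^nP_i)=\partial^n(\delta P_i)$ and $\iota_P(\delta u_i^{(n)})=\partial^nP_i$ — where the commutation of $\delta$ and $\partial$ from part (b) is exactly what is needed — both $[L_P,\partial](\delta u_i^{(n)})$ and $[\partial,\iota_P](\delta u_i^{(n)})$ collapse to $\delta(\partial^{n+1}P_i)-\delta(\partial^{n+1}P_i)=0$ and $\partial^{n+1}P_i-\partial^{n+1}P_i=0$ respectively. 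Hence both commutators vanish on all generators and therefore identically.
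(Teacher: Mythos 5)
Your proof is correct and follows essentially the same route as the paper's: each identity is reduced, via the (super)derivation property, to a check on the generators $f\in\mc V$ and $\delta u_i^{(n)}$, with the only genuine computation being $[\partial,\delta]f=0$ via \eqref{eq:comm}. The extra details you supply — the explicit graded-commutator convention needed for part (a) to work (and for the cyclic skewsymmetry of $\Omega^k(\mc V)$), and the invocation of part (b) when evaluating $[L_P,\partial]$ on $\delta u_i^{(n)}$ in part (c) — are correct fillings-in of steps the paper labels ``immediate'' or ``straightforward.''
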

\begin{proof}
Part (a) is immediate, since $\delta$ is a derivation of the associative product 
in $\widetilde{\Omega}(\mc V)$.
For part (b) we just need to prove that 
$\delta(\partial\widetilde{\omega})=\partial(\delta\widetilde{\omega})$
for every $\widetilde{\omega}\in\widetilde{\Omega}(\mc V)$.
Since $[\delta,\partial]$ is a derivation of $\widetilde{\Omega}(\mc V)$,
and it is obviously zero on $\delta u_i^{(n)}$,
it suffices to check that it is zero on $f\in\mc V$.
We have, by \eqref{20140623:eq2-var},
$$
\begin{array}{l}
\displaystyle{
\vphantom{\Big(}
\partial(\delta f)
=
\sum_{(i,n)\in I\times\mb Z_+}
\Big(\partial\frac{\partial f}{\partial u_i^{(n)}}\Big)^\prime
\delta u_i^{(n)}
\Big(\partial\frac{\partial f}{\partial u_i^{(n)}}\Big)^{\prime\prime}
} \\
\displaystyle{
\vphantom{\Big(}
+\sum_{(i,n)\in I\times\mb Z_+}
\Big(\frac{\partial f}{\partial u_i^{(n)}}\Big)^\prime
\delta u_i^{(n+1)}
\Big(\frac{\partial f}{\partial u_i^{(n)}}\Big)^{\prime\prime}
\,,}
\end{array}
$$
and
$$
\delta(\partial f)
=
\sum_{(i,n)\in I\times\mb Z_+}
\Big(\frac{\partial (\partial f)}{\partial u_i^{(n)}}\Big)^\prime
\delta u_i^{(n)}
\Big(\frac{\partial(\partial f)}{\partial u_i^{(n)}}\Big)^{\prime\prime}
\,.
$$
Hence, $\partial(\delta f)=\delta(\partial f)$ by \eqref{eq:comm}.

As for part (c),
since $L_P$, $\iota_P$ and $\partial$ are derivations of the associative product 
of $\widetilde{\Omega}(\mc V)$,
it suffices to check that $[L_P,\partial]$ and $[\iota_P,\partial]$
act as zero on generators, i.e. on $f\in\mc V$ and on $\delta u_i^{(n)}$.
This is straightforward.
\end{proof}

Thanks to Proposition \ref{20140704:prop}(a-b),
we can consider the $\mb Z_+$-graded \emph{variational complex}
\begin{equation}\label{20140623:eq6a-b}
\Omega(\mc V)
=\quot{\widetilde{\Omega}(\mc V)}
{\big(\partial\widetilde{\Omega}(\mc V)+[\widetilde{\Omega}(\mc V),\widetilde{\Omega}(\mc V)]\big)}
=\oplus_{n\in\mb Z_+}\Omega^n(\mc V)
\,,
\end{equation}
with the induced action of $\delta$.
Furthermore, by Proposition \ref{20140704:prop}(c)
the Lie derivatives $L_P$ and contraction operators $\iota_P$, 
associated to the evolutionary vector field $X_P$ of characteristics $P\in\mc V^\ell$, 
induce well defined maps on the variational complex $\Omega(\mc V)$,
and they define a structure of $\Vect^\partial(\mc V)$-complex on $\widetilde{\Omega}(\mc V)$.

Since the total degree vector field $X_\Delta$, with characteristics $\Delta=(u_i)_{i=1}^\ell$,
is an evolutionary vector field on $\mc R_\ell$,
the same argument in the proof of Theorem \ref{20140623:thm}
shows that the complex $(\Omega(\mc R_\ell),\delta)$ is acyclic,
i.e.
\begin{equation}\label{20140623:eq6b-aff}
H^k(\Omega(\mc R_\ell),\delta)=\delta_{k,0}\mb F
\,.
\end{equation}

As we did in the finite case, Section \ref{sec:red-com},
we give an explicit description of the complex $(\Omega(\mc V),\delta)$.
We obviously have $\Omega^0(\mc V)=\quot{\mc V}{(\partial\mc V+[\mc V,\mc V])}$.
For $k\geq1$,
we identify $\Omega^k(\mc V)$ with the space $\Sigma^k(\mc V)$
of arrays
$\big(A_{i_1\dots i_k}(\lambda_1,\dots,\lambda_{k-1}\big)_{i_1,\dots,i_k=1}^\ell$
with entries 
$A_{i_1\dots i_k}(\lambda_1,\dots,\lambda_{k-1})\in \mc V^{\otimes k}[\lambda_1,\dots,\lambda_{k-1}]$,
satisfying the following skewadjointness condition ($i_1,\dots,i_k\in I$):
\begin{equation}\label{20140626:eq1-b}
A_{i_1\dots i_k}(\lambda_1,\dots,\lambda_{k-1})
=
-(-1)^{k}\big(
A_{i_2\dots i_k i_1}(\lambda_2,\dots,\lambda_{k-1},
-\lambda_1-\dots-\lambda_{k-1}-\stackrel{\leftarrow}{\partial})
\big)^\sigma
\,,
\end{equation}
where $\sigma$ denotes the action of the cyclic permutation on $\mc V^{\otimes k}$ 
as in \eqref{20140606:eq3},
and the arrow means that $\partial$ is acting on the coefficients of the polynomial.
To prove the isomorphism $\Omega^k(\mc V)\simeq\Sigma^k(\mc V)$
we write explicitly the maps in both directions.
Given the coset $\omega=[\widetilde{\omega}]\in\Omega^k(\mc V)$,
where $\widetilde{\omega}$ is as in \eqref{20140623:eq1-var},
we map it to the array
$A=\big(A_{j_1\dots j_k}(\lambda_1,\dots,\lambda_{k-1})\big)_{i_1,\dots,i_k=1}^\ell\in\Sigma^k(\mc V)$,
with entries 
$A_{j_1\dots j_k}(\lambda_1,\dots,\lambda_{k-1})=0$
unless $(j_1,\dots,j_k)$ is a cyclic permutation of $(i_1,\dots,i_k)$,
and
\begin{equation}\label{20140626:eq2-aff}
\begin{array}{l}
\displaystyle{
\vphantom{\Big)}
A_{j_1\dots j_k}(\lambda_1,\dots,\lambda_{k-1})
=
\frac1k (-1)^{s(k-s)}
\lambda_1^{n_{s+1}}\dots\lambda_{k-s}^{n_k}
\lambda_{k-s+1}^{n_1}\dots\lambda_{k-1}^{n_{s-1}}
} \\
\displaystyle{
\vphantom{\Big)}
(-\lambda_1-\dots-\lambda_{k-1}-\partial)^{n_s}
\big(
f_{s+1}\otimes\dots\otimes f_k\otimes f_{k+1}f_1\otimes f_2\otimes\dots\otimes f_s
\big)
\,,}
\end{array}
\end{equation}
for $(j_1,\dots,j_k)=(i_{\sigma^s(1)},\dots,i_{\sigma^s(k)})$.
The inverse map $\Sigma^k(V)\to\Omega^k(V)$
is given by (in Sweedler's notation):
\begin{equation}\label{20140626:eq3-aff}
\begin{array}{l}
\displaystyle{
\Big(
\sum_{n_1,\dots,n_{k-1}\in\mb Z_+}A_{i_1\dots i_k}^{n_1\dots n_{k-1}}
\lambda_1^{n_1}\dots\lambda_{k-1}^{n_{k-1}}
\Big)_{i_1,\dots,i_k=1}^\ell
\mapsto
} \\
\displaystyle{
\sum_{\substack{i_1,\dots,i_k\in I \\ n_1,\dots,n_{k-1}\in\mb Z_+}}
\!\!\!\!\!
\big[
(A_{i_1\dots i_k}^{n_1\dots n_{k-1}})^\prime \delta u_{i_1}^{(n_1)}
\dots
(A_{i_1\dots i_k}^{n_1 \dots n_{k-1}})^{\overbrace{\prime\dots\prime}^{k-1}} \delta u_{i_{k-1}}^{(n_{k-1})}
(A_{i_1\dots i_k}^{n_1 \dots n_{k-1}})^{\overbrace{\prime\dots\prime}^{k}} \delta u_{i_k}
\big]
\,.}
\end{array}
\end{equation}
It is easy to check that the maps \eqref{20140626:eq2-aff} and \eqref{20140626:eq3-aff}
are well defined,
and they are inverse to each other,
thus proving that the space $\Omega^k(\mc V)$ and the space of arrays $\Sigma^k(\mc V)$
can be identified using these maps.

It is also not hard to find the formula for the
differential $\delta:\,\Sigma^k(\mc V)\to\Sigma^{k+1}(\mc V)$
corresponding to the differential $\delta$ of the reduced complex $\Omega(\mc V)$
under this identification.
For $k=0$, we have
\begin{equation}\label{20140626:eq4a-aff}
\delta(\tint f)
=
\Big(\sum_{n\in\mb Z_+}
(-\partial)^n\mult \Big(\frac{\partial f}{\partial u_i^{(n)}}\Big)^\sigma \Big)_{i=1}^\ell
\,\Big(
=
\Big(
\mult \Big(\frac{\delta f}{\delta u_i}\Big)^\sigma \Big)_{i=1}^\ell
\Big)
\,.
\end{equation}
For $A=
\big(A_{i_1\dots i_k}(\lambda_1,\dots,\lambda_{k-1})\big)_{i_1,\dots,i_k=1}^\ell\in\Sigma^k(\mc V)$,
where $k\geq1$, we have, using notation \eqref{20140606:eq1a}
and recalling \eqref{20140606:eq3b} and \eqref{20140606:eq3c},
\begin{equation}\label{20140626:eq4b-aff}
\begin{array}{l}
\displaystyle{
(\delta A)_{i_1\dots i_{k+1}}(\lambda_1,\dots,\lambda_k)
=
\frac1{k+1}\sum_{s=1}^{k+1}
\sum_{t=1}^{k} 
(-1)^{sk+t-1}
\times
} \\
\displaystyle{
\times
\sum_{n\in\mb Z_+}
\bigg(
\Big(
\frac{\partial}{\partial u_{i_{\sigma^s(t)}}^{(n)}}
\Big)_{(t)}
A_{
i_{\sigma^s(1)}
\stackrel{t}{\check{\dots}}
i_{\sigma^s(k+1)}
}
(\lambda_{\sigma^s(1)}
\stackrel{t}{\check{\dots}}
\lambda_{\sigma^s(k)})
\lambda_{\sigma^s(t)}^{n}
\bigg)^{\sigma^s}
\,,}
\end{array}
\end{equation}
where $\stackrel{t}{\check{\dots}}$ means that we skip the index $i_{\sigma^s(t)}$
and, as before, 
we need to replace $\lambda_{k+1}$,
when it occurs, by $-\lambda_1-\dots-\lambda_{k}-\partial$,
with $\partial$ acting on the coefficients
(here $\sigma$ denotes the action of the cyclic element $(1,2,\dots,k+1)$).
We can use Lemma \ref{20140606:lem} to rewrite formula \eqref{20140626:eq4b-aff}
as follows
\begin{equation}\label{20140626:eq4b2-aff}
\begin{array}{l}
\displaystyle{
(\delta A)_{i_1\dots i_{k+1}}(\lambda_1,\dots,\lambda_k)
} \\
\displaystyle{
=
\frac{k}{k+1}\sum_{n\in\mb Z_+}
\bigg(
\sum_{s=1}^{k} (-1)^{s+1}
\Big(
\frac{\partial}{\partial u_{i_{s}}^{(n)}}
\Big)_{(s)}
A_{i_{1}
\stackrel{s}{\check{\dots}}
i_{k+1}}
(\lambda_{1},
\stackrel{s}{\check{\dots}},
\lambda_{k})
\lambda_{s}^{n}
} \\
\displaystyle{
+
(-1)^{k}
(-\lambda_1-\dots-\lambda_k-\partial)^{n}
\Big(
\Big(
\frac{\partial}{\partial u_{i_{k+1}}^{(n)}}
\Big)_{(1)}
A_{i_1,\dots,i_k}
(\lambda_{1},\dots,\lambda_{k-1})
\Big)^{\sigma^k}
\bigg)
\,.}
\end{array}
\end{equation}
In particular, for $k=1$, we have, for $F=\big(F_j\big)_{j=1}^\ell\in\mc V^\ell=\Sigma^1(\mc V)$,
\begin{equation}\label{20140626:eq4c-aff}
(\delta F)_{ij}(\lambda)
=
\frac12
\sum_{n\in\mb Z_+}
\Big(
\frac{\partial F_j}{\partial u_i^{(n)}} \lambda^n
- (-\lambda-\partial)^n \Big(\frac{\partial F_i}{\partial u_j^{(n)}}\Big)^\sigma
\Big)
\,.
\end{equation}
For $k=2$, let
$A=\big(A_{ij}(\lambda)\big)_{i,j=1}^\ell\in\Sigma^2(\mc V)$,
namely $A_{ij}(\lambda)\in\mc V^{\otimes2}[\lambda]$
and $(A_{ji}(-\lambda-\partial))^\sigma=-A_{ij}(\lambda)$.
We have
\begin{equation}\label{20140626:eq4d-aff}
\begin{array}{l}
\displaystyle{
(\delta A)_{ijk}(\lambda,\mu)
=
\frac{2}{3}\sum_{n\in\mb Z_+}
\bigg(
\Big(
\frac{\partial}{\partial u_{i}^{(n)}}
\Big)_L
A_{jk}
(\mu)
\lambda^{n}
} \\
\displaystyle{
-
\Big(
\frac{\partial}{\partial u_{j}^{(n)}}
\Big)_R
A_{ik}
(\lambda)
\mu^{n}
+
(-\lambda-\mu-\partial)^{n}
\Big(
\Big(
\frac{\partial}{\partial u_{k}^{(n)}}
\Big)_L
A_{ij}
(\lambda)
\Big)^{\sigma^2}
\bigg)
\,.}
\end{array}
\end{equation}
\begin{remark}\label{rem:beltrami}
In analogy with \cite{BDSK09}, we introduce the \emph{Beltrami} $2$\emph{-fold} $\lambda$\emph{-bracket}
$\ldb-_{\lambda}-\rdb^B$ using equation \eqref{master-infinite} where we let
$\ldb u_i{}_\lambda u_j\rdb^B=\delta_{ij}(1\otimes1)$, for $i,j\in I$ (note that
this $2$-fold $\lambda$-bracket is commutative and does not define a double
Poisson vertex algebra structure on $\mc V$).
Then equation \eqref{20140626:eq4d-aff} can be rewritten as follows
$$
(\delta A)_{ijk}(\lambda,\mu)
=\frac23\left(
\ldb u_i{}_\lambda A_{jk}(\mu)\rdb^B_L
-\ldb u_j{}_\mu A_{ik}(\lambda)\rdb^B_R
+\ldb A_{ij}(\lambda)_{\lambda+\mu}u_k\rdb^B_L
\right)\,.
$$
More generally, generalizing notation \eqref{notation}, we set ($s=1,\ldots,k$ and $a,b_i\in\mc V$)
$$
\ldb a_\lambda b_1\otimes \dots\otimes b_k\rdb^B_{(s)}
=b_1\otimes \dots\otimes b_{s-1}\otimes
\ldb a_\lambda b_s\rdb^B\otimes b_{s+1}\otimes\dots\otimes b_k
\,,
$$
and
$$
\ldb b_1\otimes \dots\otimes b_k{}_\lambda a\rdb^B_{(1)}
=\ldb b_1{}_{\lambda+\partial} a\rdb^B_\to\otimes_1 \left(b_{2}\otimes \dots\otimes b_k\right)
\,.
$$
Then we can rewrite equation \eqref{20140626:eq4b2-aff} as follows:
$$
\begin{array}{l}
\displaystyle{
(\delta A)_{i_1\dots i_{k+1}}(\lambda_1,\dots,\lambda_k)
=
\frac{k}{k+1}
\Big(
\sum_{s=1}^{k} (-1)^{s+1}
\ldb u_{i_{s}}{}_{\lambda_s}
A_{i_{1}
\stackrel{s}{\check{\dots}}
i_{k+1}}
(\lambda_{1},
\stackrel{s}{\check{\dots}},
\lambda_{k})
\rdb_{(s)}^B
}\\
\displaystyle{
+
(-1)^{k}
\ldb A_{i_1,\dots,i_k}
(\lambda_{1},\dots,\lambda_{k-1})
_{\lambda_1+\dots+\lambda_k}
u_{i_{k+1}}
\rdb_{(1)}^B
\Big)
\,.
}
\end{array}
$$

\end{remark}

As an application of \eqref{20140623:eq6b-aff}, we get the following
\begin{corollary}\label{victor:cor2}
\begin{enumerate}[(a)]
\item
A $0$-form $\tint f\in\Omega^0(\mc R_\ell)$ is closed if and only if 
$f\in\mb F+[\mc R_\ell,\mc R_\ell]+\partial\mc R_\ell$.
\item
A $1$-form $F=\big(F_i\big)_{i=1}^\ell\in\mc R_\ell^{\oplus\ell}=\Sigma^1(\mc R_\ell)$ is closed
if and only if
there exists $\tint f\in \quot{\mc R_\ell}{([\mc R_\ell,\mc R_\ell]+\partial\mc R_\ell)}$
such that $F_i=\mult\Big(\frac{\delta f}{\delta u_i}\Big)^\sigma$
for every $i=1,\dots,\ell$.
\item
A $2$-form 
$\alpha=\big(A_{ij}(\lambda)\big)_{i,j=1}^\ell\in\Sigma^2(\mc R_\ell)$
is closed if and only if
there exists $F=\big(F_i\big)_{i=1}^\ell\in \mc R_\ell^{\oplus\ell}$
such that 
$$
A_{ij}(\lambda)
=
\frac12
\sum_{n\in\mb Z_+}
\Big(
\frac{\partial F_j}{\partial u_i^{(n)}} \lambda^n
- (-\lambda-\partial)^n \Big(\frac{\partial F_i}{\partial u_j^{(n)}}\Big)^\sigma
\Big)
\,,
$$
for every $i,j=1,\dots,\ell$.
\end{enumerate}
\end{corollary}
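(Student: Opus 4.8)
The plan is to read off all three statements directly from the acyclicity of the variational complex $(\Omega(\mc R_\ell),\delta)$ recorded in \eqref{20140623:eq6b-aff}, namely $H^k(\Omega(\mc R_\ell),\delta)=\delta_{k,0}\mb F$, combined with the explicit formulas \eqref{20140626:eq4a-aff} and \eqref{20140626:eq4c-aff} for $\delta$ in degrees $0$ and $1$ under the identifications $\Omega^0(\mc R_\ell)=\quot{\mc R_\ell}{(\partial\mc R_\ell+[\mc R_\ell,\mc R_\ell])}$, $\Omega^1(\mc R_\ell)=\mc R_\ell^{\oplus\ell}=\Sigma^1(\mc R_\ell)$ and $\Omega^2(\mc R_\ell)=\Sigma^2(\mc R_\ell)$. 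In each part ``closed'' means lying in $\ker\delta$ and ``exact'' means lying in the image of $\delta$, so the content is simply to unwind what $H^0=\mb F$, $H^1=0$ and $H^2=0$ assert in these concrete terms.

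For part (a), the $0$-form $\tint f$ is closed precisely when it lies in $H^0(\Omega(\mc R_\ell),\delta)$, which by \eqref{20140623:eq6b-aff} is the image of the constants $\mb F$ in $\Omega^0(\mc R_\ell)$. I would then unwind the definition $\Omega^0(\mc R_\ell)=\quot{\mc R_\ell}{(\partial\mc R_\ell+[\mc R_\ell,\mc R_\ell])}$: the coset $\tint f$ equals $\tint c$ for some $c\in\mb F$ exactly when $f-c\in\partial\mc R_\ell+[\mc R_\ell,\mc R_\ell]$, i.e. when $f\in\mb F+[\mc R_\ell,\mc R_\ell]+\partial\mc R_\ell$. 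It is essential here that $H^0$ equals $\mb F$ and not $0$, i.e. that no nonzero scalar lies in $\partial\mc R_\ell+[\mc R_\ell,\mc R_\ell]$; this is exactly the degree-$0$ content of \eqref{20140623:eq6b-aff}.

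For parts (b) and (c) the acyclicity in positive degree gives $\ker\delta=\delta(\Omega^{k-1}(\mc R_\ell))$, so every closed form is exact. In part (b), $F=(F_i)_{i=1}^\ell\in\Sigma^1(\mc R_\ell)$ is closed iff $F=\delta(\tint f)$ for some $\tint f\in\Omega^0(\mc R_\ell)$, and \eqref{20140626:eq4a-aff} identifies the right-hand side with $\big(\mult(\frac{\delta f}{\delta u_i})^\sigma\big)_{i=1}^\ell$, which is the asserted description. Similarly in part (c), $\alpha=(A_{ij}(\lambda))_{i,j=1}^\ell\in\Sigma^2(\mc R_\ell)$ is closed iff $\alpha=\delta F$ for some $F\in\Sigma^1(\mc R_\ell)$, and \eqref{20140626:eq4c-aff} writes this out as the displayed formula for $A_{ij}(\lambda)$. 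In both cases the reverse implication is immediate from $\delta^2=0$: a form of the stated shape is by construction in the image of $\delta$ and hence closed.

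The argument is thus purely a transcription, with no serious obstacle once \eqref{20140623:eq6b-aff} and the explicit differentials \eqref{20140626:eq4a-aff}, \eqref{20140626:eq4c-aff} are in hand; the only point needing a moment's care is the degree-$0$ case, where one must track the constants to see that the kernel in $\Omega^0(\mc R_\ell)$ is $\mb F+[\mc R_\ell,\mc R_\ell]+\partial\mc R_\ell$ rather than merely $[\mc R_\ell,\mc R_\ell]+\partial\mc R_\ell$.
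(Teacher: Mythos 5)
Your proof is correct and is exactly the argument the paper intends: the corollary is stated there as an immediate application of the acyclicity \eqref{20140623:eq6b-aff}, unwound via the identifications of $\Omega^k(\mc R_\ell)$ with $\Sigma^k(\mc R_\ell)$ and the explicit formulas \eqref{20140626:eq4a-aff}, \eqref{20140626:eq4c-aff} for $\delta$ in degrees $0$ and $1$. Your extra care in degree $0$ (that $H^0=\mb F$ pins the kernel down to the image of the constants) is the right point to flag, and nothing further is needed.
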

\begin{remark}\label{20140711:rem3}
For $F\in \mc V^{\oplus\ell}=\Sigma^1(\mc V)$, define the \emph{Frechet derivative}
$$
D_F(\lambda)
=
\left(
\sum_{n\in\mb Z_+}\frac{\partial F_i}{\partial u_j^{(n)}} \lambda^n
\right)_{i,j=1}^{\ell}
\in\Mat_{\ell\times\ell}(\mc   V \otimes\mc V)[\lambda]
\,.
$$
Therefore, recalling Remark \ref{rem:dagger}, we see that $\delta F=0$ if and only if
$D_F(\partial)$ is selfadjoint.

Note also that, using the Beltrami $2$-fold $\lambda$-bracket introduced in Remark \ref{rem:beltrami}
we have
$$
D_F(\lambda)_{ij}=\ldb u_j{}_\lambda F_i\rdb^B
\,.
$$
Furthermore, for $f\in\mc V$, we have ($i\in I$)
$$
\left(\frac{\delta f}{\delta u_i}\right)^\sigma
=\ldb f_{\lambda}u_i\rdb^B\big|_{\lambda=0}
\,.
$$
\end{remark}

\subsection{Poisson vertex algebra structure on \texorpdfstring{$\mc V_m$}{Vm}}
\label{sec:V_m}
%
%
%
%
%
In this subsection for each double PVA $\mc V$ and a positive integer $m$ we construct a Poisson
vertex algebra $\mc V_m$. This construction is similar to the construction of a Poisson algebra $V_m$,
associated to a double Poisson algebra $V$ in \cite{VdB08}.

The construction of $V_m$ is motivated by the following considerations.
Let $V$ be a unital finitely generated associative algebra, and let $Y_m$ be the affine algebraic
variety of all homomorphisms from $V$ to $\Mat_{m\times m}\mb F$.
Let $V_m:=\mb F[Y_m]$ be the algebra of polynomial functions on $Y_m$.
We have a natural map $V\to\Mat_{m\times m}V_m$, which induces the map
$$
\varphi_m:\quot{V}{[V,V]}\to V_m
\,.
$$
It was shown in \cite{VdB08} that if $V$ is a double Poisson algebra with a $2$-fold bracket
$\ldb-,-\rdb$, then $V_m$ is a Poisson algebra with the bracket
$$
\{a_{ij},b_{hk}\}=\ldb a,b\rdb'_{hj}\ldb a,b\rdb''_{ik}
\,.
$$
It is easy to check that $\varphi_m$ is then a Lie algebra homomorphism.

Note that by the above interpretation of the algebra $V_m$, it follows that for
$V=R_N=\mb F\langle x_1,\dots,x_N\rangle$ the algebra $V_m$ is a polynomial algebra
as well (on generators $(x_s)_{ij}$, $s=1,\dots,N$, $i,j=1,\dots,m$).

Given a differential algebra $\mc V$ we define for each positive integer $m$
a commutative differential algebra $\mc V_m$,
as the commutative algebra generated by the symbols $a_{ij}$,
for $a\in\mc V$ and $i,j=1,\ldots,m$, subject to the
relations ($k\in\mb F$, $a,b\in\mc V$):
\begin{equation}\label{20130917:eq2}
(ka)_{ij}=ka_{ij}\,,
\qquad
(a+b)_{ij}=a_{ij}+b_{ij}\,,
\qquad
(ab)_{ij}=\sum_{k=1}^ma_{ik}b_{kj}\,.
\end{equation}
The derivation (which we still denote by $\partial$) on $\mc V_m$ is defined by
$$
\partial( a_{ij})=(\partial a)_{ij}\,.
$$
\begin{example}\label{20131217:exa1}
Let $\mc V=\mc R_I=\mb F\langle u_i^{(n)}\mid i\in I,n\in\mb Z_+\rangle$
be the algebra of non-commutative differential
polynomials in the variables $u_i$, $i\in I$. Then, by the above discussion,
for every $m\geq1$,
$$
\mc V_m=\mb F[u_{i,ab}^{(n)}\mid i\in I,a,b\in\{1,\dots,m\},n\in\mb Z_+]\,,
$$
where $u_{i,ab}^{(n)}=(u_i^{(n)})_{ab}$,
is the (commutative) algebra of differential polynomials in the variables
$u_{i,ab}$.
\end{example}
\begin{proposition}\label{20131217:prop1}
Let $\mc V$ be a differential algebra endowed with a $2$-fold $\lambda$-bracket
$\ldb-_\lambda-\rdb$, written as
$\ldb a_\lambda b\rdb=\sum_{n\in\mb Z_+}\left((a_nb)'\otimes (a_nb)''\right)\lambda^n$,
for all $a,b\in\mc V$. Then, for every $m\geq1$,
we have a well-defined $\lambda$-bracket on $\mc V_m$, given by
($a_{ij},b_{hk}\in\mc V_m$)
\begin{equation}\label{20130917:eq1}
\{a_{ij}{}_\lambda b_{hk}\}=\sum_{n\in\mb Z_+}(a_nb)'_{hj}(a_nb)''_{ik}\lambda^n\,,
\end{equation}
and extended to a bilinear map
$\{\,\cdot_\lambda\,\cdot\}:\mc V_m\otimes\mc V_m\to\mc V_m[\lambda]$
by sesquilinearity \eqref{eq:0.1} and the left and right Leibniz rules \eqref{eq:0.2} and \eqref{eq:0.3}.
This $\lambda$-bracket is skewsymmetric (i.e., it satisfies \eqref{eq:0.4})
if the $2$-fold $\lambda$-bracket is.
\end{proposition}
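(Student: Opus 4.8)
The plan is to check that formula \eqref{20130917:eq1}, prescribed on the generators $a_{ij}$ of $\mc V_m$, is compatible with the defining relations \eqref{20130917:eq2} in each argument, and then to invoke the standard master--formula construction of a commutative $\lambda$-bracket (as in \cite{BDSK09}) for the extension by sesquilinearity and Leibniz rules. It is convenient to introduce, for fixed indices, the linear \emph{entry-extraction} map $e_{hj,ik}\colon\mc V\otimes\mc V\to\mc V_m$, $\alpha\otimes\beta\mapsto\alpha_{hj}\beta_{ik}$, so that \eqref{20130917:eq1} reads $\{a_{ij}{}_\lambda b_{hk}\}=e_{hj,ik}\ldb a_\lambda b\rdb$. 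Three properties of $e$ will be used throughout: it is $\partial$-equivariant, $e_{hj,ik}(\partial A)=\partial(e_{hj,ik}(A))$, which follows from $\partial(a_{ij})=(\partial a)_{ij}$ and the Leibniz rule for $\partial$ on the commutative algebra $\mc V_m$; it converts the flip into an index swap, $e_{hj,ik}(A^\sigma)=e_{ik,hj}(A)$; and it turns the inner placements $\star_1$ (see \eqref{20140605:eq1}) into ordinary products in $\mc V_m$ via the multiplicativity relation $(\alpha\beta)_{ij}=\sum_p\alpha_{ip}\beta_{pj}$.

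First I would verify compatibility with \eqref{20130917:eq2}. The $\mb F$-linearity and additivity relations are immediate from the bilinearity of $\ldb-_\lambda-\rdb$ and of $e$. The real content is multiplicativity. For the first argument I would compare $\{(ab)_{ij}{}_\lambda c_{hk}\}=e_{hj,ik}\ldb (ab)_\lambda c\rdb$, expanded by the right Leibniz rule \eqref{20140702:eq6b} of the $2$-fold $\lambda$-bracket, with $\sum_p\{a_{ip}b_{pj}{}_\lambda c_{hk}\}$, expanded by the right Leibniz rule \eqref{eq:0.3} in $\mc V_m$. Writing $\ldb a_\lambda c\rdb=\sum_r A_r\lambda^r$ and $\ldb b_\lambda c\rdb=\sum_s B_s\lambda^s$ in Sweedler notation, the first term $\ldb a_{\lambda+\partial}c\rdb_\to\star_1 b$ of \eqref{20140702:eq6b} becomes, after applying $e_{hj,ik}$ and using $\partial$-equivariance, $\sum_{r,p}(A_r')_{hp}\big((\lambda+\partial)^r b\big)_{pj}(A_r'')_{ik}$, which equals $\sum_p\{a_{ip}{}_{\lambda+\partial}c_{hk}\}_\to b_{pj}$ after using commutativity of $\mc V_m$; likewise the second term $(e^{\partial\partial_\lambda}a)\star_1\ldb b_\lambda c\rdb$ yields $\sum_{s,p}(B_s')_{hj}\big((\lambda+\partial)^s a\big)_{ip}(B_s'')_{pk}$, matching $\sum_p\{b_{pj}{}_{\lambda+\partial}c_{hk}\}_\to a_{ip}$. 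Multiplicativity in the second argument is checked analogously, now using the left Leibniz rule \eqref{20140702:eq6b} against \eqref{eq:0.2}; there no $\partial$-shift occurs and the verification is shorter.

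Sesquilinearity on generators, namely $\{(\partial a)_{ij}{}_\lambda b_{hk}\}=-\lambda\{a_{ij}{}_\lambda b_{hk}\}$ and $\{a_{ij}{}_\lambda(\partial b)_{hk}\}=(\lambda+\partial)\{a_{ij}{}_\lambda b_{hk}\}$, then follows at once from \eqref{20140702:eq4b} together with $\partial(a_{ij})=(\partial a)_{ij}$ and the $\partial$-equivariance of $e$. Having checked that \eqref{20130917:eq1} respects every defining relation in each argument (equivalently, that these relations generate a bracket-ideal, which by the Leibniz rules reduces exactly to the generator-level checks above), the prescription descends to $\mc V_m$, and extending it by \eqref{eq:0.1}--\eqref{eq:0.3} is precisely the master-formula construction of a commutative $\lambda$-bracket, which is guaranteed by \cite{BDSK09} to produce a genuine $\lambda$-bracket on the differential algebra $\mc V_m$.

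For the final assertion, assume $\ldb-_\lambda-\rdb$ is skewsymmetric \eqref{eq:skew2}. On generators I would compute, using the flip property of $e$ and its $\partial$-equivariance, $\{a_{ij}{}_\lambda b_{hk}\}=e_{hj,ik}\ldb a_\lambda b\rdb=-e_{hj,ik}\big(\ldb b_{-\lambda-\partial}a\rdb^\sigma\big)=-e_{ik,hj}\ldb b_{-\lambda-\partial}a\rdb=-\{b_{hk}{}_{-\lambda-\partial}a_{ij}\}$, which is \eqref{eq:0.4} on generators; since a master-formula $\lambda$-bracket is skewsymmetric as soon as it is skewsymmetric on generators (again \cite{BDSK09}), \eqref{eq:0.4} holds on all of $\mc V_m$. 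The main obstacle is the multiplicativity check in the first argument: one must see that the $\star_1$-placements and the $e^{\partial\partial_\lambda}$/arrow $\partial$-shifts of the noncommutative right Leibniz rule \eqref{20140702:eq6b} collapse precisely onto the ordinary products and $(\lambda+\partial)$-shifts of the commutative right Leibniz rule \eqref{eq:0.3} once matrix entries are taken, and it is here that $\partial$-equivariance of $e$ and commutativity of $\mc V_m$ do the essential work.
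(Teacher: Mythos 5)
Your proposal is correct and follows essentially the same route as the paper's own proof: it verifies compatibility of \eqref{20130917:eq1} with the defining relations \eqref{20130917:eq2} by matching the non-commutative Leibniz rules \eqref{20140702:eq6b} entry-by-entry against the commutative ones \eqref{eq:0.2}--\eqref{eq:0.3} (using commutativity of $\mc V_m$ and $\partial(a_{ij})=(\partial a)_{ij}$), and then checks skewsymmetry on generators, exactly as the paper does, with your map $e_{hj,ik}$ serving only as notational packaging for the paper's coordinate computations. Your explicit verification that the generator formula is consistent with sesquilinearity (needed because $(\partial a)_{ij}$ is both a generator and $\partial(a_{ij})$) is a point the paper leaves implicit, and is a sound addition rather than a divergence.
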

\begin{proof}
First we need to verify that the map
$\{\,\cdot_\lambda\,\cdot\}:\mc V_m\otimes\mc V_m\to\mc V_m[\lambda]$ is well
defined, that is, it is compatible with the defining relations \eqref{20130917:eq2}
of $\mc V_m$. Clearly, the RHS of \eqref{20130917:eq1} is linear in $a$ and $b$
since $\ldb-_{\lambda}-\rdb$ is a linear map.
Hence, we are left to show that
\begin{equation}\label{toprove1}
\{a_{ij}{}_{\lambda}(bc)_{hk}\}
=\sum_{l=1}^N\{a_{ij}{}_{\lambda}b_{hl}c_{lk}\}
\end{equation}
and
\begin{equation}\label{toprove2}
\{(bc)_{hk}{}_{\lambda}a_{ij}\}
=\sum_{l=1}^N\{b_{hl}c_{lk}{}_{\lambda}a_{ij}\}\,.
\end{equation}
Using the definition
of the $\lambda$-bracket \eqref{20130917:eq1} and \eqref{eq:0.2},
we have
$$
RHS\eqref{toprove1}
=\sum_{l=1}^N\sum_{n\in\mb Z_+}\left(
(a_nb)'_{hj}(a_nb)''_{il}c_{lk}+(a_nc)'_{lj}(a_nc)''_{ik}b_{hl}
\right)\lambda^n\,.
$$
On the other hand, by the first formula in \eqref{20140702:eq6b},
$$
\ldb a_\lambda bc\rdb
=b\ldb a_\lambda c\rdb+\ldb a_\lambda b\rdb c
=\sum_{n\in\mb Z_+}\left(
b(a_nc)'\otimes(a_nc)''+(a_nb)'\otimes(a_nb)''c
\right)\lambda^n.
$$
Hence, using \eqref{20130917:eq1}, we have
\begin{align*}
LHS\eqref{toprove1}
&=\sum_{n\in\mb Z_+}\left(
(b(a_nc)')_{hj}(a_nc)''_{ik}+(a_nb)'_{hj}((a_nb)''c)_{ik}
\right)\lambda^n\\
&=\sum_{l=1}^N\sum_{n\in\mb Z_+}\left(
b_{hl}(a_nc)'_{lj}(a_nc)''_{ik}+(a_nb)'_{hj}(a_nb)''_{il}c_{lk}
\right)\lambda^n\,,
\end{align*}
where in the second identity we used the third relation in \eqref{20130917:eq2}.
Hence, the identity \eqref{toprove1} follows by the commutativity of $\mc V_m$.
For \eqref{toprove2}, using the right Leibniz rule \eqref{eq:0.3}
and the definition
of the $\lambda$-bracket \eqref{20130917:eq1} we have
$$
RHS\eqref{toprove2}
=\sum_{l=1}^N\sum_{n\in\mb Z_+}\left(
(b_na)'_{il}(b_na)''_{hj}(\lambda+\partial)^nc_{lk}
+(c_na)'_{ik}(c_na)''_{lj}(\lambda+\partial)^nb_{hl}
\right)\,.
$$
On the other hand, by the second formula in \eqref{20140702:eq6b},
$$
\begin{array}{l}
\displaystyle{
\ldb bc_\lambda a\rdb
=\ldb b_{\lambda+\partial}a\rdb_{\rightarrow}\star_1 c
+(e^{\partial\frac{d}{d\lambda}}b)\star_1\ldb c_{\lambda}a\rdb
}
\\
\displaystyle{
=\sum_{n\in\mb Z_+}
\left(
(b_na)'(\lambda+\partial)^nc\otimes(b_na)''
+(c_na)'\otimes\left((\lambda+\partial)^nb\right)(c_na)''
\right)\,.
}
\end{array}
$$
Hence, using \eqref{20130917:eq1}, we have
\begin{align*}
LHS\eqref{toprove2}
&=\sum_{n\in\mb Z_+}
\left(
\left((b_na)'(\lambda+\partial)^nc\right)_{ik}(b_na)''_{hj}
+(c_na)'_{ik}\left(((\lambda+\partial)^nb)(c_na)''\right)_{hj}
\right)\\
&=\sum_{n\in\mb Z_+}
\left(
(b_na)'_{il}(\lambda+\partial)^nc_{lk}(b_na)''_{hj}
+(c_na)'_{ik}((\lambda+\partial)^nb_{hl})(c_na)''_{lj}
\right)\,,
\end{align*}
where in the second identity we used the third relation in \eqref{20130917:eq2}.
Again, the identity \eqref{toprove2} follows by the commutativity of $\mc V_m$.

Let us prove skewsymmetry of the $\lambda$-bracket given by equation \eqref{20130917:eq1}
provided that the skewsymmetry \eqref{eq:skew2} of the $2$-fold $\lambda$-bracket holds.
We have to show
that
\begin{equation}\label{toprove3}
\{a_{ij}{}_{\lambda}b_{hk}\}=-\{b_{hk}{}_{-\lambda-\partial}a_{ij}\}\,,
\end{equation}
for all $a_{ij},b_{hk}\in\mc V_m$.
We can rewrite \eqref{eq:skew2} as the following identity
$$
\sum_{n\in\mb Z_+}\left((a_nb)'\otimes(a_nb)''\right)\lambda^n
=-\sum_{n\in\mb Z_+}(-\lambda-\partial)^n\left((b_na)''\otimes(b_na)'\right)\,.
$$
Hence, we have
\begin{equation}\label{20130921:eq1}
\sum_{n\in\mb Z_+}(a_nb)_{hj}'(a_nb)_{ik}''\lambda^n
=-\sum_{n\in\mb Z_+}(-\lambda-\partial)^n(b_na)_{hj}''(b_na)_{ik}'\,,
\end{equation}
for all $i,j,h,k=1,\ldots,m$. Using \eqref{20130917:eq1} and the commutativity of $\mc V_m$, it follows
that \eqref{20130921:eq1} is equivalent to \eqref{toprove3}, thus concluding the proof.
\end{proof}
\begin{proposition}\label{20131217:prop2}
Let $\mc V$ be a differential algebra with a $2$-fold $\lambda$-bracket $\ldb-_\lambda-\rdb$.
Let
$$
\ldb a_\lambda b_\mu c\rdb
=\sum_{p,q\in\mb Z_+}\left( h_{pq}'\otimes h_{pq}''\otimes h_{pq}'''\right)\lambda^p\mu^q\,,
$$
and
$$
\ldb b_\mu a_\lambda c\rdb
=\sum_{p,q\in\mb Z_+}\left( g_{pq}'\otimes g_{pq}''\otimes g_{pq}'''\right)\lambda^p\mu^q\,,
$$
be the associated, by \eqref{eq:triple1}, $3$-fold $\lambda$-bracket.
Then, for any $m\geq1$, we have ($a_{ij}, b_{hk}, c_{ln}\in\mc V_m$):
\begin{align}
\begin{split}\label{20130921:eq2}
\{ a_{ij}{}_\lambda \{b_{hk}{}_\mu c_{ln}\} \}
-\{ b_{hk}{}_\mu \{a_{ij}{}_\lambda c_{ln}\} \}
-\{ \{a_{ij}{}_\lambda b_{hk}\}_{\lambda+\mu} c_{ln}\}\\
=\sum_{p,q\in\mb Z_+}\left(\left(h_{pq}'\right)_{lj}\left(h_{pq}''\right)_{ik}\left(h_{pq}'''\right)_{hn}
-\left(g_{pq}'\right)_{lk}\left(g_{pq}''\right)_{hj}\left(g_{pq}'''\right)_{in}\right)\lambda^p\mu^q\,.
\end{split}
\end{align}
\end{proposition}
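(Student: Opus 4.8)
The identity \eqref{20130921:eq2} is the $\mc V_m$-counterpart of the quasi-Jacobi identity \eqref{eq:quasijacobi} of Lemma \ref{prop:1bis}, and the plan is to prove it by the same strategy used there: expand each of the three terms on the LHS using the defining formula \eqref{20130917:eq1} together with the left and right Leibniz rules \eqref{eq:0.2}--\eqref{eq:0.3} of the $\lambda$-bracket on the commutative algebra $\mc V_m$, and then reassemble the result using the definition \eqref{eq:triple1} of the $3$-fold bracket. To organize the bookkeeping I would introduce, for $Z=Z'\otimes Z''\otimes Z'''\in\mc V^{\otimes3}[\lambda,\mu]$, the two index contractions $\langle Z\rangle_1:=Z'_{lj}Z''_{ik}Z'''_{hn}$ and $\langle Z\rangle_2:=Z'_{lk}Z''_{hj}Z'''_{in}$, so that the RHS of \eqref{20130921:eq2} is exactly $\langle\ldb a_\lambda b_\mu c\rdb\rangle_1-\langle\ldb b_\mu a_\lambda c\rdb\rangle_2$.

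First I would treat the two ``double-nested'' terms. Writing $\ldb b_\mu c\rdb=B'\otimes B''$, one has $\{b_{hk}{}_\mu c_{ln}\}=B'_{lk}B''_{hn}$ by \eqref{20130917:eq1}, and applying $\{a_{ij}{}_\lambda-\}$ via the left Leibniz rule \eqref{eq:0.2} splits it into the bracket with $B'_{lk}$ times $B''_{hn}$ plus the bracket with $B''_{hn}$ times $B'_{lk}$. Using \eqref{20130917:eq1} once more and matching tensor slots, the first summand is $\langle\ldb a_\lambda\ldb b_\mu c\rdb\rdb_L\rangle_1$ and the second is $\langle\ldb a_\lambda\ldb b_\mu c\rdb\rdb_R\rangle_2$; thus the first LHS term equals $\langle\ldb a_\lambda\ldb b_\mu c\rdb\rdb_L\rangle_1+\langle\ldb a_\lambda\ldb b_\mu c\rdb\rdb_R\rangle_2$. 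The same computation with $(a,i,j,\lambda)$ and $(b,h,k,\mu)$ interchanged shows that the second LHS term equals $\langle\ldb b_\mu\ldb a_\lambda c\rdb\rdb_L\rangle_2+\langle\ldb b_\mu\ldb a_\lambda c\rdb\rdb_R\rangle_1$.

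For the third term $\{\{a_{ij}{}_\lambda b_{hk}\}_{\lambda+\mu}c_{ln}\}$ I would write $\ldb a_\lambda b\rdb=A'\otimes A''$, so that $\{a_{ij}{}_\lambda b_{hk}\}=A'_{hj}A''_{ik}$, and then apply the right Leibniz rule \eqref{eq:0.3} to the product $A'_{hj}A''_{ik}$, carefully tracking the shift $\lambda+\mu\rightsquigarrow\lambda+\mu+\partial$ and the arrow (i.e. $\partial$ acting on the appropriate factor). Matching slots as before identifies this term with $\langle\ldb\ldb a_\lambda b\rdb_{\lambda+\mu}c\rdb_L\rangle_1+\langle\ldb\ldb a_\lambda b\rdb_{\lambda+\mu}c\rdb_R\rangle_2$. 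Comparing with the expansion of the RHS obtained by substituting \eqref{eq:triple1} into $\langle\ldb a_\lambda b_\mu c\rdb\rangle_1-\langle\ldb b_\mu a_\lambda c\rdb\rangle_2$, all contributions cancel in pairs except that one is left to check the single identity $\langle\ldb\ldb a_\lambda b\rdb_{\lambda+\mu}c\rdb_R\rangle_2=-\langle\ldb\ldb b_\mu a\rdb_{\lambda+\mu}c\rdb_L\rangle_2$.

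This last reduction is the main obstacle, and it is where skewsymmetry enters. I would use the identity $\ldb A_{\lambda+\mu}c\rdb_R=\ldb A^\sigma{}_{\lambda+\mu}c\rdb_L$ (immediate from \eqref{notation}) with $A=\ldb a_\lambda b\rdb$, together with the skewsymmetry \eqref{eq:skew2} in the form $\ldb a_\lambda b\rdb^\sigma=-\ldb b_{-\lambda-\partial}a\rdb$. The delicate point is the reconciliation of the two occurrences of $\partial$: the $-\lambda-\partial$ produced by skewsymmetry (acting on the coefficients of $\ldb b_\mu a\rdb$) must be absorbed, via the sesquilinearity of the outer bracket (Lemma \ref{20130916:lem1}(a)) and the shift $\lambda+\mu+\partial$ coming from $\ldb-_{\lambda+\mu}c\rdb_L$, so as to recover precisely $\ldb\ldb b_\mu a\rdb_{\lambda+\mu}c\rdb_L$. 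This is the same bookkeeping that produces the skewsymmetry \eqref{eq:skew3} of the $3$-fold bracket in the proof of Lemma \ref{lem:triple_bracket} and the replacement step in the proof of Lemma \ref{prop:1bis}; once it is carried out, the commutativity of $\mc V_m$ makes the remaining index manipulations routine, and \eqref{20130921:eq2} follows.
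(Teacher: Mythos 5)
Your proposal is correct and follows essentially the same route as the paper's proof: expand the three $\mc V_m$-brackets on the LHS via \eqref{20130917:eq1} and the Leibniz rules, expand the RHS via \eqref{eq:triple1}, and resolve the one non-matching pair of terms by the identity $\ldb\ldb a_\lambda b\rdb_{\lambda+\mu}c\rdb_R=-\ldb\ldb b_\mu a\rdb_{\lambda+\mu}c\rdb_L$, obtained from Lemma \ref{20130916:lem1}(c), skewsymmetry \eqref{eq:skew2} and sesquilinearity (Lemma \ref{20130916:lem1}(a)) — exactly the ingredients the paper invokes, the only difference being that the paper applies this substitution upfront inside the expansion of $\ldb b_\mu a_\lambda c\rdb$ while you defer it to the final cancellation step.
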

\begin{proof}
Expanding in powers of $\lambda$ and $\mu$ equation \eqref{eq:triple1} using \eqref{notation}, we get
\begin{align*}
\ldb a_\lambda b_\mu c\rdb
&=\sum_{p,q\in\mb Z_+}
\left( \left(a_p(b_qc)'\right)'\otimes\left(a_p(b_qc)'\right)''\otimes\left(b_qc\right)''\right.\\
&\left.-\left(a_pc\right)'\otimes\left(b_q(a_pc)''\right)'\otimes\left(b_q(a_pc)''\right)''
\right)\lambda^p\mu^q\\
&-\sum_{p,q\in\mb Z_+}\left(
\left((a_pb)'_qc\right)'\otimes(\lambda+\mu+\partial)^q\left(a_pb\right)''\otimes\left((a_pb)'_qc\right)''
\right)\lambda^p\,.
\end{align*}
Exchanging $a$ with $b$ and $\lambda$ with $\mu$,
we get
\begin{align*}
\ldb b_\mu a_\lambda c\rdb
&=\sum_{p,q\in\mb Z_+}
\left( \left(b_q(a_pc)'\right)'\otimes\left(b_q(a_pc)'\right)''\otimes\left(a_pc\right)''\right.\\
&\left.-\left(b_qc\right)'\otimes\left(a_p(b_qc)''\right)'\otimes\left(a_p(b_qc)''\right)''
\right)\lambda^p\mu^q\\
&+\sum_{p,q\in\mb Z_+}\left(
\left((a_pb)''_qc\right)'\otimes(\lambda+\mu+\partial)^q\left(a_pb\right)'\otimes\left((a_pb)''_qc\right)''
\right)\lambda^p\,,
\end{align*}
where in the last term we used skewsymmetry \eqref{eq:skew2} and Lemma \ref{20130916:lem1}(a).
From the above equations it follows that
\begin{align}
\begin{split}\label{20130921:eq3}
RHS\eqref{20130921:eq2}
&=\sum_{p,q\in\mb Z_+}
\left( \left(a_p(b_qc)'\right)_{lj}'\left(a_p(b_qc)'\right)_{ik}''\left(b_qc\right)_{hn}''\right.\\
&\left.-\left(a_pc\right)'_{lj}\left(b_q(a_pc)''\right)_{ik}'\left(b_q(a_pc)''\right)_{hn}''
\right)\lambda^p\mu^q\\
&-\sum_{p,q\in\mb Z_+}
\left(
\left((a_pb)'_qc\right)_{lj}'\left((\lambda+\mu+\partial)^q\left(a_pb\right)_{ik}''\right)
\left((a_pb)'_qc\right)''_{hn}\right)\lambda^p\\
&-\sum_{p,q\in\mb Z_+}
\left( \left(b_q(a_pc)'\right)_{lk}'\left(b_q(a_pc)'\right)_{hj}''\left(a_pc\right)_{in}''\right.\\
&\left.-\left(b_qc\right)'_{lk}\left(a_p(b_qc)''\right)_{hj}'\left(a_p(b_qc)''\right)_{in}''\right)\lambda^p\mu^q\\
&-\sum_{p,q\in\mb Z_+}
\left(
\left((a_pb)''_qc\right)'_{lk}\left((\lambda+\mu+\partial)^q\left(a_pb\right)_{hj}'\right)
\left((a_pb)''_qc\right)''_{in}\right)\lambda^p\,.
\end{split}
\end{align}
On the other hand, using \eqref{20130917:eq1} and left and right Leibniz rules \eqref{eq:0.4}
and \eqref{eq:0.5}, we get
\begin{align}
\begin{split}\label{20130921:eq4.1}
&\{ a_{ij}{}_\lambda \{b_{hk}{}_\mu c_{ln}\} \}
=\sum_{p,q\in\mb Z_+}
\{a_{ij}{}_{\lambda}\left(b_qc\right)_{lk}'\left(b_qc\right)''_{hn}\}\mu^q\\
&=\sum_{p,q\in\mb Z_+}
\left( \left(a_p(b_qc)'\right)_{lj}'\left(a_p(b_qc)'\right)_{ik}''\left(b_qc\right)_{hn}''\right.\\
&\left.\qquad\qquad
-\left(b_qc\right)'_{lk}\left(a_p(b_qc)''\right)_{hj}'\left(a_p(b_qc)''\right)_{in}''\right)\lambda^p\mu^q\,;
\end{split}\\
\begin{split}\label{20130921:eq4.2}
&-\{ b_{hk}{}_\mu \{a_{ij}{}_\lambda c_{ln}\} \}
=-\sum_{p,q\in\mb Z_+}
\{b_{hk}{}_{\mu}\left(a_pc\right)_{lj}'\left(a_pc\right)''_{in}\}\lambda^p\\
&=-\sum_{p,q\in\mb Z_+}
\left( \left(b_q(a_pc)'\right)_{lk}'\left(b_q(a_pc)'\right)_{hj}''\left(a_pc\right)_{in}''\right.\\
&\left.\qquad\qquad
+\left(a_pc\right)'_{lj}\left(b_q(a_pc)''\right)_{ik}'\left(b_q(a_pc)''\right)_{hn}''
\right)\lambda^p\mu^q\,;
\end{split}\\
\begin{split}\label{20130921:eq4.3}
&-\{ \{a_{ij}{}_\lambda b_{hk}\}_{\lambda+\mu} c_{ln}\}
=-\sum_{p,q\in\mb Z_+}
\{\left(a_pb\right)_{hj}'\left(a_pb\right)_{ik}''{}_{\lambda+\mu}c_{ln}\}\lambda^p\\
&=-\sum_{p,q\in\mb Z_+}
\left(\left((a_pb)'_qc\right)'_{lj}\left((a_pb)'_qc\right)''_{hn}
(\lambda+\mu+\partial)^q\left(a_pb\right)''_{ik}\right.\\
&\left.\qquad\qquad
+\left((a_pb)''_qc\right)'_{lk}\left((a_pb)''_qc\right)''_{in}
(\lambda+\mu+\partial)^q\left(a_pb\right)'_{hj}
\right)\lambda^p\,.
\end{split}
\end{align}
Adding up \eqref{20130921:eq4.1}, \eqref{20130921:eq4.2} and \eqref{20130921:eq4.3}
and using the commutativity of $\mc V_m$
we see that the RHS of \eqref{20130921:eq1} is equal to \eqref{20130921:eq3},
thus concluding
the proof.
\end{proof}
\begin{theorem}\label{20130921:cor1}
If $\mc V$ is a double Poisson vertex algebra, then, for any positive integer $m$,
$\mc V_m$ is a Poisson vertex
algebra with $\lambda$-bracket defined by \eqref{20130917:eq1}.
\end{theorem}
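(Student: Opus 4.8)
The plan is to assemble the theorem from the three preparatory results already established in this subsection. Recall that Proposition~\ref{20131217:prop1} produces, for each $m\geq1$, a well-defined $\lambda$-bracket $\{-_\lambda-\}$ on $\mc V_m$ via \eqref{20130917:eq1}, extended to all of $\mc V_m$ by sesquilinearity and the Leibniz rules \eqref{eq:0.1}--\eqref{eq:0.3}. Thus $\mc V_m$ automatically satisfies the defining structure of a PVA \emph{except} for skewsymmetry \eqref{eq:0.4} and the Jacobi identity \eqref{eq:0.5}, which must be verified. The remaining two axioms are precisely what the hypotheses supply, so the proof is essentially a bookkeeping reduction to generators.

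First I would dispose of skewsymmetry. Since $\mc V$ is a double PVA, its $2$-fold $\lambda$-bracket is skewsymmetric in the sense of \eqref{eq:skew2}; Proposition~\ref{20131217:prop1} already states that under this hypothesis the induced $\lambda$-bracket on $\mc V_m$ is skewsymmetric. Hence \eqref{eq:0.4} holds on $\mc V_m$ and nothing further is needed.

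Next I would establish the Jacobi identity \eqref{eq:0.5}. The key point is that, because $\mc V$ is a double PVA, the $3$-fold $\lambda$-bracket $\ldb a_\lambda b_\mu c\rdb$ defined by \eqref{eq:triple1} \emph{vanishes identically}: this is exactly the Jacobi identity \eqref{eq:jacobi2} for $\mc V$. Therefore, in the notation of Proposition~\ref{20131217:prop2}, all the coefficients $h_{pq}$ are zero, and likewise all $g_{pq}$ are zero (the latter being the coefficients of $\ldb b_\mu a_\lambda c\rdb$, which vanishes for the same reason). Consequently the right-hand side of \eqref{20130921:eq2} is zero, and that proposition yields
\begin{equation*}
\{ a_{ij}{}_\lambda \{b_{hk}{}_\mu c_{ln}\} \}
-\{ b_{hk}{}_\mu \{a_{ij}{}_\lambda c_{ln}\} \}
-\{ \{a_{ij}{}_\lambda b_{hk}\}_{\lambda+\mu} c_{ln}\}
=0
\end{equation*}
for all generators $a_{ij}, b_{hk}, c_{ln}$ of $\mc V_m$. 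This is precisely the Jacobi identity \eqref{eq:0.5} evaluated on generators.

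The main obstacle, and the only genuine step of substance, is promoting the Jacobi identity from generators to all of $\mc V_m$. Here I would invoke the standard fact (valid for any PVA, and proved as in \cite{BDSK09}) that the Jacobi identity is a consequence of sesquilinearity, the left and right Leibniz rules, and its validity on a generating set: since both sides of \eqref{eq:0.5} are, for fixed first two arguments, derivations in the third argument and behave compatibly under the Leibniz rules in the first two, the identity propagates from generators to arbitrary products. Because $\mc V_m$ is generated as a commutative differential algebra by the $a_{ij}$, and the $\lambda$-bracket was extended by exactly these rules in Proposition~\ref{20131217:prop1}, the validity on generators forces validity everywhere. With skewsymmetry and the full Jacobi identity in hand, $\mc V_m$ satisfies all the axioms \eqref{eq:0.1}--\eqref{eq:0.5} of a Poisson vertex algebra, completing the proof.
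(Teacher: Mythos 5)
Your proof is correct and follows essentially the same route as the paper: Proposition \ref{20131217:prop1} supplies well-definedness and skewsymmetry, and the vanishing of the $3$-fold $\lambda$-bracket \eqref{eq:triple1} (which is exactly the double PVA Jacobi identity) combined with Proposition \ref{20131217:prop2} gives the Jacobi identity on the generators $a_{ij}$ of $\mc V_m$. The only difference is that you make explicit the final propagation of the Jacobi identity from generators to all of $\mc V_m$ via sesquilinearity, the Leibniz rules and skewsymmetry --- a standard step which the paper's proof leaves implicit.
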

\begin{proof}
By Proposition \ref{20131217:prop1}, formula \eqref{20130917:eq1}
gives a well-defined skewsymmetric $\lambda$-bracket on $\mc V_m$.
Since $\mc V$ is a double Poisson vertex algebra,
by Definition \ref{20140606:def-2}, $\ldb a_\lambda b_\mu c\rdb=0$, for all
$a,b,c\in\mc V$.
Hence, by Proposition \ref{20131217:prop2},
the RHS of \eqref{20130921:eq2} is equal to zero, 
for all $a,b,c\in\mc V$,
proving that the $\lambda$-bracket
defined by \eqref{20130917:eq1} satisfies the Jacobi identity \eqref{eq:0.5}.
\end{proof}
%

\subsection{Examples: affine and AGD double PVA structures}

\subsubsection{Affine Poisson vertex algebra for
\texorpdfstring{$\mf g=\mf{gl}_m$}{g=glm}}\label{sec:affine}
Let us consider the differential algebra
$\mc V=\mc R_1=\mb F\langle u,u',u'',\ldots\rangle$ and let
\begin{equation}\label{20130923:eq1}
\ldb u_{\lambda}u\rdb=1\otimes u-u\otimes1+c(1\otimes1)\lambda
\in\mc V^{\otimes2}[\lambda]\,,
\end{equation}
where $c\in\mb F$.
It is obvious that the skewsymmetry \eqref{eq:skew2} holds for the pair $u,u$, and
it is easy to check that the Jacobi identity \eqref{eq:jacobi2} holds
for the triple $u,u,u$. Hence, by Proposition \ref{20130921:prop1}, we have
a family of compatible double Poisson vertex algebra structures on $\mc V$,
uniquely extending \eqref{20130923:eq1}.
By Theorem \ref{20130921:cor1}, we get Poisson vertex algebra structures on
the commutative differential algebra
$\mc V_m=\mb F[u_{ij}^{(n)}\mid i,j=1,\ldots,m,n\in\mb Z_+]$
(see Example \ref{20131217:exa1}).
Using \eqref{20130923:eq1} and \eqref{20130917:eq1}, we get the explicit formula
for the $\lambda$-bracket among the generators of $\mc V_m$:
$$
\{u_{ij}{}_{\lambda}u_{hk}\}=\delta_{jh}u_{ik}-\delta_{ki}u_{hj}
+\delta_{jh}\delta_{ik}c\lambda\,,
$$
for all $i,j,h,k=1,\ldots,m$.
This is the $\lambda$-bracket among the generators of the affine Poisson vertex
algebra for $\mf{gl}_m$ (where we fixed the nondegenerate invariant bilinear 
form to be a scalar multiple of the trace form).
\begin{example}
Consider the differential algebra 
$\mc R_2=\mb F\langle u,v,u',v',\dots\rangle$ with the double lambda bracket 
$\ldb u_\lambda u\rdb$, given by \eqref{20130923:eq1}
with $v$ central. 
Then a similar computation as the one in Section \ref{sec:2.6} shows that
$h_0=1,\, h_n=\frac1n (u+v)^n$ for $n>0$, are in involution, 
and the corresponding integrable hierarchy of Hamiltonian equations is:
$$
\frac{du}{dt_n} =v(u+v)^n u -u(u+v)^n v +c\partial (u+v)^{n+1}
\,\,,\,\,\,\,
dv/dt_n=0
\,,\,\, n\in \mb Z_+
\,.
$$
\end{example}

\subsubsection{Adler type non-commutative pseudodifferential operators}
\label{sec:AGD}
Let us recall from \cite{Kac98} that the
\emph{$\delta$-function} is, by definition,
the $\mb F$-valued formal distribution
$$
\delta(z-w)=\sum_{n\in\mb Z_+}z^nw^{-n-1}
\in\mb F[[z,z^{-1},w,w^{-1}]]
\,.
$$
Let $a(z)=\sum_{i\in\mb Z}a_iz^i$ be a formal Laurent series in $z^{-1}$
with values in some vector space $\mc V$. Then, we have
\begin{equation}\label{eq:delta_prop}
a(z)\delta(z-w)=a(w)\delta(z-w)
\,.
\end{equation}
We denote by $i_z$ the power series expansion
for large $|z|$. For example,
$$
i_z(z-w)^{-1}
=\sum_{k\in\mb Z_+}z^{-k-1}w^k
\,,
\quad
i_z(z-w-\lambda-\partial)^{-1}
=\sum_{k\in\mb Z_+}z^{-k-1}(w+\lambda+\partial)^k
\,.
$$
Using this notation, the $\delta$-function can be rewritten as follows:
\begin{equation}\label{eq:delta}
\delta(z-w)=i_z(z-w)^{-1}-i_w(z-w)^{-1}
\,.
\end{equation}
In the sequel, we use the following notation:
if $a(z)=\sum_{n\in\mb Z}a_nz^n$ and $b(z)$ are two formal 
Laurent series in $z^{-1}$, then
$$
a(z+\lambda+\partial)\otimes b(z)
=\sum_{n\in\mb Z} a_n\otimes i_z(z+\lambda+\partial)^nb(z)\,,
$$
namely, for any $n\in\mb Z$, we expand $(z+\lambda+\partial)^n$
in non-negative
powers of $\lambda+\partial$
and we let powers of $\partial$ act to the right, on the
coefficients of $b(z)$.
%
In order to emphasize that $\partial$ does not act on some factors,
we enclose in parenthesis the terms on which $\partial$ does act.

We denote by $\res_z$ the coefficient of $z^{-1}$.
The following identity holds for every formal power series $a(z)=\sum_n\in\mb Z a_nz^n$,
\begin{equation}\label{valentinesday}
\res_z a(z)i_z(z-w)^{-1}=a(w)_+
\,,
\end{equation}
where $a(w)_+=\sum_{n=0}^\infty a_nz^n$ is the positive part of $a(z)$.
Furthermore, 
the following identity is a consequence of integration by parts:
\begin{equation}\label{star}
\res_z(\iota_z a(z+t)\otimes b(z))=\res_z(a(z)\otimes \iota_z  b(z-t))
\,.
\end{equation}
\begin{definition}\label{20131217:def1}
Let $\mc V$ be a differential algebra endowed with a $2$-fold $\lambda$-bracket
$\ldb-_\lambda-\rdb$.
We call a pseudodifferential operator $L(\partial)\in\mc V((\partial^{-1}))$
of \emph{Adler type} for $\ldb-_\lambda-\rdb$ if the following identity holds in
$\mc V^{\otimes2}[\lambda,\mu]((z^{-1},w^{-1}))$
(cf. \cite{DSKV14a}):
\begin{equation}\label{20130923:eq2}
\begin{array}{c}
\displaystyle{
\ldb L(z)_{\lambda}L(w)\rdb
=L(z)\otimes i_z(z-w-\lambda-\partial)^{-1}L(w)
}
\\
\displaystyle{
-L(w+\lambda+\partial)\otimes i_z(z-w-\lambda-\partial)^{-1}L^*(-z+\lambda)
\,,
}
\end{array}
\end{equation}
where $L^*(\partial)$ is the formal adjoint of the pseudodifferential operator $L(\partial)$,
and $L^*(z)$ is its symbol.
\end{definition}
\begin{proposition}\label{20130923:prop1}
Let $\mc V$ be a differential algebra, let $\ldb-_\lambda-\rdb$
be a $2$-fold $\lambda$-bracket on $\mc V$, and let $L(\partial)\in\mc V((\partial^{-1}))$
be an Adler type pseudodifferential operator. Then:
\begin{enumerate}[(a)]
\item The following identity holds in $\mc V^{\otimes2}[\lambda]((z^{-1},w^{-1}))$:
$$
\ldb L(z)_{\lambda}L(w)\rdb
=-\ldb L(w)_{-\lambda-\partial}L(z)\rdb^\sigma\,.
$$
\item
The following identity holds in $\mc V^{\otimes3}[\lambda,\mu]((z_1^{-1},z_2^{-1},z_3^{-1}))$:
$$
\begin{array}{c}
\ldb L(z_1)_{\lambda}\ldb L(z_2)_{\mu}L(z_3)\rdb\rdb_L
-\ldb L(z_2)_{\mu}\ldb L(z_1)_{\lambda}L(z_3)\rdb\rdb_R\\
=\ldb\ldb L(z_1)_{\lambda}L(z_2)\rdb_{\lambda+\mu}L(z_3)\rdb_L
\,.
\end{array}
$$
\end{enumerate}
\end{proposition}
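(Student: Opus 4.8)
The plan is to prove both identities by direct computation, substituting the Adler identity \eqref{20130923:eq2} and manipulating the resulting formal distributions by means of the calculus of $\delta$-functions recorded in \eqref{eq:delta_prop}--\eqref{star}. This parallels the commutative case treated in \cite{DSKV14a}, the only genuinely new feature being the need to keep track of the placement of the two (resp. three) tensor factors and of the action of $\sigma$; the sesquilinearity and skewsymmetry relations of Lemma \ref{20130916:lem1} and the $\bullet$-Leibniz rules of Lemma \ref{20140307:lem1} are precisely the tools that govern this bookkeeping.

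For part (a), I would substitute the Adler formula \eqref{20130923:eq2} for $\ldb L(w)_\mu L(z)\rdb$, apply $\sigma$ to interchange the two tensor factors, and then set $\mu=-\lambda-\partial$ with $\partial$ acting to the left on the coefficients, as prescribed by the skewsymmetry convention \eqref{eq:skew2} and sesquilinearity \eqref{20140702:eq4b}. The two terms produced this way must be matched against the two terms of $\ldb L(z)_\lambda L(w)\rdb$. The reconciliation rests on two facts: first, that the $i_z$- and $i_w$-expansions of $(z-w-\lambda-\partial)^{-1}$ differ by a $\delta$-function, whose contribution is collapsed using \eqref{eq:delta} and \eqref{eq:delta_prop}; and second, that passing $L$ to its formal adjoint $L^*$ together with the symbol substitutions $z\mapsto-z+\lambda$ and $w\mapsto w+\lambda+\partial$ interchanges the ``$L\otimes\dots L$'' term and the ``$L\otimes\dots L^*$'' term of the Adler formula. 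Carrying this out term by term yields $\ldb L(z)_\lambda L(w)\rdb=-\ldb L(w)_{-\lambda-\partial}L(z)\rdb^\sigma$.

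For part (b), the first reduction is to observe, via Lemma \ref{lem:triple_bracket}, that the asserted identity is equivalent to the vanishing of the $3$-fold $\lambda$-bracket
\[
\ldb L(z_1)_\lambda L(z_2)_\mu L(z_3)\rdb
= \ldb L(z_1)_\lambda\ldb L(z_2)_\mu L(z_3)\rdb\rdb_L
- \ldb L(z_2)_\mu\ldb L(z_1)_\lambda L(z_3)\rdb\rdb_R
- \ldb\ldb L(z_1)_\lambda L(z_2)\rdb_{\lambda+\mu}L(z_3)\rdb_L
\,.
\]
I would then expand each of the three summands by first applying the Adler identity \eqref{20130923:eq2} to the inner bracket, and then pushing the outer bracket $\ldb L(z_1)_\lambda-\rdb_{L}$, $\ldb L(z_2)_\mu-\rdb_{R}$, or $\ldb-_{\lambda+\mu}L(z_3)\rdb_L$ through the $\bullet$-products and geometric series by means of the Leibniz rules \eqref{20140305:eq1} and sesquilinearity \eqref{20140702:eq4b}, substituting \eqref{20130923:eq2} once more for each elementary bracket $\ldb L(z_i)_\bullet L(z_j)\rdb$ that appears. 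Each summand thereby becomes a sum of a few tensor products of three factors built from $L(z_i)$, $L^*$ and expansions $i_{z_i}(\cdots)^{-1}$, with spectral arguments shifted by sums of $\lambda,\mu,\partial$.

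The heart of the proof, and the step I expect to be the main obstacle, is the final collection and cancellation of these terms. After expansion one obtains on the order of a dozen terms, which must be shown to cancel in pairs or small groups, using \eqref{eq:delta} and \eqref{eq:delta_prop} to combine the mismatched $i_{z_i}$-expansions into $\delta$-functions and then collapse them, together with the integration-by-parts identity \eqref{star} and the residue identity \eqref{valentinesday} to align the remaining arguments. The difficulty is entirely combinatorial: one must keep correct simultaneously (i) which of the three tensor slots each factor occupies, as dictated by the $_L$, $_R$, $\bullet_i$ and $\otimes_1$ operations propagated through Lemma \ref{20140307:lem1}; (ii) the distinction between $L$ and $L^*$ and the precise spectral shift attached to each; and (iii) the correct expansion region $i_{z_i}$ for each resolvent. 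Once the commutative bookkeeping of \cite{DSKV14a} is upgraded to respect the tensor-factor and $\sigma$ structure tracked by Lemmas \ref{20130916:lem1} and \ref{20140307:lem1}, the same cancellation pattern applies and the triple bracket vanishes, which is the claim.
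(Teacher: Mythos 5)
Your strategy coincides with the paper's: both parts are proved by substituting the Adler identity \eqref{20130923:eq2} and cancelling terms via the calculus of formal distributions. Your part (a) --- apply $\sigma$, substitute $-\lambda-\partial$, observe that the two resulting terms match those of $\ldb L(z)_\lambda L(w)\rdb$ up to replacing $i_z$ by $i_w$, and collapse the discrepancy (a $\delta$-function, by \eqref{eq:delta}) using \eqref{eq:delta_prop} --- is precisely the paper's argument. Your part (b) also follows the paper's route: the detour through Lemma \ref{lem:triple_bracket} is a purely cosmetic rephrasing of the identity to be proved (and strictly speaking is not needed, nor fully justified, since the Proposition does not assume the $2$-fold $\lambda$-bracket is skewsymmetric on all of $\mc V$; one can simply name the difference of the three terms and show it vanishes), and the expansion of the three Jacobi terms into four summands each --- twelve in all, matching your ``on the order of a dozen'' --- is exactly what the paper carries out.

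The one substantive flaw is in the toolkit you name for the final cancellation, which you yourself single out as the heart of the matter. In the paper's bookkeeping, two groupings are immediate, $\eqref{eq:b}=\eqref{eq:c''}$ and $\eqref{eq:c}+\eqref{eq:b'}=0$, but the remaining ones, $\eqref{eq:a}+\eqref{eq:a'}=\eqref{eq:a''}$ and $\eqref{eq:d}+\eqref{eq:c'}+\eqref{eq:d'}=\eqref{eq:b''}+\eqref{eq:d''}$, require --- besides the $\delta$-function calculus you cite --- the partial-fraction (resolvent) identity
$$
a^{-1}b^{-1}=i_a(a+b)^{-1}b^{-1}+i_a(a+b)^{-1}a^{-1}\,,
$$
applied to products of two expansions such as $i_{z_1}(z_1-z_2-x)^{-1}\,i_{z_2}(z_2-z_3-y)^{-1}$ with $x=\lambda+\partial$, $y=\mu+\partial$. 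This is an algebraic identity among expanded rational functions, not a $\delta$-function manipulation: in the grouping $\eqref{eq:a}+\eqref{eq:a'}=\eqref{eq:a''}$ no $\delta$-function arises at all, so $\delta$-calculus alone cannot produce it. Moreover, the identities \eqref{star} and \eqref{valentinesday} that you invoke play no role in this proposition (they enter only later, in Lemma \ref{lem:15032013} and Lemma \ref{cor:17032013}). An attempt following your stated tools would therefore stall exactly at these groupings; the repair is to add this single resolvent identity, which is indeed the same device used in the commutative computation of \cite{DSKV14a} that you defer to.
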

\begin{proof}
By \eqref{20130923:eq2} we get
$$
\begin{array}{c}
\ldb L(w)_{-\lambda-\partial}L(z)\rdb^\sigma
=L(w+\lambda+\partial)\otimes i_w(z-w-\lambda-\partial)^{-1}L^*(-z+\lambda)\\
-L(z)\otimes i_w(z-w-\lambda-\partial)^{-1}L(w)\,.
\end{array}
$$
Hence, in order to prove (a) we are left to show that
$$
\begin{array}{l}
L(z)\otimes i_z(z-w-\lambda-\partial)^{-1}L(w)\\
-L(w+\lambda+\partial)\otimes i_z(z-w-\lambda-\partial)^{-1}L^*(-z+\lambda)\\
=L(z)\otimes i_w(z-w-\lambda-\partial)^{-1}L(w)\\
-L(w+\lambda+\partial)\otimes i_w(z-w-\lambda-\partial)^{-1}L^*(-z+\lambda)\,,
\end{array}
$$
which can be rewritten, using equation \eqref{eq:delta}, as
\begin{equation}\label{20130923:toprove1}
L(z)\otimes\delta(z-w-\lambda-\partial)L(w)
=L(w+\lambda+\partial)\otimes\delta(z-w-\lambda-\partial)L^*(-z+\lambda)
\,.
\end{equation}
Identity \eqref{20130923:toprove1} holds by applying
the property of the
$\delta$-function given by equation \eqref{eq:delta_prop}.

In order to prove part (b) let us compute the three terms separately
using \eqref{notation} and \eqref{20130923:eq2}. We have
%
%
\begin{align}
&\ldb L(z_1)_{\lambda}\ldb L(z_2)_{\mu}L(z_3)\rdb\rdb_L\notag\\
&\label{eq:a}
=L(z_1)
\otimes
\left(i_{z_{1}}(z_1-z_2-\lambda-\partial)^{-1}L(z_2)
\right)
\otimes
i_{z_2}(z_2-z_3-\mu-\partial)^{-1}L(z_3)\\
\begin{split}\label{eq:b}
&-\left(
L(z_2+\lambda+\partial)
\otimes
i_{z_1}(z_1-z_2-\lambda-\partial)^{-1}L^*(-z_1+\lambda)
\right)\\
&\qquad\qquad\otimes
i_{z_2}(z_2-z_3-\mu-\partial)^{-1}L(z_3)
\end{split}\\
\begin{split}\label{eq:c}
&-L(z_1)
\otimes
i_{z_1}(z_1-z_3-\lambda-\mu-\partial)^{-1}L(z_3+\mu+\partial)\\
&\qquad\qquad\otimes
i_{z_2}(z_2-z_3-\mu-\partial)^{-1}L^*(-z_2+\mu)
\end{split}\\
\begin{split}\label{eq:d}
&+L(z_3+\lambda+\mu+\partial)
\otimes
i_{z_1}(z_1-z_3-\lambda-\mu-\partial)^{-1}L^*(-z_1+\lambda)\\
&\qquad\qquad\otimes
i_{z_2}(z_2-z_3-\mu-\partial)^{-1}L^*(-z_2+\mu)\,;
\end{split}
\end{align}
%
%
\begin{align}
&-\ldb L(z_2)_{\mu}\ldb L(z_1)_{\lambda}L(z_3)\rdb\rdb_R\notag\\
&\label{eq:a'}
=-L(z_1)
\otimes
i_{z_{1}}(z_1-z_3-\lambda-\mu-\partial)^{-1}L(z_2)
\otimes
i_{z_2}(z_2-z_3-\mu-\partial)^{-1}L(z_3)\\
\begin{split}\label{eq:b'}
&+L(z_1)
\otimes
i_{z_1}(z_1-z_3-\lambda-\mu-\partial)^{-1}L(z_3+\mu+\partial)\\
&\qquad\qquad\otimes
i_{z_2}(z_2-z_3-\mu-\partial)^{-1}L^*(-z_2+\mu)
\end{split}\\
\begin{split}\label{eq:c'}
&-L(z_3+\lambda+\mu+\partial)
\otimes
i_{z_1}(z_1-z_3-\lambda-\mu-\partial)^{-1}\\
&\qquad\qquad\times\left(
L^*(\lambda+\mu+\partial-z_1)
\otimes
i_{z_2}(z_1-z_2-\lambda-\partial)^{-1}L(z_2)
\right)^\sigma
\end{split}\\
\begin{split}
&\label{eq:d'}
+L(z_3+\lambda+\mu+\partial)
\otimes
i_{z_1}(z_1-z_3-\lambda-\mu-\partial)^{-1}\\
&\qquad\qquad\times\left(
L^*(-z_2+\mu)
\otimes
i_{z_2}(z_1-z_2-\lambda-\partial)^{-1}L^*(-z_1+\lambda)
\right)^\sigma\,;
\end{split}
\end{align}
%
%
\begin{align}
&\ldb\ldb L(z_1)_{\lambda}L(z_2)\rdb_{\lambda+\mu}L(z_3)\rdb_L
\notag\\
&\label{eq:a''}
=L(z_1)
\otimes
i_{z_1}(z_1-z_3-\lambda-\mu-\partial)^{-1}
\left(
L(z_3)
\otimes
i_{z_1}(z_1-z_2-\lambda-\partial)^{-1}L(z_2)
\right)^\sigma\\
\begin{split}\label{eq:b''}
&-L(z_3+\lambda+\mu+\partial)
\otimes
i_{z_1}(z_1-z_3-\lambda-\mu-\partial)^{-1}\\
&\qquad\qquad\times\left(
L^*(\lambda+\mu+\partial-z_1)
\otimes
i_{z_1}(z_1-z_2-\lambda-\partial)^{-1}L(z_2)
\right)^\sigma
\end{split}\\
\begin{split}\label{eq:c''}
&-\left(
L(z_2+\lambda+\partial)
\otimes
i_{z_1}(z_1-z_2-\lambda-\partial)^{-1}L^*(-z_1+\lambda)
\right)\\
&\qquad\qquad\otimes
i_{z_2}(z_2-z_3-\mu-\partial)^{-1}L(z_3)
\end{split}\\
\begin{split}\label{eq:d''}
&+L(z_3+\lambda+\mu+\partial)
\otimes
\left(
i_{z_1}(z_1-z_2-\lambda-\partial)^{-1}L^*(-z_1+\lambda)
\right)\\
&\qquad\qquad\otimes
i_{z_2}(z_2-z_3-\mu-\partial)^{-1}L^*(-z_2+\mu)\,.
\end{split}
\end{align}

Note that
$$
\eqref{eq:b}=\eqref{eq:c''}
\quad\text{and}\quad
\eqref{eq:c}+\eqref{eq:b'}=0\,.
$$
Next, we claim that
$$
\eqref{eq:a}+\eqref{eq:a'}=\eqref{eq:a''}\,.
$$
Indeed, we can rewrite the above identity as
\begin{align}
\begin{split}\label{20130923:eq3}
&i_{z_1}(z_1-z_2-x)^{-1}
i_{z_2}(z_2-z_3-y)^{-1}
L(z_1)\otimes (\big|_{x=\lambda+\partial}L(z_2))\otimes (\big|_{y=\mu+\partial}L(z_3))\\
&=i_{z_1}(z_1-z_3-x-y)^{-1}
\left(i_{z_1}(z_1-z_2-x)^{-1}
\right.\\
&\left.\qquad\qquad +i_{z_2}(z_2-z_3-y)^{-1}
\right)
L(z_1)\otimes(\big|_{x=\lambda+\partial}L(z_2))\otimes (\big|_{y=\mu+\partial}L(z_3))\,.
\end{split}
\end{align}
(Here and further, for a Laurent series $P(z)=\sum_{n=-\infty}^Nc_nz^n\in\mc V((z^{-1}))$
and elements $a,b\in\mc V$,
we let
$$
a\Big(\Big|_{x=\nu+\partial}P(z+x)b\Big)
=
\sum_{n=-\infty}^Nai_z(z+\nu+\partial)^n(c_nb)
\,.)
$$
The identity \eqref{20130923:eq3} follows from the obvious identity
$$
a^{-1}b^{-1}=i_a(a+b)^{-1}b^{-1}+i_a(a+b)^{-1}a^{-1}\,.
$$
In order to prove part b) we are left to show that
\begin{equation}\label{eq:final}
\eqref{eq:d}+\eqref{eq:c'}+\eqref{eq:d'}=\eqref{eq:b''}+\eqref{eq:d''}\,.
\end{equation}

Note that, using the definition and the properties of the $\delta$-function
we have
\begin{align}
\begin{split}\label{20130923:eq4}
&\eqref{eq:c'}-\eqref{eq:b''}
\\
&=L(z_3+\lambda+\mu+\partial)
\otimes
i_{z_1}(z_1-z_3-\lambda-\mu-\partial)^{-1}
\\
&\qquad\times\left(
L^*(\lambda+\mu+\partial-z_1)
\otimes
\delta(z_1-z_2-\lambda-\partial)L(z_2)
\right)^\sigma\\
&=L(z_3+\lambda+\mu+\partial)
\otimes
i_{z_1}(z_1-z_3-\lambda-\mu-\partial)^{-1}
\\
&\qquad\times\left(
L^*(-z_2+\mu)
\otimes
\delta(z_1-z_2-\lambda-\partial)L^*(-z_1+\lambda)
\right)^\sigma\,;
\end{split}
\\
\begin{split}\label{20130923:eq5}
&\eqref{eq:d'}
\\
&=L(z_3+\lambda+\mu+\partial)
\otimes
i_{z_1}(z_1-z_3-\lambda-\mu-\partial)^{-1}
\\
&\qquad\times
\left(
L^*(-z_2+\mu)
\otimes
i_{z_1}(z_1-z_2-\lambda-\partial)^{-1}L^*(-z_1+\lambda)
\right)^\sigma
\\
&-L(z_3+\lambda+\mu+\partial)
\otimes
i_{z_1}(z_1-z_3-\lambda-\mu-\partial)^{-1}
\\
&\qquad\times\left(
L^*(-z_2+\mu)
\otimes
\delta(z_1-z_2-\lambda-\partial)L^*(-z_1+\lambda)
\right)^\sigma\,.
\end{split}
\end{align}
Using \eqref{20130923:eq4} and \eqref{20130923:eq4}, identity
\eqref{eq:final}
is proved once we show that
\begin{align*}
&i_{z_1}(z_1-z_2-x)^{-1}
i_{z_2}(z_2-z_3-y)^{-1}
\\
&\qquad\times
L(z_3+x+y)\otimes (\big|_{x=\lambda+\partial}L^*(-z_1+\lambda))\otimes (\big|_{y=\mu+\partial}L^*(-z_2+\mu))
\\
&=i_{z_1}(z_1-z_3-x-y)^{-1}\left(i_{z_1}(z_1-z_2-x)^{-1}
+i_{z_2}(z_2-z_3-y)^{-1}\right)
\times
\\
&
\times
L(z_3+x+y)\otimes (\big|_{x=\lambda+\partial}L^*(-z_1+\lambda))\otimes (\big|_{y=\mu+\partial}L^*(-z_2+\mu))\,,
\end{align*}
which can be proved as we did in the previous claim.
\end{proof}
\begin{theorem}\label{subalgebra}
Let $\mc V$ be a differential algebra, endowed with a
$2$-fold $\lambda$-bracket $\ldb-_\lambda-\rdb$.
Let $L(\partial)\in\mc V((\partial^{-1}))$ be a pseudodifferential operator,
and let $\mc U\subset\mc V$ be the differential subalgebra 
generated by the coefficients of $L(\partial)$.
If $L(\partial)$ is of Adler type,
then $\ldb-_\lambda-\rdb$ restricts to a double PVA $\lambda$-bracket on $\mc U$.
\end{theorem}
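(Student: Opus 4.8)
The plan is to check that the restriction of $\ldb-_\lambda-\rdb$ to $\mc U$ satisfies the three requirements in Definition \ref{20140606:def-2}: that it is $\mc U$-valued (so that it is a genuine $2$-fold $\lambda$-bracket on $\mc U$, sesquilinearity \eqref{20140702:eq4b} and the Leibniz rules \eqref{20140702:eq6b} being inherited automatically), that it is skewsymmetric \eqref{eq:skew2}, and that it satisfies the Jacobi identity \eqref{eq:jacobi2}. Proposition \ref{20130923:prop1} has already carried out the hard analytic computations: part (a) gives skewsymmetry and part (b) gives the Jacobi identity \emph{on the coefficients of $L$}, i.e.\ on a set of differential generators of $\mc U$. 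The whole content of the theorem is therefore the $\mc U$-valuedness together with the propagation of these two identities from the generators to all of $\mc U$.

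\emph{$\mc U$-valuedness.} Write $L(\partial)=\sum_i L_i\partial^i$ with $L_i\in\mc U$. Since $\mc U$ is a differential subalgebra it is $\partial$-stable, so the coefficients of the formal adjoint $L^*(\partial)$ also lie in $\mc U$. Expanding the Adler identity \eqref{20130923:eq2} in powers of $z^{-1}$ and $w^{-1}$, and noting that $i_z(z-w-\lambda-\partial)^{-1}$ only produces coefficients of the form $(w+\lambda+\partial)^k$ acting on the $L_j$'s, I read off that $\ldb L_i{}_\lambda L_j\rdb\in(\mc U\otimes\mc U)[\lambda]$ for all $i,j$. From here, sesquilinearity \eqref{20140702:eq4b} keeps the bracket $\mc U$-valued on $\partial$-derivatives of the $L_i$, and the Leibniz rules \eqref{20140702:eq6b} — whose right-hand sides involve only multiplications by elements of $\mc U$ — allow me to conclude, by induction on the number of generators occurring in a monomial, that $\ldb a_\lambda b\rdb\in(\mc U\otimes\mc U)[\lambda]$ for all $a,b\in\mc U$.

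\emph{Skewsymmetry.} By Proposition \ref{20130923:prop1}(a), the defect $\beta(a,b):=\ldb a_\lambda b\rdb+\ldb b_{-\lambda-\partial}a\rdb^\sigma$ vanishes on all pairs of generators $L_i,L_j$. I would then show $\beta\equiv0$ on $\mc U$ by the same bookkeeping used in the proof of Theorem \ref{20130921:prop1}(c): sesquilinearity \eqref{20140702:eq4b} propagates the vanishing of $\beta$ across applications of $\partial$ to either argument, and the left and right Leibniz rules \eqref{20140702:eq6b} propagate it across products in either argument, using Lemma \ref{20130916:lem1} to match the two sides. Induction on the length of monomials in the $L_i$ then yields skewsymmetry on all of $\mc U$.

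\emph{Jacobi identity, and the main obstacle.} With skewsymmetry now available on $\mc U$, Lemma \ref{lem:triple_bracket} applies and shows that the trilinear map \eqref{eq:triple1} is a genuine $3$-fold $\lambda$-bracket on $\mc U$; the Jacobi identity \eqref{eq:jacobi2} is exactly the assertion that this $3$-fold $\lambda$-bracket is identically zero. Proposition \ref{20130923:prop1}(b) says it vanishes on all triples of generators $L_i,L_j,L_k$. Since a $3$-fold $\lambda$-bracket is sesquilinear and obeys the Leibniz rule \eqref{20140702:eq6} separately in each of its three entries, it is determined by its values on a differential generating set, so it vanishes identically on $\mc U$. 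This last step is clean precisely because Lemma \ref{lem:triple_bracket} packages the Jacobi defect as a multi-derivation. I expect the main obstacle to be the skewsymmetry propagation: unlike the Jacobi case there is no ready-made ``defect is itself a $2$-fold bracket'' statement, so I must check by hand that the Leibniz rules force $\beta(a,bc)$ and $\beta(ab,c)$ to vanish whenever $\beta$ vanishes on the shorter factors — a routine but genuinely two-sided computation that simultaneously uses both Leibniz rules of \eqref{20140702:eq6b} and the relations of Lemma \ref{20130916:lem1}.
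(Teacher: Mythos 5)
Your proposal is correct, and it differs from the paper's proof precisely in the propagation step. The paper's own proof is two lines: the Adler identity \eqref{20130923:eq2} shows that the bracket restricts to $\mc U$, Proposition \ref{20130923:prop1} gives skewsymmetry and the Jacobi identity on the coefficients of $L(\partial)$, and the passage from generators to all of $\mc U$ is delegated to Theorem \ref{20130921:prop1}. You instead make that passage self-contained: skewsymmetry is propagated by hand through sesquilinearity and the two Leibniz rules, and the Jacobi identity is propagated via Lemma \ref{lem:triple_bracket}, which exhibits the Jacobi defect as a $3$-fold $\lambda$-bracket and hence as an object that vanishes identically once it vanishes on a differential generating set. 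This buys genuine robustness: Theorem \ref{20130921:prop1} is formulated for an algebra of differential functions and for brackets in the master form \eqref{master-infinite}, with strongly commuting partial derivatives, whereas the $\mc U$ of Theorem \ref{subalgebra} is an arbitrary differential subalgebra of an arbitrary differential algebra --- it need not be free and carries no distinguished partial derivatives --- so your argument, which uses nothing beyond the axioms of a $2$-fold $\lambda$-bracket, matches the stated generality more tightly, at the cost of length; the paper's route is shorter and suffices for its intended application (Corollary \ref{cor:bi_dPVA}, where $\mc U=\mc R_I$ is free). Finally, the one step you leave as a sketch does close exactly as you predict: setting $\beta(a,b)=\ldb a_\lambda b\rdb+\ldb b_{-\lambda-\partial}a\rdb^\sigma$, the right Leibniz rule together with the commutation of $-\lambda-\partial$ past the outer left and right multiplications yields
$$
\ldb bc_{-\lambda-\partial}a\rdb^\sigma
=\ldb b_{-\lambda-\partial}a\rdb^\sigma c
+b\,\ldb c_{-\lambda-\partial}a\rdb^\sigma\,,
$$
whence $\beta(a,bc)=\beta(a,b)c+b\,\beta(a,c)$; the analogous Leibniz property in the first slot follows because $\beta(b,a)$ is obtained from $\beta(a,b)$ by applying $\sigma$ and $\lambda\mapsto-\lambda-\partial$. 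This is the same computation which shows that the second Leibniz rule in \eqref{20140702:eq6b} is forced by the first one together with skewsymmetry, so your ``main obstacle'' is indeed only bookkeeping.
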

\begin{proof}
By \eqref{20130923:eq2} $\ldb-_\lambda-\rdb$ restricts to a $2$-fold $\lambda$-bracket of $\mc U$,
and by Proposition \ref{20130923:prop1}, together with Proposition
\ref{20130921:prop1}, we have that $\mc U$ is a double PVA.
\end{proof}

\subsubsection{AGD Poisson vertex algebra structures for
generic matrix (pseudo)differential operators
of order \texorpdfstring{$N$}{N}}

Let $\mc V$ be the algebra of non-commutative 
differential polynomials in the variables $u_i$, $i\in I$, namely
$$
\mc V=\mc R_I=\mb F\langle u_i^{(n)}\mid i\in I,n\in\mb Z_+\rangle\,.
$$
The index set $I$ will be either $I=\{i\in\mb Z\mid i\geq-N\}$
or $I_-=\{-N,-N+1,\dots,-1\}$.
Let us collect the differential generators of $\mc V$ into
the generating series
$$
L(z)=z^N+\sum_{i\in I}u_{i}z^{-i-1}\in\mc V((z^{-1}))\,.
$$
By Example \ref{20131217:exa1} we have that
$\mc V_m=\mb F[u_{i,ab}\mid i\in I,a,b\in\{1,\dots,m\},n\in\mb Z_+]$,
for every $m\geq1$.
Let us denote
$$
L_{ab}(z)=z^N+\sum_{i\in I}u_{i,ab}z^{-i-1}\in\mc V_m((z^{-1}))\,,
$$
for every $a,b=1,\dots,m$.
\begin{corollary}\label{cor:bi_dPVA}
$\mc V$ has two compatible double Poisson vertex algebra structures
with $2$-fold
$\lambda$-brackets defined on generators by the following generating series:
\begin{equation}\label{eq:H}
\begin{array}{c}
\displaystyle{
\ldb L(z)_{\lambda}L(w)\rdb_H
=L(z)\otimes i_z(z-w-\lambda-\partial)^{-1}L(w)
}
\\
\displaystyle{
-L(w+\lambda+\partial)\otimes i_z(z-w-\lambda-\partial)^{-1}L^*(-z+\lambda)
}
\end{array}
\end{equation}
and
\begin{equation}\label{eq:K}
\begin{array}{c}
\displaystyle{
\ldb L(z)_{\lambda}L(w)\rdb_K
=i_z(z-w-\lambda)^{-1}(L(z)-L(w+\lambda))\otimes1
}
\\
\displaystyle{
+1\otimes i_z(z-w-\lambda-\partial)^{-1}(L(w)-L^*(-z+\lambda))\,.
}
\end{array}
\end{equation}
Furthermore, for all $m\geq1$,
$\mc V_m$ has two compatible Poisson vertex algebra structures,
with $\lambda$-brackets on generators
given by the generating series
$$
\begin{array}{c}
\{L_{ab}(z)_{\lambda}L_{cd}(w)\}
=L_{cb}(z)i_z(z-w-\lambda-\partial)^{-1}L_{ad}(w)\\
-L_{cb}(w+\lambda+\partial)i_z(z-w-\lambda-\partial)^{-1}L_{ad}^*(\lambda-z)
\end{array}
$$
and
$$
\begin{array}{c}
\{L_{ab}(z)_\lambda L_{cd}(w)\}_K
=\delta_{ad}i_z(z-w-\lambda)^{-1}\left(L_{cb}(z)-L_{cb}(w+\lambda)\right)
\\
+\delta_{cb}i_z(z-w-\lambda-\partial)^{-1}\left(L_{ad}(w)-L_{ad}^*(\lambda-z)\right)\,,
\end{array}
$$
for all $a,b,c,d=1,\ldots,m$.
\end{corollary}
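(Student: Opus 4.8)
The plan is to deduce the whole statement from the Adler-type machinery already established, with the second (pencil) structure obtained by a shift $L\mapsto L+\eta$. First I would note that the bracket $\ldb-_\lambda-\rdb_H$ of \eqref{eq:H} is literally the Adler identity \eqref{20130923:eq2} for the operator $L(\partial)$. Since the coefficients of $L(z)$ are exactly the differential generators $u_i$, $i\in I$, of $\mc V=\mc R_I$, the differential subalgebra they generate is all of $\mc V$, and Theorem \ref{subalgebra} gives at once that $(\mc V,\ldb-_\lambda-\rdb_H)$ is a double PVA.

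For the second structure and for compatibility, the idea is to realize the pencil $\ldb-_\lambda-\rdb_H+\eta\ldb-_\lambda-\rdb_K$ as the Adler bracket of the shifted operator. Concretely I would extend scalars to $\mc V[\eta]$, with $\eta$ central, $\partial\eta=0$ and $\frac{\partial}{\partial u_i^{(n)}}\eta=0$, and set $L_\eta(z)=L(z)+\eta$, so that $L_\eta^*(\partial)=L^*(\partial)+\eta$. Substituting $L_\eta$ into the right-hand side of \eqref{20130923:eq2} and expanding in powers of $\eta$, using $\partial\eta=0$ (so that $i_z(z-w-\lambda-\partial)^{-1}\eta=\eta\, i_z(z-w-\lambda)^{-1}$), the term of order $\eta^0$ is exactly $\ldb L(z)_\lambda L(w)\rdb_H$, the term of order $\eta^1$ is exactly $\ldb L(z)_\lambda L(w)\rdb_K$ of \eqref{eq:K}, and the term of order $\eta^2$ cancels. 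This generating-series identity is the technical heart of the argument and the step I expect to require the most careful bookkeeping.

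Having checked this, I would define $\ldb-_\lambda-\rdb^\eta:=\ldb-_\lambda-\rdb_H+\eta\ldb-_\lambda-\rdb_K$ on $\mc V[\eta]$ via \eqref{master-infinite}. Because this bracket annihilates $\eta$, one has $\ldb L_\eta(z)_\lambda L_\eta(w)\rdb^\eta=\ldb L(z)_\lambda L(w)\rdb^\eta$, which by the previous paragraph equals the Adler right-hand side for $L_\eta$; thus $L_\eta$ is of Adler type for $\ldb-_\lambda-\rdb^\eta$. Proposition \ref{20130923:prop1} then yields the skewsymmetry \eqref{skew-gen-b} and the vanishing of the triple $\lambda$-bracket \eqref{jacobi-gen-b} on the level of generating series, hence, reading off coefficients of $z^{-i-1}w^{-j-1}$ (and $z_1,z_2,z_3$), on all pairs and triples $u_i,u_j,u_k$. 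By Theorem \ref{20130921:prop1}(c), $\ldb-_\lambda-\rdb^\eta$ is therefore a double PVA bracket for every $\eta$. Specializing $\eta=0$ recovers $H$ and extracting the coefficient of $\eta$ gives $K$; and since the triple bracket \eqref{eq:triple1} is quadratic in the bracket while the triple bracket of $\ldb-_\lambda-\rdb^\eta$ vanishes identically in $\eta$, its three $\eta$-coefficients (the triple brackets of $H$, of $K$, and the mixed term) each vanish, so every combination $aH+bK$ satisfies the axioms. This proves that $H$ and $K$ are compatible double PVA structures on $\mc V$.

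Finally, for $\mc V_m$: since $\mc V$ is a double PVA for $H$, for $K$, and for the whole pencil, Theorem \ref{20130921:cor1} shows $\mc V_m$ is a Poisson vertex algebra in each case, with $\lambda$-bracket on generators given by \eqref{20130917:eq1}. I would obtain the two displayed formulas on the $L_{ab}(z)$ simply by inserting the generating-series brackets \eqref{eq:H} and \eqref{eq:K} into \eqref{20130917:eq1}: matching indices, a term $X\otimes Y$ of $\ldb L(z)_\lambda L(w)\rdb$ contributes $X_{cb}\,Y_{ad}$ to $\{L_{ab}(z)_\lambda L_{cd}(w)\}$, with the intervening operators $i_z(\cdots)^{-1}$ left untouched since they carry no matrix indices. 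Reading off the four terms of \eqref{eq:H} and \eqref{eq:K} then produces exactly the claimed expressions, so this last step is entirely routine.
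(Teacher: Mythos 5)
Your proposal is correct and follows essentially the same route as the paper: the paper's proof simply applies Theorem \ref{subalgebra} to the shifted operator $L-c$, $c\in\mb F$ (whose Adler bracket expands to the pencil $H-cK$, with the $c^2$ terms cancelling), and then invokes Theorem \ref{20130921:cor1} for the statement about $\mc V_m$. Your only deviation — using a formal central parameter $\eta$ with $L+\eta$ instead of a field element $c$ — is immaterial, and you merely make explicit the expansion and coefficient-extraction that the paper leaves implicit.
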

\begin{proof}
The fact that $\mc V$ has two compatible double Poisson vertex algebra 
structures follows by Theorem \ref{subalgebra} applied to the
pseudodifferential operator $L-c$, $c\in\mb F$.
The second claim follows by Theorem \ref{20130921:cor1}.
\end{proof}
\begin{remark}
The compatible PVA structures on $\mc V_m$ are
the same as the one appearing in \cite[Eq. (4.7)]{DSKV14a}.
\end{remark}

The following results will be used in Section \ref{sec:non-local}.
\begin{lemma}\label{20140311:lem1}
In the double Poisson vertex algebra $\mc V$,
with $\lambda$-bracket defined by \eqref{eq:H}, we have:
\begin{enumerate}[(a)]
\item $\ldb u_{-N}{}_\lambda L(w)\rdb_H=1\otimes L(w)-L(w+\lambda)\otimes1$;
\item $\ldb L(z)_\lambda u_{-N}\rdb_H=1\otimes L^*(-z+\lambda)-L(z)\otimes1$;
\item $\ldb u_{-N}{}_\lambda u_{-N}\rdb_H=1\otimes u_{-N}-u_{-N}\otimes1
-(1\otimes1)N\lambda$.
\end{enumerate}
\end{lemma}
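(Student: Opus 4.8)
The plan is to obtain all three brackets by extracting suitable coefficients from the master Adler identity \eqref{eq:H}. The key observation is that $u_{-N}$ is, apart from the irrelevant leading term $z^N$, a single coefficient of the generating series $L(z)$: since the coefficient of $z^{-i-1}$ in $L(z)$ is $u_i$, and $z^{-N}z^N=z^0$ does not contribute to the residue, one has $u_{-N}=\res_z z^{-N}L(z)$. Because the $2$-fold $\lambda$-bracket is bilinear, extracting residues commutes with it, so
\[
\ldb u_{-N}{}_\lambda L(w)\rdb_H=\res_z z^{-N}\ldb L(z)_\lambda L(w)\rdb_H,\quad
\ldb L(z)_\lambda u_{-N}\rdb_H=\res_w w^{-N}\ldb L(z)_\lambda L(w)\rdb_H,
\]
and $\ldb u_{-N}{}_\lambda u_{-N}\rdb_H$ is obtained by applying both residues. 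Everything thus reduces to computing residues of the two terms on the right of \eqref{eq:H}.

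For part (a) I would apply $\res_z z^{-N}$ to \eqref{eq:H}, using the expansion $i_z(z-w-\lambda-\partial)^{-1}=\sum_{k\geq0}z^{-k-1}(w+\lambda+\partial)^k$. In the first term the residue forces the leading term $z^N$ of $L(z)$ to pair with the $k=0$ summand (every other pairing would need a coefficient $u_i$ with $i<-N$, which does not occur), leaving $1\otimes L(w)$. In the second term one uses that the leading symbol of $L^*$ is $(-z)^N$, so $L^*(-z+\lambda)=z^N+(\text{lower order in }z)$; hence again only $k=0$ survives and $\res_z z^{-N}\big[i_z(z-w-\lambda-\partial)^{-1}L^*(-z+\lambda)\big]=1$. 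Since $\partial$ annihilates this constant, the prefactor $L(w+\lambda+\partial)$ collapses to $L(w+\lambda)$, and the second term yields $-L(w+\lambda)\otimes 1$, giving (a).

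For part (c) the quickest route is to apply $\res_w w^{-N}$ to the already established identity (a): the first summand gives $1\otimes\res_w w^{-N}L(w)=1\otimes u_{-N}$, while $\res_w w^{-N}L(w+\lambda)$ is the coefficient of $w^{N-1}$ in $L(w+\lambda)=(w+\lambda)^N+\sum_i u_i(w+\lambda)^{-i-1}$, which is $N\lambda$ (from the binomial expansion of $(w+\lambda)^N$) plus $u_{-N}$ (from $u_{-N}(w+\lambda)^{N-1}$); this is exactly where the spectral term $N\lambda$ appears, proving (c). For part (b) I would deduce it from (a) by skewsymmetry (Proposition \ref{20130923:prop1}(a)): $\ldb L(z)_\lambda u_{-N}\rdb_H=-\ldb u_{-N}{}_{-\lambda-\partial}L(z)\rdb_H^\sigma$. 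Substituting (a), applying $\sigma$, and letting $-\lambda-\partial$ act on the coefficients produces $1\otimes\sum_n(z-\lambda-\partial)^n(\ell_n)-L(z)\otimes1$, where $\ell_n$ is the coefficient of $x^n$ in the symbol $L(x)$; the first tensor factor is then recognized as $L^*(-z+\lambda)$ via the standard symbol-of-the-adjoint identity $L^*(-z+\lambda)=\sum_n(z-\lambda-\partial)^n(\ell_n)$. (Alternatively, one may compute $\res_w w^{-N}$ of \eqref{eq:H} directly, handling the shared $w$-dependence with \eqref{star} and \eqref{valentinesday}.)

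The main obstacle throughout is the bookkeeping of the derivation $\partial$: in \eqref{eq:H} it sits inside the middle factor $i_z(z-w-\lambda-\partial)^{-1}$ and inside the prefactor $L(w+\lambda+\partial)$, and it must be tracked as it acts \emph{across} the tensor product onto the coefficients of the other symbol. The two tensor factors are moreover entangled in the variable being integrated out, since $i_z(\cdots)^{-1}$ depends on $z$ although it sits in the second slot; one must therefore expand everything in $z^{-1}$ (resp.\ $w^{-1}$) before taking the residue, rather than treating the factors independently. Once this is done the sums collapse to the finitely many coefficients singled out by the leading symbols $z^N$ of $L$ and $L^*(-z+\lambda)$, and the remaining verifications are routine.
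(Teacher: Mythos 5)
Your proposal is correct: all three residue extractions check out, including the collapse of $L(w+\lambda+\partial)\otimes 1$ to $L(w+\lambda)\otimes 1$ in part (a), the identification $\res_w w^{-N}L(w+\lambda)=u_{-N}+N\lambda$ in part (c), and the adjoint-symbol identity $L^*(-z+\lambda)=\sum_n(z-\lambda-\partial)^n(\ell_n)$ combined with skewsymmetry in part (b). This is also essentially the paper's own approach, since its proof consists of citing Lemma 2.17 of \cite{DSKV14a}, whose argument is precisely this extraction of $\res_z z^{-N}$ (resp.\ $\res_w w^{-N}$) from the Adler identity \eqref{eq:H}.
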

\begin{proof}
Same proof as Lemma 2.17 in \cite{DSKV14a}.
\end{proof}
The following results follow immediately from the previous lemma and
equation \eqref{20130917:eq1}.
\begin{corollary}
For $m\geq1$, in the Poisson vertex algebra $\mc V_m$ we have
($a,b,c,d=1,\dots,m$):
\begin{enumerate}[(a)]
\item $\{ u_{-N,ab}{}_\lambda L_{cd}(w)\}
=\delta_{cb}L_{ad}(w)-\delta_{ad}L_{cb}(w+\lambda)$;
\item $\{ L_{ab}(z)_\lambda u_{-N,cd}\}
=\delta_{cb}L_{ad}^*(-z+\lambda)-\delta_{ad}L_{cb}(z)$;
\item $\{ u_{-N,ab}{}_\lambda u_{-N,cd}\}
=\delta_{cb}u_{-N,ad}-\delta_{ad}u_{-N,cb}-\delta_{ad}\delta_{cb}N\lambda$.
\end{enumerate}
\end{corollary}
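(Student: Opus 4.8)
The plan is to obtain the three identities directly from Lemma \ref{20140311:lem1} by feeding each of them into the transfer formula \eqref{20130917:eq1}, which computes the induced Poisson vertex algebra $\lambda$-bracket of matrix entries in $\mc V_m$ out of the $2$-fold $\lambda$-bracket on $\mc V$. Concretely, writing a $2$-fold $\lambda$-bracket in Sweedler notation as $\ldb a_\lambda b\rdb=\sum_n (A'\otimes A'')\lambda^n$, formula \eqref{20130917:eq1} reads
$$
\{a_{ij}{}_\lambda b_{hk}\}=\sum_n (A')_{hj}(A'')_{ik}\lambda^n\,;
$$
that is, on passing to matrix entries the left tensor factor receives the crossed index pair (first index of the second entry, second index of the first entry) and the right factor receives the complementary pair (first index of the first entry, second index of the second entry). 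Since the entry map $a\mapsto a_{cd}$ is linear and \eqref{20130917:eq1} is bilinear, this substitution may be applied directly to the generating-series identities of Lemma \ref{20140311:lem1}, reading $\ldb u_{-N}{}_\lambda L(w)\rdb_H$, $\ldb L(z)_\lambda u_{-N}\rdb_H$ and $\ldb u_{-N}{}_\lambda u_{-N}\rdb_H$ as elements of $(\mc V\otimes\mc V)[\lambda]((z^{-1},w^{-1}))$.

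First I would treat part (a). Here $a=u_{-N}$ carries indices $(a,b)$ and $b=L(w)$ carries indices $(c,d)$, so the rule sends the left factor to entry $(c,b)$ and the right factor to entry $(a,d)$. Feeding $\ldb u_{-N}{}_\lambda L(w)\rdb_H=1\otimes L(w)-L(w+\lambda)\otimes1$ into this, the simple tensor $1\otimes L(w)$ contributes $(1)_{cb}(L(w))_{ad}=\delta_{cb}L_{ad}(w)$ (using $1_{ij}=\delta_{ij}$, which follows from \eqref{20130917:eq2}), and $-L(w+\lambda)\otimes1$ contributes $-(L(w+\lambda))_{cb}(1)_{ad}=-\delta_{ad}L_{cb}(w+\lambda)$, which is exactly (a). Parts (b) and (c) are identical bookkeeping: substituting $\ldb L(z)_\lambda u_{-N}\rdb_H=1\otimes L^*(-z+\lambda)-L(z)\otimes1$ (now $L(z)$ is the first entry and $u_{-N}$ the second) yields $\delta_{cb}L_{ad}^*(-z+\lambda)-\delta_{ad}L_{cb}(z)$, using that the entry and adjoint operations commute, $(L^*)_{ad}=(L_{ad})^*$; and substituting $\ldb u_{-N}{}_\lambda u_{-N}\rdb_H=1\otimes u_{-N}-u_{-N}\otimes1-(1\otimes1)N\lambda$ yields the three terms $\delta_{cb}u_{-N,ad}$, $-\delta_{ad}u_{-N,cb}$, and $-\delta_{ad}\delta_{cb}N\lambda$, which is (c).

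There is no essential obstacle; the proof is pure substitution, and the only points demanding care are matching the indices in \eqref{20130917:eq1}, in particular the crossing of the two inner indices $b$ and $c$, and keeping track of the adjoint symbol $L^*$ and the shift $L(w+\lambda)$ when passing to the entries $L_{cd}$. For cleanliness I would also note that the leading scalar term of $L$ (a central element for $\ldb-_\lambda-\rdb$) enters only through pure $z^N$ (resp. $w^N$) contributions, which cancel between the two tensor slots, so it does not interfere with the termwise application of \eqref{20130917:eq1}.
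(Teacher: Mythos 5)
Your proposal is correct and is exactly the paper's argument: the paper proves this corollary with the single remark that it ``follows immediately from the previous lemma and equation \eqref{20130917:eq1}'', i.e.\ by substituting the three identities of Lemma \ref{20140311:lem1} into the transfer formula \eqref{20130917:eq1}, which is precisely the index bookkeeping you carry out. Your explicit tracking of the crossed indices $(c,b)$/$(a,d)$, of $L^*$ and the shift $w+\lambda$, and of the cancellation of the leading $w^N$ terms just makes the paper's ``immediately'' explicit.
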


\section{Non-local double Poisson vertex algebras and their Dirac reduction}

\subsection{Non-local double Poisson vertex algebras}\label{sec:non-local}
The notion of a non-local double PVA is similar to that of a non-local PVA, introduced in
\cite{DSK13}.
As in that paper, given a vector space
$V$, we denote
$$
V_{\lambda,\mu}:=V[[\lambda^{-1},\mu^{-1},(\lambda+\mu)^{-1}]][\lambda,\mu]\,,
$$
namely, the quotient of the $\mb F[\lambda,\mu,\nu]$-module
$V[[\lambda^{-1},\mu^{-1},\nu^{-1}]][\lambda,\mu,\nu]$
by the submodule 
$(\nu-\lambda-\mu)V[[\lambda^{-1},\mu^{-1},\nu^{-1}]][\lambda,\mu,\nu]$.
Recall that we have the natural embedding 
$i_{\mu}:\,V_{\lambda,\mu}\hookrightarrow V((\lambda^{-1}))((\mu^{-1}))$
(see the beginning of Section \ref{sec:AGD}).

Let $\mc V$ be a differential algebra with a derivation $\partial$.
A \emph{(non-local) $2$-fold} $\lambda$-\emph{bracket} on $\mc V$ is a linear map
$\ldb-_\lambda-\rdb:\,\mc V^{\otimes2}\to \mc V^{\otimes2}((\lambda^{-1}))$
satisfying the \emph{sesquilinearity} conditions \eqref{20140702:eq4b}
and the left and right \emph{Leibniz rules} \eqref{20140702:eq6b}.
Here and further an expression $\ldb a_{\lambda+\partial}b\rdb_\to\star_1 c$ is interpreted as follows:
if $\ldb a_{\lambda}b\rdb=\sum_{n=-\infty}^N(h_n'\otimes h_n'')\lambda^n$, 
then $\ldb a_{\lambda+\partial}b\rdb_\to\star_1 c
=\sum_{n=-\infty}^Nh_n'\left((\lambda+\partial)^nc\right)\otimes h_n''$,
where we expand $(\lambda+\partial)^n$ in non-negative powers of $\partial$
acting on $c$.
The (non-local) $2$-fold $\lambda$-bracket $\ldb \cdot\,_\lambda\,\cdot\rdb$ 
is called \emph{skewsymmetric} if equation \eqref{eq:skew2} holds.
The RHS of the skewsymmetry condition should be interpreted as follows:
we move $-\lambda-\partial$ to the left and
we expand its powers in non-negative powers of $\partial$,
acting on the coefficients on the $2$-fold $\lambda$-bracket.
We say that the (non-local) $2$-fold $\lambda$-bracket $\ldb-_\lambda-\rdb$ 
is \emph{admissible} if
\begin{equation}\label{20110921:eq4}
\ldb a_\lambda\ldb b_\mu c\rdb\rdb_L\in\mc V^{\otimes3}_{\lambda,\mu}
\qquad\text{for all }a,b,c\in\mc V\,.
\end{equation}
Here we are identifying the space $\mc V_{\lambda,\mu}^{\otimes3}$
with its image in $\mc V^{\otimes3}((\lambda^{-1}))((\mu^{-1}))$ via the embedding $i_{\mu
}$.
\begin{remark}
If $\ldb-_\lambda-\rdb$ is a skewsymmetric admissible 
(non-local) $2$-fold $\lambda$-bracket on $\mc V$,
by equation \eqref{20140707:eq2} we have also
$\ldb b_\mu\ldb a_\lambda c\rdb\rdb_R\in\mc V_{\lambda,\mu}^{\otimes3}$,
since $\mc V^{\otimes3}_{\lambda,\mu}=\mc V^{\otimes3}_{\mu,\lambda}$, 
and similarly 
$\ldb\ldb a_\lambda b\rdb_{\lambda+\mu} c\rdb_L\in\mc V_{\lambda,\mu}^{\otimes3}$,
for all $a,b,c\in\mc V$.
\end{remark}
\begin{definition}\label{20130513:def}
A \emph{non-local double Poisson vertex algebra} is a differential algebra $\mc V$
endowed with a non-local $2$-fold $\lambda$-bracket,
$\ldb-_\lambda-\rdb:\,\mc V\otimes \mc V\to \mc V((\lambda^{-1}))$
satisfying
skewsymmetry \eqref{eq:skew2},
admissibility \eqref{20110921:eq4},
and 
Jacobi identity \eqref{eq:jacobi2},
where the latter is understood as an identity in the space $\mc V_{\lambda,\mu}^{\otimes3}$.
%
\end{definition}

\subsection{Dirac reduction for non-local double Poisson vertex algebras}\label{sec:dirac}
Dirac reduction for a double PVA is similar to that of a PVA, constructed in \cite{DSKV14a}.
Let $\mc V$ be a non-local double Poisson vertex algebra 
with $2$-fold $\lambda$-bracket $\ldb-_{\lambda}-\rdb$.
Let $\theta_1,\ldots,\theta_m$ be elements of $\mc V$
and let us consider the matrix pseudodifferential operator
$$
C(\partial)=(C_{\alpha\beta}(\partial))_{\alpha,\beta=1}^m
\in\Mat_{m\times m}\mc V^{\otimes2}((\partial^{-1}))
\,,
$$
where the symbol of the pseudodifferential operator $C_{\alpha\beta}(\partial)$
is
\begin{equation}\label{C}
C_{\alpha\beta}(\lambda)=\ldb\theta_{\beta}{}_{\lambda}\theta_{\alpha}\rdb\,.
\end{equation}
Given a pseudodifferential operator $A(\partial)=\sum_{n\leq N}A_n\partial^n
\in\Mat_{m\times m}\mc V^{\otimes 2}((\partial^{-1}))$, recall that we defined its adjoint by
$$
A(\partial)^\dagger=\sum_{n\leq N}(-\partial)^n\circ (A_n^t)^\sigma\,,
$$
where $A_n^t$ denotes the transpose matrix of $A_n$. By the skewsymmetry
condition \eqref{eq:skew2}, the pseudodifferential operator $C(\partial)$
is skewadjoint.
We shall assume that the matrix pseudodifferential operator $C(\partial)$ is invertible
with respect to the $\bullet$-product \eqref{20140609:eqc1}, 
and we denote its inverse by
$C^{-1}(\partial)=\big((C^{-1})_{\alpha\beta}(\partial)\big)_{\alpha,\beta=1}^m
\in\Mat_{m\times m}\mc V^{\otimes2}((\partial^{-1}))$.
\begin{definition}\label{20130514:def}
The \emph{Dirac modification} of the $2$-fold
$\lambda$-bracket $\ldb-_{\lambda}-\rdb$,
associated to the elements $\theta_1,\dots,\theta_m$,
is the map
$\ldb-_{\lambda}-\rdb^D:\,\mc V^{\otimes2}\to\mc V^{\otimes2}((\lambda^{-1}))$
given by ($a,b\in\mc V$):
\begin{equation}\label{dirac}
\ldb a_{\lambda}b\rdb^D
=\ldb a_{\lambda}b\rdb
-\sum_{\alpha,\beta=1}^m
\ldb{\theta_{\beta}}_{\lambda+\partial}b\rdb_{\to}
\bullet
(C^{-1})_{\beta\alpha}(\lambda+\partial)
\bullet
\ldb a_{\lambda}\theta_{\alpha}\rdb\,.
\end{equation}
\end{definition}
\begin{theorem}\label{prop:dirac}
Let $\mc V$ be a non-local double Poisson vertex algebra with $2$-fold $\lambda$-bracket $\ldb-_\lambda-\rdb$.
Let $\theta_1,\dots,\theta_m\in\mc V$ be  elements such that
the corresponding matrix pseudodifferential operator 
$C(\partial)=(C_{\alpha\beta}(\partial))_{\alpha,\beta=1}^m\in\Mat_{m\times m}\mc V^{\otimes2}((\partial^{-1}))$
given by \eqref{C} is invertible.
\begin{enumerate}[(a)]
\item
The Dirac modification $\ldb-_\lambda-\rdb^D$
given by equation \eqref{dirac}
is a $2$-fold $\lambda$-bracket on $\mc V$,
giving $\mc V$ a structure of a non-local double PVA.
\item
All the elements $\theta_i,\,i=1,\dots,m$, are central 
with respect to the Dirac modified $\lambda$-bracket:
$\ldb a_\lambda\theta_i\rdb^D=\ldb{\theta_i}_\lambda a\rdb^D=0$
for all $i=1,\dots,m$ and $a\in\mc V$.
\end{enumerate}
\end{theorem}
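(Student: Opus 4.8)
The plan is to verify the axioms of a non-local double PVA for the Dirac modification $\ldb-_\lambda-\rdb^D$ of \eqref{dirac} and then to deduce part (b); in practice I would prove the centrality (b) first, since it is the shortest computation and it is used to streamline the Jacobi identity. \emph{Centrality.} Setting $b=\theta_i$ in \eqref{dirac} and using $\ldb\theta_\beta{}_\mu\theta_i\rdb=C_{i\beta}(\mu)$ from \eqref{C}, the correction term becomes $\sum_{\alpha,\beta} C_{i\beta}(\lambda+\partial)\bullet(C^{-1})_{\beta\alpha}(\lambda+\partial)\bullet\ldb a_\lambda\theta_\alpha\rdb$; summing over $\beta$ via $C\bullet C^{-1}=\mathrm{Id}$ gives $\sum_\alpha\delta_{i\alpha}\ldb a_\lambda\theta_\alpha\rdb=\ldb a_\lambda\theta_i\rdb$, so $\ldb a_\lambda\theta_i\rdb^D=0$. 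Setting instead $a=\theta_i$ and using $\ldb\theta_i{}_\lambda\theta_\alpha\rdb=C_{\alpha i}(\lambda)$, the sum over $\alpha$ produces $(C^{-1}\bullet C)_{\beta i}=\delta_{\beta i}$; since $\partial$ acts as zero on this scalar, the shift $\lambda+\partial$ in the arrow $\ldb\theta_\beta{}_{\lambda+\partial}a\rdb_\to$ collapses to $\lambda$, yielding $\ldb\theta_i{}_\lambda a\rdb^D=0$. This proves (b).

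\emph{Elementary axioms, skewsymmetry, admissibility.} Sesquilinearity \eqref{20140702:eq4b} and the Leibniz rules \eqref{20140702:eq6b} hold term by term: the first summand in \eqref{dirac} obeys them by hypothesis, and for the correction term one applies Lemma \ref{20130916:lem1}(a), Lemma \ref{20140307:lem1}, and the module identities \eqref{lemma:bullet}, keeping track of how $\partial$ passes through the $\bullet$-products and into the arrow $\lambda+\partial$. Skewsymmetry \eqref{eq:skew2} reduces to skewsymmetry of the three factors in the correction term combined with the antiautomorphism property \eqref{20140710:eq4}, $(A\bullet B)^\sigma=B^\sigma\bullet A^\sigma$; here one uses that $C(\partial)$ is skewadjoint (as noted after \eqref{C}), hence so is $C^{-1}(\partial)$, since applying $\dagger$ to $C\bullet C^{-1}=\mathrm{Id}$ gives $(C^{-1})^\dagger\bullet(-C)=\mathrm{Id}$, i.e. $(C^{-1})^\dagger=-C^{-1}$. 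Applying $\sigma$ to each factor and reordering then identifies $\ldb a_\lambda b\rdb^D$ with $-\ldb b_{-\lambda-\partial}a\rdb^{D\sigma}$. Admissibility \eqref{20110921:eq4} follows from admissibility of the original bracket together with the fact that $C^{-1}$ is a Laurent series in a single variable, so the geometric expansions in the iterated Dirac brackets still land in $\mc V^{\otimes3}_{\lambda,\mu}$; this is the same bookkeeping as in the framework of \cite{DSK13}.

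\emph{Jacobi identity.} This is the main obstacle. By Lemma \ref{lem:triple_bracket} it suffices to show that the triple $\lambda$-bracket \eqref{eq:triple1} built from $\ldb-_\lambda-\rdb^D$ vanishes. The key preliminary is a ``derivative of the inverse'' identity: applying the $2$-fold $\lambda$-bracket $\ldb d_\nu-\rdb$, a derivation by the Leibniz rule, to the defining relation $C\bullet C^{-1}=\mathrm{Id}$ via Lemma \ref{20140307:lem1} and using that $\ldb d_\nu\,\mathrm{Id}\rdb=0$, one expresses $\ldb d_\nu\,(C^{-1})_{\gamma\beta}\rdb$ through $\ldb d_\nu C_{\alpha\gamma}\rdb$ and two factors of $C^{-1}$, in analogy with $d(A^{-1})=-A^{-1}(dA)A^{-1}$. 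Substituting \eqref{dirac} into \eqref{eq:triple1} and expanding, the terms built purely from the original bracket cancel by the Jacobi identity of $\mc V$; the remaining terms, carrying factors of $C^{-1}$, are reorganized using the inverse-derivative identity, the skewadjointness of $C^{-1}$, and the centrality of part (b), which eliminates the contributions in which an outer Dirac bracket is applied to some $\theta_\gamma$. The expected outcome is a complete pairwise cancellation. The difficulty is purely in the combinatorial bookkeeping — tracking the arrows, the placements of $\lambda+\partial$ versus $\lambda$ in the $\bullet$-products, and the repeated use of Lemmas \ref{20130916:lem1} and \ref{20140307:lem1} — rather than in any conceptual point, and the computation parallels the PVA case of \cite{DSKV14b}.
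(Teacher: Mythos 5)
Your treatment of part (b), of sesquilinearity, of the Leibniz rules, and of skewsymmetry is correct and is essentially the paper's argument: (b) follows from the symbol identities $C\circ C^{-1}=C^{-1}\circ C=1\otimes1$ exactly as you compute, and you correctly identify the key technical tool for the rest, namely the ``derivative of the inverse'' identity obtained by applying the bracket to $C\bullet C^{-1}=\mathrm{Id}$ via Lemma \ref{20140307:lem1} (this is Lemma \ref{20111012:lem} and Corollary \ref{20130514:cor} in the paper). However, there are two genuine gaps. First, admissibility of $\ldb-_\lambda-\rdb^D$ is not a matter of ``geometric expansions landing in $\mc V^{\otimes3}_{\lambda,\mu}$'': when one expands $\ldb a_\lambda\ldb b_\mu c\rdb^D\rdb^D_L$, the Leibniz rules produce terms containing $\ldb a_\lambda (C^{-1})_{\delta\gamma}(y)\rdb_L$, and admissibility of the original bracket says nothing about brackets with entries of $C^{-1}$; this is precisely where the inverse-derivative identity, combined with Lemma \ref{20111006:lem}, is first needed. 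Your proposal invokes that identity only later, for the Jacobi identity, so the admissibility step as written is unjustified.

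Second, and more seriously, the Jacobi identity --- the bulk of the paper's proof, an expansion of each of the three terms into eight summands followed by explicit pairwise matching --- is deferred in your proposal to an ``expected complete pairwise cancellation,'' and the one structural simplification you propose is unsound. Centrality of the $\theta_i$'s cannot ``eliminate the contributions in which an outer Dirac bracket is applied to some $\theta_\gamma$,'' because no such contributions occur: in the expansion, the outer Dirac brackets land (via the Leibniz rules \eqref{20140305:eq1}) on the factors $\ldb\theta_\delta{}_{y}c\rdb$, $(C^{-1})_{\delta\gamma}(y)$ and $\ldb b_\mu\theta_\gamma\rdb$, never on a bare $\theta_\gamma$, and terms such as $\ldb a_\lambda\ldb b_\mu\theta_\gamma\rdb\rdb^D_L$ do not vanish in general. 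What centrality does give is the vanishing of the Dirac triple bracket whenever one of its three \emph{arguments} is a $\theta_i$; but since $\mc V$ is an arbitrary non-local double PVA, not generated by the $\theta$'s, this does not propagate to arbitrary triples, and your appeal to Lemma \ref{lem:triple_bracket} gives no reduction either (vanishing of the triple bracket for all $a,b,c$ simply \emph{is} the Jacobi identity). So the cancellation must be carried out as in the paper: rewrite all $C^{-1}$-terms using \eqref{20111012:eq2cL}, \eqref{20111012:eq2cR}, \eqref{20111012:eq2d}, and match the resulting summands against those of $\ldb\ldb a_\lambda b\rdb^D{}_{\lambda+\mu}c\rdb^D_L$ using the skewsymmetry and Jacobi identity of the original bracket, skewadjointness of $C$, and Lemma \ref{lemma:bullet-i}(c). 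Your outline supplies the right tools but not this computation, and the shortcut meant to replace it fails.
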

Before proving the theorem we state some lemmas that we will use in the
proof.
\begin{lemma}\label{20111006:lem}
Let $A(\lambda,\mu)\in\mc V^{\otimes3}_{\lambda,\mu}$ and
$B(\lambda,\mu)\in\mc V^{\otimes2}_{\lambda,\mu}$,
and let $S,T:\,\mc V^{\otimes3}\to \mc V^{\otimes3}$ be endomorphisms of
$\mc V^{\otimes3}$ (viewed as a vector space). 
Then
$$
\begin{array}{l}
A(\lambda+S,\mu+T)\bullet_1 B(\lambda,\mu)
\in\mc V^{\otimes3}_{\lambda,\mu}\,,\\
A(\lambda+S,\mu+T)\bullet_2 B(\lambda,\mu)
\in\mc V^{\otimes3}_{\lambda,\mu}\,,\\
A(\lambda+S,\mu+T)\bullet_3 B(\lambda,\mu)
\in\mc V^{\otimes3}_{\lambda,\mu}\,,
\end{array}
$$
where we expand the negative powers of $\lambda+S$ and $\mu+T$ 
in non-negative powers of $S$ and $T$, acting on the coefficients of $B$,
and
$$
\begin{array}{l}
B(\lambda+S,\mu+T)\bullet_1A(\lambda,\mu)
\in\mc V^{\otimes3}_{\lambda,\mu}\,,\\
B(\lambda+S,\mu+T)\bullet_2A(\lambda,\mu)
\in\mc V^{\otimes3}_{\lambda,\mu}\,,\\
B(\lambda+S,\mu+T)\bullet_3A(\lambda,\mu)
\in\mc V^{\otimes3}_{\lambda,\mu}\,,
\end{array}
$$
where now $S$ and $T$ act on the coefficients of $A$.
%
\end{lemma}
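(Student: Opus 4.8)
The plan is to reduce all six claimed memberships to a single closure property of the scalar coefficient ring, together with the evident bilinearity of the $\bullet_i$ products over that ring. Write $\mb F_{\lambda,\mu}:=\mb F[[\lambda^{-1},\mu^{-1},(\lambda+\mu)^{-1}]][\lambda,\mu]$ for the ring of scalar rational functions, so that $\mc V^{\otimes n}_{\lambda,\mu}$ consists of $\mb F_{\lambda,\mu}$-combinations with $\mc V^{\otimes n}$-coefficients, on which $S,T\in\End(\mc V^{\otimes3})$ act coefficientwise and hence commute with multiplication by elements of $\mb F_{\lambda,\mu}$. Since all six statements have the same shape — an operator shift of a three-factor rational element $\bullet_i$-multiplied by a two-factor rational element over the same pair of variables — and differ only by which tensor slots the multiplication merges and by which factor carries the shift, I would prove one representative case in full, say $A(\lambda+S,\mu+T)\bullet_1 B(\lambda,\mu)\in\mc V^{\otimes3}_{\lambda,\mu}$, and note that the remaining five are instances of the identical argument.

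First I would isolate the scalar input, working with the rational representative of $A$ (with explicit denominators among $\lambda,\mu,\lambda+\mu$): for $P\in\mb F_{\lambda,\mu}$ and formal parameters $x,y$, the shifted expression $P(\lambda+x,\mu+y)$, expanded as prescribed in the statement, lies in $\mb F_{\lambda,\mu}[[x,y]]$ with every coefficient of $x^jy^l$ again in $\mb F_{\lambda,\mu}$. This is immediate on the three generating denominators: $(\lambda+x)^{-1}$ and $(\mu+y)^{-1}$ expand into $\lambda$- and $\mu$-denominators, while the decisive point is that $(\lambda+\mu)^{-1}$ shifts to $(\lambda+\mu+x+y)^{-1}$ and hence expands into $(\lambda+\mu)$-denominators only, with no new denominator type produced. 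This is exactly why the rational subspace $\mc V^{\otimes3}_{\lambda,\mu}$, rather than merely $\mc V^{\otimes3}((\lambda^{-1}))((\mu^{-1}))$, is preserved, and it is the same computation that underlies admissibility in the non-local PVA setting of \cite{DSK13}.

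With this in hand the endomorphism substitution $x\mapsto S$, $y\mapsto T$ is harmless, since $S,T$ act on coefficients: each $S^jT^l$ applied to the coefficients of $B\in\mc V^{\otimes2}_{\lambda,\mu}$ leaves $B$ in $\mc V^{\otimes2}_{\lambda,\mu}$, its $\lambda,\mu$-denominators being untouched. Finally, the $\bullet_1$ product merges tensor factors via the associative multiplication of $\mc V$ and is $\mb F_{\lambda,\mu}$-bilinear, so it multiplies the denominators of $A$ and of $B$ within $\mb F_{\lambda,\mu}$; each term of the $S,T$-expansion therefore lies in $\mc V^{\otimes3}_{\lambda,\mu}$, and, the denominator exponents being bounded by the fixed finite exponents of $A$ and $B$, so does the total sum. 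The main obstacle is the scalar closure step of the second paragraph: one must check that shifting the rational representative of $A$ (where $\lambda+\mu$ shifts to $\lambda+\mu+S+T$) and then applying the embedding $i_\mu$ agrees with the shift performed directly on the embedded Laurent series, and that the result genuinely lands in the image of $i_\mu$ with bounded denominators. Once this bookkeeping is settled, the operator substitution and the $\bullet_i$-bilinearity are purely formal.
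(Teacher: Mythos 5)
Your central observation is correct and is indeed the heart of the matter: under the shift $\lambda\mapsto\lambda+x$, $\mu\mapsto\mu+y$ the three admissible denominators reproduce themselves---in particular $(\lambda+\mu)^{-1}$ becomes $(\lambda+\mu+x+y)^{-1}$, whose expansion involves only powers of $(\lambda+\mu)^{-1}$---and the $\bullet_i$ products of \eqref{20140609:eqc2} are defined coefficientwise, so they cannot create new denominator types. This is the same point on which the proof the paper actually gives (a one-line reference to Lemma 2.3 of \cite{DSK13}) rests. The gap is in your reduction to ``scalar closure plus $\mb F_{\lambda,\mu}$-bilinearity'': it tacitly assumes that elements of $\mc V^{\otimes n}_{\lambda,\mu}$ are finite $\mb F_{\lambda,\mu}$-combinations of vectors, i.e.\ ``rational representatives with explicit denominators among $\lambda,\mu,\lambda+\mu$''. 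By the definition in Section \ref{sec:non-local}, $\mc V_{\lambda,\mu}$ is a quotient of the formal-series module $\mc V[[\lambda^{-1},\mu^{-1},\nu^{-1}]][\lambda,\mu,\nu]$ by $(\nu-\lambda-\mu)$, and its typical elements are genuinely infinite series such as $\sum_{k\geq0}(-1)^k(\partial^k v)\,(\lambda+\mu)^{-k-1}$---exactly what $(C^{-1})_{\delta\gamma}(\lambda+\mu+\partial)$ produces in the Dirac-reduction argument where this lemma is applied. Such a series is not a finite combination $\sum_i v_i f_i(\lambda,\mu)$ with $f_i\in\mb F_{\lambda,\mu}$ unless the $\partial^k v$ span a finite-dimensional space, so your check ``on the three generating denominators'' establishes the claim only for the localization $\mb F[\lambda^{\pm1},\mu^{\pm1},(\lambda+\mu)^{-1}]$-span, not for the space the lemma is about.

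The missing step---which is the actual content of the lemma, not bookkeeping around it---is a local-finiteness argument on series representatives in the three independent variables: write $A=\sum a_{ijk}\lambda^i\mu^j\nu^k$ and $B=\sum b_{lmn}\lambda^l\mu^m\nu^n$ with all degrees bounded above; the expansions of $(\lambda+S)^i$, $(\mu+T)^j$, $(\nu+S+T)^k$ in non-negative powers of the operators only lower the powers of $\lambda,\mu,\nu$, so for a fixed output monomial $\lambda^a\mu^b\nu^c$ the constraints $i+l-p=a$, $j+m-q=b$, $k+n-r=c$ with $p,q,r\geq0$ and $i,j,k,l,m,n$ bounded above admit finitely many solutions; hence the product is a well-defined series, still polynomial in $\lambda,\mu,\nu$; and finally the class modulo $(\nu-\lambda-\mu)$ is independent of the representatives chosen, because the substitution sends $\nu-\lambda-\mu$ to $(\nu+S+T)-(\lambda+S)-(\mu+T)=\nu-\lambda-\mu$. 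Two of your formulations actively obscure this: the convergence justification ``the denominator exponents being bounded by the fixed finite exponents of $A$ and $B$'' is backwards (the denominator exponents are unbounded---the series are infinite in $\lambda^{-1},\mu^{-1},(\lambda+\mu)^{-1}$; it is the polynomial degrees that are bounded, and that bound is what yields local finiteness), and the ``main obstacle'' you name at the end (compatibility of the shift with the embedding $i_\mu$) is not where the difficulty lies.
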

\begin{proof}
The proof follows the same lines as the proof of Lemma 2.3 in \cite{DSK13}
using the definition of the $\bullet$-products given by equation
\eqref{20140609:eqc2}.
\end{proof}
\begin{lemma}\label{20111012:lem}
Let $\ldb-_\lambda-\rdb:\,\mc V^{\otimes2}\times\mc V^{\otimes2}\to\mc V^{\otimes2}((\lambda^{-1}))$ 
be a $2$-fold $\lambda$-bracket on 
the differential algebra $\mc V$.
Suppose that 
$C(\partial)=\big(C_{ij}(\partial)\big)_{i,j=1}^\ell\in\Mat_{\ell\times\ell}\mc V^{\otimes2}((\partial^{-1}))$
is an invertible $\ell\times\ell$ matrix pseudodifferential operator with coefficients in $\mc V^{\otimes2}$,
and let
$C^{-1}(\partial)=\big((C^{-1})_{ij}(\partial)\big)_{i,j=1}^\ell
\in\Mat_{\ell\times\ell}\mc V^{\otimes2}((\partial^{-1}))$
be its inverse.
The following identities hold for every $a\in\mc V$ and $i,j=1,\dots,\ell$:
\begin{equation}\label{20111012:eq2aL}
\begin{array}{l}
\displaystyle{
\ldb a_\lambda (C^{-1})_{ij}(\mu)\rdb_L
=
-\sum_{r,t=1}^\ell
i_\mu(C^{-1})_{ir}(\lambda+\mu+\partial)
} \\
\displaystyle{
\qquad\qquad
\bullet_2
\ldb a_\lambda C_{rt}(y)\rdb_L\bullet_1\Big(\Big|_{y=\mu+\partial}(C^{-1})_{tj}(\mu)
\Big)
\,\in\mc V^{\otimes3}((\lambda^{-1}))((\mu^{-1}))
\,,
}
\end{array}
\end{equation}
\begin{equation}\label{20111012:eq2aR}
\begin{array}{l}
\displaystyle{
\ldb a_\lambda (C^{-1})_{ij}(\mu)\rdb_R
=
-\sum_{r,t=1}^\ell
i_\mu(C^{-1})_{ir}(\lambda+\mu+\partial)
} \\
\displaystyle{
\qquad\qquad
\bullet_2
\ldb a_\lambda C_{rt}(y)\rdb_R
\bullet_3
\Big(\Big|_{y=\mu+\partial}(C^{-1})_{tj}(\mu)
\Big)
\,\in\mc V^{\otimes3}((\lambda^{-1}))((\mu^{-1}))
\,,
}
\end{array}
\end{equation}
and
\begin{equation}\label{20111012:eq2b}
\begin{array}{l}
\displaystyle{
\ldb(C^{-1})_{ij}(\lambda) _{\lambda+\mu} a\rdb_L^{\sigma^2}
=
-\sum_{r,t=1}^\ell
\ldb C_{rt}(x)_{x+y}a\rdb_L^{\sigma^2}
\bullet_3
\Big(\Big|_{x=\lambda+\partial} (C^{-1})_{tj}(\lambda)\Big)
} \\
\qquad\qquad\qquad
\displaystyle{
\bullet_1
\Big(\Big|_{y=\mu+\partial}
i_\lambda({C^*}^{-1})_{ri}(\mu) \Big)
\,\in\mc V^{\otimes3}(((\lambda+\mu)^{-1}))((\lambda^{-1}))
\,,
}
\end{array}
\end{equation}
where
$i_\mu:\,\mc V^{\otimes3}_{\lambda,\mu}\to\mc V^{\otimes3}((\lambda^{-1}))((\mu^{-1}))$
and $i_\lambda:\,
\mc V^{\otimes3}_{\lambda,\mu}\to\mc V^{\otimes3}(((\lambda+\mu)^{-1}))((\lambda^{-1}))$ 
are the natural embeddings defined above.
In equations \eqref{20111012:eq2aL} and \eqref{20111012:eq2b},
$C(\lambda)\in\Mat_{\ell\times\ell}\mc V((\lambda^{-1}))$ 
denotes the symbol of the matrix pseudodifferential operator $C$
and $C^*$ denotes its adjoint (its inverse being $(C^{-1})^*$).
\end{lemma}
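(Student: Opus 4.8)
The plan is to obtain all three identities by differentiating, via the $2$-fold $\lambda$-bracket, the defining relations of the inverse matrix $C^{-1}(\partial)$, exactly as one proves $\delta(C^{-1})=-C^{-1}(\delta C)C^{-1}$ for an ordinary invertible matrix. The crucial point is that the right-hand side of the relation $C\bullet C^{-1}=I$ is constant, so that its $\lambda$-bracket with any $a\in\mc V$ vanishes; expanding the left-hand side by the $\bullet$-Leibniz rules of Lemma \ref{20140307:lem1} then produces a linear relation that can be solved for the bracket of the entries of $C^{-1}$.

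For \eqref{20111012:eq2aL} I would start from the symbol form of $\sum_t C_{it}(\partial)\bullet (C^{-1})_{tj}(\partial)=\delta_{ij}(1\otimes1)$ and apply $\ldb a_\lambda -\rdb_L$. By the first identity of Lemma \ref{20140307:lem1},
$$\sum_t\Big(\ldb a_\lambda C_{it}\rdb_L\bullet_1 (C^{-1})_{tj}+C_{it}\bullet_2\ldb a_\lambda(C^{-1})_{tj}\rdb_L\Big)=0.$$
Then I $\bullet_2$-multiply on the left by $(C^{-1})_{ri}$, sum over $i$, and use the associativity of the $\bullet_2$-action together with the commuting of the left $\bullet_2$- and right $\bullet_1$-actions (Lemma \ref{lemma:bullet-i}(a)--(b)) and the relation $\sum_i(C^{-1})_{ri}\bullet C_{it}=\delta_{rt}(1\otimes1)$ to isolate $\ldb a_\lambda(C^{-1})_{rj}\rdb_L$, obtaining the expression $-\sum_{i,t}(C^{-1})_{ri}\bullet_2\ldb a_\lambda C_{it}\rdb_L\bullet_1(C^{-1})_{tj}$, which is \eqref{20111012:eq2aL} after relabeling. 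The shift $\lambda+\mu+\partial$ in the outer $C^{-1}$-factor and the substitution $y=\mu+\partial$ in the inner one arise from tracking where the composition $\partial$ of the pseudodifferential operators sits relative to the bracket, the shifts being produced by the sesquilinearity relations of Lemma \ref{20130916:lem1}(a). Identity \eqref{20111012:eq2aR} is entirely parallel: one uses instead the second identity of Lemma \ref{20140307:lem1}, so that $\bullet_1$ is replaced by $\bullet_3$ while the $\bullet_2$-factor is unchanged.

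The third identity \eqref{20111012:eq2b} concerns the bracket on the \emph{left} entry, so I would apply the left-entry bracket (after the permutation $\sigma^2$) to the same inverse relation, now using the third identity of Lemma \ref{20140307:lem1}; this is where the adjoint enters, since the left-entry Leibniz rule mixes an entry with its $\sigma$, and the inverse of $C$ is thereby replaced by the inverse of its adjoint, $(C^*)^{-1}=(C^{-1})^*$. The skewadjointness of $C$ and the skewsymmetry relations of Lemma \ref{20130916:lem1}(b) are then used to rewrite left brackets as right brackets and to produce the correct $\sigma$, $\sigma^2$ and adjoint factors in \eqref{20111012:eq2b}. Finally, the assertion that each right-hand side lies in the stated localized space $\mc V^{\otimes3}((\lambda^{-1}))((\mu^{-1}))$ (resp. $\mc V^{\otimes3}(((\lambda+\mu)^{-1}))((\lambda^{-1}))$) follows from Lemma \ref{20111006:lem}, applied to the products of $C^{-1}$-factors with the admissible bracket $\ldb a_\lambda C_{rt}\rdb$.

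The main obstacle is not conceptual but the bookkeeping: one must track precisely how the composition $\partial$ of the matrix pseudodifferential operators passes through the $\bullet_i$-actions and the bracket, which produces the arguments $\lambda+\mu+\partial$ and the $\to$-conventions, and one must consistently use the expansions $i_\mu$ and $i_\lambda$ so that all manipulations take place in a single space of Laurent series. The associativity and commutativity properties of the $\bullet_i$-actions (Lemma \ref{lemma:bullet-i}) are what make the ``solve for $C^{-1}$'' step legitimate, and verifying that the shifts land in the right places is the delicate part.
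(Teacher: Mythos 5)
Your proposal follows essentially the same route as the paper's proof: apply the Leibniz rules of Lemma \ref{20140307:lem1} to the symbol identity $\sum_t C_{rt}(\mu+\partial)\bullet(C^{-1})_{tj}(\mu)=\delta_{rj}(1\otimes1)$, then compose with $i_\mu C^{-1}(\lambda+\mu+\partial)$ via $\bullet_2$ (resp., for the third identity, replace $\mu$ by $\mu+\partial$ and apply to $i_\lambda({C^\dagger}^{-1})_{ri}(\mu)$) and sum so as to isolate the bracket of the entries of $C^{-1}$, with the shifts $\lambda+\mu+\partial$ coming from sesquilinearity and Lemma \ref{lemma:bullet-i} justifying the rearrangements. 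One small correction: neither skewadjointness of $C$ nor skewsymmetry of the $2$-fold $\lambda$-bracket is assumed or needed in this lemma --- the $\sigma$ and adjoint factors in \eqref{20111012:eq2b} arise entirely from the $B^\sigma$ term in the third Leibniz rule of Lemma \ref{20140307:lem1} and from moving $\partial$ onto $({C^\dagger}^{-1})_{ri}(\mu)$, exactly as in the primary mechanism you describe, and likewise the membership in $\mc V^{\otimes3}((\lambda^{-1}))((\mu^{-1}))$ is automatic from the expansions used (Lemma \ref{20111006:lem} and admissibility only enter in the subsequent Corollary \ref{20130514:cor}).
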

\begin{proof}
The identity $C\circ C^{-1}=1\otimes1$ becomes, in terms of symbols,
$$
\sum_{t=1}^\ell C_{rt}(\mu+\partial)\bullet(C^{-1})_{tj}(\mu)=\delta_{rj}(1\otimes1)\,.
$$
For all $a\in\mc V$, we have, by sesquilinearity and the left Leibniz rule
(the first equation in \eqref{20140305:eq1}):
$$
\begin{array}{l}
\displaystyle{
0=\sum_{t=1}^\ell\ldb a_\lambda C_{rt}(\mu+\partial)\bullet(C^{-1})_{tj}(\mu)\rdb_L
} \\
\displaystyle{
=
\sum_{t=1}^\ell
\ldb a_\lambda C_{rt}(y)\rdb_L
\bullet_1
\Big(\Big|_{y=\mu+\partial}(C^{-1})_{tj}(\mu)\Big)
} \\
\displaystyle{
+\sum_{t=1}^\ell 
i_\mu C_{rt}(\lambda+\mu+\partial)
\bullet_2
\ldb a_\lambda (C^{-1})_{tj}(\mu)\rdb
\,.
}
\end{array}
$$
Note that 
$i_\mu C(\lambda+\mu+\partial)$
is invertible in
$\Mat_{\ell\times\ell}\big(\mc V^{\otimes2}[\partial]((\lambda^{-1}))((\mu^{-1}))\big)$,
its inverse being 
$i_\mu C^{-1}(\lambda+\mu+\partial)$.
We then apply 
$i_\mu(C^{-1})_{ir}(\lambda+\mu+\partial)\bullet_2$ on the left to both 
sides of the above equation, and  using Lemma \ref{lemma:bullet-i}(a)
and summing over $r=1,\dots,\ell$, we get
$$
\begin{array}{l}
\displaystyle{
\sum_{t=1}^\ell \delta_{it} \ldb a_\lambda (C^{-1})_{tj}(\mu)\rdb_L
=-
\sum_{r,t=1}^\ell
i_\mu(C^{-1})_{ir}(\lambda+\mu+\partial)
\bullet_2
} \\
\displaystyle{
\qquad\qquad\qquad\qquad\qquad\qquad
\ldb a_\lambda C_{rt}(y)\rdb_L
\bullet_1
\Big(\Big|_{y=\mu+\partial}(C^{-1})_{tj}(\mu)\Big)\,,
}
\end{array}$$
proving equation \eqref{20111012:eq2aL}.
The proof of the equation \eqref{20111012:eq2aR} is done in a similar way
using the second equation in \eqref{20140305:eq1}.

For the third equation we have, by the right Leibniz rule (the third equation
in \eqref{20140305:eq1}):
$$
\begin{array}{l}
\displaystyle{
0=\sum_{t=1}^\ell
\ldb C_{rt}(\lambda+\partial)\bullet(C^{-1})_{tj}(\lambda)\,_{\lambda+\mu}a\rdb_L
} \\
\displaystyle{
=
\sum_{t=1}^\ell
{\ldb{C_{rt}}(x)_{\lambda+\mu+\partial}a\rdb_L}_\to
\bullet_3
\Big(\Big|_{x=\lambda+\partial}(C^{-1})_{tj}(\lambda)\Big)
} \\
\displaystyle{
+ \sum_{t=1}^\ell
{\ldb(C^{-1})_{tj}(\lambda)_{\lambda+\mu+\partial}a\rdb_L}_\to
\bullet_1\big(
i_\lambda C^\dagger_{tr}(\mu)\big)
\,.}
\end{array}
$$
We next replace in the above equation $\mu$ (placed at the right) by $\mu+\partial$,
and we apply the resulting differential operator to
$i_\lambda({C^\dagger}^{-1})_{ri}(\mu)$.
As a result we get, after summing over $r=1,\dots,\ell$:
$$
\begin{array}{l}
\displaystyle{
\sum_{t=1}^\ell 
{\ldb(C^{-1})_{tj}(\lambda)_{\lambda+\mu+\partial}a\rdb_L}_\to \delta_{ti}
=
-\sum_{r,t=1}^\ell
{\ldb{C_{rt}(x)}_{\lambda+\mu+\partial}a\rdb_L}_\to 
} \\
\displaystyle{
\qquad\qquad\qquad\qquad\qquad\qquad
\bullet_3\Big(\Big|_{x=\lambda+\partial}(C^{-1})_{tj}(\lambda)\Big)
\bullet_1\big(
i_\lambda({C^\dagger}^{-1})_{ri}(\mu)\big)
\,,
}
\end{array}
$$
proving equation \eqref{20111012:eq2b}.
\end{proof}
\begin{corollary}\label{20130514:cor}
Let $\ldb-_\lambda-\rdb:\,\mc V^{\otimes2}\to\mc V^{\otimes2}((\lambda^{-1}))$ 
be a $2$-fold $\lambda$-bracket on the differential algebra $\mc V$.
Let
$C(\partial)=\big(C_{ij}(\partial)\big)_{i,j=1}^\ell\in\Mat_{\ell\times\ell}\mc V^{\otimes2}((\partial^{-1}))$
be an invertible $\ell\times\ell$ matrix pseudodifferential operator with coefficients in $\mc V^{\otimes2}$,
and let
$C^{-1}(\partial)=\big((C^{-1})_{ij}(\partial)\big)_{i,j=1}^\ell\in\Mat_{\ell\times\ell}\mc V^{\otimes2}((\partial^{-1}))$
be its inverse.
Let $a\in\mc V$, and assume that
\begin{equation}\label{20130514:eq1}
\ldb a_\lambda C_{ij}(\mu)\rdb_L\,\in\mc V^{\otimes3}_{\lambda,\mu}
\,\,\text{ for all } i,j=1,\dots,\ell\,.
\end{equation}
(As before, we identify $\mc V^{\otimes3}_{\lambda,\mu}$ with its image 
$i_\mu(\mc V^{\otimes3}_{\lambda,\mu})\subset\mc V^{\otimes3}((\lambda^{-1}))((\mu^{-1}))$.)
Then, we have
$\ldb a_\lambda (C^{-1})_{ij}(\mu)\big\rdb_L,
\,\ldb(C^{-1})_{ij}(\lambda) _{\lambda+\mu} a\rdb_L\,\in\mc V^{\otimes3}_{\lambda,\mu}$.
In fact, 
the following identities hold in the space $\mc V^{\otimes3}_{\lambda,\mu}$:
\begin{equation}\label{20111012:eq2cL}
\begin{array}{l}
\displaystyle{
\ldb a_\lambda (C^{-1})_{ij}(\mu)\rdb_L
=
}
\\
\displaystyle{
-\sum_{r,t=1}^\ell
(C^{-1})_{ir}(\lambda+\mu+\partial)
\bullet_2
\ldb a_\lambda C_{rt}(y)\rdb_L
\bullet_1
\Big(\Big|_{y=\mu+\partial}(C^{-1})_{tj}(\mu)\Big)
\,,
}
\end{array}
\end{equation}
\begin{equation}\label{20111012:eq2cR}
\begin{array}{l}
\displaystyle{
\ldb a_\lambda (C^{-1})_{ij}(\mu)\rdb_R
=
} \\
\displaystyle{
-\sum_{r,t=1}^\ell
(C^{-1})_{ir}(\lambda+\mu+\partial)\bullet_2
\ldb a_\lambda C_{rt}(y)\rdb_R
\bullet_3
\Big(\Big|_{y=\mu+\partial}(C^{-1})_{tj}(\mu)
\Big)
\,,
}
\end{array}
\end{equation}
\begin{equation}\label{20111012:eq2d}
\begin{array}{l}
\displaystyle{
\ldb(C^{-1})_{ij}(\lambda) _{\lambda+\mu} a\rdb_L
=
} \\
\displaystyle{
-\sum_{r,t=1}^\ell
{\ldb C_{rt}(x)_{\lambda+\mu+\partial}a\rdb_L}_\to
\bullet_3
\Big(\Big|_{x=\lambda+\partial} (C^{-1})_{tj}(\lambda)\Big)
\bullet_1
\big(({C^\dagger}^{-1})_{ri}(\mu) \big)
\,.
}
\end{array}
\end{equation}
\end{corollary}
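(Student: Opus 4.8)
The plan is to read off the Corollary directly from Lemma \ref{20111012:lem}, whose three identities \eqref{20111012:eq2aL}, \eqref{20111012:eq2aR}, \eqref{20111012:eq2b} are already established, but only as identities in the completions $\mc V^{\otimes3}((\lambda^{-1}))((\mu^{-1}))$ and $\mc V^{\otimes3}(((\lambda+\mu)^{-1}))((\lambda^{-1}))$. The sole new content is to upgrade these to identities in $\mc V^{\otimes3}_{\lambda,\mu}$ under the admissibility hypothesis \eqref{20130514:eq1}. First I would note that every right-hand side in Lemma \ref{20111012:lem} is assembled from two kinds of factors: a ``bracket'' factor, namely $\ldb a_\lambda C_{rt}(y)\rdb_L$ (resp. $\ldb a_\lambda C_{rt}(y)\rdb_R$, resp. $\ldb C_{rt}(x)_{x+y}a\rdb_L$), and symbol factors coming from $C^{-1}$ and ${C^\dagger}^{-1}$. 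The latter are symbols of pseudodifferential operators in a single variable ($\mu$, $\lambda$, or $\lambda+\mu$), hence lie in $\mc V^{\otimes2}((\mu^{-1}))\subset\mc V^{\otimes2}_{\lambda,\mu}$, and analogously for the others; the bracket factors lie in $\mc V^{\otimes3}_{\lambda,\mu}$ precisely by the hypothesis \eqref{20130514:eq1}, the $R$- and right-bracket versions being supplied by \eqref{20130514:eq1} together with the skewsymmetry relations of Lemma \ref{20130916:lem1} and the $\sigma$-invariance of $\mc V^{\otimes3}_{\lambda,\mu}$.

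Next I would apply Lemma \ref{20111006:lem} twice to each right-hand side, once for the $\bullet_1$ (or $\bullet_3$) product and once for the $\bullet_2$ product, observing that each $C^{-1}$-symbol enters exactly in one of the shifted shapes $A(\lambda+S,\mu+T)\bullet_i B(\lambda,\mu)$ or $B(\lambda+S,\mu+T)\bullet_i A(\lambda,\mu)$ covered by that lemma; here the shifts $\mu+\partial$, $\lambda+\mu+\partial$ are $\partial$ acting on the coefficients of the neighbouring factor, as indicated by the arrow and the substitution notation. This shows that each right-hand side of \eqref{20111012:eq2cL}, \eqref{20111012:eq2cR}, \eqref{20111012:eq2d} lies in $\mc V^{\otimes3}_{\lambda,\mu}$. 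Since the embeddings $i_\mu$ and $i_\lambda$ are injective and Lemma \ref{20111012:lem} identifies each left-hand side with the image of such a right-hand side, the left-hand sides $\ldb a_\lambda (C^{-1})_{ij}(\mu)\rdb_{L}$, $\ldb a_\lambda (C^{-1})_{ij}(\mu)\rdb_R$ and $\ldb (C^{-1})_{ij}(\lambda)_{\lambda+\mu}a\rdb_L$ must themselves lie in $\mc V^{\otimes3}_{\lambda,\mu}$, and the three identities then hold there. This gives both assertions of the Corollary at once.

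The one step requiring care is matching the third conclusion \eqref{20111012:eq2d} with the corresponding identity \eqref{20111012:eq2b} of the Lemma, since the former has dropped the outer $\sigma^2$, replaced $C^*$ by $C^\dagger$, and rearranged the $\bullet_3\bullet_1$ product. I would obtain \eqref{20111012:eq2d} by applying $\sigma$ to both sides of \eqref{20111012:eq2b}: on the left, $\sigma^2\sigma=\sigma^{3}=1$ recovers $\ldb (C^{-1})_{ij}(\lambda)_{\lambda+\mu}a\rdb_L$, while on the right one commutes $\sigma$ through the $\bullet_i$-products using the conversion rules of Lemma \ref{lemma:bullet2} and rewrites the resulting $\sigma$-twisted, transposed inverse adjoint $({C^*}^{-1})$ as $({C^\dagger}^{-1})$ via the definition of $\dagger$ in Remark \ref{rem:dagger}. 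This $\sigma$/$\dagger$ bookkeeping is the main obstacle, though it is entirely mechanical; once it is carried out, the $\mc V^{\otimes3}_{\lambda,\mu}$-membership and the identity for the third formula follow by exactly the same injectivity argument as for the first two.
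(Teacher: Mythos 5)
Your proof is correct and follows exactly the paper's route: the paper's entire proof of this corollary reads ``It is an immediate corollary of Lemmas \ref{20111006:lem} and \ref{20111012:lem},'' which is precisely the combination you spell out. The detail you supply --- checking that each bracket factor on the right-hand sides is admissible (via \eqref{20130514:eq1}, skewsymmetry, and $\sigma$-invariance of $\mc V^{\otimes3}_{\lambda,\mu}$), applying Lemma \ref{20111006:lem} to each $\bullet_i$-product, invoking injectivity of $i_\mu$ and $i_\lambda$, and reconciling the $\sigma$/$\dagger$ bookkeeping between \eqref{20111012:eq2b} and \eqref{20111012:eq2d} --- is exactly what the paper leaves implicit in the word ``immediate.''
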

\begin{proof}
It is an immediate corollary of Lemmas \ref{20111006:lem} and \ref{20111012:lem}.
\end{proof}
\begin{remark}
Lemmas \ref{20111006:lem} and \ref{20111012:lem} are the double PVA analogues
of (respectively) \cite[Lemmas 2.3 and 3.9]{DSK13}. Corollary \ref{20130514:cor}
is the double PVA analogue of \cite[Corollary 1.4]{DSKV14b}. 
Naturally, if we put $\lambda =0$ in \eqref{dirac}, 
we obtain Dirac modification of the corresponding double Poisson algebra.
\end{remark}
\begin{proof}[Proof of Theorem \ref{prop:dirac}]
Both sesquilinearity conditions \eqref{20140702:eq4b} 
for the Dirac modified $2$-fold $\lambda$-bracket \eqref{dirac} are immediate to check.
%
%
The skewsymmetry condition \eqref{eq:skew2} 
for the Dirac modified $2$-fold $\lambda$-bracket \eqref{dirac}
can also be easily proved:
it follows by the skewsymmetry of the $2$-fold $\lambda$-bracket
$\ldb-_\lambda-\rdb$,
by the fact that the matrix $C(\partial)$ (hence $C^{-1}(\partial)$) is skewadjoint,
and by equation \eqref{20140710:eq4}.
%
%
Moreover, by equation \eqref{lemma:bullet}, it follows that
the Dirac modified $2$-fold $\lambda$-bracket
$\ldb-_\lambda-\rdb^D$
satisfies the Leibniz rules \eqref{20140702:eq6b}.


Next, we prove that the Dirac modified $2$-fold $\lambda$-bracket
$\ldb-_\lambda-\rdb^D$
is admissible, in the sense of equation \eqref{20110921:eq4}.
For this, we compute $\ldb a_\lambda\ldb b_\mu c\rdb^D\rdb^D_L$
using the definition \eqref{dirac},
the sesquilinearity conditions \eqref{20140702:eq4b},
the Leibniz rule \eqref{20140702:eq6b}
and equation \eqref{20140305:eq1}.
We get
\begin{eqnarray}
&&
\ldb a_\lambda\ldb b_\mu c\rdb^D\rdb_L^D
=
\ldb a_{\lambda}\ldb b_{\mu}c\rdb\rdb_L
\label{1a}\\
&&
-\sum_{\gamma,\delta=1}^m
\ldb
a_{\lambda}
\ldb{\theta_{\delta}}_{y}c\rdb
\rdb_L
\bullet_1\Big(\Big|_{y=\mu+\partial}
(C^{-1})_{\delta\gamma}(\mu+\partial)
\ldb b_{\mu}\theta_{\gamma}\rdb
\Big)
\label{2a}\\
&&
-\sum_{\gamma,\delta=1}^m
\ldb{\theta_{\delta}}_{\lambda+\mu+\partial}c\rdb_{\to}\bullet_2
\ldb
a_{\lambda}
(C^{-1})_{\delta\gamma}(y)
\rdb_L
\bullet_1\Big(\Big|_{y=\mu+\partial}
\ldb b_{\mu}\theta_{\gamma}\rdb
\Big)
\label{3a}\\
&&
-\sum_{\gamma,\delta=1}^m
\big(\ldb{\theta_{\delta}}_{\lambda+\mu+\partial}c\rdb_{\to}
(C^{-1})_{\delta\gamma}(\lambda+\mu+\partial)\big)\bullet_2
\ldb
a_{\lambda}
\ldb b_{\mu}\theta_{\gamma}\rdb
\rdb_L
\label{4a}\\
&&
-\sum_{\alpha,\beta=1}^m
{\ldb{\theta_{\beta}}_{\lambda+\partial}\ldb b_{\mu}c\rdb\rdb_L}_{\to}
\bullet_3
\left((C^{-1})_{\beta\alpha}(\lambda+\partial)
\ldb a_{\lambda}\theta_{\alpha}\rdb\right)
\label{5a}\\
&&
+\sum_{\alpha,\beta,\gamma,\delta=1}^m
\ldb
{\theta_{\beta}}_{x}
\ldb{\theta_{\delta}}_{y}c\rdb
\rdb_L
\bullet_3
\Big(\Big|_{x=\lambda+\partial}
(C^{-1})_{\beta\alpha}(\lambda+\partial)
\ldb a_{\lambda}\theta_{\alpha}\rdb
\Big)
\nonumber\\
&&
\qquad\qquad\qquad
\bullet_1\Big(\Big|_{y=\mu+\partial}
(C^{-1})_{\delta\gamma}(\mu+\partial)
\ldb b_{\mu}\theta_{\gamma}\rdb
\Big)
\label{6a}\\
&&
+\sum_{\alpha,\beta,\gamma,\delta=1}^m
\ldb{\theta_{\delta}}_{\lambda+\mu+\partial}c\rdb_{\to}\bullet_2
\ldb
{\theta_{\beta}}_{x}
(C^{-1})_{\delta\gamma}(y)
\rdb_L
\nonumber\\
&&
\qquad\qquad
\bullet_3\Big(\Big|_{x=\lambda+\partial}
(C^{-1})_{\beta\alpha}(\lambda+\partial)
\ldb a_{\lambda}\theta_{\alpha}\rdb
\Big)
\bullet_1\Big(\Big|_{y=\mu+\partial}
\ldb b_{\mu}\theta_{\gamma}\rdb
\Big)
\label{7a}\\
&&
+\sum_{\alpha,\beta,\gamma,\delta=1}^m
\big(\ldb{\theta_{\delta}}_{\lambda+\mu+\partial}c\rdb_{\to}
(C^{-1})_{\delta\gamma}(\lambda+\mu+\partial)\big)\bullet_2
{\ldb
{\theta_{\beta}}_{\lambda+\partial}
\ldb b_{\mu}\theta_{\gamma}\rdb
\rdb_L}_{\to}
\nonumber\\
&&
\qquad\qquad\qquad
\bullet_3\big((C^{-1})_{\beta\alpha}(\lambda+\partial)
\ldb a_{\lambda}\theta_{\alpha}\rdb\big)
\,.
\label{8a}
\end{eqnarray}
%
%
All the terms \eqref{1a}, \eqref{2a}, \eqref{4a}, \eqref{5a}, \eqref{6a}, and \eqref{8a},
lie in $\mc V_{\lambda,\mu}^{\otimes3}$
by the admissibility assumption on $\ldb-_\lambda-\rdb$
and Lemma \ref{20111006:lem}.
Moreover, by the admissibility of $\ldb-_\lambda-\rdb$
and the definition \eqref{C} of the matrix $C(\partial)$,
condition \eqref{20130514:eq1} holds.
Hence, we can use Corollary \ref{20130514:cor} and Lemma \ref{20111006:lem}
to deduce that
the terms \eqref{3a} and \eqref{7a} lie in $\mc V^{\otimes3}_{\lambda,\mu}$ as 
well. Therefore, 
$\ldb a_\lambda{\ldb b_\mu c\rdb^D}\rdb^D_L$ lies in $\mc V^{\otimes3}_{\lambda,\mu}$
for every $a,b,c\in\mc V$,
i.e. the Dirac modification $\ldb-_\lambda-\rdb^D$
is admissible.


In order to complete the proof of part (a) we are left to check the Jacobi identity 
\eqref{eq:jacobi2} for the Dirac modified $2$-fold $\lambda$-bracket.
We can use equation \eqref{20111012:eq2cL} in Corollary \ref{20130514:cor}
and Lemma \ref{lemma:bullet-i}(a,c)
to rewrite the terms \eqref{3a} and \eqref{7a}.
As a result, we get
\begin{eqnarray}
&&
\ldb a_\lambda\ldb b_\mu c\rdb^D\rdb_L^D
=
\ldb a_{\lambda}\ldb b_{\mu}c\rdb\rdb_L
\label{1b}\\
&&
-\sum_{\gamma,\delta=1}^m
\ldb
a_{\lambda}
\ldb{\theta_{\delta}}_{y}c\rdb
\rdb_L
\bullet_1\Big(\Big|_{y=\mu+\partial}
(C^{-1})_{\delta\gamma}(\mu+\partial)
\ldb b_{\mu}\theta_{\gamma}\rdb
\Big)
\label{2b}\\
&&
+\sum_{\gamma,\delta,\eta,\zeta=1}^m
\big(\ldb{\theta_{\delta}}_{\lambda+\mu+\partial}c\rdb_{\to}
(C^{-1})_{\delta\zeta}(\lambda+\mu+\partial)\big)
\nonumber\\
&&
\qquad\qquad\qquad
\bullet_2
\ldb
a_{\lambda}
\ldb \theta_{\eta}{}_y\theta_{\zeta}\rdb
\rdb_L
\bullet_1
\Big(\Big|_{y=\mu+\partial}
(C^{-1})_{\eta\gamma}(\mu+\partial)
\ldb b_{\mu}\theta_{\gamma}\rdb
\Big)
\label{3b}\\
&&
-\sum_{\gamma,\delta=1}^m
\big(\ldb{\theta_{\delta}}_{\lambda+\mu+\partial}c\rdb_{\to}
(C^{-1})_{\delta\gamma}(\lambda+\mu+\partial)\big)
\bullet_2
\ldb
a_{\lambda}
\ldb b_{\mu}\theta_{\gamma}\rdb
\rdb_L
\label{4b}\\
&&
-\sum_{\alpha,\beta=1}^m
{\ldb{\theta_{\beta}}_{\lambda+\partial}\ldb b_{\mu}c\rdb\rdb_L}_{\to}
\bullet_3
\left((C^{-1})_{\beta\alpha}(\lambda+\partial)
\ldb a_{\lambda}\theta_{\alpha}\rdb\right)
\label{5b}\\
&&
+\sum_{\alpha,\beta,\gamma,\delta=1}^m
\ldb
{\theta_{\beta}}_{x}
\ldb{\theta_{\delta}}_{y}c\rdb
\rdb_L
\bullet_3
\Big(\Big|_{x=\lambda+\partial}
(C^{-1})_{\beta\alpha}(\lambda+\partial)
\ldb a_{\lambda}\theta_{\alpha}\rdb
\Big)
\nonumber\\
&&
\qquad\qquad\qquad
\bullet_1
\Big(\Big|_{y=\mu+\partial}
(C^{-1})_{\delta\gamma}(\mu+\partial)
\ldb b_{\mu}\theta_{\gamma}\rdb
\Big)
\label{6b}\\
&&
+\sum_{\alpha,\beta,\gamma,\delta,\eta,\zeta=1}^m
\big(\ldb{\theta_{\delta}}_{\lambda+\mu+\partial}c\rdb_{\to}
(C^{-1})_{\delta\zeta}(\lambda+\mu+\partial)\big)
\nonumber\\
&&
\qquad\qquad\qquad
\bullet_2
\ldb
{\theta_{\beta}}_{x}
\ldb\theta_{\eta}{}_{y}\theta_{\zeta}\rdb
\rdb_L
\bullet_3
\Big(\Big|_{x=\lambda+\partial}
(C^{-1})_{\beta\alpha}(\lambda+\partial)
\ldb a_{\lambda}\theta_{\alpha}\rdb
\Big)
\nonumber\\
&&
\qquad\qquad\qquad
\bullet_3
\Big(\Big|_{y=\mu+\partial}
(C^{-1})_{\eta\gamma}(\mu+\partial)\ldb b_{\mu}\theta_{\gamma}\rdb
\Big)
\label{7b}\\
&&
+\sum_{\alpha,\beta,\gamma,\delta=1}^m
\big(\ldb{\theta_{\delta}}_{\lambda+\mu+\partial}c\rdb_{\to}
(C^{-1})_{\delta\gamma}(\lambda+\mu+\partial)\big)
\bullet_2
{\ldb
{\theta_{\beta}}_{\lambda+\partial}
\ldb b_{\mu}\theta_{\gamma}\rdb
\rdb_L}_{\to}
\nonumber\\
&&
\qquad\qquad\qquad
\bullet_3\big((C^{-1})_{\beta\alpha}(\lambda+\partial)
\ldb a_{\lambda}\theta_{\alpha}\rdb\big)
\,.
\label{8b}
\end{eqnarray}
Next, we compute the second term in the Jacobi identity using
the definition \eqref{dirac},
the sesquilinearity conditions \eqref{20140702:eq4b},
the Leibniz rules \eqref{20140702:eq6b},
equations \eqref{20140305:eq1} and \eqref{20111012:eq2cR}
and Lemma \ref{lemma:bullet-i}(a,c).
We get
\begin{eqnarray}
&&
\ldb b_\mu{\ldb a_\lambda c\rdb^D}\rdb^D_R
=
\ldb
b_{\mu}
\ldb a_{\lambda}c\rdb
\big\rdb_R
\label{1c}\\
&&
-\sum_{\alpha,\beta=1}^m
\ldb
b_{\mu}
\ldb{\theta_{\beta}}_{x}c\rdb
\rdb_R
\bullet_3
\Big(\Big|_{x=\lambda+\partial}
(C^{-1})_{\beta\alpha}(\lambda+\partial)
\ldb a_{\lambda}\theta_{\alpha}\rdb
\Big)
\label{2c}\\
&&
+\sum_{\alpha,\beta,\gamma,\delta}^m
\big(
\ldb{\theta_{\delta}}_{\lambda+\mu+\partial}c\rdb_{\to}
(C^{-1})_{\delta\gamma}(\lambda+\mu+\partial)
\big)
\bullet_2
\ldb
b_\mu {\ldb{\theta_\beta}_x\theta_\gamma\rdb}
\rdb_R
\nonumber\\
&&
\qquad\qquad\qquad
\bullet_3
\Big(\Big|_{x=\lambda+\partial}
(C^{-1})_{\beta\alpha}(\lambda+\partial)
\ldb a_{\lambda}\theta_{\alpha}\rdb
\Big)
\label{3c}\\
&&
-\sum_{\gamma,\delta=1}^m
\big(
\ldb{\theta_{\delta}}_{\lambda+\mu+\partial}c\rdb_{\to}
(C^{-1})_{\delta\gamma}(\lambda+\mu+\partial)\big)
\bullet_2
\ldb
b_{\mu}
\ldb a_{\lambda}\theta_{\gamma}\rdb
\rdb_R
\label{4c}\\
&&
-\sum_{\gamma,\delta=1}^m
{\ldb
{\theta_{\delta}}_{\mu+\partial}
\ldb a_{\lambda}c\rdb
\rdb_R}_{\to}
\bullet_1
\big((C^{-1})_{\delta\gamma}(\mu+\partial)
\ldb b_{\mu}\theta_{\gamma}\rdb\big)
\label{5c}\\
&&
+\sum_{\alpha,\beta,\gamma,\delta=1}^m
\ldb
{\theta_{\delta}}_{y}
\ldb{\theta_{\beta}}_{x}c\rdb
\rdb_R
\bullet_1
\Big(\Big|_{y=\mu+\partial}
(C^{-1})_{\delta\gamma}(\mu+\partial)
\ldb b_{\mu}\theta_{\gamma}\rdb
\Big)
\nonumber\\
&&
\qquad\qquad\qquad
\bullet_3
\Big(\Big|_{x=\lambda+\partial}
(C^{-1})_{\beta\alpha}(\lambda+\partial)
\ldb a_{\lambda}\theta_{\alpha}\rdb
\Big)
\label{6c}\\
&&
-\sum_{\alpha,\beta,\gamma,\delta,\eta,\zeta=1}^m
\big(\ldb{\theta_{\delta}}_{\lambda+\mu+\partial}c\}_{\to}
(C^{-1})_{\delta\zeta}(\lambda+\mu+\partial)\big)
\nonumber\\
&&
\qquad\qquad\qquad
\bullet_2
\ldb
{\theta_{\eta}}_y {\ldb{\theta_\beta}_x\theta_\zeta\rdb}
\rdb_R
\bullet_1
\Big(\Big|_{y=\mu+\partial}
(C^{-1})_{\eta\gamma}(\mu+\partial)
\ldb b_{\lambda}\theta_{\gamma}\rdb
\Big)
\nonumber\\
&&
\qquad\qquad\qquad
\bullet_3
\Big(\Big|_{x=\lambda+\partial}
(C^{-1})_{\beta\alpha}(\lambda+\partial)
\ldb a_{\lambda}\theta_{\alpha}\rdb
\Big)
\label{7c}\\
&&
+\sum_{\gamma,\delta,\eta,\zeta=1}^m
\big(
\ldb{\theta_{\delta}}_{\lambda+\mu+\partial}c\rdb_{\to}
(C^{-1})_{\delta\zeta}(\lambda+\mu+\partial)
\big)
\nonumber\\
&&
\qquad\qquad
\bullet_2
{\ldb
{\theta_{\eta}}_{\mu+\partial}
\ldb a_{\lambda}\theta_{\zeta}\rdb
\rdb_R}_{\to}
\bullet_1
\big((C^{-1})_{\eta\gamma}(\mu+\partial)
\ldb b_{\mu}\theta_{\gamma}\rdb\big)
\,.\label{8c}
\end{eqnarray}
In a similar way we compute the RHS of the Jacobi identity
for the Dirac modified $2$-fold $\lambda$-bracket,
using the definition \eqref{dirac},
the sesquilinearity \eqref{20140702:eq4b},
the right Leibniz rule \eqref{20140702:eq6b},
equation \eqref{20111012:eq2d}
(recall that the matrix $C$ is skewadjoint), and Lemma \ref{lemma:bullet-i}(a,c).
We get
\begin{eqnarray}
&&
\ldb{\ldb a_\lambda b\rdb^D}_{\lambda+\mu}c\rdb^D_L
=
\ldb
{\ldb a_{\lambda}b\rdb}_{\lambda+\mu}c
\rdb_L
\label{1d}\\
&&
-\sum_{\alpha,\beta=1}^m
{\ldb
\ldb{\theta_{\beta}}_{x}b\rdb
_{\lambda+\mu+\partial}
c\rdb_L}_\to
\bullet_3
\Big(\Big|_{x=\lambda+\partial}
(C^{-1})_{\beta\alpha}(\lambda+\partial)
\ldb a_{\lambda}\theta_{\alpha}\rdb
\Big)
\label{2d}\\
&&
-\sum_{\alpha,\beta,\gamma,\delta=1}^m
{\ldb
{\ldb{\theta_\beta}_x\theta_\gamma\rdb}
_{\lambda+\mu+\partial}c
\rdb_L}_\to
\bullet_3
\Big(\Big|_{x=\lambda+\partial}
(C^{-1})_{\beta\alpha}(\lambda+\partial)
\ldb a_{\lambda}\theta_{\alpha}\rdb
\Big)
\nonumber\\
&&
\qquad\qquad\qquad\qquad
\bullet_1
\big(
(C^{-1})_{\delta\gamma}(\mu+\partial) 
\ldb{\theta_{\gamma}}_{-\mu-\partial}b\rdb^\sigma
\big)
\label{3d}\\
&&
-\sum_{\gamma,\delta=1}^m
{{\ldb a_{\lambda}\theta_{\delta}\rdb}_{\lambda+\mu+\partial}c
\rdb_L}_\to
\bullet_1
\big(
({C^\dagger}^{-1})_{\delta\gamma}(\mu+\partial)
\ldb{\theta_{\gamma}}_{-\mu-\partial}b\rdb^\sigma
\big)
\label{4d}\\
&&
-\sum_{\gamma,\delta=1}^m
\big(
\ldb{\theta_{\delta}}_{\lambda+\mu+\partial}c\rdb_{\to}
(C^{-1})_{\delta\gamma}(\lambda+\mu+\partial)
\big)
\bullet_2
\ldb{\ldb a_{\lambda}b\rdb}_{\lambda+\mu}\theta_{\gamma}
\rdb_L
\label{5d}\\
&&
+\sum_{\alpha,\beta,\gamma,\delta=1}^m
\big(
\ldb{\theta_{\delta}}_{\lambda+\mu+\partial}c\rdb_{\to}
(C^{-1})_{\delta\gamma}(\lambda+\mu+\partial)
\big)
\nonumber\\
&&
\qquad
\bullet_2
\ldb{\ldb{\theta_{\beta}}_{x}b\rdb_{\lambda+\mu+\partial}\theta_{\gamma}
\rdb_L}_\to
\bullet_3
\Big(\Big|_{x=\lambda+\partial}
(C^{-1})_{\beta\alpha}(\lambda+\partial)
\ldb a_{\lambda}\theta_{\alpha}\rdb
\Big)
\label{6d}\\
&&
+\sum_{\alpha,\beta,\gamma,\delta,\eta,\zeta=1}^m
\big(
\ldb{\theta_{\delta}}_{\lambda+\mu+\partial}c\rdb_{\to}
(C^{-1})_{\delta\gamma}(\lambda+\mu+\partial)
\big)
\nonumber\\
&&
\qquad\qquad
\bullet_2
{\ldb
{\ldb{\theta_\beta}_x\theta_\eta\rdb}_{\lambda+\mu+\partial}\theta_{\zeta}
\rdb_L}_\to
\bullet_3
\Big(\Big|_{x=\lambda+\partial}
(C^{-1})_{\beta\alpha}(\lambda+\partial)
\ldb a_{\lambda}\theta_{\alpha}\rdb
\Big)
\nonumber\\
&&
\qquad\qquad
\bullet_1
\big((C^{-1})_{\eta\gamma}(\mu+\partial)
\ldb{\theta_{\gamma}}_{-\mu-\partial}b\rdb^\sigma
\big)
\label{7d}\\
&&
+\sum_{\gamma,\delta,\eta,\zeta=1}^m
\big(
\ldb{\theta_{\delta}}_{\lambda+\mu+\partial}c\rdb_{\to}
(C^{-1})_{\delta\zeta}(\lambda+\mu+\partial)
\big)
\nonumber\\
&&
\bullet_2
{\ldb \ldb a_{\lambda}\zeta_{\alpha}\rdb_{\lambda+\mu+\partial}\theta_{\eta}
\rdb_L}_\to
\bullet_1
\big(({C^\dagger}^{-1})_{\eta\gamma}(\mu+\partial)
\ldb{\theta_{\gamma}}_{-\mu-\partial}b\rdb^\sigma\big)
\,.
\label{8d}
\end{eqnarray}
The following equations hold
due to the skewsymmetry \eqref{eq:skew2},
the Jacobi identity \eqref{eq:jacobi2},
and the fact that the matrix $C$ is skewadjoint:
$$
\begin{array}{l}
\text{RHS}\eqref{1b}-\text{RHS}\eqref{1c}=\text{RHS}\eqref{1d}
\,\,,\,\,\,\,
\eqref{2b}-\eqref{5c}=\eqref{4d}
\,,\\
\eqref{5b}-\eqref{2c}=\eqref{2d}
\,\,,\,\,\,\,
\eqref{4b}-\eqref{4c}=\eqref{5d}
\,\\
\eqref{3b}-\eqref{8c}=\eqref{8d}
\,,\,\,\,\,
\eqref{8b}-\eqref{3c}=\eqref{6d}
\,.
\end{array}
$$
Moreover, using Lemma \ref{lemma:bullet-i}(c),
the skewsymmetry \eqref{eq:skew2} and
the Jacobi identity \eqref{eq:jacobi2}, we also get
$$
\eqref{6b}-\eqref{6c}=\eqref{3d}
\,,\,\,\,\,
\eqref{7b}-\eqref{7c}=\eqref{7d}
\,.
$$
This concludes the proof of the Jacobi identity for the Dirac modified $2$-fold
$\lambda$-bracket, and of part (a).


Note that the identities $C(\partial)C^{-1}(\partial)=C^{-1}(\partial)C(\partial)=1\otimes1$ read, 
in terms of the symbols of the pseudodifferential operators $C(\partial)$ and $C^{-1}(\partial)$,
as
$$
\sum_{\beta=1}^m
\!
\ldb{\theta_\beta}_{\lambda+\partial}\theta_\alpha\rdb_\to
\!\bullet\!
(C^{-1})_{\beta\gamma}(\lambda)
\!=\!
\delta_{\alpha\gamma}(1\otimes1)
\,,\,\,
\sum_{\beta=1}^m
(C^{-1})_{\alpha\beta}(\lambda+\partial)
\!\bullet\!
\ldb{\theta_\gamma}_\lambda\theta_\beta\rdb
\!=\!
\delta_{\alpha\gamma}(1\otimes1)
\,.
$$
Part (b) is an immediate consequence of these identities and the definition \eqref{dirac}
of the Dirac modified $2$-fold $\lambda$-bracket.
\end{proof}
\begin{definition}
Let $\mc I\subset\mc V$ be a two-sided differential ideal. We say that $\mc I$ is
a \emph{double Poisson vertex algebra ideal} if
$$
\ldb \mc V_\lambda\mc I\rdb,\ldb\mc I_\lambda \mc V\rdb
\in(\mc V\otimes\mc I+\mc I\otimes\mc V)[\lambda]\,.
$$
If $\mc I\subset\mc V$ is a double PVA ideal then we have an induced
double PVA structure on the quotient differential algebra $\quot{\mc V}{\mc I}$.
\end{definition}
\begin{corollary}
The two-sided differential ideal $\mc I=\langle\theta_1,\dots,\theta_m\rangle_{\mc V}\subset\mc V$,
generated by $\theta_1,\dots,\theta_m$,
is a double PVA ideal with respect to the Dirac modified double
$\lambda$-bracket $\ldb \cdot\,_\lambda\,\cdot\rdb^D$.
Hence, the quotient space $\quot{\mc V}{\mc I}$ is a (non-local) double PVA,
with $2$-fold $\lambda$-bracket induced by $\ldb-_\lambda-\rdb^D$,
which we call the \emph{double Dirac reduction} of $\mc V$ 
by the constraints $\theta_1,\dots,\theta_m$.
\end{corollary}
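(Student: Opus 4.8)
The plan is to reduce the entire statement to the centrality of the $\theta_i$ already established in Theorem~\ref{prop:dirac}(b), namely $\ldb a_\lambda\theta_i\rdb^D=\ldb{\theta_i}_\lambda a\rdb^D=0$ for all $a\in\mc V$ and $i=1,\dots,m$. By Theorem~\ref{prop:dirac}(a), $\mc V$ equipped with $\ldb-_\lambda-\rdb^D$ is already a non-local double PVA, so the only thing left is to verify that $\mc I=\langle\theta_1,\dots,\theta_m\rangle_{\mc V}$ satisfies the double PVA ideal condition, in the non-local form $\ldb\mc V_\lambda\mc I\rdb^D,\ldb\mc I_\lambda\mc V\rdb^D\subset(\mc V\otimes\mc I+\mc I\otimes\mc V)((\lambda^{-1}))$ (the obvious completion of the condition in the Definition, since the $2$-fold $\lambda$-bracket is now $\mc V^{\otimes2}((\lambda^{-1}))$-valued). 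Once this is done, the induced non-local double PVA structure on $\mc V/\mc I$ is exactly what the Definition of double PVA ideal asserts.

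First I would note that, being the two-sided differential ideal generated by the $\theta_i$, the ideal $\mc I$ is spanned by elements $x(\partial^n\theta_i)y$ with $x,y\in\mc V$, $n\in\mb Z_+$, $i=1,\dots,m$ (this set is already closed under $\partial$). By linearity it suffices to treat $b=x(\partial^n\theta_i)y$. Applying the left Leibniz rule \eqref{20140702:eq6b} for $\ldb-_\lambda-\rdb^D$ twice, to $b=x\cdot(\partial^n\theta_i)\cdot y$, gives
\begin{equation*}
\ldb a_\lambda b\rdb^D
=\ldb a_\lambda x\rdb^D(\partial^n\theta_i)y
+x\ldb a_\lambda\partial^n\theta_i\rdb^D y
+x(\partial^n\theta_i)\ldb a_\lambda y\rdb^D\,,
\end{equation*}
where the outer products use the bimodule structure \eqref{20140604:eq2}. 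The middle term vanishes: by sesquilinearity \eqref{20140702:eq4b} we have $\ldb a_\lambda\partial^n\theta_i\rdb^D=(\lambda+\partial)^n\ldb a_\lambda\theta_i\rdb^D=0$ by centrality. In the first term, the right factor of $\ldb a_\lambda x\rdb^D$ is multiplied on the right by $(\partial^n\theta_i)y$, which lies in $\mc I$ since $\mc I$ is a two-sided ideal, so this term lies in $\mc V\otimes\mc I$; symmetrically, in the third term the left factor of $\ldb a_\lambda y\rdb^D$ is multiplied on the left by $x(\partial^n\theta_i)\in\mc I$, so that term lies in $\mc I\otimes\mc V$. Hence $\ldb a_\lambda b\rdb^D\in(\mc V\otimes\mc I+\mc I\otimes\mc V)((\lambda^{-1}))$, establishing the first inclusion.

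For the second inclusion I would invoke skewsymmetry \eqref{eq:skew2}: $\ldb b_\lambda a\rdb^D=-\ldb a_{-\lambda-\partial}b\rdb^{D\sigma}$, and since $\sigma$ merely exchanges the two tensor factors it preserves the subspace $\mc V\otimes\mc I+\mc I\otimes\mc V$; thus $\ldb\mc I_\lambda\mc V\rdb^D$ also lands in $(\mc V\otimes\mc I+\mc I\otimes\mc V)((\lambda^{-1}))$. This proves $\mc I$ is a double PVA ideal for $\ldb-_\lambda-\rdb^D$, and the last sentence of the Definition of double PVA ideal then produces the induced non-local double PVA structure on $\mc V/\mc I$, completing the proof. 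I do not expect any genuine obstacle here, as the real work is already packaged into Theorem~\ref{prop:dirac}: its part (b) supplies precisely the centrality that annihilates the only potentially problematic Leibniz term. The sole points requiring attention are bookkeeping ones—tracking correctly which tensor factor each outer multiplication \eqref{20140604:eq2} acts on, and working throughout in the completed coefficient ring $((\lambda^{-1}))$ rather than $[\lambda]$ because of the non-local setting.
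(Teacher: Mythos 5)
Your proposal is correct and takes essentially the same approach as the paper: the paper's one-line proof reduces the statement to the sesquilinearity conditions and Leibniz rules for $\ldb-_\lambda-\rdb^D$, with the centrality of the $\theta_i$ from Theorem~\ref{prop:dirac}(b) implicitly doing the real work, exactly as you make explicit. The only cosmetic difference is that you obtain the inclusion for $\ldb\mc I_\lambda\mc V\rdb^D$ from skewsymmetry (which is legitimate, since $\mc I$ is a differential ideal and $\sigma$ preserves $\mc V\otimes\mc I+\mc I\otimes\mc V$), whereas the paper invokes the right Leibniz rule instead.
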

\begin{proof}
The statement follows by the sesquilinearity conditions
and the left and right Leibniz rules
for the Dirac modified $2$-fold $\lambda$-bracket.
\end{proof}
\begin{example}\label{exa:5.9}
Let us consider the non-commutative algebra of differential polynomials
$\mc V=\mc R_I=\mb F<u_i^{(n)}\mid i\in I,n\in\mb Z_+>$ (the index set may be either
$I=\{i\in\mb Z|i\geq-N\}$ or $I_-=\{-N,-N+1,\dots,-1\}$)
with the double Poisson vertex algebra structure defined by
equation \eqref{eq:H}.
Let us denote $C(\lambda)=\ldb u_{-N}{}_\lambda u_{-N}\rdb_H$. By Lemma
\ref{20140311:lem1} we have that
$$
C(\lambda)=1\otimes u_{-N}-u_{-N}\otimes1-(1\otimes1)N\lambda\,,
$$
thus $C(\partial)\in\mc V^{\otimes2}((\partial^{-1}))$ is an invertible
pseudodifferential operator.
Denote by $\mc I$ the two-sided differential ideal of $\mc V$ generated by $u_{-N}$.
Note that $\quot{\mc V}{\mc I}\simeq\mb F<u_i^{(n)}\mid i\in I\setminus\{-N\},n\in\mb Z_+>$.
Using the explicit expression for $C(\lambda)$, the formula for the Dirac 
reduction given by \eqref{dirac}, and the expression for the $2$-fold
$\lambda$-bracket $\ldb-_\lambda-\rdb_H$ given by \eqref{eq:H} and Lemma \ref{20140311:lem1},
the Dirac reduced $2$-fold $\lambda$-bracket on the quotient space
$\mb F<u_i^{(n)}\mid i\in I\setminus\{-N\},n\in\mb Z_+>$, which we denote
by $\ldb-_\lambda-\rdb_{H^D}$, becomes
\begin{equation}\label{20140311:eq1}
\begin{array}{l}
\displaystyle{
\ldb L(z)_{\lambda}L(w)\rdb_{H^D}
=L(z)\otimes i_z(z-w-\lambda-\partial)^{-1}L(w)
}
\\
\displaystyle{
-L(w+\lambda+\partial)\otimes i_z(z-w-\lambda-\partial)^{-1}L^*(-z+\lambda)
}
\\
\displaystyle{
-\frac1N
L(w+\lambda+\partial)\otimes(\lambda+\partial)^{-1}L^*(-z+\lambda)
-\frac1N
\big((\lambda+\partial)^{-1}L(z)\big)\otimes L(w)
}
\\
\displaystyle{
+\frac1N L(w+\lambda+\partial)(\lambda+\partial)^{-1}L(z)\otimes1
+\frac1N \otimes\big((\lambda+\partial)^{-1}L^*(-z+\lambda)\big)L(w)\,.
}
\end{array}
\end{equation}
Using equations \eqref{20140311:eq1} and \eqref{20130917:eq1},
the corresponding $\lambda$-bracket on $\left(\quot{\mc V}{\mc I}\right)_m
=\mb F<u_{ab,i}^{n}\mid i\in I\setminus\{-N\},a,b=1,\dots,m,n\in\mb Z_+>$
is given by the following generating series
\begin{equation}\label{eq:H_dirac_mat}
\begin{array}{l}
\vphantom{\Big(}
\displaystyle{
\{L_{ab}(z)_{\lambda}L_{cd}(w)\}_{H^D}
=
L_{cb}(z)i_z(z-w-\lambda-\partial)^{-1}L_{ad}(w)
} \\
\vphantom{\Big(}
\displaystyle{
-L_{cb}(w+\lambda+\partial)i_z(z-w-\lambda-\partial)^{-1}L_{ad}^*(-z+\lambda)
} \\
\vphantom{\Big(}
\displaystyle{
-\frac1NL_{cb}(w+\lambda+\partial)(\lambda+\partial)^{-1}L^*_{ad}(-z+\lambda)
-\frac1NL_{ad}(w)(\lambda+\partial)^{-1}L_{cb}(z)
} \\
\vphantom{\Big(}
\displaystyle{
+\frac{1}{N}
\sum_{k=1}^m
\delta_{ad}L_{ck}(w+\lambda+\partial)(\lambda+\partial)^{-1}L_{kb}(z)
} \\
\vphantom{\Big(}
\displaystyle{
+\frac{1}{N}
\sum_{k=1}^m
\delta_{cb}L_{kd}(w)(\lambda+\partial)^{-1}L^*_{ak}(-z+\lambda)
\,.}
\end{array}
\end{equation}
This is the same as equation (4.11) in \cite{DSKV14a}.
\end{example}
\begin{example}
For $N=2$, we have
$\quot{\mc R_2}{\mc I}=\mb F\langle u^{(n)}\mid n\in\mb Z_+\rangle$
(where $u$ is the image of $u_{-1}$).
The two compatible double PVA structures
$(H^D,K)$ given by the
equations \eqref{eq:H_dirac_mat} and \eqref{eq:K}
(using that $L(z)=z^2+u$)
are
\begin{align*}
&\ldb u_\lambda u\rdb_{H^D}=(1\otimes1)\frac{\lambda^3}{2}
+\frac12(2\lambda+\partial)(u\otimes1+1\otimes u)
-\frac12u\otimes(\lambda+\partial)^{-1}u\\
&-\frac12\left((\lambda+\partial)^{-1}u\right)\otimes u
+\frac12u(\lambda+\partial)^{-1}u\otimes1
+\frac12\otimes\left((\lambda+\partial)^{-1}u\right)u\,,\\
&\ldb u_\lambda u\rdb_{K}=2(1\otimes1)\lambda\,.
\end{align*}
The above double Poisson structures are (up to a scalar factor) the same as that
appearing in formula (6.15) in \cite{OS98}. Note that the Olver and Sokolov 
formula is given in terms of operators on $\quot{\mc R_2}{\mc I}$, rather than as
an element of
$\left(\quot{\mc R_2}{\mc I}\otimes\quot{\mc R_2}{\mc I}\right)[\lambda]$.
The rule to recover their formula is the following: we should replace an element
$p\otimes q\in\quot{\mc R_2}{\mc I}\otimes\quot{\mc R_2}{\mc I}$
by the operator $\mc L_p\circ\mc R_q$, where $\mc L_p$ (respectively
$\mc R_q$) denotes the left (respectively right) multiplication by $p$
(respectively $q$).
\end{example}
\begin{example}
For $N=3$, we have
$\quot{\mc R_3}{\mc I}=\mb F\langle u^{(n)},v^{(n)}\mid n\in\mb Z_+\rangle$
(where $u$ is the image of $u_{-2}$ and $v$ is the image of $u_{-1}$).
The two compatible double PVA structures
$(H^D,K)$ given by the
equations \eqref{eq:H_dirac_mat} and \eqref{eq:K}
(using that $L(z)=z^3+uz+v$)
are
\begin{align*}
&\ldb u_\lambda u\rdb_{H^D}=
\frac{1}{3}\left(
u(\lambda+\partial)^{-1}u\otimes1
+1\otimes\left((\lambda+\partial)^{-1}u\right)u
\right)
\\
&-\frac13\left(u\otimes(\lambda+\partial)^{-1}u+
\left((\lambda+\partial)^{-1}u\right)\otimes u
\right)
+1\otimes v-v\otimes1
\\
&+(1\otimes u)\lambda+(\lambda+\partial)u\otimes1
+2(1\otimes1)\lambda^3\,,
\\
&\ldb u_\lambda v\rdb_{H^D}=
\frac13\left(
v(\lambda+\partial)^{-1}u\otimes1
+1\otimes\left((\lambda+\partial)^{-1}u\right)v
\right)
\\
&-\frac13\left(v\otimes(\lambda+\partial)^{-1}u
+\left((\lambda+\partial)^{-1}u\right)\otimes v
\right)
+\frac13\left(u^2\otimes1-u\otimes u
\right)\\
&+1\otimes(2\lambda+\partial)v+(v\otimes1)\lambda
+\frac13(\lambda+\partial)^2\left(u\otimes1-1\otimes u
\right)
+(u\otimes1)\lambda^2\\
&+(1\otimes1)\lambda^4\,,
\\
&\ldb v_\lambda v\rdb_{H^D}
=\frac13\left(v(\lambda+\partial)^{-1}v\otimes1
+1\otimes\left((\lambda+\partial)^{-1}v\right)v
\right)
+\frac23\left(u\otimes v-v\otimes u\right)
\\
&+\frac13\left(uv\otimes1-1\otimes uv
\right)
-\frac23\left(
u\otimes(\lambda+\partial)u
\right)
+\frac23\left((\lambda+\partial)^2(1\otimes v)-(v\otimes1)\lambda^2
\right)
\\
&+\frac13\left((\lambda+\partial)^2(v\otimes 1)-(1\otimes v)\lambda^2
\right)
-\frac23\left((\lambda+\partial)^3(1\otimes u)-(u\otimes 1)\lambda^3
\right)
\\
&-\frac25(1\otimes1)\lambda^5\,,
\\
&\ldb u_\lambda u\rdb_{K}=0\,,
\qquad
\ldb u_\lambda v\rdb_{K}=3(1\otimes1)\lambda\,,
\qquad
\ldb v_\lambda v\rdb_{K}=u\otimes1-1\otimes u\,.
\end{align*}
\end{example}

\section{Adler-Gelfand-Dickey non-commutative integrable hierarchies}\label{sec:hierarchies}
In this section we want to show how to apply the Lenard-Magri scheme
of integrability (see Section \ref{sec:lenard-magri})
in order to obtain integrable hierarchies for
the compatible pair of PVAs we constructed in Section \ref{sec:AGD}.

First we state a technical lemma.
\begin{lemma}\label{lem:15032013}
Let $\mc V$ be an arbitrary differential algebra endowed with a $2$-fold
$\lambda$-bracket
$\ldb-_{\lambda}-\rdb$.
Let $L(\partial)\in\mc V((\partial^{-1}))$ be a monic pseudodifferential operator
of order $N>0$.
Then, for all $k\geq1$, the following identity holds in $\mc V((w^{-1}))$:
\begin{equation}\label{eq:lenard1}
\begin{array}{l}
\displaystyle{
\res_z
\mult\left(\ldb L^{\frac kN}(z)_{\lambda}L(w)\rdb\big|_{\lambda=0}\right)
}\\
\displaystyle{
=\frac kN
\res_z\mult\left(\ldb L(z+x)_x L(w)\rdb\star_1\big(\big|_{x=\partial}L^{\frac kN-1}(z)\big)\right)\,.
}
\end{array}
\end{equation}
\end{lemma}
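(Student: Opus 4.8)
The plan is to prove the identity \eqref{eq:lenard1} by reducing it to a statement about the fractional power $L^{\frac{k}{N}}$ and exploiting the derivation property of the $2$-fold $\lambda$-bracket. First I would write $L^{\frac{k}{N}}(\partial)=\sum_{i} c_i \partial^{d-i}$ for suitable coefficients $c_i \in \mc V$ (with $d=k/N$ interpreted via the pseudodifferential calculus), so that $L^{\frac{k}{N}}(z)=\sum_i c_i z^{d-i}$ is its symbol. The left-hand side of \eqref{eq:lenard1}, being $\res_z \mult\ldb L^{\frac{k}{N}}(z)_\lambda L(w)\rdb|_{\lambda=0}$, expands by linearity into $\sum_i \res_z z^{d-i} \mult\ldb {c_i}_\lambda L(w)\rdb|_{\lambda=0}$. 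The goal is then to re-express this using the fundamental fact that $L^{\frac{k}{N}}$ is the $\frac{k}{N}$-th power of $L$, i.e. that $L^{\frac{k}{N}}\circ L^{\frac{k}{N}}\cdots = L^k$, or more usefully the infinitesimal relation between the bracket of a power and the bracket of $L$ itself.

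The key technical step is a Leibniz-type expansion for the bracket of a product (and hence of a power) of pseudodifferential operators. I would establish, using the right Leibniz rule \eqref{20140702:eq6b} and sesquilinearity \eqref{20140702:eq4b} applied coefficientwise to the symbols, an identity of the schematic form
\begin{equation*}
\ldb (L^{\frac{k}{N}})(z)_\lambda L(w)\rdb
= \frac{k}{N}\,\big(\text{composition of } \ldb L(z+x)_x L(w)\rdb \text{ with } L^{\frac{k}{N}-1}(z)\big)
+ (\text{terms killed by } \res_z \mult|_{\lambda=0}).
\end{equation*}
The factor $\frac{k}{N}$ arises exactly as in the commutative Adler--Gelfand--Dickey theory: differentiating the relation $L^{\frac{k}{N}}$ with respect to the "flow" generated by $L$ produces the multiplicity $\frac{k}{N}$ times $L^{\frac{k}{N}-1}$, since $\frac{k}{N}$ copies of $L$ each contribute one bracket term and $L^{\frac{k}{N}}$ has total "degree" $\frac{k}{N}$ in $L$. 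Concretely I would first prove the identity for integer powers $L^k$ by induction using \eqref{20140702:eq6b}, obtaining the factor $k$ from the $k$ summands, then extend to fractional powers $L^{\frac{k}{N}}$ by the standard argument that $L^{\frac{k}{N}}$ is characterized as the unique monic pseudodifferential operator whose $N$-th power is $L^k$, together with the fact that $\res_z\mult(\cdots)|_{\lambda=0}$ is insensitive to the ordering ambiguities.

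I expect the main obstacle to be the bookkeeping of the non-commutative $\bullet$ and $\star_1$ products together with the residue and the specialization $\lambda=0$. In the commutative setting one freely commutes factors under $\res_z$ and $\int$; here one must instead use the cyclic-type property that $\res_z \mult(A(z)\star_1 B(z))$ is symmetric in an appropriate sense, and that applying $\mult$ after the bracket collapses $\mc V\otimes\mc V$ to $\mc V$ so that the distinction between the two tensor legs partially disappears. The crucial point will be to verify that all the "extra" terms produced by the Leibniz rule \eqref{20140702:eq6b} — in particular the terms where $\partial$ lands on the wrong factor via the $e^{\partial\partial_\lambda}$ shift — either cancel in pairs or vanish after taking $\res_z$, $\mult$, and setting $\lambda=0$. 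I would handle this by carefully tracking the action of $(\lambda+\partial)^n$ in \eqref{master-infinite}-style expansions and invoking the integration-by-parts identity \eqref{star} to move the shift variable $x$ (which becomes $\partial$ acting on $L^{\frac{k}{N}-1}(z)$) onto the correct factor, thereby matching the right-hand side of \eqref{eq:lenard1}.
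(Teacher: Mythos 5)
Your toolkit is the right one---right Leibniz rule, sesquilinearity, the cyclic property $\mult(a\star_1 X\star_1 b)=\mult(X\star_1 ba)$, and integration by parts \eqref{star}---but the pivotal step of your plan, the passage from integer to fractional powers, is a genuine gap. Uniqueness of the monic $N$-th root lets you \emph{define} $L^{1/N}$, but it does not transfer an identity proved for $\ldb L^m(z)_\lambda L(w)\rdb$ ($m\in\mb Z_{>0}$) to one for $\ldb L^{k/N}(z)_\lambda L(w)\rdb$, and your heuristic that ``$\tfrac kN$ copies of $L$ each contribute one bracket term'' is meaningless when $k/N\notin\mb Z$: one cannot run the Leibniz rule a fractional number of times. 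Similarly, your schematic identity with the coefficient $\tfrac kN$ appearing at the level of the $2$-fold $\lambda$-bracket ``up to terms killed by $\res_z\mult|_{\lambda=0}$'' misdescribes the mechanism: in the correct computation nothing is killed and nothing cancels in pairs; all terms survive and turn out to be \emph{equal} to one another.

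The missing idea is to take $L^{1/N}$ as the common building block for \emph{both} sides of \eqref{eq:lenard1}. Expanding $L^{k/N}=(L^{1/N})^k$ and applying the right Leibniz rule \eqref{20140702:eq6b} and sesquilinearity gives $k$ terms, each sandwiching $\ldb L^{1/N}(z+x)_{\lambda+x+y}L(w)\rdb$ between symbols of powers of $L^{1/N}$ and its adjoint; after $\mult$, the cyclic property, $\res_z$, \eqref{star}, and $\lambda=0$, the sandwiching powers recombine into $L^{(k-1)/N}(z)$, so all $k$ terms coincide and the left side equals $k\,X$, where
\begin{equation*}
X=\res_z\mult\Big(\ldb L^{1/N}(z+x)_x L(w)\rdb\star_1\big(\big|_{x=\partial}L^{\frac{k-1}{N}}(z)\big)\Big)\,.
\end{equation*}
Then one must do the \emph{same} decomposition inside the right side: expand $L=(L^{1/N})^N$ in $\ldb L(z+x)_x L(w)\rdb$, getting $N$ terms which, after the identical manipulations (the sandwiching powers combine with $L^{k/N-1}(z)$ to give $L^{(k-1)/N}$ again), all reduce to the same $X$, so the right-hand expression equals $N\,X$. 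The factor $\tfrac kN$ is the ratio of these two multiplicities, $k$ versus $N$; it is not produced by any single Leibniz expansion. Your integer-power induction is essentially the first computation in the special case $N=1$; what it cannot supply is the second computation, and without it the fractional exponent is never legitimately handled.
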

\begin{proof}
Since,
$L^{\frac kN}(z)
=L^{\frac 1N}(z+\partial)L^{\frac 1N}(z+\partial)\dots L^{\frac 1N}(z)$ 
($k$ times), we have, 
by sesquilinearity and the right Leibniz rule,
\begin{equation}\label{eq:15032013_1}
\begin{array}{l}
\displaystyle{
\ldb L^{\frac kN}(z)_{\lambda}L(w)\rdb
=\sum_{l=1}^k
\big(\big|_{y=\partial}(L^*)^{\frac{l-1}{N}}(-z+\lambda)\big)
}
\\
\displaystyle{
\star_1
\ldb L^{\frac1N}(z+x)_{\lambda+x+y}L(w)\rdb\star_1
\big(\big|_{x=\partial}L^{\frac{k-l}{N}}(z)\big)
\,.
}
\end{array}
\end{equation}
Since $\mult(a\star_1X\star_1 b)=\mult(X\star_1 ba)$, for
all $a,b\in\mc V$ and $X\in\mc V^{\otimes2}$, we have
\begin{equation}\label{eq:15032013_1bis}
\begin{array}{l}
\displaystyle{
\mult\ldb L^{\frac kN}(z)_{\lambda}L(w)\rdb
=\sum_{l=1}^k
\mult\Big(
\ldb L^{\frac1N}(z+x)_{\lambda+x+y}L(w)\rdb
}
\\
\displaystyle{
\star_1\big(\big|_{x=\partial}L^{\frac{k-l}{N}}(z)\big)
\big(\big|_{y=\partial}(L^*)^{\frac{l-1}{N}}(-z+\lambda)\big)
\Big)
\,.
}
\end{array}
\end{equation}
Taking the residue of both sides of equation \eqref{eq:15032013_1bis} and
using \eqref{star} with $t=\lambda+y$, we get
$$
\begin{array}{l}
\displaystyle{
\res_z\mult\ldb L^{\frac kN}(z)_{\lambda}L(w)\rdb
=\res_z\sum_{l=1}^k
\mult\Big(
\ldb L^{\frac1N}(z+\lambda+x+y)_{\lambda+x+y}L(w)\rdb
}
\\
\displaystyle{
\star_1\big(\big|_{x=\partial}L^{\frac{k-l}{N}}(z+\lambda+y)\big)
\big(\big|_{y=\partial}L^{\frac{l-1}{N}}(z)\big)
\Big)\,,
}
\end{array}
$$
and setting $\lambda=0$ we get
\begin{equation}\label{eq:15032013_1e}
\begin{array}{l}
\displaystyle{
\res_z\mult\left(\ldb L^{\frac kN}(z)_{\lambda}L(w)\rdb\Big|_{\lambda=0}\right)
=
}
\\
\displaystyle{
k\res_z\mult\left(\ldb L^{\frac1N}(z+x)_{x}L(w)\rdb\star_1
\big(\big|_{x=\partial}L^{\frac{k-1}{N}}(z))\right)
\,.
}
\end{array}
\end{equation}
On the other hand, letting $k=N$ in \eqref{eq:15032013_1}, we have
\begin{equation}\label{eq:15032013_1b}
\begin{array}{l}
\displaystyle{
\ldb L(z)_{\lambda}L(w)\rdb
=\sum_{l=1}^N
\big(\big|_{x=\partial}L^{\frac{N-l}{N}}(z)\big)
}
\\
\displaystyle{
\star_1\ldb L^{\frac1N}(z+x)_{\lambda+x+y}L(w)\rdb\star_1
\big(\big|_{y=\partial}(L^*)^{\frac{l-1}{N}}(-z+\lambda)\big)
\,.
}
\end{array}
\end{equation}
If we replace in equation \eqref{eq:15032013_1b},
$z$ by $z+\partial$ and $\lambda$ by $\lambda+\partial$
acting on $L^{\frac kN-1}(z)$, and we apply the multiplication map $\mult$
we get
\begin{equation}\label{eq:15032013_1c2}
\begin{array}{l}
\displaystyle{
\mult\left(\ldb L(z+x)_{\lambda+x}L(w)\rdb
\star_1\big(\big|_{x=\partial}L^{\frac kN-1}(z)\big)
\right)
} \\
\displaystyle{
=\sum_{l=1}^N
\mult\left(
\ldb L^{\frac1N}(z+x)_{\lambda+x+y}L(w)\rdb\star_1
\big(\big|_{x=\partial}L^{\frac{k-l}{N}}(z)\big)
\big(\big|_{y=\partial}(L^*)^{\frac{l-1}{N}}(-z+\lambda)\big)
\right)
\,.
}
\end{array}
\end{equation}
Taking, as before, residues of both sides of equation \eqref{eq:15032013_1c2}
and using \eqref{star} with $t=\lambda+y$, 
we get, after setting $\lambda=0$,
\begin{equation}\label{eq:15032013_1d}
\begin{array}{l}
\displaystyle{
\res_z\mult\left(\ldb L(z+x)_{x}L(w)\rdb\star_1\big(\big|_{x=\partial}L^{\frac kN-1}(z)\big)\right)
} \\
\displaystyle{
=N\res_z\mult\left(\ldb L^{\frac1N}(z+x)_{x}L(w)\rdb
\big(\big|_{x=\partial}L^{\frac{k-1}{N}}(z)\big)\right)
\,.}
\end{array}
\end{equation}
Equation \eqref{eq:lenard1} follows from equations \eqref{eq:15032013_1e} and \eqref{eq:15032013_1d}.
\end{proof}
As in Section \ref{sec:AGD}, let $\mc V$ be the non-commutative algebra of 
differential polynomials in the variables $u_i$, $i\in I$, namely
$$
\mc V=\mc R_I=\mb F\langle u_i^{(n)}\mid i\in I,n\in\mb Z_+\rangle\,.
$$
(The index set $I$ may be either $I=\{i\in\mb Z\mid i\geq-N\}$
or $I_-=\{-N,-N+1,\dots,-1\}$).
Let us collect the differential generators of $\mc V$ into
the generating series
$$
L(z)=z^N+\sum_{i\in I}u_{i}z^{-i-1}\in\mc V((z^{-1}))\,,
$$
and let us consider the compatible double PVA structures $(H,K)$ on $\mc V$
defined by  Corollary \ref{cor:bi_dPVA}.
Let, for $k\geq1$,
\begin{equation}\label{hk}
h_k=\frac Nk\res_z L^{\frac kN}(z)\in\mc V\,,
\end{equation}
where $L^\frac1N(\partial)\in\mc V((\partial^{-1}))$
is uniquely defined (see Proposition 1.1 in \cite{DSKV14a}).
\begin{lemma}\label{cor:17032013}
For the compatible pair of double PVA structures
$(H,K)$ on $\mc V$, we have for $k\geq1$:
\begin{enumerate}[(a)]
\item
$\mult\ldb h_k{}_\lambda L(w)\rdb_H\big|_{\lambda=0}
=L^{\frac kN}(w+\partial)_+L(w)-L(w+\partial)L^{\frac kN}(w)_+$;
\item
$\mult\ldb h_k{}_\lambda L(w)\rdb_K\big|_{\lambda=0}
=L^{\frac{k}{N}-1}(w+\partial)_+L(w)-L(w+\partial)L^{\frac{k}{N}-1}(w)_+$.
\end{enumerate}
\end{lemma}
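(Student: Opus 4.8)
The plan is to reduce both identities to the key Lemma \ref{lem:15032013} and then to insert the explicit Adler-type formulas \eqref{eq:H} and \eqref{eq:K}. Since $\res_z$ is just the extraction of a coefficient and hence commutes with the $\lambda$-bracket in its first argument, recalling the definition \eqref{hk} of $h_k$ I would first write, for either the $H$- or the $K$-bracket,
\[
\mult\ldb h_k{}_\lambda L(w)\rdb\big|_{\lambda=0}
=\frac Nk\,\res_z\mult\ldb L^{\frac kN}(z)_\lambda L(w)\rdb\big|_{\lambda=0}.
\]
Both $L$ and its powers are monic pseudodifferential operators, so Lemma \ref{lem:15032013} applies verbatim to each of the two $2$-fold $\lambda$-brackets; applying \eqref{eq:lenard1} the prefactor $\frac Nk$ cancels against the $\frac kN$ produced by the lemma, and both quantities reduce to
\[
\res_z\mult\left(\ldb L(z+x)_x L(w)\rdb\star_1\big(\big|_{x=\partial}L^{\frac kN-1}(z)\big)\right).
\]
The remaining work is thus a single residue of a $2$-fold $\lambda$-bracket of $L$ weighted by $L^{\frac kN-1}$.

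For part (a) I would substitute \eqref{eq:H}. The crucial point is a symbol-composition observation: once $x$ is replaced by $\partial$ acting on $L^{\frac kN-1}(z)$, the factor $L(z+x)$ in the first summand of \eqref{eq:H} reassembles into the symbol of $L(\partial)\circ L^{\frac kN-1}(\partial)=L^{\frac kN}(\partial)$, namely $L^{\frac kN}(z)$, while the shifts $z\mapsto z+x$, $\lambda\mapsto x$ make the $x$-dependence of the resolvent $i_z(z-w-\lambda-\partial)^{-1}$ cancel. Hence the first summand contributes $\res_z L^{\frac kN}(z)\,i_z(z-w-\partial)^{-1}L(w)$, which by \eqref{valentinesday} (with $\partial$ acting on $L(w)$) equals $L^{\frac kN}(w+\partial)_+L(w)$, the first term of the claim. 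The second summand, involving $L^*(-z+\lambda)$, I would handle by moving $\partial$ across the $\bullet$-product and invoking the integration-by-parts identity \eqref{star} together with the definition of the formal adjoint; this yields $-L(w+\partial)L^{\frac kN}(w)_+$, the second term of the claim.

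Part (b) proceeds along the same lines with \eqref{eq:K} in place of \eqref{eq:H}, but the bookkeeping is more delicate because each of the two summands of \eqref{eq:K} splits into two pieces upon taking residues. Tracking the symbols as above, the first summand produces $L^{\frac kN}(w)_+-L(w+\partial)L^{\frac kN-1}(w)_+$, and the second produces $L^{\frac kN-1}(w+\partial)_+L(w)-\res_z L^{\frac kN-1}(z)\,i_z(z-w-\partial)^{-1}L^*(-z)$. The proof is then finished by the adjoint identity
\[
\res_z L^{\frac kN-1}(z)\,i_z(z-w-\partial)^{-1}L^*(-z)=L^{\frac kN}(w)_+,
\]
which follows from \eqref{star} and the definition of the formal adjoint (it amounts to $(L^{\frac kN-1}L)(w)_+=L^{\frac kN}(w)_+$): this cancels the spurious $L^{\frac kN}(w)_+$ coming from the first summand and leaves precisely the asserted right-hand side.

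The hard part will be the careful manipulation of the adjoint terms: keeping track of the several formal variables $z,w,\lambda,x$ through the shifts $z\mapsto z+x$, $\lambda\mapsto x$ and the substitution $x=\partial$, and in particular establishing the integration-by-parts step via \eqref{star} that converts $L^*(-z+\lambda)$ into a reversed-order occurrence of $L$, with $\partial$ correctly repositioned across the noncommutative factors. For part (b) the extra subtlety is the cancellation of the two $L^{\frac kN}(w)_+$ contributions arising from the two summands of the $K$-bracket; verifying this cancellation is exactly the content of the displayed adjoint identity, and it is where one must be most attentive to the ordering of factors and the placement of $\partial$.
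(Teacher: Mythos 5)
Your proposal is correct and follows essentially the same route as the paper: reduce to Lemma \ref{lem:15032013} (with the $\tfrac Nk$ and $\tfrac kN$ prefactors cancelling), substitute the Adler-type formulas \eqref{eq:H} and \eqref{eq:K}, reassemble $L(z+\partial)L^{\frac kN-1}(z)$ into the symbol $L^{\frac kN}(z)$, and evaluate the residues via \eqref{valentinesday} and the integration-by-parts identity \eqref{star}. In particular, your cancellation of the two $L^{\frac kN}(w)_+$ terms in part (b) is exactly the computation in the paper's equations \eqref{20131106:eq4}--\eqref{20131106:eq5}.
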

\begin{proof}
By Lemma \ref{lem:15032013} and equation \eqref{eq:H} we have
\begin{equation}\label{20131106:eq1}
\begin{array}{l}
\displaystyle{
\mult\left.\ldb h_k{}_\lambda L(w)\rdb_H\right|_{\lambda=0}
=\res_z
L^{\frac kN}(z)i_z(z-w-\partial)^{-1}
L(w)
}
\\
\displaystyle{
-L(w+\partial)
\res_z L^{\frac{k}{N}-1}(z)i_z(z-w-\partial)^{-1}L^*(-z)
\,.}
\end{array}
\end{equation}
Using \eqref{valentinesday}, we have
\begin{equation}\label{20131106:eq2}
\res_z
L^{\frac kN}(z)i_z(z-w-\partial)^{-1}
=L^{\frac kN}(w+\partial)_+
\,,
\end{equation}
while, using \eqref{star} and equation \eqref{valentinesday},
we have
\begin{equation}\label{20131106:eq3}
\begin{array}{l}
\vphantom{\Big(}
\displaystyle{
\res_z L^{\frac{k}{N}-1}(z)i_z(z-w-\partial)^{-1}L^*(-z)
=
\res_z L^{\frac{k}{N}-1}(z+\partial)i_z(z-w)^{-1}L(z)
} \\
\vphantom{\Big(}
\displaystyle{
=
\res_z L^{\frac{k}{N}}(z)i_z(z-w)^{-1}
=
L^{\frac{k}{N}}(w)_+
\,.}
\end{array}
\end{equation}
Combining equations \eqref{20131106:eq1}, \eqref{20131106:eq2} and \eqref{20131106:eq3},
we get part (a).
Similarly, for part (b), 
we use Lemma \ref{lem:15032013} and equation \eqref{eq:K} to get
\begin{equation}\label{20131106:eq4}
\begin{array}{l}
\vphantom{\Big(}
\displaystyle{
\mult\left.\ldb h_k{}_\lambda L(w)\rdb_K\right|_{\lambda=0}
=
\res_zi_z(z-w)^{-1}
\big(L(z+\partial)-L(w+\partial)\big)L^{\frac kN-1}(z)
} \\
\vphantom{\Big(}
\displaystyle{
+
\res_z
L^{\frac kN-1}(z)
i_z(z-w-\partial)^{-1}
\big(L(w)-L^*(-z)\big)
\,.}
\end{array}
\end{equation}
By equations \eqref{valentinesday} and \eqref{star},
we have
\begin{equation}\label{20131106:eq5}
\begin{array}{l}
\displaystyle{
\res_zi_z(z-w)^{-1}
L(z+\partial)L^{\frac kN-1}(z)
=
L^{\frac kN}(w)_+
} \\
\displaystyle{
=
\res_z
L^{\frac kN-1}(z)
i_z(z-w-\partial)^{-1}
L^*(-z)
\,.}
\end{array}
\end{equation}
Moreover, by equation \eqref{valentinesday} we also have
\begin{equation}\label{20131106:eq6}
\res_zi_z(z-w)^{-1}
L(w+\partial)
L^{\frac kN-1}(z)
=
L(w+\partial)L^{\frac kN-1}(w)_+\,,
\end{equation}
and
\begin{equation}\label{20131106:eq7}
\res_z
L^{\frac kN-1}(z)
i_z(z-w-\partial)^{-1}L(w)
=
L^{\frac kN-1}(w+\partial)_+
L(w)
\,.
\end{equation}
Combining equations \eqref{20131106:eq4}, 
\eqref{20131106:eq5}, \eqref{20131106:eq6}, and \eqref{20131106:eq7}, 
we get the claim.
\end{proof}
The following Theorem says that the Lenard-Magri scheme of integrability works
for the bi-Poisson structure $(H,K)$,
see Remark \ref{rem:20140723}:
\begin{theorem}\label{lem:lenard_works}
\begin{enumerate}[(a)]
\item
For every $\varepsilon\in\{1,\dots,N\}$, 
we have 
\begin{equation}\label{leneq2}
\big\{\tint {h_{\varepsilon}} , u \big\}_K=0
\,,\,\,
\text{ for all } u\in\mc V
\,.
\end{equation}
\item
For every $k\geq1$, 
we have the Lenard-Magri recursion
\begin{equation}\label{leneq}
\big\{ \tint h_k , u\big\}_H
=
\big\{ \tint h_{k+N} , u\big\}_K
\,,\,\,
\text{ for all } u\in\mc V
\,.
\end{equation}
\end{enumerate}
\end{theorem}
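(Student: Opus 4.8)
The plan is to reduce both statements to identities between the generating series $\mult\ldb h_k{}_\lambda L(w)\rdb|_{\lambda=0}\in\mc V((w^{-1}))$ already computed in Lemma \ref{cor:17032013}, and then to propagate these identities from the generators $u_i$ to all of $\mc V$. The key structural input is Theorem \ref{20140707:thm}(g): since, by Corollary \ref{cor:bi_dPVA}, both $H$ and $K$ define double Poisson vertex algebra structures on $\mc V$, each of the maps $\{\tint h_k,-\}_H,\{\tint h_k,-\}_K\colon\mc V\to\mc V$ defined by \eqref{20140707:eq3b-lie} is a derivation of $\mc V$ commuting with $\partial$. Hence the difference of any two such maps is again a derivation commuting with $\partial$, and it vanishes on all of $\mc V$ as soon as it vanishes on the generators $u_i$, $i\in I$: commuting with $\partial$ forces it to vanish on every $u_i^{(n)}=\partial^n u_i$, and the derivation property then forces vanishing on arbitrary products. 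I would first record that, since $L(w)=w^N+\sum_{i\in I}u_iw^{-i-1}$ and $\ldb-_\lambda-\rdb$ is linear in its second argument, the coefficient of $w^{-i-1}$ in $\mult\ldb h_k{}_\lambda L(w)\rdb|_{\lambda=0}$ equals $\{\tint h_k,u_i\}$ for the chosen bracket; thus it suffices to match the corresponding generating series.

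For part (b), I would apply Lemma \ref{cor:17032013}(a) to $h_k$ with the $H$-bracket and Lemma \ref{cor:17032013}(b) to $h_{k+N}$ with the $K$-bracket. Since $\frac{k+N}{N}-1=\frac kN$, the two resulting right-hand sides
$$
L^{\frac kN}(w+\partial)_+L(w)-L(w+\partial)L^{\frac kN}(w)_+
$$
coincide verbatim, so $\mult\ldb h_k{}_\lambda L(w)\rdb_H|_{\lambda=0}=\mult\ldb h_{k+N}{}_\lambda L(w)\rdb_K|_{\lambda=0}$. Reading off coefficients gives $\{\tint h_k,u_i\}_H=\{\tint h_{k+N},u_i\}_K$ for all $i\in I$, and the derivation argument above upgrades this to the recursion \eqref{leneq}.

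For part (a), I would apply Lemma \ref{cor:17032013}(b) to $h_\varepsilon$, obtaining
$$
\mult\ldb h_\varepsilon{}_\lambda L(w)\rdb_K|_{\lambda=0}=L^{\frac\varepsilon N-1}(w+\partial)_+L(w)-L(w+\partial)L^{\frac\varepsilon N-1}(w)_+.
$$
When $1\leq\varepsilon<N$ the pseudodifferential operator $L^{\varepsilon/N-1}$ has negative order $\varepsilon-N<0$, so both positive parts vanish and the whole expression is $0$. The boundary case $\varepsilon=N$ is the only point requiring a short computation: there $L^{\varepsilon/N-1}=L^0=1$, the formula collapses to $L(w)-L(w+\partial)\cdot 1$, and since $\partial 1=0$ one has $L(w+\partial)\cdot 1=L(w)$, so the expression again vanishes. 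Hence $\{\tint h_\varepsilon,u_i\}_K=0$ for all $i\in I$, and the derivation argument yields \eqref{leneq2}.

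The computations are immediate once Lemma \ref{cor:17032013} is in hand, so the heart of the argument is not any long manipulation but two bookkeeping points that must be handled carefully: the reduction from generators to all of $\mc V$, which relies essentially on Theorem \ref{20140707:thm}(g) together with Corollary \ref{cor:bi_dPVA} to guarantee the derivation-and-$\partial$-commutation property; and the boundary case $\varepsilon=N$, where one cannot simply invoke the vanishing of a positive part and must instead verify $L(w+\partial)\cdot 1=L(w)$ directly.
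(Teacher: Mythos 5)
Your proposal is correct and follows essentially the same route as the paper: both parts reduce to the generating-series identities of Lemma \ref{cor:17032013} (with the observation $\tfrac{k+N}{N}-1=\tfrac kN$ for part (b), vanishing of positive parts for $\varepsilon<N$, and the explicit check $L(w)-L(w+\partial)\cdot 1=0$ for $\varepsilon=N$), and then extend from the generators to all of $\mc V$. The only cosmetic difference is that you justify the extension by invoking Theorem \ref{20140707:thm}(g) (derivations commuting with $\partial$), while the paper states this more tersely as following from the left Leibniz rule; the underlying argument is the same.
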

\begin{proof}
For $1\leq\varepsilon<N$, we have $L^{\frac{\varepsilon}{N}-1}(w)_+=0$,
and therefore, recalling \eqref{20140707:eq3b-lie},
equation \eqref{leneq2} holds by Lemma \ref{cor:17032013}(b).
Moreover, 
$\mult\ldb h_N{}_\lambda L(w)\rdb_K\big|_{\lambda=0}
=L(w)-L(w+\partial)\cdot1=0$.
This proves part (a).
For part (b),
by Lemma \ref{cor:17032013},
the recursion \eqref{leneq} holds for $u=L(w)$,
the generating series of the generators of $\mc V$.
Hence, \eqref{leneq} holds for all $u\in\mc V$  by the left Leibniz rule.
\end{proof}
\begin{remark}\label{20132507:rem1}
It follows from Lemma \ref{cor:17032013} 
that the Hamiltonian equation corresponding to the Hamiltonian functional
$\tint h_k$, $k\geq1$, can be written as (in terms of generating series)
\begin{equation}\label{laxpair}
\frac{dL(w)}{dt_k}=[(L^{\frac kN})_+,L](w)\,,
\end{equation}
where on the RHS we have to take the symbol of the usual commutator
of pseudodifferential operators.
This equation is the symbol of the usual \emph{Lax pair} representation
of the AGD hierarchies of Hamiltonian equations.
\end{remark}
\begin{example}\label{ex:KP}
On $\mb F\langle u_i^{(n)}\mid i,n\in\mb Z_+\rangle$,
we have
$L(\partial)=\partial+\sum_{i\in\mb Z_+}u_i\partial^{-i-1}$.
By an explicit computation we get
\begin{align*}
&L^2(\partial)
=\partial^2+2u_0+(2u_1-u_0')\partial^{-1}
+(2u_2+u_1'+u_0^2)\partial^{-2}+\dots\,,\\
& L^3(\partial)
=\partial^3+3u_0\partial+3(u_1+u_0')
+(3u_2+3u_1'+3u_0^2+u_0'')\partial^{-2}+\dots\,.
\end{align*}
Hence, the first few integrals of motion are
$$
\tint h_1=\tint \tr(u_0)
\,,\,\,
\tint h_2=\tint \tr(u_1)
\,,\,\,
\tint h_3=\tint \tr(u_2+u_0^2)
\,,\dots
$$
To find the corresponding bi-Hamiltonian equations,
we use Lemma \ref{cor:17032013}.
We have
$L(w)_+=w$,
$L^2(w)_+=w^2+2u_0$,
$L^3(w)_+=w^3+3u_0w+3(u_0'+u_1)$.
Hence, 
\begin{equation}\label{KP}
\begin{array}{l}
\displaystyle{
\frac{dL(w)}{dt_1}=\partial L(w)
\,\,,\,\,\,\,
\frac{dL(w)}{dt_2}
=\partial^2L(w)+2w\partial L(w)+2(u_0L(w)-L(w+\partial)u_0)
\,,} \\
\displaystyle{
\frac{dL(w)}{dt_3}=
\partial^3L(w)+3w\partial^2L(w)+3w^2\partial L(w)+3u_0\partial L(w)
\,,} \\
\displaystyle{
\,\,\,\,\,\,\,\,\,\,\,\,\,\,\,\,\,\,\,\,\,\,\,\,\,\,\,\,\,\,\,\,\,\,\,\,\,\,\,\,\,\,\,\,\,
+3\big(((w+\partial)u_0+u_1)L(w)-L(w+\partial)((w+\partial)u_0+u_1)\big)
\,\dots}
\end{array}
\end{equation}
Consider the first two equations in the second system of the hierarchy \eqref{KP},
and the first equation in the third system of \eqref{KP}. Namely,
$$
\left\{
\begin{array}{l}
\displaystyle{
\frac{du_0}{dt_2}=u_0''+2u_1'\,,
}\\
\displaystyle{
\frac{du_1}{dt_2}=u_1''+2u_2'+2u_0u_0'+2u_0u_1-2u_1u_0\,,
}\\
\displaystyle{
\frac{du_0}{dt_3}=u_0'''+3u_1''+3u_2'+3u_0u_0'+3u_0'u_0\,.
}
\end{array}
\right.
$$
We can eliminate the variable $u_2$ from this system.
After relabeling $t_2=y$, $t_3=t$, $u=2u_0$ and $w=4u_1+2u_0'$, we get
the system
\begin{equation}\label{20140107:eq2}
\left\{
\begin{array}{l}
\displaystyle{
u_y=w'\,,
}\\
\displaystyle{
3w_y=4u_t-u'''-3(u^2)'+3[u,w]\,,
}
\end{array}\right.
\end{equation}
We call the system \eqref{20140107:eq2} the non-commutative Kadomtsev-Petviashvili (KP) equation.
\end{example}
\begin{example}\label{ex:KdV}
On $\mb F\langle u^{(n)}\mid n\in\mb Z_+\rangle$,
we have
$L(\partial)=\partial^2+u$.
By an explicit computation we get
\begin{align*}
&L^{\frac12}(\partial)
=\partial+\frac u2\partial^{-1}
-\frac{u'}{4}\partial^{-2}+\frac18(u''-u^2)\partial^{-3}\dots\,,\\
& L^{\frac32}(\partial)
=\partial^3+\frac32u\partial+\frac34u'
+\frac{1}{8}(3u^2+u'')\partial^{-1}+\dots\,.
\end{align*}
Hence, the first two integrals of motion are
$\tint h_1=\tint \tr(u)$ and $\tint h_3=\tint \frac14\tr(u^2)$.
To find the corresponding bi-Hamiltonian equations,
we use Lemma \ref{cor:17032013}.
We have
$L^{\frac12}(w)_+=w$ and
$L^\frac32(w)_+=w^3+\frac32uw+\frac34 u'$.
Hence, $\frac{du}{dt_1}=u'$ and
\begin{equation}\label{KdV}
\frac{du}{dt_3}=\frac14\left(
u'''+3uu'+3u'u
\right)\,.
\end{equation}
The equation \eqref{KdV} is the non-commutative KdV equation.
\end{example}
\begin{example}\label{ex:Boussinesq}
On $\mb F\langle u^{(n)},v^{(n)}\mid n\in\mb Z_+\rangle$,
we have
$L(\partial)=\partial^3+u\partial+v$.
By an explicit computation we get
\begin{align*}
&L^{\frac13}(\partial)
=\partial+\frac u3\partial^{-1}
+\frac{1}{3}(v-u')\partial^{-2}+\frac19(2u''-u^2-3v')\partial^{-3}\dots\,,\\
& L^{\frac23}(\partial)
=\partial^2+\frac23u
+\frac13(2v-u')\partial^{-1}+\frac{1}{9}(3u''-u^2-3v')\partial^{-2}+\dots\,.
\end{align*}
Hence, the first two integrals of motion are
$\tint h_1=\tint \tr(u)$ and $\tint h_2=\tint \tr(v)$.
To find the corresponding bi-Hamiltonian equations,
we use Lemma \ref{cor:17032013}.
We have
$L^{\frac13}(w)_+=w$ and
$L^\frac23(w)_+=w^2+\frac23u$.
Hence, the corresponding Hamiltonian equations are
$\frac{du}{dt_1}=u'$, $\frac{dv}{dt_1}=v'$ and
\begin{equation}\label{Boussinesq}
\left\{
\begin{array}{l}
\frac{du}{dt_2}=
2v'-u''
\\
\frac{dv}{dt_2}=v''-\frac23u'''-\frac23uu'+\frac23[u,v]\,.
\end{array}\right.
\end{equation}
We call the equation \eqref{Boussinesq} the non-commutative
Boussinesq equation (since it reduces to the classical Boussinesq equation
if $[u,v]=0$).
\end{example}
\end{document}